\newcommand\independent{\protect\mathpalette{\protect\independent}{\perp}} 
\def\independent#1#2{\mathrel{\rlap{$#1#2$}\mkern2mu{#1#2}}}
\newcommand{\supp}{\mathrm{supp}} 
\newcommand{\F}{\mathbb{F}}
\newcommand{\im}{\mathrm{Im}}
\newcommand{\e}{\varepsilon}
\newcommand{\rank}{\mathrm{rank}}
\newcommand{\I}{\mathcal{I}}
\newcommand{\V}{\mathcal{V}}
\newcommand{\wt}{\text{wt}}
\newcommand{\A}{\mathcal{A}}
\newcommand{\poly}{\text{poly}}
\newcommand{\1}{\mathbb{1}}
\theoremstyle{plain}
\newtheorem{thm}{Theorem}[section]
\theoremstyle{plain}
\newtheorem{claim}[thm]{Claim}
\theoremstyle{plain}
\newtheorem{defin}[thm]{Definition}
\theoremstyle{plain}
\theoremstyle{plain}
\newtheorem{lemma}[thm]{Lemma}
\theoremstyle{plain}
\newtheorem{corol}[thm]{Corollary}
\theoremstyle{plain}
\theoremstyle{remark}
\newtheorem{remark}{Remark}%[section]
\theoremstyle{discussion}
\theoremstyle{plain}
\theoremstyle{plain}
\newtheorem{question}[thm]{Question}
\begin{document}

\title{Reed-Muller codes for random erasures and errors}
\author{Emmanuel Abbe\thanks{Program in Applied and Computational Mathematics, and Department of Electrical Engineering, Princeton University, Princeton, USA, \texttt{eabbe@princeton.edu}} \and 
Amir Shpilka\thanks{Department of Computer Science,Tel-Aviv University, Tel-Aviv, Israel,
\texttt{shpilka@post.tau.ac.il}.  The research leading to these results has received funding
from the European Community's Seventh Framework Programme (FP7/2007-2013) under grant agreement number 257575, and from the Israel Science Foundation (grant number 339/10).} 
\and Avi Wigderson\thanks{Institute for Advanced Study, Princeton, USA, \texttt{avi@ias.edu}. This research was partially supported by NSF grant CCF-1412958.}}

%\author{
%%Emmanuel Abbe\thanks{School of Communication and Computer Sciences, Ecole Polytechnique F\'ed\'erale de Lausanne, Switzerland. Email: emmanuel.abbe@epfl.ch}
%}

%%\author{Emmanuel Abbe\thanks{Department of Computer and Communication Sciences, EPFL, Switzerland. Email: emmanuel.abbe@epfl.ch} \;\;\; and\;\;\;
%%Andrea Montanari\thanks{Department of Electrical Engineering,
%%and Department of Statistics,
%%Stanford University. Email: montanari@stanford.edu}}

\date{}

\maketitle

\begin{abstract}
This paper studies the parameters for which Reed-Muller (RM) codes over $GF(2)$ can correct random erasures and random errors with high probability, and in particular when can they achieve capacity for these two classical channels. Necessarily, the paper also studies properties of evaluations of multi-variate $GF(2)$ polynomials on random sets of inputs.

For erasures, we prove that RM codes achieve capacity both for very high rate and very low rate regimes. For errors, we prove that RM codes achieve capacity for very low rate regimes, and for very high rates, we show that they can uniquely decode at about square root of the number of errors at capacity.

The proofs of these four results are based on different techniques, which we find interesting in their own right. In particular, we study the following questions about $E(m,r)$, the matrix whose rows are truth tables of all monomials of degree $\leq r$ in $m$ variables. What is the most (resp. least) number of random columns in $E(m,r)$ that define a submatrix having full column rank (resp. full row rank) with high probability? We obtain tight bounds for very small (resp. very large) degrees $r$, which we use to show that RM codes achieve capacity for erasures in these regimes.

Our decoding from random errors follows from the following novel reduction. For every linear code $C$ of sufficiently high rate we construct a new code $C'$, also of very high rate, such that for every subset $S$ of coordinates, if $C$ can recover from erasures in $S$, then $C'$ can recover from errors in $S$. Specializing this to RM codes and using our results for erasures imply our result on unique decoding of RM codes at high rate.

Finally, two of our capacity achieving results require tight bounds on the weight distribution of RM codes. We obtain such bounds extending the recent~\cite{KLP} bounds from constant degree to linear degree polynomials.

\end{abstract}
\thispagestyle{empty}
\newpage

\tableofcontents

\thispagestyle{empty}
\newpage
% ------------------------------------------------------------

%\Anote{Another Introduction}
\pagenumbering{arabic}
\section{Introduction}

\subsection{Overview}

We start by giving a high level description of the background and motivation for the problems we study, and of our results.

Reed-Muller (RM) codes were introduced in 1954, first by Muller \cite{muller} and shortly after by Reed \cite{reed}, who also provided a decoding algorithm.  They are among the oldest and simplest codes to construct; the codewords are  the evaluation vectors of all multivariate polynomials of a given degree bound. More precisely, in an $RM(m,r)$ code over a finite field $\F$, 
a message is interpreted as the coefficients of a multivariate polynomial $f$ of degree at most $r$ over $\F$, and its encoding is  simply the vector of evaluations $f(a)$ for all possible assignments $a\in \F^m$ to the variables. Thus, RM codes are linear codes. They have been extensively studied in coding theory, and yet some of their most basic coding-theoretic parameters remain a mystery to date. Specifically, fixing the {\em rate} of an RM code, while it is easy to compute its tolerance to errors and erasures in the worst-case (or adversarial) model, it has proved extremely difficult to estimate this tolerance for even the simplest models of random errors and erasures. The questions regarding erasures can be interpreted from a learning theory perspective, about interpolating low degree polynomials from lossy or noisy evaluations. The questions regarding errors relate sparse recovery from random Boolean errors. This paper makes some progress on these basic questions.

Reed-Muller codes (over both large and small finite fields) have been extremely influential in the theory of computation, playing a central role in some important developments in several areas. In cryptography, they have been used e.g. in secret sharing schemes~\cite{Shamir79}, instance hiding constructions~\cite{BF90} and private information retrieval (see the survey \cite{Gasarch04}). In the theory of randomness, they have been used in the constructions of many pseudo-random generators and randomness extractors, e.g.\cite{bogdanov-viola}. These in turn were used for hardness amplification, program testing and eventually in various interactive and probabilistic proof systems, e.g. the celebrated results NEXP=MIP \cite{BFL90}, IP=PSPACE \cite{Sha92},  NP=PCP \cite{ALMSS98}. In circuit lower bounds for some low complexity classes one argues that every circuit in the class is close to a codeword, so any function far from the code cannot be computed by such circuits (e.g. \cite{Razborov}. In distributed computing they were used to design fault-tolerant information dispersal algorithms for networks \cite{Rabin1989efficient}. The hardness of approximation of many optimization problems is greatly improved by the ``short code'' \cite{barak2012making}, which uses  the optimal testing result of \cite{BKSSZ}. And the list goes on. Needless to say, the properties used in these works are properties of low-degree polynomials (such interpolation, linearity, partial derivatives, self-reducibility, heredity under various restrictions to variables, etc.), and in some of these cases, specific coding-theoretic perspective such as distance, unique-decoding, list-decoding, local testing and decoding etc. play important roles. Finally, polynomials are basic objects to understand computationally from many perspectives (e.g. testing identities, factoring, learning, etc.), and this study interacts well with the study of coding theoretic questions regarding RM codes.

To discuss the coding-theoretic questions we focus on, and give appropriate perspective, we need some more notation. First, we will restrict attention to binary codes, the most basic case where $\F = \F_2$, the field of two elements\footnote{This seems also the most difficult case for these questions, and we expect our techniques to generalize to larger finite fields.}. To reliably transmit $k$-bit messages we encode each by an $n$-bit codeword via a mapping $C: \F_2^k \rightarrow \F_2^n$. We abuse notation and denote by $C$ both the mapping and its image, namely the set of all codewords\footnote{A code is {\em linear} if the mapping $C$ is $\F_2$-linear, or equivalently if the set of codewords $C$ is a linear subspace of $\F_2^n$.} . The {\em rate} of $C$ is given by the ratio $k/n$, capturing the redundancy of the code (the smaller it is, the more redundant it is). A major problem of coding theory is to determine the largest rate for which one can uniquely recover the original message from a {\em corrupted} codeword (naturally, explicit codes with efficient encoding and decoding algorithms are desirable). This of course depends on the nature of corruption, and we shall deal here with the two most basic ones, erasures (bit-losses) and errors (bit-flips). Curiously, the two seminal papers from the late 1940s giving birth to coding theory, by Shannon \cite{shannon48} and Hamming \cite{hamming50} differ in whether one should consider recovery for {\em most} corruptions, of from {\em all} corruptions. In other words, Shannon advocates {\em average-case} analysis whereas Hamming advocates {\em worst-case} analysis.

In Hamming's worst case setting, recovery of the original message must be possible from every corruption of every codeword. In this model there is a single parameter of the code determining recoverability: the distance of the code. The {\em distance} of $C$ is the minimum Hamming distance of any two codewords in $C$ (the {\em relative distance} is simply the distance normalized by the block-length $n$). If the distance is $d$, then we one can uniquely recover from at most $d$ erasures and from $\lfloor (d-1)/2 \rfloor$ errors. This leaves the problem of finding the optimal trade-off between rate and distance, and designing codes which achieve this optimum. While these are still difficult open problems, we know a variety of codes that can simultaneously achieve constant rate and constant relative distance (such codes are often called {\em asymptotically good}). In contrast, Reed-Muller codes fall far short of that. The rate of $RM(m,r)$ is ${m \choose \leq r}/2^m$, while the distance is easily seen to be $2^{m-r}$. Thus making any one of them a positive constant makes the other exponentially small in $n$. In short, from a worst-case perspective, RM codes are pretty bad.

In Shannon's average-case setting (which we study here), a codeword is subjected to a random corruption, from which recovery should be possible {\em with high probability}. This random corruption model is called a {\em channel}, and the best achievable rate is called the {\em capacity} of the channel. The two most basic ones, the Binary Erasure Channel (BEC) and the Binary Symmetric Channel (BSC), have a parameter $p$ (which may depend on $n$), and corrupt a message by independently replacing, with probability $p$, the symbol in each coordinate, with a ``lost" symbol in the BEC($p$) channel, and with the complementary symbol in the BSC($p$) case. Shannon's original paper already contains the optimal trade-off achievable for these (and many other channels). For {\em every} $p$, the capacity of BEC($p$) is $1-p$, and the capacity of BSC($p$) is $1-h(p)$, where $h$ is the binary entropy function.\footnote{$h(p) = -p\log_2(p) - (1-p)\log_2(1-p)$, for $p\in (0,1)$, and $h(0)=h(1)=0$.} While Shannon shows that random codes achieve this optimal behavior,\footnote{The fact that random linear codes are optimal for symmetric channels was shown in \cite{elias}.} explicit and efficiently encodable and decodable codes achieving capacity in both channels\footnote{For the case of the BEC, \cite{luby1} provides the first LDPC codes that are capacity-achieving, and further LDPC ensembles have been recently developed with spatial coupling \cite{spatial2}.} have been obtained \cite{forney-thesis}, among which are the recent {\em Polar Codes} \cite{arikan} that we shall soon discuss.

Do Reed-Muller codes achieve capacity for these natural channels (despite their poor rate-distance trade-off)? The coding theory community seems to believe the answer is positive, and conjectures to that effect were made\footnote{The belief that RM codes achieve capacity is much older, but we did not trace back where it appears first.} in \cite{forney-road,arikan-RM,mondelli-RM}. However, to date, we do not know {\em any} value of $p$ for which RM codes achieve the capacity for erasures or errors! This paper provides the first progress on this conjecture, resolving it for very low rates and very high rates (namely for polynomials of degrees $r$ which are very small or very large compared to the number of variables $m$). 
%These results can be interpreted from learning theory perspective, showing that in these ranges of parameters any degree $r$ polynomial in $m$ variables can be uniquely interpolated with high probability from its values on the minimum possible number of random inputs. Another interpretation of our results regards optimal recovery of random sparse vectors via $E(m,r)$. 
Our results unfortunately fall short of approaching the cases where the corruption rate $p$ is a constant, the most popular regime in coding theory.

%\Enote{I changed ``most interesting" to ``most popular" in the above. This may be further changed. There is a new trend these days in information theory to study performances of codes very close to capacity. In part due to the fact that for some real channels (e.g., fiber optics), we can now push rates very close to capacity. So looking at extremal regimes may not be always less interesting :)}

%\Anote{Explain why average case analysis is currently hot  in TCS - learning, avg case hardness in crypto and proof complexity, planted problems, and understanding the behavior of algs on "typical" inputs.}

The conjecture that RM codes achieve capacity has been experimentally ``confirmed" in simulations \cite{arikan-RM,mondelli-RM}. Moreover, despite being extremely old, new interest in it resurged a few years ago with the advent of polar codes \cite{arikan}. To explain the connection between the two, as well as some of the technical problems arising in proving the results above, consider the following $2^m \times 2^m$ matrix $E_m$ (for ``evaluation''). Index the rows and columns by all possible $m$-bit vectors in $\F_2^m$ in {\em lexicographic order}. Interpret the columns simply as points in $\F_2^m$, and the rows as monomials (where an $m$-bit string correspond to the monomial which is the product of variables in the positions containing  a $1$). Finally, $E_m(x,y)$ is the value of the monomial $x$ on the point $y$ (namely it is 1 if the set of 1's in $x$ is contained in the set of 1's in $y$). Thus, every row of $E_m$ is the truth table of one monomial. It is thus easy to see that the code $R(m,r)$ is simply the span of the top (or ``high weight'')  $k$ rows of $E_m$, with $k= {m \choose \leq r}$; these are the truth tables of all degree $\leq r$ polynomials. In contrast, polar codes of the same rate are spanned by a different set of $k$ rows, so they form a different subspace of polynomials. While the monomials indexing the polar code rows have no explicit description (so far), they can be  computed efficiently for any $k$ in $\poly(n) = 2^{O(m)}$ time. It is somehow intuitively ``better" to prefer higher weight rows to lower weight ones as the basis of the code (as the ``chances of catching an error" seem higher). 
Given the amazing result that polar codes achieve capacity, this intuition seems to suggest that RM codes do so as well. 
In fact, experimental results in \cite{mondelli-RM} suggest that RM codes may outperform polar codes for the BEC and BSC with maximum-likelihood\footnote{ML decoding looks for the most likely codeword. For the BEC, this requires inverting a matrix over $GF(2)$, whereas for the BSC, ML can be approximated by a successive list-decoding algorithm.} decoding. 

%\Enote{Added the last phrase above.}

Denoting by $E(m,r)$ the top submatrix of $E_m$ with $k= {m \choose \leq r}$ rows, one can express some natural problems concerning it which are essential for our results. To obtain some of our results on achieving capacity for the erasure channel, we must understand the following two natural questions regarding $E(m,r)$. First, what is the largest number $s$ so that $s$ random columns of $E(m,r)$ are linearly independent with high probability. Second, what is the smallest number $t$ such that $t$ random columns have full row-rank. Capacity achieving for erasures means that $s=(1-o(1))k$ and $t=(1+o(1))k$, respectively. We prove that this is the case for small values of $r$. The second property gives directly the result for low-rate codes RM($m,r$), and the first implies the result for high-rate codes using a duality property of RM codes. Both results may be viewed from a learning theory perspective, showing that in these ranges of parameters any degree $r$ polynomial in $m$ variables can be uniquely interpolated with high probability from its values on the minimum possible number of random inputs.

%\Amnote{Maybe explain here what does it mean to decode BEC and BSC in terms of solving linear algebra equations. BEC can be solved, BSC says there is a unique sparse solution, not clear how to find it.}
%\Amnote{I now think that this will fit better in a section about RM}

%\Enote{I added the paragraph below to address previous comment by Amir (although the second comment says now it should perhaps come later?). I think that is good to mention these in the intro.}

For errors, further analysis is needed beyond the rank properties discussed above. From the parity-check matrix viewpoint, decoding errors is equivalent to solving (with high probability) an underdetermined system of equations. Recall that a linear code can be expressed as the null space of an $(n-k) \times n$ parity-check matrix $H$. If $Z$ is a random error vector with about (or at most) $s$ one's corrupting a codeword, applying the parity-check matrix to the codeword yields $Y=HZ$, where the ``syndrom'' $Y$ is of lower dimension $n-k$. Decoding random errors means reconstructing $Z$ from $Y$ with high probability, using the fact that $Z$ is sparse (hence the connection with sparse recovery). Note however that this differs from compressed sensing, as $Z$ is random and $HZ$ is over $GF(2)$. It relates to randomness extraction in that a capacity achieving code should produce an output $Y$ of dimension $m \approx n h(s/n)$ containing\footnote{See \cite{ITA-corr} for further discussion on this.} all the entropy of $Z$.  
Compared to the usual notion of randomness extraction, the challenge here is to extract with a very simple map $H$ (seedless and linear), while the source $Z$ is  much more structured, i.e. it has i.i.d.\ components, compared to sources in the traditional extractor settings.

%\Enote{Add potential references for previous discussion}

%\Enote{I changed ``almost nothing'' to ``very little'' in the second phrase below, to make sure we don't offend anyone (even though limited, there were some prior works)}

Another extremely basic statistics of a code is its {\em weight distribution}, namely, approximately how many codewords have a given Hamming weight.  Amazingly enough, very little was known about  the weight distribution of Reed-Muller code until the recent breakthrough paper of \cite{KLP}, who gave nearly tight bounds for constant degree polynomials for both. The results of \cite{KLP} also apply to list-decoding of RM codes, which was previously investigated in \cite{GopalanKZ08}. We need a sharpening of their upper bound for two of our results, which we prove by refining their method. The new bound is nearly tight not only for constant degree polynomials, but actually remains so even for degree $r$ that is linear in $m$. We get a similar improvement for their bound on the list-size for list decoding of RM codes.  

Summarizing, we study some very basic coding-theoretic questions regarding low-degree polynomials over GF(2). We stress two central aspects which remain elusive. First, while proving the first results about 
parameters of RM codes which achieve capacity, the possibly most important range, when error rate is constant, seems completely beyond the reach of our techniques. Second, while our bounds for erasures immediately entails a (trivial) efficient algorithm to actually locate them, there are no known efficient algorithms for correcting random errors in the regimes we prove it is information theoretically possible. We hope that this paper will inspire further work on the subject, and we provide concrete open questions it suggests. We now turn to give more details on the problems, results and past related work.

%%%%%%%%%%%%%%%%%%%%%%%%%%%%%%%%%%%%%
%%%%%%%%%%%%%%%%%%%%%%%%%%%%%%%%%%%%%
%%%%%%%%%%%%%%%%%%%%%%%%%%%%%%%%%%%%%
%%%%%%%%%%%%%%%%%%%%%%%%%%%%%%%%%%%%%

\subsection{Notation and terminology}\label{sec:notation}
Before presenting our results we need to introduce some notations and parameters. 
%Recall that $RM(m,r)$ denotes the Reed-Muller code of blocklength $n=2^m$ and order $r$. This means that the code contains the evaluation on the hypercube $\F_2^m$ of all $m$-variate polynomials of degree at most $r$. 
%We denote by $H(m,r)$ the parity-check matrix of $RM(m,m-r)$. The reason for working with $RM(m,m-r)$ is purely notational; we use the parity-check matrix representation of the code, which has $\sum_{i=0}^{r} {m\choose i}$ rows for $RM(m,m-r)$ (hence, the number of rows behaves like $m^r$ for small $r$). 
The following are used throughout the paper:
\begin{itemize}
\item For nonnegative integers $r \leq m$, $RM(m,r)$ denotes the Reed-Muller code 
whose codewords are the evaluation vectors of all multivariate polynomials of degree at most $r$ on $m$ Boolean variables. 
The maximal degree $r$ is sometimes called the order of the code. The blocklength of the code is $n=2^m$, the dimension $k=k(m,r)=\sum_{i=0}^r {m \choose i}\triangleq {m\choose \leq r}$, and the distance $d=d(m,r)=2^{m-r}$. The code rate is given by $R=k(m,r)/n$.
\item We use $E(m,r)$ to denote the ``evaluation matrix'' of parameters $m,r$, whose rows are indexed by all monomials of degree $\leq r$ on $m$ Boolean variables, and whose columns are indexed by all vectors in $\F_2^m$. For $u\in \F_2^m$, we denote by $u^r$ the column of $E(m,r)$ indexed by $u$, which is a $k$-dimensional vector, and for a subset of columns $U \subseteq \F_2^m$ we denote by $U^r$ the corresponding submatrix of $E(m,r)$.

\item A generator matrix for $RM(m,r)$ is given by $G(m,r)=E(m,r)$, and a parity-check matrix for $RM(m,r)$ is given by $H(m,r)=E(m,m-r-1)$ (see Lemma \ref{duality}). 

%\Enote{I changed above that $E(m,r)$ becomes $H(m,m-r-1)$, if viewed as a parity-check matrix (instead of $H(m,r)$). That is, I kept from now on that $H(m,r)$ is the PCM of the $RM(m,r)$ code, just like $G(m,r)$ is its generator matrix, which is the standard notation. I also adopted parenthesis instead of squared brackets. I tried to be consistent in the rest of the paper, but will double-check these more carefully when reading the whole paper on a printed version.}

\item We associate with a subset $U\subseteq \F_2^m$ its characteristic vector $\1_U \in \{0,1\}^n$.  
We often think of the vector $\1_U$ as denoting either an {\em erasure pattern} or an {\em error pattern}. 
\end{itemize}

Finally, we use the following standard notations. $[n]=\{1,\dots,n\}$. The Hamming weight of $x \in \F_2^n$ is denoted $w(x)=|\{i \in [n] : x_i \neq 0\}|$ and the relative weight is $\wt(x)=w(x)/n$ . We use $B(n,s)=\{x \in \F_2^n : w(x) \leq s\}$ and $\partial B(n,s)=\{x \in \F_2^n : w(x) = \lceil s \rceil \}$. 
We use ${[n] \choose s}$ to denote the set of subsets of $[n]$ of cardinality $s$. 
Hence, for $S \in {[n] \choose s}$, $\1_S \in \partial B(n,s)$.

For a vector $x$ of dimension $n$ and subset $S$ of $n$, we use $x[S]$ to denote the components of $x$ indexed by $S$, and if $X$ is matrix with $n$ columns, we use $X[S]$ to denote the subset of columns indexed by $S$. In particular, $E(m,r)[U]=U^r$.  When we need to be more explicit, for an $a\times b$ matrix $A$ and $I\subseteq [a]$, we denote with $A_{I,\cdot}$ the matrix obtained by keeping only those rows indexed by $I$, and denote similarly $A_{\cdot, J}$ for $J\subseteq [b]$.\\

%\Anote{Throughout below I replaced the confusing erasure-rate and error-rate by erasure probability and error probability, and generally corruption probability}
%\Amnote{added the paragraph below on BEC and BSC. Should we write a more formal definition?}

%\Enote{Error-rate and erasure-rate is used when you don't work with the i.i.d.\ model. 
%I changed the phrase below, saying that we describe things here and define them formally in Section \ref{prelim}}

\noindent
{\bf Channels, capacity and capacity-achieving codes}

We next describe the channels that we will be working with, and provide formal definitions in Section \ref{prelim}. Throughout $p$ will denote the corruption probability per coordinate. The Binary Erasure Channel (BEC) with parameter $p$ acts on vectors $v\in\{0,1\}^n$, by changing every coordinate to $\text{``?''}$ with probability $p$. That is, after a message $v$ is transmitted in the BEC the received message $\hat{v}$ satisfies that for every coordinate $i$ either $\hat{v}_i=v_i$ or $\hat{v}_i=\text{``?''}$ and $\Pr[\hat{v}_i=\text{``?''}]=p$. The Binary Symmetric Channel (BSC) with parameter $p$ is flips the value of each coordinate with probability $p$. That is, after a message $v$ is transmitted in the BSC the received message $\hat{v}$ satisfies  $\Pr[\hat{v}_i\neq v_i]=p$. 

In fact, we will use a small variation on these channels; for corruption probability $p$ we will fix the number of erasures/errors to $s=pn$. We note that by the Chernoff-Hoeffding bound (see e.g., \cite{AlonSpencer}), the probability that more than $pn + \omega(\sqrt{pn})$ erasures/errors occur for independent Bernoulli choices is  $o(1)$, and so we can restrict our attention to talking about a fixed number of erasures/errors. Thus, when we discuss $s$ corruptions, we will take the corruption probability to be $p=s/n$. We refer to Section \ref{prelim} for the details. 

%\Amnote{modified paragraphs below}

We now define the notions of ``capacity-achieving'' for the channels above. We consider $RM(m,r)$ where $r=r(m)$ typically depends on $m$. We say that $RM(m,r)$ can correct random erasures/errors, if it can correct the random erasures/errors with high probability when $n$ tends to infinity. % (see Section \ref{} for the specific models). 
The goal is to recover from the largest  amount of erasures/errors that is information-theoretically achievable.
% with $r$ has large as possible. 
We note that while recovering from erasures, whenever possible, is always possible efficiently (by linear algebra), this need not be the case for recovery from errors. As we focus on the information theoretic limits, we allow maximum-likelihood (ML) decoding rule. Obtaining an efficient algorithm is a major open problem. Note that ML minimizes the error probability for equiprobable messages, hence if ML fails to decode the codewords with high probability, no other algorithms can succeed. 
%For erasures, ML is efficient as it simply amounts to inverting a matrix over $GF(2)$. 

%\Anote{These paragraphs before the tables were too cumbersome and repeated defs like the rate of an RM code. I simplified below.
%Is there  a Latex way of avoiding the annoying way in which p inside parenthesis () becomes $?$? }

%The table below provides the conditions for $RM(m,r)$ to achieve capacity at low and high code-rates, for both erasures and errors. These correspond to translating the results of Shannon (which are typically stated for constant erasure or error probabilities). 

Recall that the capacity of a channel is the largest possible code rate at which we can recover (whp) from corruption probability $p$. This capacity  is given by $1-p$ for BEC erasures, and by $1-h(p)$ for BSC errors. Namely,  Shannon proved that for any code of rate $R$ that allows to correct corruptions of probability $p$, then $R< 1-p$ for the BEC and $R< 1-h(p)$ for the BSC. 

Achieving capacity means that $R$ is close to the upper bound, say within $(1+\e)$ factor of the optimal bounds above. For {\em fixed} corruption probabilities $p$ and rates $R$ in $(0,1)$ this is easy to define (previous paragraph). However as we deal with  very low or very high rates above, defining this needs a bit more care, and is described in the table below, and formally in Section \ref{prelim}. A code of rate $R$ is $\e$-close to achieve capacity if it can correct from a corruption probability $p$ that satisfies the bounds below\footnote{Note that for $R\to 0$, in the BEC we have $p\to 1$, while for the BSC we have $p\to \frac12$. Also, we have stated the bounds thinking of $R$ fixed and putting a requirement on $p$. One can equivalently fix $p$ and require the code to correct a corruption probability $p$ for a rate $R$ that satisfies the bounds in the table.}. It is capacity-achieving if it is $\e$-close to achieve capacity for all $\e >0$.

%For Reed-Muller codes we naturally parametrize their rate via polynomial degree. We first fix $r$ in some regime, for example $r \leq m/100$, and then see how many random erasures/errors we can handle in that regime (when $m$ tends to infinity).
%Recalling that $k(m,r)= {m \choose \leq r}$ and $R=k(m,r)/n$ are respectively the code dimension and code rate of $RM(m,r)$, the low code-rate regime corresponds to the regimes of $m,r$ where $R \to 0$, and the high code-rate regime corresponds to $R \to 1$. 

%Denote with $s(m,r)$  the number of random erasures/errors, and let $p$ be the erasure- or error-probability (depending on the channel), i.e., $p=s(m,r)/n$. We say that $RM(m,r)$ is $\e$-close to achieving capacity if it can correct $s(m,r)$ random erasures/errors whenever $s(m,r)$ satisfies:   

%$RM(m,r)$ is capacity-achieving in an extremal regime if it can correct $s(m,r)$ random erasures/errors whenever $\e>0$ and $s(m,r)$ satisfies:   

%\Enote{I updated the table below. I didn't change the inequalities because Avi changed the above phrase, but I changed the sign of $\e$ in the second row (otherwise there is an issue I think). Also note that we no longer quite follow how we state our results, where we first fix the rate and see how many corruptions we can support (but that should be fine).}
\begin{center}
  \begin{tabular}{ |c |  c | c  | }\hline
  %\backslashbox{Rate}{Channel}
   & BEC & BSC \\
  %\hline &&\\
  % Rate $\backslash$ Channel & BEC & BSC \\
   \hline
  && \\
%Low code-rate   & $\frac{k(m,r)}{n}  \sim (1-\frac{s(m)}{n}) (1- \e)$  & $\frac{k(m,r)}{n}  \sim (1-H(\frac{s(m)}{n}))(1-\e)$ \\
Low code-rate ($R \to 0$)   & $p \geq 1- R (1+\e)$  & $h(p) \geq 1- R(1+\e)$ \\
&& \\ 
   %&(Erd\H{o}s and R\'enyi '60) & (Erd\H{o}s and R\'enyi '60) \\ 
    \hline 
    &&\\
%    High code-rate & $1-\frac{k(m,r)}{n}  \sim \frac{s(m)}{n} (1- \e)$  & $1-\frac{k(m,r)}{n}  \sim H(\frac{s(m)}{n})(1-\e)$ \\ 
    High code-rate ($R \to 1$) & $p \geq (1-R) (1- \e)$  & $h(p)   \geq (1-R)(1-\e)$ \\ 
    &&\\
    \hline
  \end{tabular}
\end{center}

\subsection{Our results}\label{sec:our results}
%\Enote{Should we make this a section? and put related literature just before as a subsection of the intro}

We now state all our results, with approximate parameters, as the exact statements (given in the body of the paper) are somewhat technical. We divide this section to results on decoding from random erasures, then on weight distribution and list decoding, and finally decoding random errors. In brief, we investigate four cases: two regimes for the code rates (high and low rates) and two models (BEC and BSC). Besides for the BSC at high-rate, we obtain a capacity-achieving result for all other three cases. For the low-rate regimes, we obtain results for values of $r$ up to the order of $m$.

%\Enote{If we remove the summary table below, I would suggest putting the above phrase.}
\subsubsection{Random erasures - the BEC channel}

As mentioned earlier, some of the questions we study concerning properties of Reed-Muller codes can be captured by the following basic algebraic-geometric questions about evaluation vectors of low-degree monomials, namely, submatrices of $E(m,r)$. For any parameters $r \in  [m]$ (the degree) and $s\in [n]$ (the size of the corrupted set $U$),  we will study correcting random erasures and errors  patterns of size $s$ in $RM(m,r)$.

%\Amnote{Updated questions to match those in Section 5}

%\begin{question}%\label{question:columns}
\begin{enumerate}
\item What is the largest $s$ for which the submatrix $U^r$ has full column-rank with high probability?
%\end{question}
%\begin{question}%\label{question:rows}
\item What is the smallest $s$ for which the submatrix $U^r$ has full row-rank with high probability?
%\end{question}
\end{enumerate}

%\begin{enumerate}
%\item For a random point-set $U \subseteq \{0,1\}^m$ of size $s$, what is the probability that the corresponding degree $r$ evaluation matrix, $U^r$, has linearly independent {\em columns} (i.e., full column-rank)? This is investigated in Section \ref{first-question}.
%\item For a random point-set $U \subseteq \{0,1\}^m$ of size $s$, what is the probability that the corresponding degree $r$ evaluation matrix, $U^r$, has linearly independent {\em rows} (i.e., full row-rank)? This is investigated in Section \ref{second-question}.
%\end{enumerate}

More generally, we will be interested in characterizing sets $U$ for which these properties hold. We note that for achieving capacity, $s$ should be as close as possible to ${m\choose \leq r}$ (from below for the first question and from above for the second question). In other words, the matrix $U^r$ should be as close to square as possible. Note that this would be achieved for the case where $E(m,r)$ is replaced by a random uniform matrix, so our goal in a sense is to show that $E(m,r)$ behaves like a random matrix with respect to these questions. 

We obtain our decoding results for the BEC by providing answers to these questions for certain ranges of parameters. 
%We begin by discussing capacity achieving parameters for erasures.
Our first theorem concerns Reed-Muller codes of low degree. 
%(EA: we should adapt the terminology with what we set in the intro.)

%\Anote{I think the original parameters were complemented - I complemented back}
\begin{thm}[See Theorem~\ref{thm:random_span_using_KL}]\label{thm:intro:low_deg_BEC}
%There exists a universal constant $0<\eta<1$ such that for $r \leq \eta m$, 
Let $r=o(m)$. Then, If we pick uniformly at random a set $U$ of $(1+o(1))\cdot {m\choose \leq r}$ columns of $E(m,r)$, then with probability $1-o(1)$ the rows of this submatrix are linearly independent, i.e., $U^r$ has full row-rank. 
% i.e., the columns of this submatrix span the entire space $\{0,1\}^{{m\choose \leq r}}$.
\end{thm}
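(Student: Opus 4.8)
The plan is to translate the rank statement into a statement about vanishing of low-degree polynomials and then run a first–moment argument controlled by the weight distribution of $RM(m,r)$. For the reduction: the rows of $E(m,r)$ are the truth tables of the monomials of degree at most $r$, so any $\F_2$-linear combination of rows is exactly the evaluation vector of some degree-$\le r$ polynomial $f$, and its restriction to the columns of a set $U$ is the vector $(f(u))_{u\in U}$. Hence $U^r$ has full row-rank if and only if no nonzero polynomial of degree $\le r$ vanishes identically on $U$, i.e.\ the coordinate projection $RM(m,r)\to\F_2^{U}$ is injective. So it suffices to prove: if $U$ is a uniformly random set of $s=(1+\e)k$ columns, with $k=k(m,r)=\binom m{\le r}$ and $\e>0$ fixed, then with probability $1-o(1)$ no nonzero codeword of $RM(m,r)$ is supported on the complement of $U$; letting $\e\to 0$ slowly with $m$ then yields the stated $(1+o(1))k$.

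\emph{First moment.} For a fixed nonzero codeword $c$ of weight $w=w(c)$, the event that $c$ vanishes on $U$ is the event that $U$ avoids the support of $c$, of probability $\binom{n-w}{s}/\binom{n}{s}\le(1-w/n)^s$. Summing over nonzero codewords and grouping by weight,
\[
\Pr[\,U^r\text{ is not of full row-rank}\,]\ \le\ \sum_{w=d}^{n}A_w\,(1-w/n)^s ,
\]
with $d=d(m,r)=2^{m-r}$ the minimum distance and $A_w$ the number of weight-$w$ codewords of $RM(m,r)$. I would split at $w=n/2$. For $w\ge n/2$, using $(1-w/n)^s\le 2^{-s}$ and $\sum_w A_w\le 2^k$, the whole range contributes at most $2^{k-s}=2^{-\e k}=o(1)$, since $\e k=\e\binom m{\le r}\to\infty$; note this already settles $r=1$, where $d=n/2$ makes the remaining part of the sum empty. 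For $d\le w<n/2$ (nonempty only when $r\ge 2$) the trivial bound $A_w\le 2^k$ is far too weak, since $d/n=2^{-r}$ lies well below $1/2$; here I would instead plug in the upper bound on the weight distribution of $RM(m,r)$ — the estimate of \cite{KLP}, in the sharpened form valid for degree $r$ linear in $m$ that is established elsewhere in this paper — and show $\sum_{d\le w<n/2}A_w(1-w/n)^s=o(1)$.

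\emph{The estimate for $d\le w<n/2$.} Near the minimum distance, $A_d=2^{(1+o(1))r(m-r)}=2^{(1+o(1))rm}$ (the number of minimum-weight codewords is known exactly), and $(1-2^{-r})^s\le\exp(-s\,2^{-r})$, so the $w=d$ term is at most $\exp\!\big((1+o(1))\,rm\ln 2-(1+\e)\,k\,2^{-r}\big)$, which is $o(1)$ because
\[
k\,2^{-r}\ =\ \binom m{\le r}2^{-r}\ \ge\ \Big(\tfrac m{2r}\Big)^{r}\ \gg\ rm\qquad\text{for }r=o(m),\ r\ge 2 .
\]
For the intermediate weights one argues in the same spirit from the weight-distribution profile: writing the \cite{KLP}-type bound as $A_w\le 2^{\psi(w)}$, one checks $\psi(w)-s\log_2\frac1{1-w/n}\to-\infty$ throughout $[d,n/2)$ — the inequality being tightest near $w=d$, where it reduces to the display above, and requiring a little care as $w\to n/2$, where $\psi(w)$ may approach $k$ but $1-w/n$ approaches $1/2$ — and since there are fewer than $n$ distinct weight classes it is enough that each term be $o(1/n)$, which these bounds comfortably give.

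I expect the main obstacle to be exactly this estimate for $d\le w<n/2$: it rests entirely on having a sufficiently sharp upper bound on the weight distribution of $RM(m,r)$ in the linear-degree regime — the strengthening of \cite{KLP} that the paper develops — whereas the reduction and the union-bound set-up are routine. A secondary nuisance is the bookkeeping near $w=n/2$, where the weight-distribution bound degenerates and one must fall back on $\sum_w A_w\le 2^k$ precisely where $(1-w/n)^s$ is no longer appreciably above $2^{-s}$.
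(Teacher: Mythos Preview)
Your approach is correct and is essentially the same as the paper's. The paper also reduces to showing that no nonzero degree-$\le r$ polynomial vanishes on $U$, applies a first-moment (union) bound, splits the codewords into those of relative weight above $(1-\e)/2$ (handled with the trivial count $2^k$ against $(1-(1-\e)/2)^s$) and those below, and for the latter partitions into dyadic weight intervals $[(1-\e)2^{-\ell-1},(1-\e)2^{-\ell}]$ and bounds each via the sharpened weight-distribution estimate of Theorem~\ref{thm:wt-dist}; your sketch of the intermediate-weight analysis is exactly this, and your identification of the near-$d$ term as the bottleneck matches the paper's bound $\exp\!\big(-\Omega(2^{-r}\binom{m}{\le r})\big)$.
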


%\Enote{So the above theorem and corollary below need to be changed now?}
%\Amnote{Fixed it. Changed Cor to Thm}

As an immediate corollary we get that Reed-Muller codes of sub-linear degree achieve capacity for the BEC.

%\Amnote{Need to write a corollary in the main body}

\begin{thm}[See Corollary~\ref{corol_low_rate_BEC}]\label{thm:intro:RM_for_BEC_low_degree}
%There exists a universal constant $0<\eta<1$ such that for $r \leq \eta m$, 
For $r=o(m)$, $RM(m,r)$ 
%can correct $2^m-(1-o(1)){m\choose \leq r}$ uniform erasures with probability $1-o(1)$.  
achieves capacity for the {BEC}. 
More precisely, for every
$\delta>0$ and  $\eta = O(1/\log(1/\delta))$  the following holds:  
For every $r \leq\eta m$, $RM(m,r)$ is $\delta$-close to capacity for the {BSC}

\end{thm}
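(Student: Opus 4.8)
The plan is to derive the capacity-achieving statement as a direct consequence of Theorem~\ref{thm:intro:low_deg_BEC} together with the standard fact that, for linear codes, recovery from an erasure pattern $U$ under the BEC is equivalent to a rank condition on the parity-check matrix restricted to $U$. Concretely, $RM(m,r)$ with parity-check matrix $H(m,r) = E(m,m-r-1)$ can correct the erasure pattern $\1_U$ if and only if the columns of $H(m,r)$ indexed by $U$ are linearly independent, i.e.\ $U^{m-r-1}$ has full column-rank. So the whole question reduces to: how large can $|U|$ be (for $U$ a uniformly random set of coordinates) while $U^{m-r-1}$ still has full column-rank with probability $1-o(1)$?

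The key observation is that full column-rank of $U^{m-r-1}$ is exactly the \emph{transpose} of the event studied in Theorem~\ref{thm:intro:low_deg_BEC}, applied with degree parameter $r' := m-r-1$. But $r' = m-r-1$ is \emph{not} $o(m)$ when $r=o(m)$ — it is close to $m$. Here I would invoke the duality/complementation symmetry of the evaluation matrix: there is a well-known involution on $\F_2^m$ (complementing each coordinate, $u \mapsto \bar u$) under which a monomial of degree $d$ in the ``low'' part corresponds to a monomial of degree $m-d$ up to an invertible change of basis, so that $E(m, m-r-1)$ and $E(m,r)$ are related by row/column operations that preserve rank of column-subsets (this is essentially the content of Lemma~\ref{duality}, which gives $H(m,r)=E(m,m-r-1)$, combined with the standard fact that $RM(m,m-r-1)^\perp = RM(m,r)$). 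The upshot is that the column-rank question for $U^{m-r-1}$ is equivalent to the row-rank question for $\bar U^{\,r}$ where $\bar U$ is the complemented set, and complementation maps the uniform distribution on $s$-subsets to itself. Therefore Theorem~\ref{thm:intro:low_deg_BEC} applies with the \emph{small} degree $r=o(m)$: a uniformly random set of $(1+o(1)){m \choose \le r}$ columns of $E(m,r)$ has full row-rank whp, hence a uniformly random set of the \emph{complementary} size $n - (1+o(1)){m \choose \le r}$ of columns of $H(m,r)=E(m,m-r-1)$ has full column-rank whp.

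From this the capacity statement is a translation of parameters. The number of erasures $RM(m,r)$ can tolerate is $s = n - (1+o(1)){m\choose \le r}$, i.e.\ the erasure fraction is $p = s/n = 1 - (1+o(1))\cdot k/n = 1 - (1+o(1))R$. Since the BEC capacity at rate $R$ allows $p$ up to $1-R$, this is within a $(1+o(1))$ factor, which is precisely the high-rate/low-rate capacity-achieving condition in the table (the relevant row being $p \ge 1 - R(1+\e)$ for $R\to 0$; note that when $r=o(m)$, $R = k/n \to 0$). Finally, to get the quantitative ``$\delta$-close for $r\le \eta m$ with $\eta = O(1/\log(1/\delta))$'' refinement one repeats the same argument but tracks the constants in the $(1+o(1))$ factor coming from the effective version of Theorem~\ref{thm:intro:low_deg_BEC} (i.e.\ Theorem~\ref{thm:random_span_using_KL}); this just requires checking that the slack term there is bounded by $\e R$ once $r/m$ is below a threshold depending on $\e$, which is a routine computation with binomial-coefficient estimates. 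I also note the statement as printed says ``$\delta$-close to capacity for the BSC'' in the last line, which appears to be a typo for the BEC, and the proof I describe establishes the BEC version.

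The main obstacle, and the only genuinely substantive step, is the passage between the two degree regimes via duality: one must verify carefully that the complementation involution on $\F_2^m$ really does conjugate $E(m,m-r-1)$ to (a basis-change of) $E(m,r)$ in a way that (i) preserves the rank of arbitrary column-subsets and (ii) is measure-preserving on uniformly random subsets, so that the ``full column-rank of $m-r-1$ degree matrix on a large random set'' event is genuinely equivalent to the ``full row-rank of $r$ degree matrix on its complement'' event handled by Theorem~\ref{thm:intro:low_deg_BEC}. Everything after that — translating a rank guarantee into an erasure-correction guarantee via the parity-check characterization, and converting counts into rates — is standard coding-theoretic bookkeeping.
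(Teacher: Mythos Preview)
Your endpoint is correct---everything reduces to Theorem~\ref{thm:intro:low_deg_BEC} (equivalently Theorem~\ref{thm:random_span_using_KL})---but you have taken an unnecessarily circuitous route and the justification of your ``duality'' step is wrong as written.

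The paper's argument is a one-liner via the \emph{generator} matrix, not the parity-check matrix. By Corollary~\ref{erasure-G}, $RM(m,r)$ corrects $s$ random erasures iff a uniformly random set of $n-s$ rows of its $n\times k$ generator matrix $G=E(m,r)^t$ has rank $k$; equivalently, a uniformly random set of $n-s$ columns of $E(m,r)$ has full row-rank. Theorem~\ref{thm:random_span_using_KL} says exactly this holds once $n-s\ge(1+\delta)\binom{m}{\le r}$ and $r\le\eta m$. No duality or degree-swap is needed: the small degree $r$ appears directly.

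Your detour through $H(m,r)=E(m,m-r-1)$ forces you to pass from degree $m-r-1$ back to degree $r$, and your mechanism for doing so is incorrect. The bitwise involution $u\mapsto\bar u$ on $\F_2^m$ sends a degree-$d$ monomial to a degree-$d$ polynomial (since $x_i\mapsto 1+x_i$), not to something of degree $m-d$; it does \emph{not} conjugate $E(m,m-r-1)$ to $E(m,r)$ in any rank-relevant sense. The genuine bridge is Lemma~\ref{lem:equiv2} (Corollary~\ref{cor:equivalence}), which says $E(m,m-r-1)[U]$ has full column-rank iff $E(m,r)[U^c]$ has full row-rank; this uses the orthogonality $RM(m,r)^\perp=RM(m,m-r-1)$ and the \emph{set} complement $U^c=\F_2^m\setminus U$, not a point involution. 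If you replace your involution paragraph by an appeal to Lemma~\ref{lem:equiv2}, your argument becomes correct and is then equivalent to the paper's---but at that point it is simpler to use the generator-matrix characterization from the start and skip the duality altogether.

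You are right that ``BSC'' in the last line of the statement is a typo for ``BEC''.
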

%Moreover,  $RM(m,r)$ are $\delta$-close to capacity when $r=\eta m$, where $\eta=O(1/\log(1/\delta))$.

%\Amnote{In the above I changed ``close to achieve'' to ``achieves''}

We obtain similar results in a broader range of parameters when the code is of high degree rather than low degree (i.e., the code has high rate rather than low rate).

\begin{thm}[See Theorem~\ref{thm:lin_ind_eval_vectors}]\label{thm:intro:high_deg_BEC}
Let $r = O(\sqrt{m/\log m})$. If we pick uniformly at random a set $U$ of $(1-o(1))\cdot{m\choose \leq r}$ columns of $E(m,r)$, then with  probability $1-o(1)$ they are linearly independent, i.e., the submatrix $U^r$ has full column rank.
\end{thm}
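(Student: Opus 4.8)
My plan is to pass to the dual code and then run a union bound over its codewords. By Lemma~\ref{duality}, $E(m,m-r-1)$ is a parity-check matrix of $RM(m,r)$, so $RM(m,r)^{\perp}=RM(m,m-r-1)$. First I observe that a family of columns $\{u^{r}:u\in U\}$ of $E(m,r)$ is linearly \emph{dependent} exactly when some nonzero $z\in\F_{2}^{U}$ satisfies $\sum_{u\in U}z_{u}u^{r}=0$; setting $W=\supp(z)\subseteq U$, this says precisely that $\1_{W}$ is a nonzero codeword of $RM(m,m-r-1)$. Hence $U^{r}$ fails to have full column rank if and only if $RM(m,m-r-1)$ has a nonzero codeword supported inside $U$, and the theorem is equivalent to: for a uniformly random $U\subseteq\F_{2}^{m}$ with $|U|=s:=(1-o(1))\binom{m}{\le r}$, with probability $1-o(1)$ there is no nonzero codeword of $RM(m,m-r-1)$ whose support is contained in $U$. (One could instead try to build an independent family greedily by adding random columns one at a time, but this seems delicate: a proper subspace of $\F_{2}^{k}$ may contain a huge fraction of the columns of $E(m,r)$ --- for instance the hyperplane cut out by a low-degree polynomial with $n(1-2^{-r})$ zeros --- so one cannot argue that a random column misses an \emph{arbitrary} such subspace with large enough probability. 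The dual formulation only asks about the subspaces that actually occur, namely those containing dual codewords.)

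Next I would bound the failure probability by $\sum_{0\ne c\in D}\Pr_{U}[\supp(c)\subseteq U]$, where $D:=RM(m,m-r-1)$. For a codeword $c$ of weight $w$ one has $\Pr_{U}[\supp(c)\subseteq U]=\binom{n-w}{s-w}\big/\binom{n}{s}\le (s/n)^{w}$, and this is $0$ as soon as $w>s$. Since $\log_{2}\!\big(n/\binom{m}{\le r}\big)=m-O(r\log m)=(1-o(1))m$ --- this is where I use that $r=O(\sqrt{m/\log m})$ forces $r\log m=o(m)$ --- and since the minimum distance of $D$ is $2^{r+1}$ (so $A_{w}=0$ for $w<2^{r+1}$, where $A_{w}$ denotes the number of weight-$w$ codewords of $D$), the failure probability is at most
\[
\sum_{2^{r+1}\le w\le s}A_{w}\cdot 2^{-(1-o(1))mw}.
\]
So everything reduces to controlling the weight distribution $A_{w}$ of $D$ for $w\in[2^{r+1},s]$ and showing this sum is $o(1)$.

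I would then split the sum at a threshold $T$ a little above the minimum distance. For $w\le T$, the Kasami--Tokura classification of low-weight Reed--Muller codewords (and the refined weight-distribution bounds of this paper, which extend \cite{KLP}) shows that such codewords are essentially indicators of low-codimension affine subspaces modified on bounded-complexity pieces, hence there are at most $2^{O(mr)}$ of them, each of weight at least $2^{r+1}$; since $2^{r+1}$ dominates $r$, their contribution is at most $2^{O(mr)}\cdot 2^{-(1-o(1))m\cdot 2^{r+1}}=o(1)$. For $T<w\le s$ I would plug in the refined upper bound on $A_{w}$ and show that each term $A_{w}\cdot 2^{-(1-o(1))mw}$ is at most $n^{-\omega(1)}$, so that the whole tail is $o(1)$. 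This last step is exactly what the hypothesis $r=O(\sqrt{m/\log m})$ buys: it is the regime in which $\log_{2}\binom{m}{\le r}=O(r\log m)=O(\sqrt{m\log m})$ and $\binom{m}{\le r}^{O(r)}\le n^{O(1)}$, so that the weight-distribution bound --- which is of polynomial-in-$\binom{m}{\le r}$ type per unit of weight --- stays comfortably below the gain $2^{(1-o(1))mw}$ coming from the sparsity of the erased set. Combining the two ranges yields $\Pr_{U}[U^{r}\text{ not full column rank}]=o(1)$, which is the assertion.

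The hard part will be the moderate-weight range $T<w\le s$. The decisive weights are those near $w\approx s\approx\binom{m}{\le r}$, where $(s/n)^{w}$ and $A_{w}$ nearly balance; there one genuinely needs the fact that $RM(m,m-r-1)$ has far fewer codewords of weight roughly $\binom{m}{\le r}$ than a random code of the same rate would (for such a code the union bound would in fact fail). Obtaining --- or importing from the refined \cite{KLP}-type estimates established elsewhere in the paper --- a weight-distribution bound strong enough over this whole range, and verifying that it beats the $2^{(1-o(1))mw}$ factor precisely when $r=O(\sqrt{m/\log m})$, is the crux; by comparison the duality reduction, the small-weight (Kasami--Tokura) input, and the final summation are routine.
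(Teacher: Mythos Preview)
Your duality reduction is correct: $U^{r}$ has a column dependence iff some nonzero codeword of $D:=RM(m,m-r-1)$ is supported in $U$. But the union bound you propose over codewords of $D$ cannot be completed with the tools you cite. The code $D$ has degree $m-r-1$, which for $r=O(\sqrt{m/\log m})$ is essentially $m$; the paper's weight-distribution bound (Theorem~\ref{thm:wt-dist}, refining \cite{KLP}) applies only to $RM(m,r')$ with $r'\le m/4$, so it gives you nothing about $A_{w}(D)$ in the range $2^{r+1}\le w\le s\approx\binom{m}{\le r}$. Kasami--Tokura only reaches $w\le 2.5\cdot 2^{r+1}$, far below $s$. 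There is no available bound on the number of low-weight codewords of a \emph{high}-degree Reed--Muller code sharp enough to make $\sum_{w\le s}A_{w}(D)(s/n)^{w}=o(1)$ go through; indeed, what you are asking for (how many weight-$\le s$ words does $D$ have) is essentially as hard as the theorem itself, since those words are exactly the dependent column-sets of $E(m,r)$ of size $\le s$. You appear to have transplanted the method used for the \emph{dual} statement (Theorem~\ref{thm:random_span_using_KL}, full row rank), where one legitimately union-bounds over codewords of the low-degree code $RM(m,r)$.

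Ironically, the ``delicate'' greedy approach you set aside is exactly what the paper does, and your stated objection is what makes the argument interesting rather than impossible. It is true that some hyperplanes of $\F_{2}^{k}$ contain a $(1-2^{-r})$-fraction of the columns of $E(m,r)$, but the span of any $i<\binom{m-t}{\le r}$ actual columns is never such a subspace. Concretely, $u^{r}\in\mathrm{span}(U^{r})$ iff $u\in\mathcal V(\mathcal I(U))$, and Lemma~\ref{lem:number_of_ind_polynomials} (via Wei's generalized Hamming weight for RM codes, Theorem~\ref{thm:GHW}) shows that if $\mathrm{rank}(U^{r})<\binom{m-t}{\le r}$ then $|\mathcal V(\mathcal I(U))|\le 2^{m-t}$. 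Taking $t=\lceil\log\binom{m}{\le r}+\log(1/\varepsilon)\rceil$ makes each new random column land in this variety with probability at most $\varepsilon/\binom{m}{\le r}$, and a union bound over the at most $\binom{m}{\le r}$ steps finishes. The condition $s<\binom{m-t}{\le r}=(1-o(1))\binom{m}{\le r}$ is exactly what forces $r\log m=o(m)$, i.e.\ $r=O(\sqrt{m/\log m})$. So the step you flagged as the obstruction is precisely where the substance (a bound on varieties of spaces of degree-$r$ polynomials, not a weight-distribution bound for $D$) enters.
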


Due to the duality between linear independent set of columns in $E(m,r)$ and spanning sets in the generating matrix of $RM(m,m-r-1)$ (see Lemma~\ref{lem:equiv2}) we get as  corollary that Reed-Muller codes with the appropriate parameters achieve capacity for the BEC.

%\Enote{changed $RM(m,m-r)$ into $RM(m,m-r-1)$ above. Also, changed the theorem below to be about $RM(m,r)$ instead of $RM(m,m-r)$}

%\Amnote{maybe put BSC, BEC in bold face? perhaps best to use a macro and then decide}

%\Anote{We also need to decide consistently if they are italic or not. Also note that they are not formally defined. Indeed, they are defined by random corruptions at some probability $p$ per coordinate, whereas we study corruptions of $k$ random coordinates for fixed $k$. WE should make the argument that results for our model imply the same result with essentially the same parameters for BEC,BSC by concentration. }
%\Amnote{I added something in Section 1.2 to address Aviâs comments. What I wrote is pretty informal though. B.t.w., I donât think that we italicise BEC/BSC. This is done automatic in the theorem environment.}

\begin{thm}[See Corollary~\ref{corol:RM_for_BEC}]\label{thm:intro:RM_for_BEC_high_degree}
For $m-r=O(\sqrt{m/\log m})$, $RM(m,r)$ is capacity-achieving on the BEC.
%Let $r = O(\sqrt{m/\log m})$. Then, $RM(m,m-r)$ achieves capacity for the {BEC}. 
%\Amnote{Maybe remove the I.e. and only keep it in the formal theorem everywhere.}
%I.e., for $1-p=(1+o(1)){m\choose \leq r}/n$, $RM(m,r)$ can correct $pn$ uniform erasures with probability $1-o(1)$.  
\end{thm}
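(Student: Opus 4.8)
The plan is to derive this as a corollary of Theorem~\ref{thm:intro:high_deg_BEC} by passing to the dual code, so essentially no new analysis is needed---only the translation of that rank statement into the language of erasure correction and of the capacity table. Write $r^\ast = m-r-1$; the hypothesis $m-r = O(\sqrt{m/\log m})$ says precisely that $r^\ast = O(\sqrt{m/\log m})$, which is exactly the regime of Theorem~\ref{thm:intro:high_deg_BEC}. First I recall, via Lemma~\ref{duality}, that $H(m,r) = E(m,m-r-1) = E(m,r^\ast)$ is a parity-check matrix of $RM(m,r)$, and that for any linear code an erasure pattern $U$ is correctable (efficiently, by Gaussian elimination) if and only if the columns of the parity-check matrix indexed by $U$ are linearly independent; Lemma~\ref{lem:equiv2} records exactly this equivalence. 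Thus ``$RM(m,r)$ recovers from erasures in $U$'' is literally the event ``$U^{r^\ast}$ has full column rank''.

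Second, I translate the high-rate capacity requirement from the table into a count of erasures. Using the identity ${m \choose \leq r} + {m \choose \leq m-r-1} = 2^m$, the rate $R$ of $RM(m,r)$ satisfies $1-R = {m \choose \leq r^\ast}/2^m$. Hence for a fixed $\e>0$, being $\e$-close to capacity on the BEC amounts to correcting a random erasure pattern of size $s = (1-R)(1-\e)n = (1-\e){m \choose \leq r^\ast}$; by the Chernoff-Hoeffding estimate recalled in Section~\ref{prelim} it suffices to treat a uniformly random set $U$ of exactly $s$ columns of $E(m,r^\ast)$.

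Third, I invoke Theorem~\ref{thm:intro:high_deg_BEC} with degree $r^\ast$: a uniformly random set of $(1-o(1)){m \choose \leq r^\ast}$ columns of $E(m,r^\ast)$ has full column rank with probability $1-o(1)$. Since $\e$ is a fixed constant and the $o(1)$ error tends to $0$, for all sufficiently large $m$ we have $s = (1-\e){m \choose \leq r^\ast} \leq (1-o(1)){m \choose \leq r^\ast}$; and as any subset of a set of linearly independent columns is again linearly independent, the smaller random set of size $s$ also has full column rank with probability $1-o(1)$ (couple the two sampling processes by deleting uniformly random columns from the larger set). Combined with the first step, $RM(m,r)$ corrects a random size-$s$ erasure pattern whp, i.e. it is $\e$-close to capacity; since $\e>0$ was arbitrary, $RM(m,r)$ is capacity-achieving on the BEC.

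All the genuine content sits in Theorem~\ref{thm:intro:high_deg_BEC} (the linear-algebraic and probabilistic analysis of random columns of $E(m,r^\ast)$), so the main thing to get right here is the bookkeeping: the dual parameter $r^\ast = m-r-1$, the identity for $1-R$, the quantifier order in the definition of ``capacity-achieving'', and the monotonicity-in-$s$ coupling (which is legitimate because erasure-correctability is monotone decreasing under enlarging $U$). One should also dispatch the degenerate cases---$r=m$ is trivial, and when $r$ is so close to $m$ that $r^\ast$ is a small constant, Theorem~\ref{thm:intro:high_deg_BEC} still applies.
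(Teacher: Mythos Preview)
Your proposal is correct and follows essentially the same route as the paper: pass to the dual parameter $r^\ast=m-r-1$, use that $E(m,r^\ast)$ is a parity-check matrix of $RM(m,r)$ (Lemma~\ref{duality}), identify erasure-correctability with linear independence of the corresponding columns of the parity-check matrix, and then invoke the column-rank result (Theorem~\ref{thm:intro:high_deg_BEC}, i.e.\ Theorem~\ref{thm:lin_ind_eval_vectors}). One small bookkeeping slip: the statement ``erasure pattern $U$ is correctable iff the columns of $H$ indexed by $U$ are linearly independent'' is Lemma~\ref{equiv} (equivalently Corollary~\ref{erasure-H}), not Lemma~\ref{lem:equiv2}; the latter is the column-rank/row-rank duality between $E(m,d)$ and $E(m,m-d-1)$.
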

%\Enote{We could also say ``Let $m-r=O(\sqrt{m/\log m})$, then $RM(m,r)$ is capacity-achieving for the BEC.'' That way $r$ is always the same parameter}

\subsubsection{Weight distribution and list decoding}

Before moving to our results on random errors, we take a detour to discuss our results on the weight distribution of Reed-Muller codes as well as their list decoding properties. These are naturally important by themselves, and, furthermore, tight weight distribution bounds turns out to be crucial for achieving capacity for the BEC in Theorem~\ref{thm:intro:low_deg_BEC} above, as well as for achieving capacity for the BSC in Theorem~\ref{thm:intro-low-BSC} below. Our bound extends an important recent 
%\Amnote{Isn't the word ``seminal'' too strong here?}  
result of Kaufman, Lovett and Porat on the weight-distribution of Reed-Muller codes \cite{KLP}, using a simple variant of their technique. Kaufman et al. gave a bound that was tight for $r=O(1)$, but degrades as $r$ grows. Our improvement extends this result to degrees $r=O(m)$. Denote with $W_{m,r}(\alpha)$ the number of codewords of $RM(m,r)$ that have at most $\alpha$ fraction of nonzero coordinates. 

\begin{thm}[See Theorem~\ref{thm:wt-dist}]\label{thm:intro:wt-dist}
Let $1\leq \ell \leq r-1 < m/4 $ and $0 < \e \leq 1/2$. 
Then, 
$$W_{m,r}((1-\e)2^{-\ell}) \leq  (1/\e)^{O\left(\ell^4 {m-\ell \choose \leq r-\ell} \right)}.$$
\end{thm}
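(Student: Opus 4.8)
The plan is to follow the strategy of Kaufman–Lovett–Porat \cite{KLP} but with a more careful, recursive bookkeeping of the degree parameter. Recall their approach: to bound the number of low-weight codewords $f\in RM(m,r)$ with $\wt(f)\le (1-\e)2^{-\ell}$, one uses the structural fact that such an $f$ must be ``influenced'' by a small number of variables, i.e.\ there is a small set of variables such that restricting them appropriately drops the weight by roughly a factor of two. Concretely, if $f$ has weight below $2^{-\ell}$, then one can find roughly $\ell$ ``good'' linear restrictions (directional derivatives / affine restrictions) that successively halve the support, and after peeling off these $\ell$ restrictions one is left with a polynomial of degree $\le r-\ell$ on an affine subspace of co-dimension $\ell$, i.e.\ living in $RM(m-\ell,r-\ell)$ essentially. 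The count of codewords is then controlled by (number of ways to choose the restriction data) $\times$ (number of codewords in the residual code), which is where the $(1/\e)^{O(\ell^4\binom{m-\ell}{\le r-\ell})}$ shape comes from: the dominant term $\binom{m-\ell}{\le r-\ell}$ is the dimension of the residual Reed-Muller code, and the $\ell^4$ and $1/\e$ factors are the overhead from specifying which restrictions were used and handling the $(1-\e)$ slack.

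First I would set up the key lemma: if $f\in RM(m,r)$ has $\wt(f)\le(1-\e)2^{-\ell}$ with $\ell\le r-1$, then there exists an affine subspace $A$ of co-dimension $1$ (equivalently, a restriction of one variable after an affine change of coordinates) such that $f|_A$ has degree $\le r-1$ and $\wt(f|_A)\le(1-\e')2^{-(\ell-1)}$ for a slightly degraded $\e'$; crucially $f$ is determined by $f|_A$ together with a bounded amount of side information (the choice of $A$, which is $O(m)$ bits, plus the ``difference'' $f-f|_A$ restricted suitably, which itself is a lower-degree object). This is essentially the averaging/probabilistic argument in \cite{KLP} showing a random restriction halves the weight; I would re-examine their quantitative loss to make sure the degradation in $\e$ compounds to only $(1/\e)^{O(\ell)}$ rather than something worse, and to make sure the degree genuinely drops by one at each step so that after $\ell$ steps we land in degree $r-\ell$. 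Then I iterate this $\ell$ times. At the bottom we have a codeword of $RM(m-\ell, r-\ell)$ (after absorbing the $\ell$ affine restrictions into a co-dimension-$\ell$ subspace), and the number of such is at most $2^{\binom{m-\ell}{\le r-\ell}}$. Multiplying through the $\ell$ levels: each level contributes a factor for the choice of hyperplane ($2^{O(m)}$, but one must check $m = O(\binom{m-\ell}{\le r-\ell}\cdot\text{poly}(\ell))$ in the relevant regime so this is absorbed) and a factor $(1/\e)^{O(\ell^3\binom{m-\ell}{\le r-\ell})}$ or so from the reconstruction data, and $\ell$ levels give the extra $\ell$ power, yielding $(1/\e)^{O(\ell^4\binom{m-\ell}{\le r-\ell})}$.

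The main obstacle I anticipate is controlling the compounding of errors and the reconstruction overhead so that it stays polynomial in $\ell$ in the exponent (the $\ell^4$) rather than exponential in $\ell$, while simultaneously keeping the degree-drop clean. In \cite{KLP} the degree $r$ is constant, so an overhead that is, say, $2^{O(r)}$ per level is harmless; here $r$ can be linear in $m$, so I must ensure every per-level factor is of the form $\binom{m-i}{\le r-i}$ to some fixed power times $\poly(\ell)$, never $\binom{m}{\le r}$ to a power growing with $\ell$ — otherwise the bound blows up. This forces the recursion to be stated in the residual code $RM(m-i, r-i)$ at each level rather than always in the ambient $RM(m,r)$, and requires the restriction lemma to be applied \emph{inside} the residual subspace. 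A secondary technical point is the constraint $r-1 < m/4$, which I expect is needed precisely to guarantee $\binom{m-\ell}{\le r-\ell}$ dominates the choose-a-hyperplane and choose-a-subset combinatorial factors $\binom{m}{\ell}2^{O(m)}$ at every level; I would verify this inequality enters exactly there. Once the recursion and the accounting are in place, the theorem follows by collecting the product of the per-level bounds.
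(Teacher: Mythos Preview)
Your proposal has a genuine gap at the very first step. The ``key lemma'' you propose --- that for $f\in RM(m,r)$ of small weight there is a codimension-$1$ affine subspace $A$ with $f|_A$ of degree $\le r-1$ --- is simply false. Restricting a degree-$r$ polynomial to a hyperplane does \emph{not} drop the degree in general: e.g.\ $f=x_1x_2+x_3x_4+x_5x_6$ keeps degree $2$ under any single affine restriction. You seem to be conflating \emph{restrictions} with \emph{directional derivatives} $\Delta_y f(x)=f(x+y)+f(x)$; the latter does drop degree by one, but then your reconstruction claim fails: $f$ is not determined by $\Delta_y f$ plus $O(m)$ bits of side information (the kernel of $\Delta_y$ on $RM(m,r)$ is huge). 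So neither reading of your lemma survives, and the induction you build on it cannot start.

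The paper's actual argument is quite different in mechanism. It uses the KLP approximation lemma (their Lemma~2.1) as a black box: if $\wt(f)\le(1-\e)2^{-\ell}$ then, for any accuracy $\delta$, there exist $t=O\big(\log(1/\delta)\log(1/\e)+\log^2(1/\delta)\big)$ many $\ell$-th iterated derivatives $\Delta_{Y_1}f,\dots,\Delta_{Y_t}f$ from which a universal algorithm recovers $f$ on a $(1-\delta)$-fraction of inputs. KLP took $\delta=2^{-r-2}$ so that at most one $f$ is $\delta$-close to each approximator, forcing $t=\Theta(r^2)$. The paper's new idea is to take $\delta=(1-\e)2^{-\ell-2}$ instead, so $t=O(\ell^2\log(1/\e))$; now many degree-$r$ polynomials can be $\delta$-close to a given approximator, but any two of them differ by a codeword of weight $\le 2\delta=(1-\e)2^{-\ell-1}$, so their number is at most $W_{m,r}((1-\e)2^{-\ell-1})$. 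This yields the recursion $W_{m,r}((1-\e)2^{-\ell})\le(\text{\#approximators})\cdot W_{m,r}((1-\e)2^{-\ell-1})$, with base $W_{m,r}((1-\e)2^{-r})=1$. Counting approximators uses the minor observation that each $\Delta_{Y}f$ is a degree-$(r-\ell)$ polynomial on the quotient $\F_2^m/\langle Y\rangle$, giving $2^{\binom{m-\ell}{\le r-\ell}}$ choices rather than $2^{\binom{m}{\le r-\ell}}$. Unrolling the recursion from $\ell$ to $r$ and summing is what produces the $\ell^4$ (from $t\cdot\ell$ levels, roughly) and is where $r<m/4$ is used to make the geometric sum $\sum_j\binom{m-j}{\le r-j}$ dominated by its first term. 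None of this involves a degree-dropping restriction step.
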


%\Amnote{Moved the result about list-decoding here.}
As in the paper of \cite{KLP}, almost the exact same proof as our proof of Theorem~\ref{thm:intro:wt-dist} yields a bound for list-decoding of Reed-Muller codes, for which we get similar improvements. 
Following \cite{KLP} we denote:
$$L_{m,r}(\alpha) = \max_{g:\F_2^m\to \F_2}\left| \{ f \in RM(m,r) \mid \wt(f-g)\leq \alpha\}\right|.$$
That is, $L_{m,r}(\alpha)$ denotes the maximal number of code words of $RM(m,r)$ in a hamming ball of radius $\alpha 2^m$. The bound concerns $\alpha$ of the form $(1-\e)2^{-\ell}$ for $1\leq \ell \leq r-1$, and our main contribution is making the first factor in the exponent depend on $\ell$ (rather than on $r$ in~\cite{KLP}).

\begin{thm}\label{thm:list-dec}
Let $1\leq \ell \leq r-1$ and $0 < \e \leq 1/2$. %Set $\alpha=(1-\e)2^{-\ell}$. 
Then, 
if $r \leq m/4$ then
$$L_{m,r}((1-\e)2^{-\ell}) \leq (1/\e)^{O\left(\ell^4 {m-\ell \choose \leq r-\ell} \right)}.$$
%where $c$ is the absolute constant from Lemma~\ref{lem:KLP}.
\end{thm}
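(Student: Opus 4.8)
The plan is to follow the strategy of \cite{KLP} for bounding the list-decoding size, but carry out the induction so that the dimension-$\ell$ restriction argument only loses a factor depending on $\ell$ rather than on $r$, exactly mirroring what was done for Theorem~\ref{thm:intro:wt-dist}. First I would recall the basic mechanism: if $f\in RM(m,r)$ satisfies $\wt(f-g)\le (1-\e)2^{-\ell}$ for a fixed target function $g$, then for a $\ell$-dimensional affine subspace $V$ chosen suitably, the restriction $f|_V$ is a degree-$\le r$ polynomial in $\ell$ variables whose agreement with $g|_V$ is bounded, and — crucially — for most such $V$ the restriction lies in a small, structured set. The key identity from \cite{KLP} is that a codeword of relative weight just below $2^{-\ell}$ has the property that when restricted to a random $\ell$-flat, it is \emph{constant} (or in a bounded list) with noticeable probability; this lets one ``peel off'' $\ell$ coordinates' worth of the exponent.

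The concrete steps, in order, are: (1) Set up the counting over all functions $g:\F_2^m\to\F_2$ and the associated lists $\{f\in RM(m,r):\wt(f-g)\le(1-\e)2^{-\ell}\}$, reducing as in \cite{KLP} to controlling, for each $f$ in the list, the behavior of $f-g$ on random affine subspaces of dimension $\ell$. (2) Show that for a $(1-\e)2^{-\ell}$-close pair, a random $\ell$-flat $V$ has $\Pr[(f-g)|_V \equiv 0] \ge \e$ or some similar $\e$-sized quantity (this is the same inclusion-exclusion / weight-on-subspaces computation \cite{KLP} use, and it is where the hypothesis $r-1<m/4$ or $r\le m/4$ is needed so that the relevant binomial sums behave). (3) Cover $\F_2^m$ by an appropriate family of $\ell$-flats and argue that $f$ is determined by its values on the flats on which $(f-g)|_V$ vanishes, together with the residual degree-$(\le r-\ell)$ information on a quotient of dimension $m-\ell$ — this is precisely the step that produces the ${m-\ell\choose \le r-\ell}$ in the exponent. (4) Recurse/iterate, or directly count: each ``bad'' flat costs a factor of at most $(1/\e)^{O(1)}$ per the agreement bound, and bookkeeping the number of flats and the residual code gives the total bound $(1/\e)^{O(\ell^4{m-\ell\choose\le r-\ell})}$, with the $\ell^4$ coming from the combinatorial overhead of the flat-covering argument (number of directions times a polynomial-in-$\ell$ loss in the subspace-weight estimate). (5) Finally observe that the maximum over $g$ is what defines $L_{m,r}$, so the per-$g$ bound is exactly the claimed bound.

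The main obstacle — and the only place the proof genuinely differs from a black-box citation of \cite{KLP} — is step (2)/(3): one must verify that the refined subspace-weight inequality (that a codeword of relative weight $<(1-\e)2^{-\ell}$ has, on a noticeable fraction of $\ell$-flats, a restriction forced into a tiny set) holds with the exponent scaling as $\ell^4{m-\ell\choose\le r-\ell}$ rather than $r^4{m\choose \le r}$ or similar, and that this survives for $r$ as large as $m/4$ rather than only constant $r$. This requires redoing the binomial-coefficient estimates of \cite{KLP} carefully in the regime $\ell\le r-1\le m/4$, tracking how ${m\choose\le r}$ versus ${m-\ell\choose \le r-\ell}$ enter, and checking that the error terms (from functions that are close to $g$ on a flat but not identically equal) remain negligible; this is the same technical heart as in the proof of Theorem~\ref{thm:intro:wt-dist}, so in practice I would prove Theorem~\ref{thm:intro:wt-dist} in full and then remark, as \cite{KLP} do, that replacing ``$\wt(f)\le\alpha$'' by ``$\wt(f-g)\le\alpha$'' and maximizing over $g$ changes nothing in the argument except that the ambient object is now a coset of the code rather than the code itself. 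The routine parts — the union bounds, the covering of $\F_2^m$ by flats, and the final exponent arithmetic — I would not spell out here beyond indicating that they are identical to the weight-distribution case.
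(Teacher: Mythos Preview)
Your concluding remark is exactly right and is all the paper says: the list-decoding bound follows from ``almost the exact same proof'' as Theorem~\ref{thm:wt-dist}, because in Lemma~\ref{lem:KLP} and the counting that follows one may replace the hypothesis $\wt(f)\le(1-\e)2^{-\ell}$ by $\wt(f-g)\le(1-\e)2^{-\ell}$ for an arbitrary fixed $g$, and then maximize over $g$. So if you really do ``prove Theorem~\ref{thm:intro:wt-dist} in full and then remark \ldots'', you have the paper's proof.

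The detailed mechanism you outline in steps (1)--(5), however, is not that proof, and step (3) does not work as written. The paper (following \cite{KLP}) never restricts $f$ to random $\ell$-flats; it uses iterated \emph{derivatives} $\Delta_{Y}f$ together with the approximation algorithm $\A$ of Lemma~\ref{lem:KLP}. The factor ${m-\ell\choose\le r-\ell}$ arises because each $\Delta_{Y}f$ has degree $\le r-\ell$ and is determined on the quotient $\F_2^m/\langle Y\rangle$ of dimension $m-\ell$; the $\ell^4$ arises from choosing the approximation parameter $\delta=(1-\e)2^{-\ell-2}$ (so that $t=O(\ell^2\log(1/\e))$), bounding the residual list near each approximator by $W_{m,r}((1-\e)2^{-\ell-1})$ via the triangle inequality, and inducting on $\ell$. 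Your step (3) conflates restriction with derivative: restricting a degree-$r$ polynomial to an $\ell$-flat gives a degree-$\le\min(r,\ell)$ polynomial in $\ell$ variables, not ``degree-$(\le r-\ell)$ information on a quotient of dimension $m-\ell$'', and the flat-covering argument you sketch does not by itself produce the exponent ${m-\ell\choose\le r-\ell}$.
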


%\Amnote{Should we say the following:}
%As the proof is very similar to the proof of Theorem~\ref{thm:intro:wt-dist} we omit it.

\subsubsection{Random errors - the BSC channel}

We now return to discuss decoding from random errors.
Our next result shows that Reed-Muller codes achieve capacity also for the case of random errors at the low rate regime. The proof of this result  relies on Theorem~\ref{thm:intro:wt-dist}.

\begin{thm}[See Theorem~\ref{thm:errors_from_weights}]\label{thm:intro-low-BSC}
For $r=o(m)$, $RM(m,r)$ achieves capacity for the BSC. More precisely, for every
$\delta>0$ and  $\eta = O(1/\log(1/\delta))$  the following holds:  
For every $r \leq\eta m$, $RM(m,r)$ is $\delta$-close to capacity for the {BSC}. 
%I.e., for and any $p$ satisfying  
%\begin{align}
%1-H(p)=(1+\delta)R, \quad \text{where } R=\frac{{m\choose \leq r}}{n},
%\end{align}
%$RM(m,r)$ can correct $pn$ random errors with probability at least 
%$1-o(1)$.
\end{thm}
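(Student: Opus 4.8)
The plan is to deduce capacity at low rate for the BSC from the weight‐distribution bound of Theorem~\ref{thm:intro:wt-dist} via a standard union‐bound over error patterns, combined with the classical observation that maximum‐likelihood decoding on the BSC succeeds unless there is a \emph{lighter} codeword near the received word. Concretely, fix $r=o(m)$ and a target corruption level $p$ with $h(p)\ge (1-R)(1+\e)$, so that $pn = s$ errors are to be corrected by $RM(m,r)$, whose dimension is $k={m\choose\le r}$ and whose block length is $n=2^m$. Linearity lets us assume the transmitted codeword is $0$; ML fails only if some nonzero codeword $c$ satisfies $w(c+Z)\le w(Z)$ for the random error vector $Z$ of weight $s$, equivalently only if $Z$ falls into the ``bad'' region for some nonzero $c$. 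The probability of this event is at most $\sum_{c\ne 0}\Pr_Z[\,w(c+Z)\le s\,]$, and each such probability depends only on $w(c)$; grouping codewords by weight, the failure probability is bounded by $\sum_w A_w \cdot q(w)$, where $A_w$ is the number of weight-$w$ codewords and $q(w)$ is the probability that a random weight-$s$ pattern moves a fixed weight-$w$ word at least halfway toward the origin.

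The first step is to make $q(w)$ explicit: for a codeword of relative weight $\alpha=w/n$, one has $q(w)\le 2^{-n\,(D)}$ for a suitable exponent $D=D(\alpha,p)>0$ that is bounded below by something like $\frac{(\,\alpha\text{-gap from the ML threshold}\,)^2}{\text{const}}$ whenever $\alpha$ is bounded away from $0$; this is a routine large-deviations / Chernoff computation for the hypergeometric-type quantity ``overlap of a random $s$-set with a fixed $w$-set.'' The second, and crucial, step is to control $A_w=W_{m,r}(\alpha)-W_{m,r}(\alpha^-)$ using Theorem~\ref{thm:intro:wt-dist}. Writing $\alpha=(1-\e')2^{-\ell}$ for the appropriate scale $\ell$, that theorem gives $A_w\le (1/\e')^{O(\ell^4{m-\ell\choose\le r-\ell})}$. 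The point is to choose the regime $r\le\eta m$ with $\eta=O(1/\log(1/\delta))$ precisely so that the exponent ${m-\ell\choose\le r-\ell}$, which is exponentially smaller than $k$ (since $\ell\ge 1$ and $r$ is linearly small), together with the $\ell^4$ and $\log(1/\e')$ factors, is dominated by the $2^{n\cdot\Theta(\e^2)}$ decay coming from $q(w)$. Summing over the $O(n)$ possible weights then gives total failure probability $o(1)$.

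The step I expect to be the main obstacle is matching the bookkeeping between the two competing exponential scales near the ``heavy'' end of the weight distribution, i.e.\ for codewords whose relative weight $\alpha$ is close to $p$ itself (equivalently $\ell$ close to its maximal relevant value). There, the Chernoff gain $q(w)$ is only mildly exponential in $n$, while $\ell^4{m-\ell\choose\le r-\ell}$ is largest; one has to verify that the constraint $r\le\eta m$ with $\eta=O(1/\log(1/\delta))$ really does force ${m-\ell\choose\le r-\ell}\le o\!\big(n\,\e^2\big)$ uniformly over the relevant range of $\ell$, using the crude estimate ${m-\ell\choose\le r-\ell}\le 2^{(m-\ell)H(\,(r-\ell)/(m-\ell)\,)}$ and the fact that $H(\eta)\cdot m < m$ with a definite gap. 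A secondary technical annoyance is handling the very light codewords (those of weight close to the minimum distance $2^{m-r}$, hence $\ell\approx r$), and the boundary case $w(c+Z)=w(Z)$ (ties in ML); both are dealt with by noting that extremely light codewords are too few (indeed, for $\ell=r$ the exponent collapses) and by breaking ties arbitrarily, which at most doubles the bound. Once these are in place, the $\delta$-closeness statement follows by tracking how $\e$ in $h(p)\ge(1-R)(1+\e)$ translates into the $\delta$ in the theorem, and by invoking the Chernoff reduction (already noted in the paper) from ``exactly $s=pn$ errors'' to ``$\mathrm{BSC}(p)$ errors.''
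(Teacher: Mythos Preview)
Your plan is essentially the paper's own proof: a union bound over nonzero codewords, grouped into dyadic weight levels $(1-\e)2^{-\ell-1}\le\alpha\le(1-\e)2^{-\ell}$, with the count at each level controlled by Theorem~\ref{thm:intro:wt-dist} and the per-codeword probability written explicitly as $\binom{w}{w/2}\binom{n-w}{s-w/2}/\binom{n}{s}$ and then estimated via entropy approximations. One small correction to your bookkeeping intuition: the tight balance occurs at the \emph{small}-$\ell$ end (the $\ell=0$ heavy tail is handled separately with the trivial bound $2^k$ on the number of codewords, and $\ell=1$ is the worst case among the remaining levels), and the estimate that makes the arithmetic go through cleanly is ${m-\ell\choose\le r-\ell}\le(r/m)^\ell{m\choose\le r}$ rather than the entropy upper bound you mention.
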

%\Anote{achieving capacity, and "close" to achieving capacity is not defined}

%Thus, in this parameter regime of Theorem~\ref{thm:intro-low-BSC}, $RM(m,r)$ achieves capacity for the BSC.

To obtain results about the behavior of high-rate Reed-Muller codes with respect to random errors we use a novel connection between robustness to errors and robustness to erasures in related Reed-Muller codes. 

%\Amnote{Write a theorem in Section~\ref{sec:deg-r}}

\begin{thm}[See Theorem~\ref{thm:erasures_to_errors}]\label{thm:intro:erasures_to_errors}
%[See Section~\ref{sec:deg-r}]\label{thm:intro:erasures_to_errors}
If a set of columns $U$ are linearly independent in $E(m,r)$ (namely, $RM(m,m-r-1)$ can correct the {\em erasure} pattern $\1_U$), then the {\em error} pattern $\1_U$ can be corrected (i.e., it is uniquely decodable) in $RM(m,m-(2r+2))$.
\end{thm}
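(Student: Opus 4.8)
The plan is to build the decoder for errors in $RM(m, m-(2r+2))$ out of the erasure‑correcting capability promised by linear independence of the columns $U$ in $E(m,r)$, i.e.\ the fact that $RM(m,m-r-1)$ recovers the erasure pattern $\1_U$. Write $C = RM(m,m-r-1)$ and $C' = RM(m,m-(2r+2))$, and note the crucial multiplicative structure: $C'$ is spanned by products of pairs of codewords? — more precisely, $RM(m,a)\cdot RM(m,b)\subseteq RM(m,a+b)$, so a degree‑$(2r+1)$ polynomial is a sum of products of two degree‑$(r+1)\cdot$-ish objects; the right statement to use is that the \emph{dual} of $C'$, namely $E(m,2r+1)$, factors through $E(m,r)$. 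Concretely, for $u\in\F_2^m$ the column $u^{2r+1}$ of $E(m,2r+1)$ is obtained from $u^{r}$ by taking all products of pairs of its coordinates (monomials of degree $\le 2r+1$ split, somewhat wastefully, as degree $\le r$ times degree $\le r+1$; one should pick the split that makes the bookkeeping cleanest). Thus a parity‑check for $C'$ applied to an error vector is a quadratic‑type function of the parity‑check of $C$ applied to the same vector.

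**The decoding argument.** Let $Z = \1_U$ be the error pattern and suppose a codeword $c\in C'$ is sent, so the receiver sees $c + Z$. The receiver computes the syndrome $H(m,2r+1)(c+Z) = H(m,2r+1)Z$, since $H(m,2r+1) = E(m,2r+1)$ is a parity‑check matrix for $C' = RM(m,m-(2r+2))$ (by the duality Lemma~\ref{duality}). Now I would show that from this syndrome one can \emph{read off} the syndrome $H(m,r)Z = E(m,r)\,Z = \sum_{u\in U} u^{r}$, i.e.\ the sum of the corresponding columns of $E(m,r)$: indeed the entries of $E(m,2r+1)Z$ indexed by degree‑$\le r$ monomials already give exactly $E(m,r)Z$ (the top block). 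So decoding errors in $C'$ is \emph{at least as hard as} knowing $\sum_{u\in U}u^r$; the point of the extra factor of two in the degree is that the syndrome also contains all the \emph{pairwise products}, $\sum_{u\in U} u^r_i u^r_j$ for all $i,j$, i.e.\ the full Gram‑type data $\sum_{u\in U} u^r (u^r)^{\!\top}$ reduced mod $2$. The linear‑independence hypothesis on $\{u^r : u\in U\}$ means this set of vectors is in ``general position,'' and I would argue that from the vector $v = \sum_u u^r$ together with the matrix $M = \sum_u u^r(u^r)^\top$ (both over $\F_2$) one can uniquely recover the \emph{set} $\{u^r : u\in U\}$, hence $U$ itself, hence $Z$. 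The mechanism: for any linear functional $\lambda$, $\lambda^\top M \lambda = \sum_u (\lambda^\top u^r)^2 = \sum_u \lambda^\top u^r = \lambda^\top v$ trivially over $\F_2$, so the quadratic part must be used more cleverly — one tests, for each candidate vector $w$ in the (known, since $U$ ranges over columns) column set, whether $w\in U$ by evaluating an appropriate bilinear expression; linear independence guarantees these tests are consistent with exactly one subset.

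**Cleaner route via erasure correction directly.** Rather than reinventing the recovery step, the intended argument almost surely goes: linear independence of the columns $U^r$ says that the map $\1_U \mapsto E(m,r)\1_U$ is injective \emph{on vectors supported on $U$} in the strong sense needed — more to the point, it says $RM(m,m-r-1)$ corrects the erasure pattern $\1_U$, meaning that knowing a codeword of $RM(m,m-r-1)$ outside $U$ determines it. I would convert the received word $c+Z$ into an \emph{erasure} instance for $RM(m,m-r-1)$ as follows. Square coordinatewise does nothing over $\F_2$, but the product code structure $RM(m,m-(2r+2)) = $ (roughly) $RM(m,m-r-1)\star RM(m,m-r-1)$ lets one view $c+Z$, or rather a derived vector, as a codeword of a product code corrupted by $Z$; the positions in $U$ behave like erasures for the component code, and the erasure‑correction hypothesis fills them in, thereby locating $U$ and correcting $Z$. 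I would make this precise by exhibiting the explicit linear map taking $H(m,2r+1)(c+Z)$ to $H(m,r)\1_U$, then invoking the equivalence (Lemma~\ref{lem:equiv2}) between ``$U$ linearly independent in $E(m,r)$'' and ``$\1_U$ a correctable erasure pattern for $RM(m,m-r-1)$'' to conclude $U$ is determined, and finally checking that a decoder can actually carry this out (ML suffices, per the standing convention).

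**Main obstacle.** The delicate point is the passage from the \emph{syndrome of the error} under the bigger code to the \emph{set} $U$, using only linear independence of $\{u^r\}_{u\in U}$ rather than, say, independence of $\{u^{2r+1}\}_{u\in U}$ (which would be a stronger and unhelpfully restrictive hypothesis). I expect the crux is verifying that the degree doubling $r \rightsquigarrow 2r+1$ supplies \emph{exactly} enough extra syndrome coordinates — the ``products'' — to disambiguate $U$ among all subsets $U'$ with $\sum_{u\in U'} u^r = \sum_{u\in U} u^r$, and that linear independence of $U^r$ is the precise condition making this disambiguation possible (intuitively: distinct subsets of an independent set have distinct sums, but here we already fixed the sum, so one needs the product data to break further ties arising from the coincidence of sums across subsets that are \emph{not} both inside one independent set). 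Getting the quantifiers right here — "for all error patterns $Z$ with $\supp(Z)=U$" versus "for the specific random $U$" — and making sure the constructed $C'$ has the claimed parameters $m-(2r+2)$ rather than something slightly worse, is where I would spend the most care.
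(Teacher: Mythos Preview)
Your proposal correctly identifies the shape of the reduction (the $E(m,2r+1)$ syndrome contains the $E(m,r)$ syndrome plus ``higher order'' information) and correctly isolates the crux (why does linear independence of $\{u^r\}$ let one recover $U$ from the richer syndrome?). But the proposal does not actually cross that crux, and the specific mechanism you suggest would fail.

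\medskip

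\textbf{The gap.} You propose to recover $U$ from the ``Gram--type data'' $v=\sum_{u\in U}u^r$ and $M=\sum_{u\in U}u^r(u^r)^{\!\top}$ over $\F_2$. As you yourself note, the diagonal/quadratic part of $M$ collapses to $v$ over $\F_2$, so $M$ really encodes only the degree~$\le 2r$ monomial sums $\sum_{u\in U}M'(u)$. This is \emph{exactly} the data that the paper's counterexample (Section~\ref{sec:counterex}) shows to be insufficient: for $r=1$ there exist linearly independent $U$ and $V\ne U$ with $U\sim_2 V$. So no argument based solely on $v$ and $M$ --- i.e.\ on $E(m,2r)$ --- can work; the $+1$ in $2r+1$ is essential, and your bilinear--test idea does not use it.

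\medskip

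\textbf{The missing idea.} The paper does not try to reconstruct $U$ from syndrome statistics. Instead it argues by contradiction: suppose $V\ne U$ with $|V|\le|U|$ satisfies $E(m,2r+1)\1_U=E(m,2r+1)\1_V$, i.e.\ $U\sim_{2r+1}V$. Linear independence of $\{u_i^r\}$ gives \emph{dual polynomials} $f_i\in\mathbb P(m,r)$ with $f_i(u_j)=\delta_{ij}$ --- this is the algebraic content of the hypothesis you never invoke. Then for each coordinate $\ell$ and each degree~$\le r$ monomial $M$, the probe polynomial $M\cdot x_\ell\cdot f_1$ has degree $\le 2r+1$, so $\sum_j(Mx_\ell f_1)(u_j)=\sum_j(Mx_\ell f_1)(v_j)$. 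The left side is $M(u_1)\cdot(u_1)_\ell$, a single bit of $u_1^{\,r}$; varying $M$ shows that the sum of $v_j^{\,r}$ over the set $J_{f_1,\ell}=\{j:(v_j)_\ell=(u_1)_\ell,\ f_1(v_j)=1\}$ equals $u_1^{\,r}$. Since this holds for every $\ell$, all the $J_{f_1,\ell}$ coincide (by linear independence of $\{v_j^{\,r}\}$, which follows from the span containment just proved), forcing some $v_j=u_1$. Repeating for each $u_i$ gives $U=V$. The extra degree --- the factor $x_\ell$ --- is precisely what lets you localize $u_1$ coordinate by coordinate; without it you get only the $r=1$, degree-$2$ situation, which fails.
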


%\Amnote{I changed $RM(m,m-(2r+1))$ to $RM(m,m-(2r+2)$. Please verify}
%\Enote{Amir: I agree with $m-(2r+2)$ since we need $E(m,2r+1)$ for the parity-check matrix}
%

Using Theorem~\ref{thm:intro:high_deg_BEC} this gives a new result on correcting random errors in Reed-Muller codes. 

\begin{thm}[See Theorem~\ref{thm:main_for_BSC}]\label{thm:intro:main_for_BSC}
For $r= O(\sqrt{m/\log m})$, $RM(m,m-(2r+2))$ can correct a random error pattern of weight $(1-o(1))\cdot{m\choose \leq r}$ with probability larger than $1-o(1)$.
\end{thm}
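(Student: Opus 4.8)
The plan is to combine Theorem~\ref{thm:intro:erasures_to_errors} (the erasures-to-errors reduction) with Theorem~\ref{thm:intro:high_deg_BEC} (random column independence in $E(m,r)$ for small $r$), instantiated at the right parameters. Concretely: let $U$ be a uniformly random set of columns of $E(m,r)$ of size $(1-o(1)){m\choose \leq r}$, where $r = O(\sqrt{m/\log m})$. By Theorem~\ref{thm:intro:high_deg_BEC}, with probability $1-o(1)$ the submatrix $U^r$ has full column rank, i.e.\ the columns indexed by $U$ are linearly independent in $E(m,r)$. Theorem~\ref{thm:intro:erasures_to_errors} then says that whenever this linear-independence event holds, the error pattern $\1_U$ is uniquely decodable in $RM(m,m-(2r+2))$. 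Chaining the two statements, a uniformly random $U$ of this size is uniquely decodable as an error pattern in $RM(m,m-(2r+2))$ with probability $1-o(1)$, which is exactly what the theorem asserts.

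The one genuine subtlety is the distinction between ``a uniformly random set $U$ of a fixed size $s$'' and ``a random error pattern'', i.e.\ each coordinate flipped independently with probability $p = s/n$. This is handled exactly as flagged in the ``Channels, capacity and capacity-achieving codes'' discussion in Section~\ref{sec:notation}: by the Chernoff--Hoeffding bound the number of errors in the i.i.d.\ model concentrates within $pn \pm \omega(\sqrt{pn})$ of its mean, and a standard monotonicity/averaging argument lets one pass between the fixed-weight and i.i.d.\ formulations while only affecting the weight by a $1\pm o(1)$ factor, which is absorbed into the $(1-o(1)){m\choose \leq r}$ bound. So it suffices to prove the statement for a uniformly random set $U$ of the stated fixed cardinality, as in Theorem~\ref{thm:intro:high_deg_BEC}.

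I would also want to double-check the parameter bookkeeping: Theorem~\ref{thm:intro:high_deg_BEC} is stated for $E(m,r)$ with $r = O(\sqrt{m/\log m})$, and its BEC corollary (Theorem~\ref{thm:intro:RM_for_BEC_high_degree}) applies to $RM(m,m-r-1)$ via duality; here we instead feed the column-independence conclusion directly into Theorem~\ref{thm:intro:erasures_to_errors}, which outputs a guarantee for $RM(m,m-(2r+2))$. Since $r = O(\sqrt{m/\log m})$, we have $m-(2r+2) = m - O(\sqrt{m/\log m})$, so this is indeed a very-high-rate code, consistent with the regime claimed. Nothing here requires the weight-distribution bounds; unlike the low-rate BSC result (Theorem~\ref{thm:intro-low-BSC}), this high-rate result is purely a corollary of the reduction plus the erasure rank bound.

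The main obstacle is essentially nonexistent at the level of this final theorem — all the real work has been pushed into Theorems~\ref{thm:intro:erasures_to_errors} and~\ref{thm:intro:high_deg_BEC}. If I were proving it from scratch, the hard part would be establishing the erasures-to-errors reduction itself (constructing the auxiliary code and showing unique decodability of $\1_U$ as an error pattern follows from correctability of $\1_U$ as an erasure pattern in the dual-degree code), and, independently, the probabilistic linear-algebra argument that $(1-o(1)){m\choose\leq r}$ random columns of $E(m,r)$ stay independent — but both of those are assumed here. Given them, the proof is a two-line composition plus the routine fixed-weight-versus-i.i.d.\ reduction.
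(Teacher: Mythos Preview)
Your proposal is correct and matches the paper's own proof essentially line for line: the paper derives Theorem~\ref{thm:main_for_BSC} directly by invoking Theorem~\ref{thm:lin_ind_eval_vectors} (the formal version of Theorem~\ref{thm:intro:high_deg_BEC}) to get linear independence of $U^r$ with probability $1-\e$, and then Theorem~\ref{thm:erasures_to_errors} (the formal version of Theorem~\ref{thm:intro:erasures_to_errors}) to conclude unique decodability of $\1_U$ in $RM(m,m-(2r+2))$. Your remark about the fixed-weight versus i.i.d.\ models is also consistent with how the paper handles it (Definition and Lemma~\ref{equiv-models} in Section~\ref{prelim}), though note that the paper's statement of Theorem~\ref{thm:main_for_BSC} is already phrased for the uniform fixed-weight model, so that reduction is not even needed here.
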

%\Enote{Perhaps compare the above statement with the one I wrote it in the summary table.}

While this result falls short of showing that Reed-Muller codes achieve capacity for the BSC in this parameter range, it does show that they can cope with many more errors than suggested by their minimum distance. Recall that the minimum distance of $R(m,m-(2r+2))$ is $2^{2r+2}$. Achieving capacity for this code means that it should be able to correct roughly ${m \choose 2r}$ random errors. Instead we show that it  can handle roughly ${m \choose \leq r}$ random errors, which is approximately the square root of the number of errors at capacity. \\

%\Enote{Shouldn't it be ${m \choose 2r}$ instead of ${m \choose 2r+1}$ above?}

The proof of Theorem~\ref{thm:intro:erasures_to_errors} reveals a more general phenomenon, that of reducing error correction to erasure correction. We prove that for any linear code $C$, of very high rate, there is another linear code $C'$ of related high rate, so that if $C$ can correct the {\em erasure} pattern $\1_U$ then $C'$ can correct the {\em error} pattern $\1_U$. Furthermore $C'$ is very simply defined from $C$. The decline in quality of $C'$ relative to $C$ is best explained in terms of the co-dimension (namely the number of linear constraints on the code, or equivalently the number of rows of its parity-check matrix). We prove that the co-dimension of $C'$ is roughly the cube of the co-dimension of $C$. We now state this general theorem.

%that the connection between correcting from {\em erasures} (for some linear code) to correcting from {\em errors} (in a related linear code) does not only hold for RM codes, but actually extends to {\em any} linear code! We show the following. 
For a matrix $H$ we denote by $H^r$ the corresponding matrix that contains the evaluations of {\em all} columns of $H$ by all degree $\leq r$ monomials (in an analogous way to the definition of $U^r$ from $U$).

\begin{thm}[See Theorem~\ref{thm:general_erasures_tensored_to_errors}]\label{thm:intro:erasures_tensored_to_errors}
If a set of columns  $U$ is linearly independent in a parity check matrix $H$, then the code that has $H^3$ as a parity check matrix can correct the error pattern $\1_U$.
\end{thm}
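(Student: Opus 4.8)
The plan is to show that for every nonzero $c\in C'=\ker(H^3)$ one has $w(\1_U\oplus c)>w(\1_U)=|U|$; then $\1_U$ is the unique minimum‑weight vector in its coset and hence is recovered by ML decoding, i.e.\ $C'$ corrects the error pattern $\1_U$. Write $\ell$ for the number of rows of $H$ and $h_1,\dots,h_n\in\F_2^{\ell}$ for its columns. Fix $c\in C'\setminus\{0\}$, put $V=\supp(c)$, $V_1=V\cap U$, $V_2=V\setminus U$, and note $w(\1_U\oplus c)=|U|+|V_2|-|V_1|$, so it suffices to prove $|V_2|>|V_1|$ (the case $V_1=\emptyset$ is immediate since $V\neq\emptyset$). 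The crucial translation is: the rows of $H^3$ are indexed by the sets $T\subseteq[\ell]$ with $|T|\le 3$, with $(H^3)_{T,i}=\prod_{k\in T}(h_i)_k$, so $c\in\ker(H^3)$ says exactly that $\sum_{i\in V}\prod_{k\in T}(h_i)_k=0$ for all such $T$. Using $x^2=x$ over $\F_2$, this is equivalent to the vanishing of the first three moment tensors, $\sum_{i\in V}h_i^{\otimes d}=0$ for $d=0,1,2,3$; splitting $V=V_1\sqcup V_2$ gives $\sum_{i\in V_1}h_i^{\otimes d}=\sum_{i\in V_2}h_i^{\otimes d}$ for each $d$.

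First I would extract $|V_2|\ge|V_1|$ from $d=2$, where $\sum_{i\in V_1}h_ih_i^{\top}=\sum_{i\in V_2}h_ih_i^{\top}$ as matrices over $\F_2$. The hypothesis that $C$ corrects the erasure pattern $\1_U$ is precisely that $\{h_i:i\in U\}$ is linearly independent, hence so is $\{h_i:i\in V_1\}$; consequently $M:=\sum_{i\in V_1}h_ih_i^{\top}$ has $\ker M$ equal to the orthogonal complement of $\mathrm{span}\{h_i:i\in V_1\}$ (if $Mz=\sum_{i\in V_1}\langle h_i,z\rangle h_i=0$ then $\langle h_i,z\rangle=0$ for all $i\in V_1$ by independence), so $\rank M=|V_1|$. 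But $\sum_{i\in V_2}h_ih_i^{\top}$ is a sum of $|V_2|$ matrices of rank $\le 1$ and so has rank $\le|V_2|$; equality of the two sides gives $|V_1|\le|V_2|$.

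It remains to exclude the tie $|V_1|=|V_2|=:a$. A linear change of coordinates on $\F_2^{\ell}$ preserves all the tensor identities, so I may assume $\{h_i:i\in V_1\}=\{e_1,\dots,e_a\}$. Let $B$ be the $\ell\times a$ matrix whose columns are $\{h_i:i\in V_2\}$. The $d=2$ identity is now $BB^{\top}=\diag(\1_{[a]})$; writing $B=\left(\begin{smallmatrix}R\\ Q\end{smallmatrix}\right)$ with $R$ the top $a$ rows, this yields $RR^{\top}=I_a$ (so $R$ is invertible and ``orthogonal'' over $\F_2$) and $RQ^{\top}=0$, hence $Q=0$. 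Restricting the $d=3$ identity to indices in $[a]$ gives $\sum_{t}R_{\alpha t}R_{\beta t}R_{\gamma t}=\1[\alpha=\beta=\gamma]$ for all $\alpha,\beta,\gamma\in[a]$; for fixed $\alpha\neq\beta$, combining the cases $\gamma\notin\{\alpha,\beta\}$ with the cases $\gamma\in\{\alpha,\beta\}$ (which follow from $RR^{\top}=I_a$) shows that the coordinatewise product of rows $\alpha$ and $\beta$ of $R$ is orthogonal to every row of $R$, hence $0$ since the rows of the invertible $R$ span $\F_2^{a}$. Thus the rows of $R$ have pairwise disjoint supports, and being $a$ nonzero vectors in $\F_2^{a}$ they each have weight $1$, so $R$ is a permutation matrix. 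Undoing the coordinate change, $\{h_i:i\in V_2\}=\{h_i:i\in V_1\}$ as sets, i.e.\ $H$ has a repeated column, one copy indexed in $U$ and one outside — which does not happen for $H=E(m,r)$, the distinct‑column matrices to which the theorem is applied. For a general $H$ the $d=2$ step already shows $\1_U$ is a minimum‑weight coset representative; the $d=3$ refinement is exactly what upgrades this to \emph{unique} decodability, which is why the construction tensors $H$ up to degree $3$ — giving $C'$ the co‑dimension ${\ell\choose\leq 3}$, roughly the cube of the co‑dimension $\ell$ of $C$ — rather than only to degree $2$.

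The step I expect to be the real work is the last one. The second‑moment argument is short and completely robust, but the delicate points will be (a) fixing the exact meaning of ``$C'$ corrects the error pattern $\1_U$'' and confirming that the third moment genuinely buys strict inequality — equivalently, the small linear‑algebra fact that an $\F_2$‑orthogonal matrix all of whose triple row‑overlaps are even must be a permutation — and (b) checking that the coordinate change and the bookkeeping around (potentially) repeated columns are compatible with that definition. The moment‑tensor dictionary and the rank counts should be routine.
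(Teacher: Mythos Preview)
Your proof is correct and takes a route that is closely related to, but somewhat cleaner than, the paper's. Both arguments rest on the same two ideas: (i) the second-moment identity $\sum h_i h_i^\top$ forces a rank constraint ruling out a strictly lighter companion, and (ii) the third-moment identity pins the tie case down to a permutation. The paper packages these via its ``parity of patterns'' lemma and works with the full sets $U$ and $V'=U\triangle\supp(c)$: after changing basis so the columns of $U$ are $e_1,\dots,e_s$, it argues combinatorially (Claim~\ref{claim: V spans e1}) that certain sums of columns of $V'$ equal each $e_i$, and then uses linear independence of $V'$ (from the $d=2$ step) to force $V'=U$. Your packaging via moment tensors and the split $V_1=V\cap U$, $V_2=V\setminus U$ is more economical: the rank argument $\rank\bigl(\sum_{i\in V_1}h_ih_i^\top\bigr)=|V_1|$ versus $\rank\bigl(\sum_{j\in V_2}h_jh_j^\top\bigr)\le|V_2|$ handles all weights at once (the paper's proof of Lemma~\ref{lem:deg_3} tacitly assumes $|V'|=|U|$ in the line ``$V'V=I_s$''), and the linear-algebra fact ``an $\F_2$-orthogonal matrix all of whose triple row-overlaps vanish is a permutation'' is a tidy substitute for the paper's pattern-chasing in Claim~\ref{claim: V spans e1} and the subsequent $J_{10}=J_{1*0}=\cdots$ argument.

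You also correctly flag the repeated-column caveat. As stated for general $H$, the theorem is literally false when $H$ has two equal columns $h_i=h_j$ with $i\in U$, $j\notin U$: then $e_i+e_j\in\ker(H^{\otimes 3})$ gives a second weight-$|U|$ coset representative. Your argument makes explicit that the $d\le 2$ step already yields that $\1_U$ is a minimum-weight coset representative for any $H$, while the $d=3$ step upgrades this to the \emph{unique} minimum precisely when the relevant columns are distinct --- which they are for $H=E(m,r)$, the case the paper actually needs. The paper omits this point (its proof of the general statement is ``completely identical and thus omitted''); your treatment is more careful here.
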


Note that applying this result as is to $E(m,r)$ would give a weaker statement than Theorem~\ref{thm:intro:erasures_to_errors}, in which $E(m,2r+1)$ would be replaced by $E(m,3r)$. We conclude by showing that
this result is tight, namely replacing $3$ by $2$ in the theorem above fails, even for RM codes. 

\begin{thm}[See Section~\ref{sec:counterex}]\label{thm:intro:counterex}
There are subsets of columns $U$ that are linearly independent in $E(m,1)$, but such that the patterns $\1_U$ are not uniquely decodable in $E(m,2)$.
\end{thm}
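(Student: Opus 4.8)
The plan is to write down the counterexample explicitly. By Lemma~\ref{duality}, $E(m,2)$ is a parity-check matrix of $RM(m,m-3)$, so ``the pattern $\1_U$ is not uniquely decodable in $E(m,2)$'' means that $\1_U$ is not the unique minimum-weight vector in its coset of $RM(m,m-3)$ (equivalently, maximum-likelihood decoding of a received word $c+\1_U$ cannot return the transmitted codeword $c$ with certainty). Also recall that a set of columns $U$ is linearly independent in $E(m,1)$ precisely when the points of $U$ are affinely independent in $\F_2^m$.

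First I would fix any $3$-dimensional affine subspace $A\subseteq\F_2^m$ (requiring $m\ge 3$, so $|A|=8$) and take $U\subseteq A$ to be any four affinely independent points, i.e.\ an affine basis of $A$; by the remark above, $U$ is then linearly independent in $E(m,1)$. Next I would use the elementary fact that the indicator of a $d$-flat $A=\{x:\ell_1(x)=\dots=\ell_{m-d}(x)=0\}$ equals the evaluation of $\prod_{i=1}^{m-d}(1+\ell_i)$, a polynomial of degree $\le m-d$; hence $\1_A\in RM(m,m-3)$, that is, $E(m,2)\,\1_A=0$. Since $U\subseteq A$ we have $\1_U\oplus\1_A=\1_{A\setminus U}$, so $\1_U$ and $\1_{A\setminus U}$ lie in the same coset of $RM(m,m-3)$ and both have weight $|A|-|U|=4$. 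Thus $\1_U$ is not the unique minimum-weight vector of its coset, which is exactly the asserted failure of unique decoding. Equivalently in syndrome language: with $S:=A\setminus U$ one has $\sum_{u\in U}u^2=\sum_{a\in A}a^2+\sum_{s\in S}s^2=\sum_{s\in S}s^2$, so $\1_U$ and $\1_S$ have equal weight, disjoint supports, and the same image under $E(m,2)$.

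I do not expect a genuine obstacle here; the only point that needs care is the meaning of ``uniquely decodable''. The failure above is a two-way tie rather than a strict improvement, and this is unavoidable: for our $U$, any nonzero $v\in RM(m,m-3)$ with $w(\1_U\oplus v)\le w(\1_U)$ must satisfy $|\supp(v)|\le 2|U|=8$, hence (by the distance bound) $|\supp(v)|=8$, hence, being a minimum-weight codeword, $\supp(v)$ is a $3$-flat containing $U$ and therefore equals $A$. So the example genuinely uses the convention that a tie in ML decoding counts as a decoding failure — which is the right convention, and the one implicit in requiring the transmitted codeword to be recovered with high probability. Finally, the same construction extends to much larger sets: taking $U=\{0,e_1,\dots,e_m\}$ (an affine basis of $\F_2^m$) and $A=\mathrm{span}\{e_1,e_2,e_3\}$ gives $w(\1_U)=w(\1_{U\triangle A})=m+1$, so the obstruction is not an artifact of small $U$.
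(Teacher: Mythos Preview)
Your argument is correct. Taking $U$ to be four affinely independent points inside a $3$-flat $A$ indeed gives $E(m,2)\cdot\1_U=E(m,2)\cdot\1_{A\setminus U}$ (since $\1_A\in RM(m,m-3)$), and the two error patterns have equal weight, so $\1_U$ is not uniquely decodable. Your reading of ``linearly independent in $E(m,1)$'' as affine independence, and of a tie as a decoding failure, both match the paper's conventions.

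The paper takes a different and more algebraic route. It fixes an even $s$ and writes down an explicit $s\times s$ matrix $B$ with $BB^t=I_s$ over $\F_2$ (the $s=6$ case is the matrix displayed earlier in the paper). For \emph{any} $m\times s$ matrix $U$ it sets $V=UB$ and checks directly that $U\sim_2 V$, i.e.\ $E(m,2)\cdot\1_U=E(m,2)\cdot\1_V$; it then argues that for random $U$ the column sets of $U$ and $V$ differ with probability $1-o(1)$. So where you exhibit one explicit $U$ of size $4$, the paper shows that \emph{almost every} linearly independent $U$ of a given even size fails, which is what is needed for the stronger conclusion that $RM(m,m-3)$ does not achieve capacity on the BSC. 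Your construction is cleaner and more geometric for the bare existential statement; the paper's construction buys genericity (and hence the capacity consequence) at the cost of a slightly less transparent gadget.
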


\subsection{Proof techniques}\label{sec:techniques}

Although the statements of Theorems~\ref{thm:intro:high_deg_BEC} and \ref{thm:intro:low_deg_BEC} sound very similar, 
%albeit for a ``complementary'' range of parameters, 
their proofs are very different. We first explain the ides behind the proofs of these two theorems and then give details for the proofs of Theorems~\ref{thm:intro:wt-dist}, \ref{thm:intro-low-BSC}, \ref{thm:intro:erasures_to_errors} and 
\ref{thm:intro:erasures_tensored_to_errors}.

%\Amnote{I reordered the paragraph according to the theorem number}

\paragraph{Proof of Theorem~\ref{thm:intro:high_deg_BEC}}

%\Anote{Here I thought that the explanation was too long and technical for the intro, and I tried to simplify it. The way it was written, with refs to specific lemmas,  would suit very well a "plan of the proof" before the actual one in Section 7. I commented out the original parts I removed.}

The proof of Theorem~\ref{thm:intro:high_deg_BEC} 
%is more involved and requires new ideas. It focuses 
relies
on estimating the size of varieties (sets of common zeros) of linear subspaces of degree $r$ polynomials. Here is a high level sketch. 

Recall that we have to show that if we pick a random set of points $U\subset \F_2^m$, of size $(1-o(1))\cdot {m\choose \leq r}$, and with each point associate its degree-$r$ evaluation vector, then with high probability these vectors are linearly independent. While proving this is simple when considered over large fields, it is quite non-trivial  over very small fields.  We are able to prove that this property holds for degrees $r$ up to (roughly) $\sqrt{m/\log m}$. It is a very interesting question to extend this to larger degrees as well.

To prove that a random set $U$ of appropriate size gives rise to linearly independent evaluation vectors we consider the question of what it takes for a new point to generate an evaluation vector that is linearly independent of all previously generated vectors. As we prove, this boils down to understanding what is the probability that a random point is a common zero of all degree $r$ polynomials, in a certain linear space of polynomials defined by the previously picked points. If this set of common zeros is small, then the success probability (i.e., the probability that a new point will yield an independent evaluation vector) is high, and we can iterate this argument.
%(Lemmas~\ref{lem: new independent point} and \ref{lem: common zero in U^r}). More accurately, if we already picked $k$ evaluation vectors then we need to understand how many common zeros do polynomials in a linear space of dimension $m-k$ have (Lemma~\ref{lem: number of common zeroes}). Indeed, every point which is not a common zero gives rise to a vector that is linearly independent of the already chosen $k$ evaluation vectors. 

To bound the number of common zeros we yet again move to a dual question. Notice that if a set of $K$ linearly independent polynomials of degree $r$ vanishes on a set of points $V$, then there are at most ${m\choose \leq r}-|K|$ linearly independent degree $r$ polynomials that are defined over $V$. In view of this, the way to prove that a given set of polynomials does not have too many common zeros is to show that any large set of points (in our case, the set of common zeros) has many linearly independent degree $r$ polynomials that are defined over it. We give two different proofs of this fact. The first uses a hashing argument;  if $V$ is large then after some linear transformation it supports many different degree $r$ {\em monomials}. The second relies on a somewhat tighter bound that was obtained by Wei \cite{Wei}, who studied the generalized Hamming weight of Reed-Muller codes. While Wei's result gives slightly tighter bounds compared to the hashing argument, we find the latter argument more transparent.
% and so we give it in the appendix. 
  
%\Anote{Is it "generalized Hamming dist?}
  
% (roughly) contains a large ``ball'' around some point. This implies that there are many linearly independent polynomials of degree $r$ that are defined over it (Claim~\ref{cla: A contains ball}). 

%We now explain how everything fits together. Denote by $S$ the set of $k$ points that we already picked and let $\I(S)$ be the set of degree-$r$ polynomials vanishing on $S$ and $\V(\I(S))$ the set of common zeroes of all polynomials in $\I(S)$. We would like to show then, that unless $S$ is very large, $\V(\I(S))$ is small (and therefore a random point will not be in  $\V(\I(S))$, giving rise to an evaluation vector that is linearly independent of the evaluations $S^r$). Now, if $\V(\I(S))$ is not small then, as we claimed, there are many linearly independent degree $r$ polynomials that are defined over it. Hence, $|\I(S)|$ cannot be large. As $|\I(S)| = {m\choose \leq r}-|S|$ this implies that $S$ is large, which is what we wanted to show.  

%\Anote{Again I commented out some technical details which I felt are too much for an intro}

\paragraph{Proof of Theorem~\ref{thm:intro:low_deg_BEC}}
To prove Theorem~\ref{thm:intro:low_deg_BEC} we first observe that a set of columns $U$ (in $E(m,r)$) spans the entire row-space if and only if there is no linear combination of the rows of $E(m,r)$ that is supported on the complementary set $U^c=\F_2^m \setminus U$. As linear combinations of rows correspond to ``truth-tables'' of degree $r$ polynomials, this boils down to proving that, with high probability, no nonzero degree $r$ polynomial vanishes on all points in $U$. For each such polynomial, if we know its weight (the number of nonzero values it takes), this is a simple calculation, and the hope is to use a union bound over all polynomials. To this end, we can partition the codewords to dyadic intervals according to their weights, carry out this calculation and union bound the codewords in each interval and then combine the results. For this plan to work we need a good estimate of the number of codewords in each dyadic interval, which is what Theorem~\ref{thm:intro:wt-dist} gives.

%\Amnote{I moved the discussion of Theorem~\ref{thm:intro:wt-dist} to a different paragraph and made it more technical as I thought that it conveys no information in its current version. It is also ok if we leave it to the main body of the paper}

\paragraph{Proof of Theorem~\ref{thm:intro:wt-dist}}
As mentioned earlier, this theorem improves upon a beautiful result of Kaufman, Lovett and Porat \cite{KLP}. Our proof is closely related to their proof. Roughly, what they show is that any small weight codeword, i.e., a degree $r$ polynomial with very few non-zero values, can be well approximated by a ``few'' partial derivatives. Namely, there is a function that when applied to a few  lower degree polynomials, agrees with the original polynomial on most of the inputs. Here ``few'' depends on the degree $r$, the weight and (crucially for us) the quality of the approximation.   
Kaufman et al. then pick an approximation quality parameter that guarantees that the approximating function can be close to at most {\em one} polynomial of degree $r$. Then, counting the number of possible approximating functions they obtain their bound. The cost is that such a small approximation parameter blows the number of ``few'' derivatives that are required. We diverge from their proof in that we choose a much larger approximation quality parameter, but rather allow each approximating function to be close to {\em many} degree $r$ polynomials. The point is that, by the triangle inequality, all these polynomials are close to each other, and so subtracting one of them from any other still yield polynomials of very small weight. Thus, we can use induction on weight to bound their number, obtaining a better bound on the number of polynomials of a given weight. 

%
%In a beautiful paper, Kaufman, Lovett and Porat \cite{KLP} (Theorem~\ref{thm:KLP}) were the first to give a nearly-tight bound on the number of codewords (i.e. truth-tables of degree $r$ polynomials) of each weight, which makes our plan possible. However, their result deteriorates too fast as a function the degree. To obtain our result we needed a tight estimate, which rather deteriorates as a function of the (minus logarithm of the) relative weight\footnote{Which is at most the degree, but can be much smaller!}. Our proof of the more general bound follows the same outline as the proof of \cite{KLP} but deviates at two points as we explain later in the proof of Theorem~\ref{thm:wt-dist}. 
%
%Combining our improvement of \cite{KLP} with the outline above we get the result.
%
%%For each interval we use a simple probabilistic argument to claim that we hit a nonzero of each polynomial. E.g., if the number of polynomials that have at least a fraction of $\alpha$ nonzeros (and at most $2\alpha$ fraction of nonzeros) is $A(\alpha)$ then w.h.p., if we sample roughly $\frac{1}{\alpha}\cdot \log A(\alpha)$ many evaluation points then we are likely to hit a nonzero for every such polynomial. Finally, a union bound argument gives the result. 

\paragraph{Proof of Theorem~\ref{thm:intro-low-BSC}}
Here we use a coarse upper-bound on the error probability, as for the proof that a random linear code achieves capacity, and show that the argument still holds for RM codes. To prove that a random code can, w.h.p., uniquely decode an error pattern $\1_U$ of weight $w$ we basically wish to show that for no other error pattern $\1_V$, of weight $w$, the vector $\1_U\oplus \1_V$ is a code word (as then both error patterns will have the same syndrome). Stated differently, we want to count how many  different ways are there to represent a codeword as a sum of two vectors of weight at most $w$. This counting depends very much on the weight of the codeword that we wish to split. In random linear codes weights are very concentrated around $n/2$, which makes a union bound easy. Reed-Muller codes however have many more codewords of smaller weights, and the argument depends precisely on how many. Once again Theorem~\ref{thm:intro:wt-dist} comes to the rescue and enables us to make this delicate calculation for each (relevant) dyadic interval of weights. Here too our improvement of \cite{KLP} is essential. 

\paragraph{Proofs of Theorems~\ref{thm:intro:erasures_to_errors},~\ref{thm:intro:main_for_BSC} and \ref{thm:intro:erasures_tensored_to_errors}}

%\Enote{all the $H()$ below should be $E()$ if we go with the standard notation (and we can mention that $E(m,r)=H(m-r-1)$), or we can change $H(m,k)$ into $H(m,m-k-1)$ everywhere - any preferences?}
%
%\Amnote{I changed some of the $H$ to $E$ but tried keeping both as we also talk about syndromes}

Consider an erasure pattern $\1_U$ such that the corresponding set of degree-$r$ evaluation vectors, $U^r$,  is linearly independent. Namely the columns indexed by $U$ in $E(m,r)$ are linearly independent. We would like to prove that $\1_U$ is uniquely decodable from its syndrome  under $H(m,m-2r-2)=E(m,2r+1)$. We  actually prove that if $\1_V$ is another erasure pattern,  which has the same syndrome under $H(m,m-2r-2)$, then $U=V$. The proof may be viewed as a reconstruction (albeit inefficient) of the set $U$ from its syndrome. Here is a high level description of our argument that different (linearly independent) sets of erasure patterns give rise to different syndromes.

We first prove this property for the case $r=1$ (details below). This immediately implies Theorem~\ref{thm:intro:erasures_tensored_to_errors} as every parity check matrix of any linear code is a submatrix of $E(m,1)$ for some $m$. This is a general reduction from the problem of recovering from errors to that of recovering from erasures (in a related code).
As a special case, it also implies that for any $r$, $H(m,m-3r-1)$ uniquely decodes any error pattern $\1_U$ such that the columns indexed by elements of $U$ in $E(m,r)=H(m,m-r-1)$ are linearly independent. We then slightly refine the argument for larger degree $r$ to  replace  $H(m,m-3r-1)$ above by $H(m,m-2r-2)$, which gives Theorem~\ref{thm:intro:erasures_to_errors}.

For the case $r=1$,
%\Amnote{Seems there was a repetition here so I removed my text}
% we first prove that if $\1_U$ and $\1_V$ have the same syndrome, when the columns corresponding to the points in $U$ are linearly independent, then the span of the columns in $V$ is the same as that of the columns in $U$. For this property it is actually enough to only consider $H(m,2)$ and not $H(m,3)$ (and in general $H(m,2r)$ and not $H(m,2r+1)$). We then show that the additional information that we get from $H(m,3)$ is enough to conclude that $U=V$.
the proof divides to two logical steps. In the first part we prove that the columns of $V$ must span the same space as the columns of $U$. This requires only the submatrix $E(m,2)$, namely at pairs of coordinates in each point (degree $2$ monomials). In the second part we use this property to actually identify each vector of $U$ inside $V$. This already requires looking at the full matrix $E(m,3)$, namely at triples of coordinates.
%Very roughly, to show that $U$ and $V$ have the same span let us assume that $U$ is the set of $m$ unit vectors $e_1,\ldots,e_m$. By considering the row corresponding to the monomial $x_1$ we see that there must be an odd number of columns in $V$ that have $1$ in their first coordinate. We now claim that those vectors must add to $e_1$. Indeed, if this is not the case and their sum has a nonzero entry, say at the second coordinate, then by considering the row corresponding to the monomial $x_1 x_2$ we will get that the syndrome of $V$ is different that that of $U$. In this fashion we show that $U$ and $V$ have the same span. To show that $e_1$ actually belongs to $V$ we look at triples of coordinates and therefore need the code $H(m,3)$. See Section~\ref{sec: deg 3} for more details. 

It is interesting that going to triples of coordinate is essential for $r=1$ (and so this result are tight).
We prove that even if the columns of $U$ are linearly independent, then there can be a different set $V$ that has the same syndrome in $E(m,2)$. 
%Thus, it is essential to go to $H(m,3)$ to get unique decoding in this case.
This result is given in Section~\ref{sec:counterex}.
We do not know what is the right bound for general $r$.

%%%%%%%%%%%%%%%%%%%%%%%%%%%%
%%%%%%%%%%%%%%%%%%%%%%%%%%%%
%%%%%%%%%%%%%%%%%%%%%%%%%%%%
%%%%%%%%%%%%%%%%%%%%%%%%%%%%

\subsection{Related literature} 
%\Enote{I changed this section, see if ok}
\noindent
{\bf Recovery from random corruptions}

Besides the conjectures mentioned in the introduction that RM codes achieve capacity, results fall short of that for all but very spacial cases.
We are not familiar of works correcting random erasures.
Several papers have considered the quality of  RM codes for correcting random errors when using specific algorithms, focusing mainly on efficient algorithms.
In \cite{krich}, the majority logic algorithm \cite{reed} is shown to succeed in recovering all but a vanishing fraction of error patterns of weight up to $d \log(d)/4$, where $d=2^{m-r}$ is the code distance, requiring however a positive rate $R>0$. This was later improved to weights up to $d \log(d)/2$ in \cite{dumer2}.  We note that for a fixed rate $0<R<1$, this is roughly $\sqrt{n}$, whereas to achieve capacity one should correct $\Omega(n)$ erasures/errors.

\sloppy A line of work by Dumer \cite{dumer1,dumer2,dumer3} based on recursive algorithms (that exploits the recursive structure of RM codes), obtains results mainly for low-rate regimes. In \cite{dumer1}, it is shown that for a fixed order $r$, i.e., for $k(m,r) =\Theta(m^r)$, an algorithm of complexity $O(n \log (n))$ can correct most error patterns of weight up to $n(1/2 - \e)$ given that $\e$ exceeds $n^{-1/2^r}$. In \cite{dumer2}, this is improved to errors of weight up to $n/2(1-(4m/d)^{1/2^r})$, requiring that $r/\log(m) \to 0$. 
Further, \cite{dumer3} shows that most error patterns of weight up to $n/2(1-(4m/d)^{1/2^r})$ can be recovered in the regime where $\log(m)/(m-r) \to 0$. 

%Note that correcting $s=n/2(1-(4m/d)^{1/2^r})$ random errors with a %scaling-capacity achieving
%\Amnote{changed from scaling capacity to capacity. Also changed $\asymp$ to $\Theta$.}
%capacity achieving code would require $k(m,r)/n =\Theta\left( 1-H(1/2(1-(4m/d)^{1/2^r}))\right) = \Theta\left((4m2^r/n)^{1/2^{r-1}}\right)$, whereas \cite{dumer1} reaches $k(m,r)/n =\Theta( m^r/n)$ which is much smaller. In particular, previous results are not 
%%scaling-capacity achieving. 
%capacity achieving. 

%\Amnote{When discussing Sidelâs result, maybe just say that it is not capacity-achieving because of the constant $2^r$. In any case I donât think that we need the scaling-capacity terminology}

Note that while the previous results rely on efficient decoding algorithms, they are far from being capacity-achieving. 
Concerning maximum-likelihood decoding, known algorithms are of exponential complexity in the blocklength, besides for special cases such as $r=1$ or $r=m-2$ \cite{litsyn}. In \cite{sidel}, it is shown that RM codes of fixed order $r$ can decode most error patterns of weight up to $n/2(1-\sqrt{c(2^r-1)m^r/ n r!}$, where $c> \ln(4)$. However, this does not provide a capacity-achieving result, which would require decoding most error patterns of weight approaching $n/2(1-\sqrt{\ln(4)m^r /n r!})$, i.e., \cite{sidel} has an extra $\sqrt{2^r-1}$ factor. 
%Note that this is scaling-capacity achieving and not far from being sharp-capacity achieving for fixed order $r$: the code rate is $k(m,r) \sim m^r / r!$, to be sharp-capacity achieving, one should be able to support an error rate $s/n$ such that $(1-H(s/n)-\e) \sim k(m,r)/n$ for all $\e>0$, or equivalently, $s/n \sim 1-H^{-1}(k(m,r)/n)-\e  \sim 1/2(1-\sqrt{\ln(4)k(m,r)/n})-\e \sim  1/2(1-\sqrt{\ln(4)m^r /n r!}) -\e$. Hence \cite{sidel} is off by a constant factor $2^{r}-1$ to be sharp-capacity achieving. 

For the special case of $r=1,2$ (i.e., the generator matrix has only vectors of weights $n$, $n/2$ and $n/4$), \cite{hell} shows that RM codes are capacity-achieving. For $r \geq 3$, the problem is left open.  \\

\noindent
{\bf Weight enumeration}

%The distance of $RM(m,r)$ is $2^{m-r}$,  
%This implies that $RM(m,r)$ can correct $2^{m-r}-1$ erasures and $2^{m-r-1}-1$ errors with probability 1. 
%Hence $RM(m,m-r)$ requires $r=\log(s)+2$ to correct $s$ errors with probability one (assuming $s$ is a power of 2). 
%In the case of $r=m/2$, the code has hence rate $1/2$ and a distance of roughly $\sqrt{n}$. 
%While this is a vanishing relative distance, it is already a non-trivial set of parameters. 
%which does not say much about the performance of the code in probabilistic settings. 
The weight enumerator (how many codewords are of any given weight) of $RM(m,2)$ was characterized in \cite{sloane-RM}. For $RM(m,3)$, a complete characterization of the weight enumerator is still missing.
The number of codewords of minimal weight is known for any $r$, and corresponds to the number of $(m-r)$-flats in the affine geometry $AG(m,2)$ \cite{sloane-book}. In \cite{kasami1}, the weight enumerator of RM codes is characterized for codewords of weight up to twice the minimum distance, later improved to 2.5 the minimum distance in \cite{kasami2}. 
%The codewords of minimal weights in $RM(m,r)$ can be characterized for all $m$ and $r$, they correspond to the affine subspaces of dimension $m-r$, and there are exactly $(...)$ of them \cite{}. 
%This implies in particular that for $s$ uniform errors ($s$ a power of 2), taking $r=\log(s)+1$ allows to correct these errors with high probability. 

For long, \cite{kasami2} remained the largest range for which the weight enumerator was characterized, until \cite{KLP} managed to breakthrough the $2.5$ barrier and obtained bounds for all distances in the regime of small $r$. The results of \cite{KLP} is given in Theorem \ref{thm:KLP}. %This represents to date the best bound for the weight enumerator of RM codes. \Anote{Aren't our results better?}

\subsection{Organization}

%The rest of the paper appears in the appendix following the references.
The paper is organized as follows. We first discuss the model of random erasures and errors (Section~\ref{prelim}) and then give a quick introduction to Reed-Muller codes (Section~\ref{RM-background}). In section~\ref{sec:wt-dist} we prove Theorem~\ref{thm:intro:wt-dist} on the weight distribution of RM codes. In Section~\ref{sec:sampling} we give answers to the two questions on sub matrices of $E(m,r)$, when $r$ is small.  In Section~\ref{sec:RM-erasures} we use the result obtained thus far to obtain our results for the BEC (Theorems~\ref{thm:intro:low_deg_BEC} and \ref{thm:intro:high_deg_BEC}). In Section~\ref{sec:BSC-proofs} we give our results for the BSC. 
%{\em 
Finally, in Section~\ref{sec:open}, we discuss some intriguing future directions and open problems which our work raises.
%}.

%\section*{Acknowledgements}
%We thank Venkatesan Guruswami for bringing \cite{Wei} to our attention. 
%Venkat corresponded with us during a Dagstuhl meeting, where we first presented our result. 

%\bibliographystyle{amsalpha}
%\bibliography{RMcode}

%%%%%%%%%%%%%%%%%%%%%%%%%%%%%%%
%%%%%%%%%%%%%%%%%%%%%%%%%%%%%%%
%%%%%%%%%%%%%%%%%%%%%%%%%%%%%%%

%\vspace{0.5in}
%\noindent{\Large \bf APPENDIX}

\section{Preliminaries}\label{prelim}
In this section, we review basic concepts about linear codes and their capability of correcting random corruptions, as well as Reed-Muller codes. 
\subsection{Basic coding definitions}
Recall that for a binary linear code $C \subseteq \F_2^n$ of blocklength $n$, if $k$ denotes the dimension of a code, i.e., $k=\log_2|C|$, a (non-redundant) generator matrix $G$ has dimension $k \times n$, a (non-redundant) parity-check matrix $H$ has dimension $(n-k)\times n$, and $C=\im(G)=\ker(H)$.

In the worst-case model, the distance of the code determines exactly how many erasures and errors can be corrected, with the following equivalent statements for the generator and parity-check matrices:
\begin{itemize}
\item $C$ has distance $d$,
\item $C$ allows to correct $d-1$ erasures,
\item $C$ allows to correct $\lfloor (d-1)/2 \rfloor$ errors,
\item any $d-1$ columns of $H$ are linearly independent,
%\item the sum of any $t$ columns of $H$, with $0< t \leq d$, is non-zero,
\item any $n-d+1$ rows of $G$ have full span.
\end{itemize}

%\begin{defin}
%A linear code with parity-check matrix $H$ allows to correct $s$ errors if $H$ is injective on $B(n,s)$, i.e., if 
%\begin{align}
%Hx \neq Hx', \quad \forall x,x' \in  B(n,s), x \neq x',
%\end{align}
%or equivalently, if 
%\begin{align}
%(\mathrm{Ker}(H) \setminus \{0\}) \cap  B(n,2s) = \emptyset.
%\end{align}
%Note that if the error vectors belong to $S(n,s)$ instead of $B(n,s)$, then the kernel of $H$ must not intersect $B_e(n,2s)=\{x\in B(n,2s) : w(x) \text{ is even}\}$.
%\end{defin}
\noindent
Two fundamental problems in worst-case coding theory is to determine the largest dimension of a code that has distance at least $d$, for a fixed $d$, and to construct explicit codes achieving the optimal dimension. None of these questions are solved in general, nor in the asymptotic regime of $n$ tending to infinity with $d=\alpha n$, and $\alpha \in (0,1/2)$. A random linear code achieves a dimension of $n(1-h(\alpha))+o(n)$, the Gilbert-Varshamov bound, but this bound has not been proved to be tight nor has it been improved asymptotically since 1957. Further, no explicit construction is known to achieve this bound. 

In this paper, we are interested in random erasures and errors, and in correcting them ``with high probability.'' This changes the requirements on the generator and parity-check matrices. For erasures, it simple requires linear independence of the columns ``with high probability'' (see Section \ref{prob-erasure})),while the requirement is more subtle for errors. Yet, in the probabilistic setting, random codes can be proved to achieve the optimal tradeoffs (e.g., code rate vs.\ erasure rate, or code rate vs.\ error rate) that are known for both erasures and errors (as special cases of Shannon's theorem). Explicit constructions of codes that achieve the optimal tradeoffs are also known, e.g., polar codes \cite{arikan}, and this paper investigates RM codes as such candidates. We now provide the formal models and results.

We mainly work in this paper with ``uniform'' models for erasures and errors, but we sometimes interpret the results for ``i.i.d.'' models (namely, the BEC and BSC channels). To formally connect these, we define first a unified probabilistic model. 
We start with erasures. We restrict ourselves to linear codes, although the definitions extend naturally to non-linear codes.    
\begin{defin}
A sequence of linear codes $\{C_n\}_{n \geq 1}$ of blocklength $n$ allows to correct random erasures from a sequence of erasure distributions $\{\mu_n\}_{n \geq 1}$ on $\F_2^n$, if for $Z \sim \mu_n$ and $S_Z=\{i \in [n] : Z_i=1\}$ (the erasure pattern),
\begin{align*}
\Pr\{ \exists x,y \in C_n : x \neq y, x[S_Z^c]=y[S_Z^c] \} \to 0, \quad \text{as } n \to \infty, 
\end{align*}
i.e., the probability of drawing erasures at locations that can confuse different codewords is vanishing. 
\end{defin}
Notice that $x[S_Z^c]=y[S_Z^c]$ if and only if we cannot correct erasures on coordinates $S_Z$ for neither $x$ nor $y$.
We now present a unified model for errors. 
\begin{defin}
A sequence of linear codes of length $n$ and parity-check matrix $\{H_n\}_{n \geq 1}$ allows to correct random errors from a sequence of error distributions $\{\mu_n\}_{n \geq 1}$ on $\F_2^n$ if for $Z\sim \mu_n$, 
\begin{align}
\Pr \{ \exists z' \in \F_2^n:  z' \neq Z, H_n z'=H_n Z,  \mu_n(z') \geq \mu_n(Z) \} \to 0, \quad \text{as } n \to \infty, \label{ML-general} 
\end{align}
i.e., the probability of drawing an error pattern $Z$ for which there exists another error pattern $z'$ that has the same syndrome as $Z$ and is more likely than $Z$ is vanishing. 
\end{defin}
Note that \eqref{ML-general} is the requirement that the error probability of the maximum likelihood (ML) decoder vanishes. Since ML minimizes the error probability for equiprobable codewords, \eqref{ML-general} is necessary for any decoder to succeed with high probability.

\begin{remark}
We next drop the term ``sequence of'' and the subscripts $n$, and simply say that a code $C$ of blocklength $n$ allows to correct random erasures/errors in specified models. The parameters introduced may also depend on $n$ without being explicitly mentioned.  
\end{remark}
We now introduce the uniform and i.i.d.\ models. 
\begin{defin}\text{}\\
(i) A linear code of blocklength $n$ allows to correct $s=s_n$ random erasures (resp.\ errors) if it can correct them from the uniform erasure (resp.\ error) distribution $U_{s}$, i.e., the uniform probability distribution on $ \partial B(n,s)=\{z \in \F_2^n : w(z)=\lceil s \rceil \}$.\\
(ii) A linear code of blocklength $n$ allows to correct erasures (resp.\ errors) for the $BEC(p)$ channel (resp.\ $BSC(p)$ channel), where $p=p_n$, if it can correct the distribution $B_{p}$, where $B_{p}$ is the i.i.d.\ distribution\footnote{This means the product distribution with identical marginals.} on $\F_2^n$ with Bernoulli$(p)$ marginal. 
\end{defin}

Note that for $\mu_n=U_{s}$, i.e., the uniform distribution over $ \partial B(n,s)$, the above definition reduces\footnote{We define ${n\choose s}$ as ${n \choose \lceil s \rceil }$ for a non-integer $s$.}  to 
\begin{align*}
\frac{ |\{z \in \partial B(n,s):  \exists z' \text{ s.t. } z \neq z',H z=H z' \}|}{{n\choose s}} \to 0, \quad \text{as } n \to \infty, 
\end{align*}
i.e., the fraction of bad error patterns, which have non-unique syndrome, is vanishing.  

The following Lemma\footnote{The statements are relevant for $s_n$ or $np_n$ that are $\omega(1)$.} 
 follows from standard probabilistic arguments.  
\begin{lemma}\label{equiv-models}\text{}\\
(i) If a linear code can correct $s=s_n$ random erasures (resp.\ errors), then it can correct erasures (resp.\ errors) from the $BEC\left((s - \omega(\sqrt{s}))/n\right)$ channel (resp.\ $BSC\left((s - \omega(\sqrt{s}))/n\right)$ channel).  \\
(ii) If a linear code can correct erasures (resp.\ errors) from the $BEC(p)$ channel (resp.\ $BSC(p)$ channel), then it can correct $np - \omega(\sqrt{np})$ random erasures (resp.\ errors).  
%(ii) If a linear code $C_n$ can correct erasures (resp.\ errors) from the $BEC(p_n)$ channel (resp.\ $BSC(p_n)$ channel), then it can correct $np_n$ random erasures (resp.\ errors).  
\end{lemma}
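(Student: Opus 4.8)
The plan is to bridge the two models through a single scalar: for each $w$ let $\phi(w)$ be the probability that a \emph{uniformly random} corruption pattern of weight exactly $w$ is undecodable — for erasures this means the pattern contains the support of a nonzero codeword (so the complementary coordinates fail to determine the codeword); for errors it means the pattern is not the unique minimum-weight vector in its syndrome coset, i.e.\ the ML decoder errs. Two elementary facts do the bridging. First, an i.i.d.\ $\mathrm{Bernoulli}(p)$ corruption, \emph{conditioned} on hitting exactly $w$ coordinates, is precisely a uniform weight-$w$ pattern; hence if $W$ is the random number of corrupted coordinates, the failure probability in the i.i.d.\ model equals $\E[\phi(W)]$. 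Second, by the Chernoff--Hoeffding bound (the estimate already invoked in the paper), $W = \mathrm{Bin}(n,p)$ lies in $[np - g_1(np),\, np + g_1(np)]$ with probability $1-o(1)$ for a suitable $g_1(t) = \omega(\sqrt t)$; choosing the slack in the statement to be $g(t) = 2g_1(t)$, still $\omega(\sqrt t)$, then forces $np + g_1(np) < s$ in part (i) (so $\Pr[W > s] = o(1)$) and $s < np - g_1(np)$ in part (ii) (so $\Pr[W \ge s] = 1 - o(1)$). The only structural input beyond this is that $\phi(w)$ is non-decreasing in $w$.

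Granting monotonicity, both directions close in a line. For (i), $\E[\phi(W)] \le \Pr[W>s]\cdot 1 + \Pr[W\le s]\cdot \max_{w\le s}\phi(w) \le o(1) + \phi(s) = o(1)$ by the hypothesis $\phi(s) = o(1)$. For (ii), $o(1) = \E[\phi(W)] \ge \Pr[W\ge s]\cdot \min_{w\ge s}\phi(w) \ge (1-o(1))\phi(s)$, whence $\phi(s) = o(1)$; and since the literal fixed-weight-$s$ ML-failure event (``some other weight-$s$ vector shares the syndrome'') is contained in the coset-leader failure event — if $w(z')=w(Z)=s$ and $Hz'=HZ$ then $c = z'-Z$ is a nonzero codeword with $w(Z+c)=s=w(Z)$ — the bound $\phi(s)=o(1)$ is exactly what ``corrects $s$ random errors'' requires. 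It then remains to prove monotonicity of $\phi$. For erasures this is immediate: ``the pattern contains the support of a nonzero codeword'' is an upward-closed family of subsets of $[n]$, and for any upward-closed family $\mathcal F$ one has $|\mathcal F_w|(n-w) \le |\mathcal F_{w+1}|(w+1)$ by double counting the pairs $A\subset B$ with $|A|=w$, $|B|=w+1$, $A\in\mathcal F$, which (since ${n\choose w}(n-w) = {n\choose w+1}(w+1)$) gives $|\mathcal F_w|/{n\choose w}\le |\mathcal F_{w+1}|/{n\choose w+1}$. For errors I would use a coupling: draw $Z$ uniform of weight $w$, let $Z'$ be the restriction of $Z$ to a uniformly random weight-$w'$ subset of its support ($w'\le w$); then $Z'$ is uniform of weight $w'$, and if a nonzero codeword $c$ has $w(Z'+c)\le w'$ then $w(Z+c) \le w(Z'+c) + w(Z\setminus Z') \le w' + (w-w') = w$, so $Z$ fails whenever $Z'$ does, giving $\phi(w')\le\phi(w)$.

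The one genuinely delicate point — the part I would spend care on — is the error case of direction (i). Literally, the definition with $\mu_n = U_s$ only tells us that a uniform weight-$s$ vector is whp the unique \emph{weight-$s$} vector of its syndrome, which is a priori weaker than being the unique \emph{minimum-weight} representative, and it is the latter that governs the BSC-ML decoder at the typical (lighter) weights $w\approx np < s$. My preferred resolution is to run the whole argument with $\phi$ taken to be the coset-leader failure event throughout: this automatically dominates the literal fixed-weight event, it is exactly the BSC-ML failure, and — crucially — it is the quantity the paper's own downstream proofs control anyway (for instance in Theorem~\ref{thm:intro-low-BSC} one counts representations of a codeword as a sum of \emph{two vectors of weight at most $w$}, not exactly $w$), so no strength is lost. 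Alternatively one can compare the two first-moment bounds directly: for a codeword of weight $d'$ the fixed-weight contribution is the hypergeometric point mass $\Pr[\mathrm{Hyp}(n,s,d') = d'/2]$ while the BSC contribution is the binomial tail $\Pr[\mathrm{Bin}(d',p) \ge d'/2]$, and these differ by only a $\mathrm{poly}(d')$ factor, which is absorbed whenever the first-moment bound is polynomially small (as it is in the regimes where the lemma is applied). Everything else is routine Chernoff--Hoeffding estimation and bookkeeping of the $\omega(\cdot)$ slacks.
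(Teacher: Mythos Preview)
Your approach is exactly the ``standard probabilistic argument'' the paper alludes to but does not write out: condition the i.i.d.\ corruption on its Hamming weight to reduce to the uniform model, then use Chernoff--Hoeffding concentration of $\mathrm{Bin}(n,p)$ together with monotonicity of the failure probability in the weight. Your monotonicity proofs (upward-closed family argument for erasures; the subset-coupling for the coset-leader error event) are correct and standard.

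You have also put your finger on a genuine subtlety that the paper glosses over: for errors in direction (i), the literal uniform-model hypothesis only controls $\phi_=(s)$ (some \emph{other weight-$s$} vector shares the syndrome), whereas the BSC conclusion requires control of $\phi_{\le}(w)$ for $w<s$, and the coupling only yields $\phi_{\le}(w)\le\phi_{\le}(s)$, not $\phi_{\le}(w)\le\phi_=(s)$. Your first resolution --- work with the coset-leader event $\phi_{\le}$ throughout --- is the right practical fix, and it is indeed what the paper's downstream applications establish (the union bound in the proof of Theorem~\ref{thm:errors_from_weights} goes over \emph{all} codewords of weight in $\{d,\dots,n-d\}$, and the footnote there explicitly handles the case $w(z')\neq w(z)$). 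Your second resolution via the hypergeometric/binomial comparison is also sound at the level of first-moment bounds. So: the proposal is correct, matches the intended argument, and is in fact more careful than the paper about the one delicate point.
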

We now define the notions of capacity-achieving. Since in the rest of the paper typically considers codes at a given rate, and investigate how many corruptions they can correct, the definitions are stated accordingly. Note that what follows is simply a restatement of Shannon's theorems for erasures and errors, namely that a code $C$ of rate $R=(\log_2 (|C|))/n$ correcting a corruption probability $p$ must satisfy $R<1-p$ for erasures and $R<1-H(p)$ for errors. However, since we consider code rates that tend to 0 and 1, the requirements are broken down in various cases to prevent meaningless statements. 
\begin{defin}
A code is capacity-achieving (or achieves capacity) if it is $\e$-close to capacity for all $\e>0$.  
We now define the notion of $\e$-close to capacity in the four configurations: 
\begin{itemize}
\item A linear code $C$ of rate $R=o(1)$ is $\e$-close to capacity-achieving for erasures or for the BEC if it can correct $np$ random erasures for a $p$ satisfying 
\begin{align*}
p \geq 1-R(1+\e).
\end{align*}

\item A linear code $C$ of rate $R=o(1)$ is $\e$-close to achieving capacity for errors or for the BSC if it can correct $np$ random errors for a $p$ that satisfies 
\begin{align*}
h(p) \geq 1-R(1+\e),
\end{align*} 
where $h(p)=- p \log_2(p)- (1-p) \log_2(1-p)$ is the entropy function. 

\item A linear code $C$ of rate $R=1-o(1)$ is $\e$-close to achieving capacity for erasures or for the BEC if it can correct $np$ random erasures for a $p$ that satisfies 
\begin{align*}
p \geq (1-R)(1-\e).
\end{align*}

\item A linear code $C$ of rate $R=1-o(1)$ is $\e$-close to achieving capacity for erasures or for the BSC if it can correct $np$ random erasures for a $p$ that satisfies 
\begin{align*}
h(p) \geq (1-R)(1-\e).
\end{align*}
\end{itemize}
\end{defin}
Note that previous definition leads to the same notion of capacity for the uniform and i.i.d.\ models in view of Lemma \ref{equiv-models}.

\subsection{Equivalent requirements for probabilistic erasures}\label{prob-erasure}

%\Enote{I changed all the entropies in the paper from $H$ to $h$ and kept $H$ for parity-check matrices. Also, we don't have a notation for the set of subsets of $[n]$ of size $s$. I had defined ${[n] \choose s}$, but Amir didn't like it, and I also find it confusing. A natural candidate would be ${[n] \choose s}$, but then it sticks out from the equations. I suggest the following for now.}

In this section, we show the following basic results: a code can correct $s$ random erasures with high probability (whp), if a random subset of $s$ columns in its parity-check matrix are linearly independent whp, or if a random subset of $n-s$ rows in its generator matrix have full-span whp.

First note the following algebraic equivalence, which simply states that a bad erasure pattern, one that can confuse different codewords, corresponds to a subset of rows of the generator matrix that is not full-span (i.e., not invertible). 
\begin{lemma}\label{equiv2}
For an $n \times k$ matrix\footnote{We assume that $G$ has full column-rank, i.e., the generator matrix is non-redundant.} $G$ and for $S \subseteq [n]$, let $G_{S,\cdot}$ denote the subset of rows of $G$ indexed by $S$. Then the set of bad erasure patterns is given by 
\begin{align*}
\left\{ S \in {[n] \choose s} :  \exists x,y \in \ker(H), x\neq y, x[S^c]=y[S^c] \right\} \equiv \{ D \in \partial B(n,s): \rank (G_{D^c,\cdot})<k \},
\end{align*}
where $\rank (G_{D^c,\cdot})<k$ means that the columns of $G_{D^c,\cdot}$ are linearly dependent (i.e., the rows have full span).  
\end{lemma}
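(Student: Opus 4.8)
\textbf{Proof plan for Lemma~\ref{equiv2}.}

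The statement is a purely linear-algebraic equivalence, so the plan is to unwind both sides of the claimed identity and show they describe the same collection of subsets $S \in \binom{[n]}{s}$. The natural bridge is the elementary fact that, for a linear code $C = \ker(H) = \im(G)$ with $G$ of full column rank $k$, the coordinates outside a set $S$ determine the codeword uniquely if and only if the map $x \mapsto x[S^c]$ is injective on $C$, which (since $x = Gu$) is the same as the rows of $G$ indexed by $S^c$ having full column rank $k$. So the first step is to observe
\[
\exists\, x,y \in \ker(H),\ x \neq y,\ x[S^c] = y[S^c]
\quad\Longleftrightarrow\quad
\exists\, 0 \neq u \in \F_2^k : (Gu)[S^c] = 0
\quad\Longleftrightarrow\quad
\rank(G_{S^c,\cdot}) < k,
\]
where the first equivalence uses linearity (set $u$ to be the preimage of $x-y$, which is nonzero and unique because $G$ has full column rank), and the middle expression says exactly that $G_{S^c,\cdot}$ has a nontrivial kernel, i.e.\ its columns are linearly dependent. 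Writing $D = S$ and noting that $S \in \binom{[n]}{s}$ corresponds bijectively to $\1_S \in \partial B(n,s)$ (as already recorded in the notation section), this is precisely the right-hand set $\{D \in \partial B(n,s) : \rank(G_{D^c,\cdot}) < k\}$.

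The second step is just bookkeeping: spell out that $\rank(G_{D^c,\cdot}) < k$ for a matrix with $k$ columns is synonymous with ``the columns of $G_{D^c,\cdot}$ are linearly dependent'', which is what the parenthetical in the statement asserts; and that ``erasures on $S$ confuse $x$ and $y$'' is by definition the condition $x \neq y$, $x[S^c] = y[S^c]$ (a decoder seeing only the unerased coordinates $S^c$ cannot tell $x$ from $y$). No high-probability or combinatorial input is needed here — the lemma is the deterministic skeleton that the probabilistic statements in Section~\ref{prob-erasure} will later be layered onto.

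There is essentially no main obstacle: the only point requiring a word of care is the full-column-rank hypothesis on $G$ (flagged in the footnote), which is what makes the correspondence $x - y \leftrightarrow u$ well-defined and what lets us pass between ``$\exists x \neq y$ in the code'' and ``$\exists u \neq 0$''. Without it one would have to quotient by $\ker(G)$ and the clean rank statement would need adjusting. I would state this assumption explicitly at the start of the argument and then the two chains of equivalences above finish the proof.
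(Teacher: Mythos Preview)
Your proposal is correct and follows essentially the same approach as the paper: reduce ``$\exists\, x\neq y$ in the code with $x[S^c]=y[S^c]$'' to ``$\exists$ nonzero codeword supported on $S$'' via linearity, then translate this to $\rank(G_{S^c,\cdot})<k$. The paper's proof is just the three-line chain of equivalences you wrote out, with your observation about the full-column-rank hypothesis left implicit.
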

%Note that we will later apply this lemma to $G=E(m,r)^t$. 
\begin{proof}
We have
\begin{align*}
&\left\{ S \in {[n] \choose s} :  \exists x,y \in \im(G), x\neq y, x[S^c]=y[S^c] \right\}\\
%&\equiv \{E \in{[n] \choose s} : \exists z \in \ker(H), \text{ s.t. } \supp(z) \subseteq E, z \neq 0\}\\
&\equiv \left\{S\in{[n] \choose s}: \exists v \in \im(G) \text{ s.t. } v[S^c]=0, v \neq 0 \right\}\\
%& \equiv\{E \in{[n] \choose s} : G[D^c,\cdot] \text{ is not full row-rank} \}\\
&\equiv \left\{S \in {[n] \choose s}: \rank (G_{S^c,\cdot})<k   \right\}.
\end{align*}
\end{proof}
\begin{corol}\label{erasure-G}
For an $n \times k$ matrix $G$ and $s \in [n]$, denote by $G_{s,\cdot}$ the random sub-matrix of $G$ obtained by selecting $s$ rows uniformly at random. Then, the code with generator matrix $G$ can correct $s$ erasures if and only if 
\begin{align*}
\Pr\{  \rank (G_{n-s,\cdot})=k  \} \to 1, \quad \text{as } n \to \infty. \quad 
%\Pr_D\{ \sum_{j \in D} H[j] =0 \} \to 0, \quad \text{as } n \to \infty, \quad 
\end{align*}
\end{corol}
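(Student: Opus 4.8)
The plan is to obtain this corollary as an immediate translation of the algebraic characterization in Lemma~\ref{equiv2} into probabilistic language. First I would unwind the definition of ``correcting $s$ random erasures'' in the uniform model: it asserts that, for the erasure pattern $\1_S$ with $S$ drawn uniformly from ${[n]\choose s}$ (using the convention that $s$ means $\lceil s\rceil$ when $s$ is not an integer, so that $\1_S$ is uniform on $\partial B(n,s)$), the probability of a \emph{bad} pattern --- one for which there exist distinct $x,y\in C=\im(G)=\ker(H)$ with $x[S^c]=y[S^c]$ --- tends to $0$ as $n\to\infty$.

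Next I would apply Lemma~\ref{equiv2}, which identifies the set of bad patterns with $\{D\in\partial B(n,s):\rank(G_{D^c,\cdot})<k\}$. Hence the probability of a bad pattern equals $\Pr\{\rank(G_{S^c,\cdot})<k\}$ for $S$ uniform of size $s$. Since the complementation map $S\mapsto S^c$ is a bijection from ${[n]\choose s}$ onto ${[n]\choose n-s}$ carrying the uniform distribution to the uniform distribution, the random matrix $G_{S^c,\cdot}$ has exactly the law of the submatrix $G_{n-s,\cdot}$ formed by $n-s$ uniformly chosen rows. Therefore the bad-pattern probability equals $1-\Pr\{\rank(G_{n-s,\cdot})=k\}$, and it vanishes as $n\to\infty$ if and only if $\Pr\{\rank(G_{n-s,\cdot})=k\}\to 1$. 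This is precisely the claimed equivalence, and both directions of the ``if and only if'' fall out simultaneously from this single identity of probabilities.

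I do not expect a genuine obstacle here: all the mathematical content already sits in Lemma~\ref{equiv2}, and the corollary is a bookkeeping step converting the ``confusable codewords'' viewpoint into the ``a uniformly random row-submatrix of $G$ has full column rank'' viewpoint. The only two points warranting a line of care are (i) the passage from the $s$ erased coordinates to the $n-s$ surviving coordinates via complementation, and (ii) the ceiling convention for non-integer $s$; I would state both explicitly so that the corollary can later be quoted cleanly --- in particular with $G$ the generator matrix of $RM(m,r)$, where $G_{n-s,\cdot}$ is exactly the submatrix $U^r$ of $E(m,r)$ indexed by a uniformly random set $U$ of $n-s$ columns, which is the form in which it feeds into Theorems~\ref{thm:intro:low_deg_BEC} and~\ref{thm:intro:high_deg_BEC}.
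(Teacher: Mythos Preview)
Your proposal is correct and matches the paper's approach: the paper states this corollary immediately after Lemma~\ref{equiv2} without a separate proof, treating it as the direct probabilistic translation of that lemma, which is exactly what you do. Your explicit handling of the complementation bijection $S\mapsto S^c$ and the ceiling convention is more careful than the paper itself, but the underlying argument is the same.
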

We now switch to the parity-check matrix interpretation.  
The following lemma shows that a bad erasure pattern for a code $C=\ker(H)$ can be identified as a subset of linear dependent columns in the parity-check matrix. 
\begin{lemma}\label{equiv}
For a matrix $H$ with $n$ columns, for $S \subseteq [n]$, and $H[S]$ the subset of columns of $H$ indexed by $S$, then the set of bad erasure patterns is given by 
\begin{align*}
\left\{ S \in  {[n] \choose s} :  \exists x,y \in \ker(H), x\neq y, x[S^c]=y[S^c] \right\} \equiv \left\{ S \in {[n] \choose s} : \mathrm{rk} (H[S] )<s \right\},
\end{align*}
where $\mathrm{rk} (H[S] )<s$ simply means the columns of $H[S]$ are linearly dependent. 
\end{lemma}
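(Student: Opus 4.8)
The plan is to prove this exactly as Lemma~\ref{equiv2} was proved, but passing through the parity-check matrix rather than the generator matrix: I would first reduce ``$S$ is a bad erasure pattern'' to ``some nonzero codeword is supported inside $S$'', and then reinterpret the latter as a linear dependence among the columns of $H$ indexed by $S$.

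For the first step, fix $S \in {[n]\choose s}$. If $x \neq y$ both lie in $\ker(H)$ and $x[S^c] = y[S^c]$, then $v := x \oplus y$ is a nonzero element of $\ker(H)$ with $v[S^c] = 0$, i.e.\ $\supp(v) \subseteq S$; conversely, any nonzero $v \in \ker(H)$ with $\supp(v) \subseteq S$ gives the confusable pair $x = v$, $y = 0$. Hence
\begin{align*}
\left\{ S \in {[n]\choose s} : \exists\, x,y \in \ker(H),\ x \neq y,\ x[S^c]=y[S^c] \right\}
\equiv \left\{ S \in {[n]\choose s} : \exists\, v \in \ker(H),\ v \neq 0,\ \supp(v) \subseteq S \right\}.
\end{align*}

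For the second step, I would observe that a vector $v$ with $\supp(v) \subseteq S$ is determined by its restriction $v[S]$, and that $Hv = 0$ is precisely the equation $H[S]\, v[S] = 0$. Thus a nonzero $v \in \ker(H)$ supported on $S$ exists if and only if $H[S]$ has a nontrivial kernel, which --- since $H[S]$ has exactly $|S| = s$ columns --- is the same as $\mathrm{rk}(H[S]) < s$, i.e.\ the columns of $H[S]$ being linearly dependent. Chaining the two equivalences yields the lemma.

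I do not anticipate any genuine obstacle; the statement is essentially a bookkeeping identity, just like its generator-matrix counterpart. The only points to keep in mind are that ``$H[S]$ has linearly dependent columns'' coincides with ``$\mathrm{rk}(H[S]) < s$'' precisely because $|S| = s$ is forced by $S \in {[n]\choose s}$, and (as in Lemma~\ref{equiv2}) the harmless identification $\ker(H) = \im(G) = C$, which ensures that the set of bad patterns computed from $H$ agrees with the one computed from $G$.
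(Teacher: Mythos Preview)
Your proposal is correct and follows essentially the same approach as the paper: reduce ``bad erasure pattern'' to ``nonzero codeword supported in $S$'' via linearity, then identify such a codeword with a nontrivial kernel vector of $H[S]$. The paper's proof is nearly identical, phrasing the second step as ``the columns of $H$ indexed by $S$ are not full rank'' rather than writing out $Hv = H[S]\,v[S]$ explicitly, but there is no substantive difference.
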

\begin{proof}[Proof of Lemma \ref{equiv}.]
Let 
\begin{align*}
\mathrm{BadSet}&:=\left\{ D \in  {[n] \choose s} : \mathrm{rk} (H[D] )<s \right\},\\
\mathrm{BadEra}&:=\left\{ S \in {[n] \choose s} :  \exists x,y \in \ker(H), x\neq y, x[S^c]=y[S^c] \right\},
\end{align*}
denote respectively the set of bad sets for which the columns of $H$ do not have full rank and the set of bad erasure patterns that can confuse codewords in $\ker H$. 
Since the code is linear, 
\begin{align*}
\mathrm{BadEra}&=\left\{ S \in {[n] \choose s} :  \exists v \in \ker(H), v \neq 0, v[S^c]=0 \right\}.
%&=\{ D \in [n,1:s]:  \sum_{j \in D} H[j] =0  \}
\end{align*}
Hence for any $S \in \mathrm{BadEra}$, there exists $v \in \ker(H)$, such that $v \neq 0$ and $\supp(v) \subseteq S$, and the columns of $H$ indexed by $S$ are not full rank. Conversely, if $D \in \mathrm{BadSet}$, $D$ contains a subset $V$ such that $V \neq \emptyset$ and $\sum_{i \in V} H[j]=0$, hence $D \in \mathrm{BadEra}$ (taking $v$ as the indicator vector of $V$). 
%  and $\mathrm{BadEra}$ must have vanishing probability. 
%Hence, to any set $E \in \mathrm{BadErr}$, we can associate a vector $z \in \ker(H) \setminus \{0\}$ such that $Hz=0$, 
%hence we can associate a set $D \in [n,1:s]$ given by the support of $z$, such that $\sum_{j \in D} H[j] =0$. This shows $\mathrm{BadErr} \subseteq \mathrm{BadSum}$. Conversely, for any set $D \in [n,1:s]$, such that $\sum_{j \in D} H[j] =0$, if $z$ is the indicator vector of $D$, then $w(z)\leq s$ and $z \neq 0$, and 
%Hence $\mathrm{BadErr} = \mathrm{BadSum}$. 
\end{proof}
\begin{corol}\label{erasure-H}
For a matrix $H$ with $n$ columns and $s \in [n]$, denote by $H[s]$ the random sub-matrix of $H$ obtained by selecting $s$ columns uniformly at random. Then, the code $\ker(H)$ can correct $s$ random erasures if and only if 
\begin{align*}
\Pr\{ \mathrm{rk} (H[s])=s \} \to 1, \quad \text{as } n \to \infty. \quad 
%\Pr_D\{ \sum_{j \in D} H[j] =0 \} \to 0, \quad \text{as } n \to \infty, \quad 
\end{align*}
\end{corol}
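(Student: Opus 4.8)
The plan is to chain together the definition of ``correcting $s$ random erasures'' with the algebraic characterization of bad erasure patterns supplied by Lemma~\ref{equiv}; the corollary is essentially a reformulation of that lemma in probabilistic language, so there is no genuine difficulty.

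First I would unpack the hypothesis. By definition, $\ker(H)$ corrects $s$ random erasures exactly when, for $Z$ drawn from the uniform distribution $U_s$ on $\partial B(n,s)$ and $S_Z=\supp(Z)$,
\begin{align*}
\Pr\{\exists\, x\neq y\in\ker(H):\ x[S_Z^c]=y[S_Z^c]\}\to 0,\qquad n\to\infty,
\end{align*}
that is, $\Pr\{S_Z\in\mathrm{BadEra}\}\to 0$, where $\mathrm{BadEra}=\{S\in{[n]\choose s}:\exists\, x\neq y\in\ker(H),\ x[S^c]=y[S^c]\}$, exactly as in the proof of Lemma~\ref{equiv}.

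Next I would invoke Lemma~\ref{equiv}, which gives $\mathrm{BadEra}=\{S\in{[n]\choose s}:\mathrm{rk}(H[S])<s\}$. Since $z\mapsto\supp(z)$ is a bijection from $\partial B(n,s)$ (the weight-$\lceil s\rceil$ vectors) onto ${[n]\choose s}$, the pushforward of $U_s$ is the uniform distribution on $s$-subsets, so $S_Z$ has precisely the law of the random index set defining the submatrix $H[s]$ in the statement; hence $\Pr\{S_Z\in\mathrm{BadEra}\}=\Pr\{\mathrm{rk}(H[s])<s\}$. Finally, because $H[s]$ has exactly $s$ columns one always has $\mathrm{rk}(H[s])\le s$, so the event $\{\mathrm{rk}(H[s])<s\}$ is the complement of $\{\mathrm{rk}(H[s])=s\}$, and therefore $\Pr\{\mathrm{rk}(H[s])<s\}\to 0$ is equivalent to $\Pr\{\mathrm{rk}(H[s])=s\}\to 1$. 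Stringing these equivalences together proves the corollary.

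Regarding the main obstacle: there is essentially none of substance. The only points to check are the routine weight-vector/subset bijection (so that ``uniform erasures'' and ``$s$ uniformly random columns'' indeed induce the same distribution) and the trivial rank bound $\mathrm{rk}(H[s])\le s$; all the real content resides in Lemma~\ref{equiv}. If one additionally wanted the analogous statement for the $BEC(p)$ channel one would also cite Lemma~\ref{equiv-models}, but that is not needed for the statement as given.
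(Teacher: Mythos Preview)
Your proposal is correct and matches the paper's approach: the paper states Corollary~\ref{erasure-H} immediately after Lemma~\ref{equiv} without a separate proof, treating it as an immediate probabilistic restatement of that lemma, which is exactly what you do. The checks you spell out (the support bijection between $\partial B(n,s)$ and ${[n]\choose s}$, and the trivial rank complement) are precisely the routine points the paper leaves implicit.
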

In other words, correcting $s$ random erasures is equivalent to asking that a random subset of $s$ columns in the parity-check matrix $H$ is full rank whp.

%\subsection{Requirements for probabilistic errors}

%\Amnote{Think what to say in light of what we shall write about decoding}
%\Enote{Kind of a small section, could be merge with previous now.}

While the requirement to correct probabilistic erasures (Corollaries \ref{erasure-H} and \ref{erasure-G}) is similar to the requirement for the worst-case model but ``with high probability,'' the situation is more subtle in the case of errors. Note that for a code $C$ with parity-check matrix $H$, the set of bad error patterns are the ones which lead to a non-unique syndrome, i.e. 
\begin{align*}
%&\{z \in \partial B(n,s):  \exists x,x' \in C_n, e' \in \partial B(n,s) \text{ s.t. } x \neq x', x + z= x' +z'\}\\
&\{z \in \partial B(n,s):  \exists z' \in \partial B(n,s) \text{ s.t. } z \neq z',Hz=Hz' \}  \\
&\equiv \{z \in \partial B(n,s):  \exists z' \in \partial B(n,s) \text{ s.t. } z \neq z', z+z' \in C \} .
\end{align*}
In other words, the set of bad error patterns are obtained by taking the set of codewords and splitting the codewords into elements of weight $s$. It is of course enough to consider the codewords of weight at most $2s$. However, even if the probability of drawing a codeword of weight at most $2s$ is vanishing, it does not follow that the probability of having a bad vector of weight $s$ is also vanishing. There are multiple ways to split a codeword in vectors of weight $s$, %namely ${w \choose w/2} {n-w \choose s-w/2}$ where $w$ is the weight of the codeword, 
and these lead to overlapping sets of vectors. Hence, the probability of a bad error pattern depends on the structure of $H$ beyond the probability of having dependent columns.  

%%%%%%%%%%%%%%%%%%%%%%%%%%%%%%%%%%
%%%%%%%%%%%%%%%%%%%%%%%%%%%%%%%%%%
%%%%%%%%%%%%%%%%%%%%%%%%%%%%%%%%%%
%%%%%%%%%%%%%%%%%%%%%%%%%%%%%%%%%%

\subsection{Basic properties of Reed-Muller codes}\label{RM-background}
The goal of this section is to revise the duality property of RM codes, which we use in this paper. One of the simplest way to understand this property is via the recursive structure of RM codes, mentioned below. 
We start by repeating the formal definition of RM codes via polynomials. 
\begin{defin}
Let $m,r$ be two positive integers with $r \leq m$, and let $n=2^m$. The Reed-Muller code of parameters $m$ and $r$ is defined by the set of codewords
\begin{align*}
RM(m,r)=\{(f(a_0),\dots,f(a_{n-1})) :  f \in  \mathbb{P}(m,r) \},
\end{align*}
where $\mathbb{P}(m,r)$ is the set of $m$-variate polynomials of degree at most $r$ on $\F_2$, and $a_0, \dots, a_{n-1}$ are all the elements of $\F_2^m$. 
\end{defin}
In particular, the matrix $E(m,r)$ that contains only the evaluations of the monomials of degree at most $r$ clearly defines a generator matrix for $RM(m,r)$. Formally, one should take the transpose $E(m,r)^t$ to obtain a generator matrix of $RM(m,r)$ that has dimension $n \times k$ (and not $k \times n$), where $k$ is the dimension of the code (as usually assumed in coding theory and as in previous section). The duality property says that $E(\cdot,\cdot)$ can also be used to obtain a parity-check matrix of $RM$ codes, as follows. 
\begin{lemma}\label{duality}[Duality of RM codes]
$E(m,m-r-1)$ is a parity-check matrix for $RM(m,r)$, or equivalently, $E(m,r)$ is a parity-check matrix for $RM(m,m-r-1)$. 
\end{lemma}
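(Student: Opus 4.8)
The statement to prove is Lemma~\ref{duality}: that $E(m,m-r-1)$ is a parity-check matrix for $RM(m,r)$, equivalently $E(m,r)$ is a parity-check matrix for $RM(m,m-r-1)$.

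\medskip

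\noindent\textbf{Proof proposal.}

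The plan is to establish the duality by a dimension count together with an orthogonality check. Since $E(m,r)$ generates $RM(m,r)$, the dual code $RM(m,r)^\perp$ has dimension $n - k(m,r) = 2^m - \binom{m}{\leq r}$. On the other hand, $E(m,m-r-1)$ has rank $k(m,m-r-1) = \binom{m}{\leq m-r-1}$, and one checks the elementary identity $\binom{m}{\leq r} + \binom{m}{\leq m-r-1} = 2^m$ (since these two index sets $\{0,\dots,r\}$ and $\{0,\dots,m-r-1\}$ are complementary in $\{0,\dots,m\}$ under $i \mapsto m-i$). Hence $RM(m,m-r-1)$ has exactly the dimension of $RM(m,r)^\perp$. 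So it suffices to show $RM(m,m-r-1) \subseteq RM(m,r)^\perp$, i.e., that every codeword of $RM(m,m-r-1)$ is orthogonal (over $\F_2$, with the standard inner product $\sum_{a \in \F_2^m} f(a)g(a)$) to every codeword of $RM(m,r)$.

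The core computation is therefore: for a monomial $x^S = \prod_{i \in S} x_i$ of degree $|S| \leq r$ and a monomial $x^T = \prod_{i \in T} x_i$ of degree $|T| \leq m-r-1$, show $\sum_{a \in \F_2^m} x^S(a)\, x^T(a) = 0$ in $\F_2$. The product $x^S x^T$ evaluates to $1$ at $a$ exactly when $a_i = 1$ for all $i \in S \cup T$, so the sum counts $|\{a : a_i = 1 \ \forall i \in S \cup T\}| = 2^{m - |S \cup T|}$. Since $|S \cup T| \leq |S| + |T| \leq r + (m-r-1) = m-1 < m$, this count is $2^{m-|S\cup T|}$ with exponent at least $1$, hence even, hence $0 \bmod 2$. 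By bilinearity this extends from monomials to all polynomials, giving the required orthogonality.

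I would then assemble the argument: orthogonality plus the matching dimension count yields $RM(m,m-r-1) = RM(m,r)^\perp$, which is exactly the statement that $E(m,m-r-1)$ is a parity-check matrix for $RM(m,r)$; the "equivalently" clause follows by the symmetry $r \leftrightarrow m-r-1$. There is essentially no serious obstacle here — the only point requiring a little care is making sure the dimension accounting is exact (that $E(m,m-r-1)$ genuinely has full row rank $\binom{m}{\leq m-r-1}$, which holds because distinct monomials have linearly independent truth tables — a standard fact, provable e.g. by noting the evaluation map from polynomials to functions $\F_2^m \to \F_2$ is injective on polynomials of degree $\leq m$ with each variable appearing to degree $\leq 1$). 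Alternatively, one can bypass the explicit count by invoking the recursive $(u, u+v)$ structure of RM codes noted in the text and inducting on $m$, but the direct orthogonality-plus-dimension argument above is cleaner and self-contained.
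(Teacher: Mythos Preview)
Your proof is correct. The orthogonality computation for monomials is clean and the dimension identity $\binom{m}{\leq r}+\binom{m}{\leq m-r-1}=2^m$ (via $i\mapsto m-i$) closes the argument; the remark that $E(m,r)$ has full row rank because distinct multilinear monomials are linearly independent as functions on $\F_2^m$ is exactly the right justification.

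The paper takes a different route: it sets up the recursive $(u,u+v)$ / Plotkin structure $G(m,r)=\begin{pmatrix}G(m-1,r)&G(m-1,r)\\0&G(m-1,r-1)\end{pmatrix}$ and then says duality follows by induction on $m$, deferring the details to MacWilliams--Sloane. Your direct orthogonality-plus-dimension argument is more self-contained and avoids the induction entirely; the inductive proof, on the other hand, falls out naturally once one has the recursion in hand and simultaneously yields the minimum distance $2^{m-r}$. You in fact mention the inductive alternative yourself, so there is nothing to add --- both proofs are standard, and yours is the one most textbooks would give first.
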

%This is a linear code, since $\mathbb{P}(m,r)$ is closed under addition. To generate the code, it is enough to consider the evaluation of monomials. In particular, the dimension of the code is $k(m,r)=\sum_{i=0}^r {m \choose i}$. The distance of the code is $d(m,r)=2^{m-r}$, and is easily found by using the following recursive structure. 

To show this result, note that RM codes can be defined recursively as follows. 
Instead of displaying the rows of the generator matrix by increasing order of the monomial degrees, consider the lexicographic order. For example, $RM(3,3)$ is generated by :

\begin{align*}
	\begin{blockarray}{ccccccccc}
	%		&   & \BAmulticolumn{3}{c}{\overbrace{\rule{0.16\columnwidth}{0pt}}^\text{monomials of deg-1}}& & \BAmulticolumn{3}{c}{\overbrace{\rule{0.3\columnwidth}{0pt}}^\text{monomials of deg-r}}\\		
		  & \begin{pmatrix} 0\\ 0 \\0 \end{pmatrix} & \begin{pmatrix} 1\\ 0 \\0 \end{pmatrix} & \begin{pmatrix} 0\\ 1 \\0 \end{pmatrix} & \begin{pmatrix} 1\\ 1 \\0 \end{pmatrix} & \begin{pmatrix} 0\\ 0 \\1 \end{pmatrix} & \begin{pmatrix} 1\\ 0 \\1 \end{pmatrix} & \begin{pmatrix} 0\\ 1 \\1 \end{pmatrix} & \begin{pmatrix} 1\\ 1 \\1 \end{pmatrix} \\
			\begin{block}{c(cccccccc)}
			1 & 1 & 1 & 1 & 1 & 1 & 1 & 1 & 1 \\
			x_1 & 0 & 1 & 0 & 1 & 0 & 1 & 0 & 1 \\
			x_2 & 0 & 0 & 1 & 1 & 0 & 0 & 1 & 1 &\\  
			x_1 x_2 & 0 & 0 & 0& 1 & 0 & 0 & 0 & 1 \\
			x_3 & 0 & 0 & 0 & 0 & 1 & 1 & 1 & 1 &\\
			x_1x_3 & 0 & 0 & 0 & 0 & 0 & 1 & 0 & 1 \\		
			x_2x_3 & 0 & 0 & 0 & 0 & 0 & 0 & 1 & 1 \\			
			x_1x_2x_3 & 0 & 0 & 0 & 0& 0 & 0 & 0 & 1 \\		
			\end{block}	
	\end{blockarray}
\end{align*}
Notice the order of $x_3$ and $x_1x_2$ in the above. With that order, the matrix is the tensor product of $\begin{pmatrix} 1 & 1 \\ 0 & 1 \end{pmatrix}$ with itself 3 times. In fact, RM codes can equivalently be defined in terms of tensor products. 
\begin{defin}
%Let $m,r$ be two positive integers with $0 \leq r \leq m$, and let $n=2^m$. 
For an integer $m \geq 0$, define
\begin{align*}
G(m)=\begin{pmatrix} 1 & 1 \\ 0 & 1 \end{pmatrix}^{\otimes m},
\end{align*} 
with $G(0)=1$. For $0 \leq r \leq m$, define $G(m,r)$ as the sub-matrix of $G(m)$ obtained by keeping the rows with weight more or equal to $2^{m-r}$. 
%Define also $H(m,r)=G(m,m-r - 1)$, i.e., the sub-matrix of $G(m)$ obtained by keeping the rows with weight more or equal to $2^{r+1}$. 
\end{defin}
Note that $G(m,r)$ is simply a permutation of the rows of $E(m,r)$, hence it is also a generator matrix for $RM(m,r)$. Moreover, it can be constructed recursively as follows: 
\begin{align*}
G(m,r) &= \begin{pmatrix} G(m-1,r) & G(m-1,r)  \\ 0 & G(m-1,r-1)  \end{pmatrix}.
\end{align*}
%with $G[0,0]=1$, $G[\cdot,-1]=\emptyset$ and $G[m,m+1]=G[m]$.
The polynomial interpretation of this recursion is simply the fact that a $m$-variate polynomial $f$ of degree at most $r$ can be expressed as
\begin{align*}
f(x_1,\dots,x_m)= f_1(x_1,\dots,x_{m-1}) + x_m f_2(x_1,\dots,x_{m-1}),
\end{align*}
where $f_1$ and $f_2$ are $m$-variate polynomials of degrees at most $r$ and $r-1$ respectively. 
With this recession, the duality property (Lemma \ref{duality}) (as well as the fact that the distance of $RM(m,r)$ is $2^{m-r}$) are direclty proved by induction. We refer to \cite{sloane-book} for complete proofs.

%RM codes then have an interesting duality property, in the sense that $H(m,r)$ is a parity-check matrix of $RM(m,r)$.
%\begin{lemma}\label{duality2}%[Duality of RM codes]
%\begin{align}
%G(m,r)^\perp &= \langle G(m,m-r-1) \rangle,\\
%H(m,r)^\perp &= \langle H(m,m-r-1) \rangle.
%\end{align}
%%or equivalently,
%%\begin{align}
%%H(m,r)^\perp = \langle H(m,m-r-1) \rangle.
%%\end{align}
%\end{lemma}
%Note that $H(m,r)$ can also be constructed recursively as follows:
%\begin{align}
%H(m,r) &= \begin{pmatrix} H(m-1,r-1) & H(m-1,r-1)  \\ 0 & H(m-1,r)  \end{pmatrix}.
%\end{align}
%with $H[m,m]=H[m,m+1]=\emptyset$ and $H[m,m-1]=G[m]$.
%Finally, the distance of $RM(m,r)$ is equal the minimal weight of the rows of $G(m,r)$.
%\begin{lemma}[Distance of RM codes]
%The distance of $RM(m,r)$ is given by  
%\begin{align}
%d(m,r)= 2^{m-r}. 
%\end{align}
%\end{lemma}
%Both the duality and distance property of RM codes are easily proved using these recursion formulas. 
%\begin{corol}
%$RM(m,r)$ can correct $2^{m-r}-1$ erasures and $2^{m-r-1}-1$ errors (with probability one). 
%\end{corol}
%Note that this is a relative poor performance in the worst-case model of erasures/errors, compared to BCH codes for example. 
%We will next see that the performance is much improved for probabilistic models for erasures/errors.
%%%%%%%%%%%%%%%%%%%%%%%%%%%%%%%%%%%%%
%%%%%%%%%%%%%%%%%%%%%%%%%%%%%%%%%%%%%
%%%%%%%%%%%%%%%%%%%%%%%%%%%%%%%%%%%%%
%%%%%%%%%%%%%%%%%%%%%%%%%%%%%%%%%%%%%

\section{Weight distribution of Reed-Muller codes}\label{sec:wt-dist}

%\Enote{Make sure we cite latest bound everywhere}
%\Enote{Not sure that I understand the above comment (didn't write it), does ``cite'' refer to the bibliography or just the use of the simplified bound?}

In this section we study the weight distribution of Reed-Muller codes. Our analysis is based on the technique of Kaufman, Lovett and Porat \cite{KLP}.  We start with some high level intuition.
Naturally, one expects that most codewords of $RM(m,r)$ (or any linear code, for that matter) to have weight around $n/2 = (2^m)/2$. A trivial upper bound on the number of codewords having such weight (or larger) is the total number of codewords, i.e., $2^{m\choose \leq r}$. The question is thus how does this number changes when we consider smaller weights. Specifically, what is the number of codewords that have weight at most $2^{m-\ell}$ for some parameter $\ell$. If we denote this number with $2^{c(m,r,\ell)\cdot {m\choose \leq r}}$, then we are asking for the value of the term $c(m,r,\ell)$. A trivial lower bound on the number of such codewords is $2^{m\ell + {m-\ell \choose r-\ell}}$, which is obtained by counting all polynomials of degree-$r$ that are divisible by $\ell$ linear functions. If this was tight, then $c(m,r,\ell) \approx 1$, which suggests that the number of such polynomials grows roughly like the number of degree $r-\ell$ polynomials on $m-\ell$ variables. Kaufman et al. proved that indeed this number is essentially the right answer for constant $r$. More precisely, they proved that  $c(m,r,\ell) = O(r^2)$. Our contribution is replacing this estimate with roughly $c(m,r,\ell) = O(\ell^4)$. This change is most significant when $\ell$ is very small compared to $r$, e.g., when considering the number of words of weight e.g. roughly $n/4$ (so $\ell$ is 2) and when $r$ is large, e.g. $r=\Omega(m)$. This improvement turns out to be critical for two of our results on achieving capacity -- for erasure in low rates and errors in high rate. It remains open if one can prove that $c(m,r,\ell) = O( 1)$, namely is a constant independent of all parameters.

We start by giving the high level view of the proof of \cite{KLP} and then explain how to improve their analysis. We first introduce some notation.

For a function $f:\F_2^m\to \F_2$ (equivalently, a word $f\in \F_2^n$) we denote by $\wt(f)$ the relative (Hamming) weight of $f$, i.e., 
$$\wt(f) = \frac{1}{2^m}\left|\{ v\in\F_2^m \mid f(v)\neq 0\}\right|.$$

The cumulative weight distribution of $RM(m, r)$ at a relative weight $0 \leq \alpha \leq 1$, denoted $W_{m,r}(\alpha)$, is the number of codewords of $RM(m,r)$ whose relative weight is at most $\alpha$,
$$W_{m,r}(\alpha) \triangleq \left|\{f \in RM(m,r) \mid \wt(f) \leq \alpha \}\right|.$$

The main theorem of \cite{KLP} roughly states that the number of code words of $RM(m,r)$ of  relative weight at most $2^{-\ell}$ is roughly\footnote{We use $\exp(x)$ instead of $e^x$.} $\exp\left(r^2{m \choose \leq r}\cdot (\frac{r}{m-r})^\ell\right)$. I.e., the number of codewords of relative weight smaller than $1/2$ is significantly smaller than the number of words of relative weight $1/2$. We next give a slightly informal statement of the main theorem of \cite{KLP}.\footnote{Kaufmann et al. also give a lower bound on $W_{m,r}$ for small values of $r$, but we do not need it here.}

\begin{thm}[Theorem 3.1 of \cite{KLP}]\label{thm:KLP}
%Let $2^{-r} \leq \alpha < \frac{1}{2}$ be a parameter. 
Let $1\leq \ell \leq r-1$ and $0 < \e \leq 1/2$ be such that $2^{-r} \leq (1-\e)2^{-\ell} < \frac{1}{2}$. Then
$$W_{m,r}((1-\e)2^{-\ell}) \leq  (1/\e)^{O({r^2}{m \choose {\leq r-\ell}})}.$$
\end{thm}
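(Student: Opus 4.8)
\textbf{Proof proposal for Theorem~\ref{thm:KLP} (the \cite{KLP} bound).}

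The plan is to follow the Kaufman--Lovett--Porat strategy of ``bounding small-weight codewords by counting their approximations by few derivatives,'' which is the template the paper announces it will refine later. The starting point is the standard structural fact about low-degree polynomials over $\F_2$: if $f\in RM(m,r)$ has very small relative weight $\wt(f)\le \alpha$, then $f$ is highly biased, and by a derivative/restriction argument one can show that $f$ agrees, on a $(1-\delta)$-fraction of inputs, with some function $\Phi$ of a small collection of lower-degree polynomials --- concretely, of a few directional derivatives $D_{v_1}\cdots D_{v_j} f$, each of degree $\le r-1$, where the number of derivatives needed is controlled by $r$, by $\log(1/\alpha)\approx \ell$, and (crucially) by the approximation parameter $\delta$. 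First I would set up this ``approximation by few derivatives'' lemma precisely, tracking how the number $N$ of required derivatives scales: roughly $N = O\!\big(r^2 \cdot (r/(m-r))^{\ell} \cdot {m\choose \le r}\big)$ after choosing $\delta$ appropriately in terms of $\e$.

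The second step is the counting argument. Having fixed the approximation quality $\delta$ small enough that any function $\Phi$ of $N$ lower-degree polynomials can be $\delta$-close to \emph{at most one} degree-$r$ polynomial (this is where the factor $(1/\e)$ and the dependence of $\delta$ on $\e$ enters: distinct degree-$r$ polynomials differ in relative weight at least $2^{-r}$, and $(1-\e)2^{-\ell}$ sits safely below $\tfrac12$ and above $2^{-r}$ by hypothesis), every small-weight codeword is charged to a distinct description. A description consists of the choice of the $N$ derivative directions $v_1,\dots,v_N\in\F_2^m$ (at most $2^{mN}$ choices), together with the truth table of the combining function $\Phi$ on the joint range of those $N$ lower-degree polynomials, which lives in a space of size at most $2^{2^N}$ or, more carefully, $2^{2^{O(N)}}$ once one bounds the number of distinct value-vectors that can arise. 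Multiplying these gives $W_{m,r}((1-\e)2^{-\ell}) \le 2^{2^{O(N)}} = (1/\e)^{O(r^2 {m\choose \le r-\ell})}$ after substituting the value of $N$ and absorbing the $(r/(m-r))^\ell {m\choose \le r}$ factor into ${m\choose \le r-\ell}$ (since ${m\choose \le r-\ell}$ is, up to lower-order factors, ${m\choose \le r}\cdot(r/(m-r))^{\ell}$ in the relevant range $r < m/4$, $\ell\le r-1$).

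The main obstacle --- and the place where the later improvement in Theorem~\ref{thm:intro:wt-dist} diverges --- is the tension in the choice of $\delta$: to make each description unambiguous one wants $\delta$ tiny (forcing $\delta \lesssim \e \cdot 2^{-r}$ or so), but a tiny $\delta$ inflates $N$, hence inflates the exponent. In this theorem one simply accepts the resulting $r^2$ in the exponent; the delicate bookkeeping is verifying that with $\delta$ chosen as a suitable power of $\e 2^{-r}$, the derivative-count lemma still yields $N = O(r^2 (r/(m-r))^\ell {m\choose \le r})$ rather than something larger, and that the two-sided constraint $2^{-r}\le (1-\e)2^{-\ell} < \tfrac12$ is exactly what is needed both for uniqueness (lower constraint) and for the biased-polynomial structure theorem to apply (upper constraint). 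I would therefore spend most of the effort on the derivative-count lemma and on the arithmetic relating ${m\choose \le r}(r/(m-r))^\ell$ to ${m\choose \le r-\ell}$, and treat the final multiplication of counts as routine.
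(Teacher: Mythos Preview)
Your high-level picture (approximate small-weight $f$ by a function of a few derivatives, then count descriptions) is the right one, but the execution is off in several concrete ways, and the arithmetic as written is inconsistent.

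First, the number of derivatives. In the KLP argument as the paper presents it, one takes $\ell$-th \emph{iterated} derivatives $\Delta_{Y_i}f$ with $Y_i\in(\F_2^m)^\ell$, so each derivative already has degree $\le r-\ell$ (not $\le r-1$). The number $t$ of such derivatives needed is $t = c(\log(1/\delta)\log(1/\e)+\log^2(1/\delta))$, and with the choice $\delta = 2^{-r-2}$ this becomes $t = O(r\log(1/\e)+r^2)$. Your expression $N = O(r^2 (r/(m-r))^\ell{m\choose\le r})$ is not the number of derivatives; you have conflated the derivative count with the size of the final exponent.

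Second, and more serious: the algorithm $\A$ in Lemma~\ref{lem:KLP} is \emph{universal} --- it does not depend on $f$ --- so there is no ``combining function $\Phi$'' to enumerate. The entire description of $f$ is the tuple $(Y_1,\ldots,Y_t,\Delta_{Y_1}f,\ldots,\Delta_{Y_t}f)$, and the count is simply
\[
\bigl(2^{m\ell}\cdot 2^{\binom{m}{\le r-\ell}}\bigr)^{t}
\;=\;
(1/\e)^{O\!\left(r^2\binom{m}{\le r-\ell}\right)}.
\]
Your $2^{2^{O(N)}}$ is a double exponential and cannot equal $(1/\e)^{O(r^2\binom{m}{\le r-\ell})}$ under your own formula for $N$; the chain of equalities in your second paragraph does not hold. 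The factor $\binom{m}{\le r-\ell}$ enters directly as the dimension of the space of degree-$(r-\ell)$ polynomials in which each $\Delta_{Y_i}f$ lives, not via any $(r/(m-r))^\ell$ ratio manipulation on $\binom{m}{\le r}$.
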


We next explain the main lemma used to prove Theorem~\ref{thm:KLP}.
First we introduce the notion of a discrete partial derivative. 

The discrete derivative of $f:\F_2^m \to \F_2$ in direction $y\in \F_2^m$ at point $x$ is is $$\Delta_yf(x) \triangleq f(x+y)+f(x).$$
It is clear that $\Delta_yf(x) = \Delta_yf(x+y)$, so in particular, $\Delta_y\Delta_yf(x)=0$. Thus, the function $\Delta_y f(\cdot )$ %derivative in direction $y$ does not depend on $y$ and 
is determined by its value on the quotient space $\F_2^m/\langle y \rangle$, where for a set of vectors $V$,  $\langle V \rangle$ denotes the space spanned by the vectors in $V$. It is a straight forward observation that if $f$ is a polynomial of degree at most $r$ then $\Delta_yf(\cdot)$ is a polynomial of degree at most $r-1$.
Similarly, the $\ell$'th iterated derivative of $f$ in direction $Y=(y_1,\ldots,y_\ell)$ and point $x$ is
$$\Delta_Yf(x)  \triangleq \Delta_{y_1}\Delta_{y_2}\ldots\Delta_{y_\ell}f(x) = \sum_{I\subseteq [\ell]}f(x+\sum_{i\in I}y_i).$$
It is easy to show that $\Delta_Yf(x)$ does not depend on the order in which we take the derivatives. 

We are now ready to state the main lemma of \cite{KLP}.

\begin{lemma}[Lemma 2.1 of \cite{KLP}]\label{lem:KLP}
Let $f:\F_2^m$ be such that $\wt(f)\leq (1-\e)2^{-\ell}$, for $0<\e<1$. Let $\delta>0$ be an approximation parameter. There exists a universal algorithm $\A$ (which does not depend on $f$) with the following properties:
\begin{enumerate}
\item $\A$ has two inputs: $x\in \F_2^m$ and $(Y_1,\ldots, Y_t )\in \left(\F_2^m\right)^\ell$.
\item $\A$ has oracle access to the $\ell$'th derivatives $\Delta_{Y_1}f(\cdot),\ldots,\Delta_{Y_t}f(\cdot)$.
\end{enumerate}
Then, for $t = c\left(\log(1/\delta)\log(1/\e)+ \log^2(1/\delta)\right)$, for some absolute constant $c$ that does not depend on any of the parameters, there exists a setting for $Y_1,\ldots,Y_t$ such that 
$$\Pr_{x\in\F_2^m}\left[\A\left(x,(Y_1,\ldots,Y_t);\Delta_{Y_1}f(\cdot),\ldots,\Delta_{Y_t}f(\cdot)\right) = f(x)\right] \geq 1-\delta,$$
where $\Pr_{x\in\F_2^m}$ means that $x$ is uniformly drawn in $\F_2^m$.  
\end{lemma}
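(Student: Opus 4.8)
The statement to prove is Lemma 2.1 of \cite{KLP}, restated here as Lemma~\ref{lem:KLP}: given $f$ of relative weight at most $(1-\e)2^{-\ell}$, there is a universal algorithm $\A$ that, from oracle access to $t = c(\log(1/\delta)\log(1/\e)+\log^2(1/\delta))$ of the $\ell$th iterated derivatives of $f$ (with a good choice of the directions $Y_1,\dots,Y_t$), computes $f(x)$ correctly on a $(1-\delta)$-fraction of inputs $x$.

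\paragraph{Plan of the proof.} The plan is to build $\A$ by a ``voting / self-correction'' scheme over affine subspaces, exploiting the fact that a very low-weight function restricted to a random affine subspace of the right dimension is, with high probability, either identically zero or extremely sparse. First I would treat the base case $\ell=1$. Here the first-order derivative $\Delta_y f(x)=f(x)+f(x+y)$ gives us the relation $f(x)=f(x+y)+\Delta_y f(x)$, so knowing $\Delta_y f$ reduces the unknown value $f(x)$ to the unknown value $f(x+y)$. If we sample $y$ uniformly, then since $\wt(f)\le (1-\e)/2$, the point $x+y$ is a zero of $f$ with probability at least $\tfrac12+\tfrac{\e}{2}$ over $y$ (for most $x$); thus a majority vote of the quantities $\Delta_{y_j} f(x)$ over independent random directions $y_1,\dots,y_t$ returns $f(x)$ with error probability $\exp(-\Omega(\e^2 t))$ by Chernoff, and after a union-bound/averaging step one fixes the directions so that the overall failure probability over $x$ is at most $\delta$. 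This already forces $t=\Theta(\log(1/\delta)/\e^2)$, which is worse than claimed; to get the stated $t=O(\log(1/\delta)\log(1/\e)+\log^2(1/\delta))$ one must replace the crude majority vote by a recursive scheme on affine subspaces of growing dimension, so that the ``bias'' towards $0$ is amplified geometrically rather than additively — this is the mechanism that turns an $\e$-dependence of $1/\e^2$ into $\log(1/\e)$.

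\paragraph{The inductive/recursive step.} For general $\ell$, I would iterate: the identity $\Delta_Y f(x)=\sum_{I\subseteq[\ell]} f(x+\sum_{i\in I}y_i)$ expresses $f(x)$ as a sum of $\Delta_Y f(x)$ and $2^\ell-1$ other values of $f$ at affine-shifted points. Pick $Y=(y_1,\dots,y_\ell)$ spanning a random $\ell$-dimensional subspace $A$; then $f$ restricted to the coset $x+A$ has expected weight $\le (1-\e)2^{-\ell}\cdot 2^\ell=(1-\e)$, i.e. $f$ is expected to be zero on all $2^\ell$ points of the coset except possibly a few, and is in fact \emph{identically} zero on the coset with probability at least $\e$ (more carefully, with probability bounded below using the weight bound and a second-moment / inclusion-exclusion estimate). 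Conditioning on the coset being all-zero, $\Delta_Y f(x)=f(x)$, so the algorithm outputs $0$; on a random coset through $x$ this gives the right answer with probability $\ge\e'$ for a suitable $\e'$ depending on $\e$. The recursion then amplifies this: at stage $i$ the algorithm works with subspaces of dimension roughly $\ell+i$, the probability that $f$ vanishes on the whole subspace increases as we go down in relative weight, and the number of derivative oracles needed at each stage is $O(\log(1/\delta_i))$ where $\delta_i$ decreases geometrically; summing a geometric-type series in the stage index produces the bound $t=O(\log(1/\delta)\log(1/\e)+\log^2(1/\delta))$. The directions $Y_1,\dots,Y_t$ are all obtained by taking random bases of random subspaces; a final averaging argument over this random choice fixes them to a single good setting, since the algorithm's expected (over $x$) error is at most $\delta$, and the ``universal'' requirement is met because $\A$'s description depends only on $t$ and the combinatorics of the voting, not on $f$.

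\paragraph{Main obstacle.} The delicate part — and the step I expect to be the real obstacle — is the quantitative analysis that converts the weight hypothesis $\wt(f)\le(1-\e)2^{-\ell}$ into a lower bound of the form ``$f$ is identically zero on a random $\ell$-flat through $x$ with probability $\ge \Omega(\e)$ for all but a $\delta$-fraction of $x$,'' and then correctly bookkeeping how this bias compounds through the recursion to yield the $\log(1/\e)$ rather than $\mathrm{poly}(1/\e)$ dependence. Getting the all-zero event (rather than merely ``small weight'') is what makes the derivative oracle output exactly $f(x)$, and handling the exceptional cosets — those on which $f$ is nonzero but the algorithm has recursed — requires carefully choosing the dimension schedule so the conditional error stays controlled; this is essentially the content of Lemma 2.1 of \cite{KLP}, and since the statement here is exactly theirs, I would follow their argument, isolating the parameter choices as the one place where care is needed.
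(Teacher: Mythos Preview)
The paper does not prove this lemma: it is quoted as Lemma~2.1 of \cite{KLP} and used as a black box in the proof of Theorem~\ref{thm:wt-dist}. So there is no ``paper's own proof'' to compare against; the relevant comparison is to the argument in \cite{KLP} itself.

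Your high-level picture --- self-correct $f(x)$ from $\ell$th derivatives via the identity $f(x)=\Delta_Y f(x)+\sum_{\emptyset\neq I\subseteq[\ell]}f\bigl(x+\sum_{i\in I}y_i\bigr)$, observe that for random $Y$ all $2^\ell-1$ extra $f$-values vanish simultaneously with probability at least $\e$ (by the union bound and the weight hypothesis), and then amplify by a recursive majority --- is indeed the skeleton of the \cite{KLP} argument. Your observation that a single-level majority vote costs $O(\log(1/\delta)/\e^2)$ and that some recursion is needed to replace the $1/\e^2$ by $\log(1/\e)$ is also correct and is exactly the point of their construction.

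Where your sketch becomes imprecise is, as you anticipate, the recursion itself, and two specific statements do not hold as written. First, you say ``at stage $i$ the algorithm works with subspaces of dimension roughly $\ell+i$,'' but the lemma grants oracle access only to $\ell$th derivatives, so the recursion cannot be over the derivative order; in \cite{KLP} every oracle call is to an $\ell$th derivative, and the recursion is over the \emph{accuracy} of the approximation: each round uses the previous round's approximator to estimate the correction term $\sum_{I\neq\emptyset}f(x+\sum_{i\in I} y_i)$ and then majority-votes over fresh choices of $Y$. Second, ``the probability that $f$ vanishes on the whole subspace increases as we go down in relative weight'' is unclear since $\wt(f)$ is fixed throughout; what improves across rounds is the error of the current approximator, and the analysis must track how the $(2^\ell-1)$-fold blowup from the correction sum interacts with the per-round majority. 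Extracting the stated $t=O(\log(1/\delta)\log(1/\e)+\log^2(1/\delta))$ requires a specific schedule of rounds and samples-per-round that your outline does not supply. Since you explicitly defer to \cite{KLP} for this step, your treatment is ultimately consistent with the paper's --- both cite the lemma rather than prove it --- but the intermediate description of the recursion should be corrected if you intend it to stand on its own.
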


%\Enote{Above: Is $\log(1/\delta)^2 = 2 \log(1/\delta)$? $A$ has two inputs or $t+1$? The convention for evaluating $A(;,)$?  I added the def of $\Pr_{x \in \F_2^m}$}

In other words, what the lemma shows is that if $f$ has relatively low weight, then given an appropriate set of $ O\left(\log(1/\delta)\log(1/\e)+ \log^2(1/\delta)\right)$ many $\ell$-th derivatives of $f$, one can determine the value of $f$ on most inputs. When $f$ is a degree-$r$ polynomial, its derivatives are degree $r-\ell$ polynomials and thus the lemma lets us approximate $f$ well by lower degree polynomials. 

We now show how Kaufman et al.  deduced Theorem~\ref{thm:KLP} from Lemma~\ref{lem:KLP}. 
The first idea is to set $\delta = 2^{-r-1}$. The point is that there is at most one degree-$r$ polynomial $f$ at distance $\delta$ from the function $\A\left(x;Y_1,\ldots,Y_t,\Delta_{Y_1}f(\cdot),\ldots,\Delta_{Y_t}f(\cdot)\right)$. Indeed, by the triangle inequality, the distance between any two polynomials that are $\delta$-close to $\A\left(x;Y_1,\ldots,Y_t,\Delta_{Y_1}f(\cdot),\ldots,\Delta_{Y_t}f(\cdot)\right)$ is at most $2\delta<2^{-r}$, which is smaller than the minimum distance of $RM(m,r)$. Hence, to bound the number of polynomials $f\in RM(m,r)$ of relative weight at most $\wt(f)\leq (1-\e)2^{-\ell}$, it is enough to bound the possible number of functions of the form 
$\A\left(x;Y_1,\ldots,Y_t,\Delta_{Y_1}f(\cdot),\ldots,\Delta_{Y_t}f(\cdot)\right)$ for the appropriate $t$. 

The second step in the proof of Kaufmann et al. is to give an upper bound on the number of expressions of the form $\A\left(x;Y_1,\ldots,Y_t,\Delta_{Y_1}f(\cdot),\ldots,\Delta_{Y_t}f(\cdot)\right)$. Since $\A$ is fixed, they only have to bound the number of sets $Y_i$ and the number of polynomials of the form $\Delta_{Y_i}f$ and raise it to the power $t$. They now use the fact that $\Delta_{Y_i}f$ is a polynomial of degree at most $r-\ell$ so the number of such polynomials is $2^{m\choose \leq r-\ell}$.

Combining everything,and letting $\delta = 2^{-r-2}$ so that   
$$t = O\left(r\log(1/\e)+ r^2\right)$$
and 
\begin{equation}\label{eq:KLP-estimate}
W_{m,r}((1-\e)2^{-\ell}) \leq \left(2^{m \ell} \cdot 2^{{m\choose \leq r-\ell}} \right)^t =  \left( 2^{m\ell+ {m\choose \leq r-\ell}} \right)^{O\left(r\log(1/\e)+ r^2\right)}.
\end{equation}

%\Amnote{we should probably explain that the problem is the case $\ell=1$ and that we would like to save by replacing $r$ by $\ell$. I did not do so but rather just wrote down the argument}

One downside of the result of \cite{KLP} is that due to the dependence on $r$ of the constant in the big $O$, their estimate is tight only for constant $r$, and becomes trivial at $r = \tilde{O}(\sqrt{m})$. Indeed, the bound in the exponent goes down roughly like $(r/m)^\ell$. Hence, the maximum is obtained for small values of $\ell$, i.e., $\ell=1$ or $\ell=2$. For these values, the term $\log(1/\e)r+r^2$ in the exponent basically eliminates any saving that comes from $(r/m)^\ell$. Thus, to improve the bound on the weight distribution it is crucial to improve the bound for small values of $\ell$. Our result does exactly this, we are able to replace the power of $r$ in the exponent with a power of $\ell$, which gives the required saving for small values of $\ell$.

We now explain how we modify the arguments of \cite{KLP} in order to tighten the estimate given in Theorem~\ref{thm:KLP}, that hold for a broader range of parameters.

Our first observation is that one can relax the setting of $\delta$. We set $\delta = (1-\e)2^{-\ell -2}$, %for some constant $\cc$, 
instead of $\delta = 2^{-r-2}$. The effect is that now there can be many polynomials $g$ that are $\delta$-close to 
$\A\left(x;Y_1,\ldots,Y_t,\Delta_{Y_1}f(\cdot),\ldots,\Delta_{Y_t}f(\cdot)\right)$. Indeed, all we know is that the distance between any two such polynomials is at most $2\delta = (1-\e)2^{-\ell-1}$. The point is that the number of such polynomials can be bounded from above by $W_{m,r}(2^{-\ell-1})$ which is relatively small compared to $W_{m,r}(2^{-\ell})$ and so we can (almost) think of it as $1$. The effect on the expression 
\eqref{eq:KLP-estimate} is that in the expression for $t$, we can (almost) replace $r$ by $\ell$. As explained before, this gives a significant saving over the bound of \cite{KLP}.
%This is especially important for small values of $\ell$ (i.e., for functions of relatively high weight), which as explained above. 

Our second improvement comes from the simple observation that $\Delta_{Y_i}f$ can be defined by its value on the quotient space $\F_2^m/\langle Y_i \rangle$. As this is a space of dimension $m-\ell$, for a fixed $Y_i$, we can upper bound the number of polynomials of the form $\Delta_{Y_i}f$ by $2^{m-\ell \choose \leq r-\ell}$, instead of $2^{m \choose \leq r-\ell}$, which again yields a tighter estimate.

We now state our bound on the weight distribution of Reed-Muller codes.

\begin{thm}\label{thm:wt-dist}
%Let $2^{-r} \leq \alpha < \frac{1}{2}$ be a parameter. 
Let $1\leq \ell \leq r-1$ and $0 < \e \leq 1/2$. %Set $\alpha=(1-\e)2^{-\ell}$. 
Then, %for every integer $\cc>2$, 
if $r \leq m/4$, 
$$W_{m,r}((1-\e)2^{-\ell}) \leq  (1/\e)^{8c \ell^4 {m-\ell \choose \leq r-\ell} },$$
where $c$ is an absolute constant (same as in Lemma~\ref{lem:KLP}).
\end{thm}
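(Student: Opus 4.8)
The plan is to run the Kaufman--Lovett--Porat deduction of Theorem~\ref{thm:KLP} from Lemma~\ref{lem:KLP}, but with two changes, and then to bound the recursion these changes create. Fix $\ell,\e$ and let $f\in RM(m,r)$ satisfy $\wt(f)\le (1-\e)2^{-\ell}$. Apply Lemma~\ref{lem:KLP} with an approximation parameter $\delta$ to be chosen: there is a tuple $Y=(Y_1,\dots,Y_t)$, each $Y_i\in(\F_2^m)^\ell$, with $t=c(\log(1/\delta)\log(1/\e)+\log^2(1/\delta))$, so that $g(x):=\A(x,Y;\Delta_{Y_1}f,\dots,\Delta_{Y_t}f)$ has $\wt(f-g)\le\delta$. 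The two modifications: (i) take $\delta:=(1-\e)2^{-\ell-2}$ instead of $2^{-r-2}$; since $\e\le 1/2$ this gives $\log_2(1/\delta)\le \ell+3$, hence $t=O(\ell\log(1/\e)+\ell^2)=O(\ell^2\log(1/\e))$, which is the key saving over \cite{KLP}; and (ii) note that for fixed linearly independent $Y_i$ the derivative $\Delta_{Y_i}f$ has degree $\le r-\ell$ and factors through $\F_2^m/\langle Y_i\rangle\cong\F_2^{m-\ell}$, so there are at most $2^{{m-\ell\choose \leq r-\ell}}$ possibilities for it (fewer if the $Y_i$ are dependent). Hence the number $N(\ell)$ of distinct functions $g$ obtainable this way is at most $\bigl(2^{m\ell}\cdot 2^{{m-\ell\choose\leq r-\ell}}\bigr)^{t}=2^{\,t\,(m\ell+{m-\ell\choose\leq r-\ell})}$.

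The cost of the larger $\delta$ is that $g$ no longer determines $f$. But if $f_1,f_2\in RM(m,r)$ each have relative weight $\le (1-\e)2^{-\ell}$ and each lie within $\delta$ of the same $g$, then $f_1+f_2\in RM(m,r)$ has relative weight $\le 2\delta=(1-\e)2^{-\ell-1}$; so at most $W_{m,r}((1-\e)2^{-\ell-1})$ codewords of relative weight $\le(1-\e)2^{-\ell}$ map to any given $g$. Since by Lemma~\ref{lem:KLP} every such $f$ maps to \emph{some} $g$, this yields the recursion
$$W_{m,r}((1-\e)2^{-\ell})\ \le\ N(\ell)\cdot W_{m,r}((1-\e)2^{-\ell-1}).$$
Iterating from $\ell$ up to $r-1$, and using that $(1-\e)2^{-r}<2^{-r}$ lies below the relative minimum distance $2^{-r}$ of $RM(m,r)$, so $W_{m,r}((1-\e)2^{-r})=1$, gives $W_{m,r}((1-\e)2^{-\ell})\le\prod_{j=\ell}^{r-1}N(j)$.

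It remains to bound this product. Writing in base $1/\e$ via $2^{x\log_2(1/\e)}=(1/\e)^x$ and $t(j)=O(j^2\log_2(1/\e))$, one gets $\log_{1/\e}\prod_{j=\ell}^{r-1}N(j)=O\bigl(\sum_{j=\ell}^{r-1}j^2\,(mj+{m-j\choose\leq r-j})\bigr)$. This is where $r\le m/4$ is used: for $j\le r-1$ we have $m-j\ge 3m/4$, so ${m-j\choose\leq r-j}\ge 1+(m-j)\ge 3m/4$, whence $mj=O\bigl(j\,{m-j\choose\leq r-j}\bigr)$ and therefore $mj+{m-j\choose\leq r-j}=O\bigl(j\,{m-j\choose\leq r-j}\bigr)$. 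Finally Pascal's identity gives ${m-j\choose\leq r-j}\ge 2{m-(j+1)\choose\leq r-(j+1)}$, so ${m-j\choose\leq r-j}\le 2^{-(j-\ell)}{m-\ell\choose\leq r-\ell}$, and the sum collapses to a geometric series: $\sum_{j=\ell}^{r-1}j^3\,{m-j\choose\leq r-j}\le{m-\ell\choose\leq r-\ell}\sum_{i\ge0}(\ell+i)^3 2^{-i}=O(\ell^3)\,{m-\ell\choose\leq r-\ell}$. Combining the estimates gives $W_{m,r}((1-\e)2^{-\ell})\le(1/\e)^{O(\ell^3{m-\ell\choose\leq r-\ell})}$; the stated form follows since $\ell^4$ dominates $\ell^3$, and the explicit $8c\ell^4$ results once the absolute constants --- all traceable to Lemma~\ref{lem:KLP} and the geometric series --- are collected.

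I expect the main obstacle to be precisely the control of this recursion. A crude bound on $\prod_j N(j)$ --- bounding each ${m-j\choose\leq r-j}$ by ${m-\ell\choose\leq r-\ell}$ and the number of levels by $r$ --- reintroduces a $\poly(r)$ factor in the exponent and gives nothing better than \cite{KLP}. The point is that the per-level cost decays geometrically in $j$ (because $r-j\ge1$, and because $r\le m/4$ keeps the $mj$ term subordinate to ${m-j\choose\leq r-j}$), so that the product is essentially governed by its first term $N(\ell)$; making this precise, and checking the boundary case when $\ell$ is close to $r$ (where the cluster degenerates to a single codeword already at level $r-1$), is where the care is needed.
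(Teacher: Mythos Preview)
Your proposal is correct and follows essentially the same approach as the paper: choose $\delta=(1-\e)2^{-\ell-2}$ in Lemma~\ref{lem:KLP}, count $\Delta_{Y}f$ on the quotient $\F_2^m/\langle Y\rangle$, derive the recursion $W_{m,r}((1-\e)2^{-\ell})\le N(\ell)\,W_{m,r}((1-\e)2^{-\ell-1})$, and then control the accumulated cost using the geometric decay of ${m-j\choose\le r-j}$ together with $r\le m/4$. The only difference is cosmetic: the paper packages the recursion as an explicit inductive invariant (using the bound ${m-j-1\choose\le r-j-1}\le\tfrac{r}{m}{m-j\choose\le r-j}$), whereas you unroll it as a product and use the Pascal-identity factor-of-two decay; your bookkeeping in fact yields $O(\ell^3)$ in the exponent, slightly sharper than the paper's $\ell^4$.
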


\begin{proof}
We shall prove by induction a stronger statement, namely, $$W_{m,r}((1-\e)2^{-\ell}) \leq  (1/\e)^{2c\left(m(r+3)^3(r-\ell) +(\ell+3)^2{m-\ell \choose r-\ell} \right)}.$$
Set $\delta = (1-\e)2^{-\ell-2}$  and $t=c\cdot \left(\log(1/\delta)\log(1/\e)+ \log^2(1/\delta)\right) \leq c\cdot \log(1/\e) \cdot (\ell+3)^2$.
By Lemma~\ref{lem:KLP}, for any $f\in RM(m,r)$ of weight $\wt(f)\leq (1-\e)2^{-\ell}$, there is a choice of sets 
$Y_1,\ldots, Y_t \in \left(\F_2^m\right)^\ell$ such that 
$$\Pr_{x\in\F_2^m}\left[\A\left(x,(Y_1,\ldots,Y_t);\Delta_{Y_1}f(\cdot),\ldots,\Delta_{Y_t}f(\cdot)\right) = f(x)\right] \geq 1-\delta.$$
We next bound the number of functions $g$ of the form $$g=\A\left(x,(Y_1,\ldots,Y_t);\Delta_{Y_1}f(\cdot),\ldots,\Delta_{Y_t}f(\cdot)\right) .$$
We can upper bound the number of sets $Y\in\left(\F_2^m\right)^\ell$ with $2^{m\ell}$. For each such $Y$, since $\Delta_{Y}f$ is a polynomial of degree $\leq r-\ell$ that is defined by its values on the space $\F_2^m/\langle Y_i \rangle$, there are at most $2^{m-\ell \choose \leq r-\ell}$ polynomials of the form $\Delta_Yf$. Thus, the number of possible such functions $g$ is at most 
$$\left(2^{m\ell} 2^{m-\ell \choose \leq r-\ell}\right)^t = (1/\e)^{c(\ell+3)^2\left({m\ell} +{m-\ell \choose \leq r-\ell}\right)}.$$
Given any such $g$, the number of polynomials $g' \in RM(m,r)$ at distance at most $(1-\e)2^{-\ell-2}$ from $g$ is at most $W_{m,r}((1-\e)2^{-\ell-1})$. Indeed, fix some $f$ close to $g$. Then any other such polynomial $g'$ has distance at most $ 2(1-\e)2^{-\ell-2}$ from $f$, and so $\wt(f-g')\leq  (1-\e)2^{-\ell-1}$ and $f-g' \in RM(m,r)$.
Concluding we get 
$$W_{m,r}( (1-\e)2^{-\ell}) \leq(1/\e)^{c(\ell+3)^2\left({m\ell} +{m-\ell \choose \leq r-\ell}\right)} \cdot W_{m,r}((1-\e)2^{-\ell-1}).$$
Since $W_{m,r}((1-\e)2^{-r})=1$ (as only the $0$ polynomial has such small weight) and ${m-\ell \choose \leq r-\ell}\leq {m\choose\leq r}\cdot \left(\frac{r}{m}\right)^\ell$, we get by induction that
%$and algebraic manipulations the bound on $W_{m,r}$.\footnote{More accurately, induction will give a tighter bound on $W_{m,r}$ and the bound that we give is a simplification of that tighter, yet more cumbersome, bound.}
\begin{eqnarray*}
W_{m,r}( (1-\e)2^{-\ell})  &\leq &(1/\e)^{c(\ell+3)^2\left({m\ell} +{m-\ell \choose \leq r-\ell}\right)} \cdot W_{m,r}((1-\e)2^{-\ell-1})\\
&\leq &(1/\e)^{c(\ell+3)^2\left({m\ell} +{m-\ell \choose \leq r-\ell}\right)} \cdot (1/\e)^{2c\left(m(r+3)^3(r-\ell-1) + (\ell+4)^2{m-(\ell+1) \choose \leq r-(\ell+1)} \right)}\\
&\leq &(1/\e)^{c(\ell+3)^2\left({m\ell} +{m-\ell \choose \leq r-\ell}\right)} \cdot 
(1/\e)^{2c\left(m(r+3)^3(r-\ell-1) + (\ell+4)^2 {m-\ell \choose \leq r-\ell)}\cdot\left(\frac{r}{m}\right) \right)}\\
&\leq & (1/\e)^{2cm\left((r+3)^3 (r-\ell) \right)}
\cdot (1/\e)^{c{m-\ell \choose r-\ell}\left( (\ell+3)^2 + 2(\ell+4)^2\frac{r}{m}\right)}\\
&\leq^{*} & (1/\e)^{2c\left(m(r+3)^3(r-\ell) +(\ell+3)^2{m-\ell \choose r-\ell} \right)},
\end{eqnarray*}
where in inequality $(*)$ we use the fact that $r<m/4$.
The bound in the statement of the theorem follows by a simple manipulation.
This concludes the proof of the theorem.
\end{proof}

%As the bound given in Theorem~\ref{thm:wt-dist} is a bit hard to work with, let us give some rough estimates on its value. 
%
%\begin{corol}\label{cor:wt-dist-simple}
%%For a constant $0<\e<1/2$, and integer $\ell$, let $\alpha = (1-\e)2^{-\ell}$. 
%Under the conditions of Theorem~\ref{thm:wt-dist} we have that
%%\begin{enumerate}
%%\item For $\ell=1$, $W_{m,r}(\alpha)\leq 2^{O_{c,\e,\cc}\left( {m\choose \leq r}\frac{r}{m}\right)}$.
%$$W_{m,r}((1-\e)2^{-\ell})\leq (1/\e)^{8c \ell^4 {m-\ell \choose \leq r-\ell} }.$$
%%\end{enumerate}
%In particular, for  $r=o(m)$,
%$$W_{m,r}((1-\e)/2)\leq (1/\e)^{O(mr^4) +  o(1)\cdot {m\choose \leq r}}.$$
%\end{corol}

%
%As in the paper of \cite{KLP}, almost the exact same proof as our proof of Theorem~\ref{thm:wt-dist} yields a bound for list-decoding of Reed-Muller codes. 
%Following \cite{KLP} we denote:
%$$L_{m,r}(\alpha) = \max_{g:\{0,1\}^m\to \{0,1\}}\left| f \in RM(m,r) \mid \wt(f-g)\leq \alpha\}\right|.$$
%That is, $L_{m,r}(\alpha)$ denoted the maximal number of code words of $RM(m,r)$ in a hamming ball of radius $\alpha 2^m$.
%
%\begin{thm}\label{thm:list-dec}
%Let $1\leq \ell \leq r-1$ and $0 < \e \leq 1/2$. %Set $\alpha=(1-\e)2^{-\ell}$. 
%Then, 
%if $r \leq m/4$ then
%$$L_{m,r}((1-\e)2^{-\ell}) \leq  (1/\e)^{2c\left(m (r+3)^4 + (\ell+3)^2{m-\ell \choose \leq r-\ell} \right)},$$
%where $c$ is the absolute constant from Lemma~\ref{lem:KLP}.
%\end{thm}

%%%%%%%%%%%%%%%%%%%%%%%%%%%%%%%%%%
%%%%%%%%%%%%%%%%%%%%%%%%%%%%%%%%%%
%%%%%%%%%%%%%%%%%%%%%%%%%%%%%%%%%%
%%%%%%%%%%%%%%%%%%%%%%%%%%%%%%%%%%

%\Amnote{changed title to $E$}

\section{Random submatrices of $E(m,r)$} \label{sec:sampling}

%\Anote{Changed the questions - we don't care about the probability for given size of U, but rather at the best size of U for which the prob is high!}
As discussed in the introduction and Section \ref{prelim}, in order to understand the ability to decode from  erasures it is important to understand the following questions. Consider randomly chosen set  $U$ of a given parameter size $k$:
\begin{question}\label{question:columns}
What is the largest $s$ for which the submatrix $U^r$ has full column-rank with high probability?
\end{question}
\begin{question}\label{question:rows}
What is the smallest $s$ for which the submatrix $U^r$ has full row-rank with high probability?
\end{question}
%While the first question is important in view of Theorem~\ref{thm: lin ind eval vectors} and the connection to decoding errors in the BSC model, the second question determines the ability to correct erasures, i.e. to correct errors in the BEC model, see Lemma~\ref{???}. 

In this section we provide an answer to each of these questions.\footnote{Using tensoring to produce linearly independent vectors has also been studied recently in the context of real vectors \cite{smooth-tensor}.} Note that for any degree-$r$, the number of rows of $E(m,r)$, namely ${m \choose \leq r}$, is an upper bound on the value of $s$ for the first question and a lower bound for the second. For small $r$ we prove that we can approach this optimal bound asymptotically in both.
%\Amnote{Changed name of section}

Note that, interestingly, the duality property of RM codes allows to relate question \ref{question:columns} and \ref{question:rows} to each other but for different ranges of the parameters. Namely, the following holds. 
\begin{lemma}\label{lem:equiv2}
For a set $S \subseteq [n]$, denote by $E(m,d)[S]$ the sub-matrix of $E(m,d)$ obtained by selecting the columns indexed by $S$. For any $s \leq n$, 
\begin{align*}
\left\{S \in{[n] \choose s} : \mathrm{rk} (E(m,d)[S] )=s \right\} = \left\{ S\in {[n] \choose s} :  \mathrm{rk} (E(m,m-d-1)[S^c] )=n-{m\choose \leq d} \right\} .
\end{align*}
%\begin{align}
%\Pr\{ \mathrm{rk} (E(m,d)[s] )=s \} \to 1, \quad \text{as } n \to \infty,   \label{lim1}
%%\Pr_D\{ \sum_{j \in D} H[j] =0 \} \to 0, \quad \text{as } n \to \infty, \quad 
%\end{align}
%if and only if 
%\begin{align}
%\Pr\{ \mathrm{rk} (E(m,m-d-1)[n-s] )=n-{m\choose \leq d} \} \to 1, \quad \text{as } n \to \infty.   \label{lim2}
%%\Pr_D\{ \sum_{j \in D} H[j] =0 \} \to 0, \quad \text{as } n \to \infty, \quad 
%\end{align}
Note that $E(m,d)[S]=s$ means that $E(m,d)[S]$ has full column-rank and $E(m,m-d-1)[S^c]=n-{m\choose \leq d}$ means that $E(m,m-d-1)[S^c]$ has full row-rank.  
\end{lemma}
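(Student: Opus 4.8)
The plan is to prove Lemma~\ref{lem:equiv2} as a purely linear-algebraic statement: for any matrix $E$ with $n$ columns and any partition of the columns into a set $S$ and its complement $S^c$, having full column rank on $S$ is equivalent to having full row rank of the ``dual'' matrix on $S^c$. Concretely, I would first recall the duality of RM codes (Lemma~\ref{duality}): $E(m,m-d-1)$ is a parity-check matrix for the code $RM(m,d)$ whose generator matrix is (the transpose of) $E(m,d)$. Thus the row space of $E(m,d)$ is exactly the orthogonal complement of the row space of $E(m,m-d-1)$ inside $\F_2^n$. Writing $k_d = {m \choose \leq d}$ and $k_{m-d-1} = {m \choose \leq m-d-1} = n - k_d$, the matrices $E(m,d)$ and $E(m,m-d-1)$ have $k_d$ and $n-k_d$ rows respectively, both of full row rank (they are non-redundant generator/parity-check matrices), and their row spaces are mutually orthogonal complements.

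The key step is then the following general claim, which I would state and prove first: if $G$ (size $k \times n$) and $H$ (size $(n-k)\times n$) are full-row-rank matrices over $\F_2$ whose row spaces are orthogonal complements, then for any $S \subseteq [n]$ with $|S| = s$,
$$\rank(G[S]) = s \iff \rank(H[S^c]) = n-k.$$
To see this, note $\rank(G[S]) = s$ means the $s$ columns of $G$ indexed by $S$ are linearly independent, equivalently there is no nonzero vector $v$ in the row space of $G$ whose support is contained in $S^c$ — wait, I need to be careful about rows versus columns. I would argue via the standard fact connecting erasure-correction to both the generator and parity-check viewpoints: by Lemma~\ref{equiv} applied to $H$, the columns $H[S^c]$ have full rank $n-k$ iff $|S^c| = n-s$ and ... hmm, $H[S^c]$ has $n-s$ columns and $n-k$ rows, so full rank means rank $\min(n-s, n-k)$. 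Actually the cleanest route: apply Corollary-style reasoning. The erasure pattern $S$ is \emph{good} for the code $C = \ker(H) = \im(G^t)$ iff (by Lemma~\ref{equiv2}/\ref{equiv}) $\rank(G_{S^c,\cdot}) = k$, i.e. the rows of $G$ outside $S$ span everything, iff (by Lemma~\ref{equiv}) $\rank(H_{\cdot,S}) = |S| = s$. Transposing roles: since $C^\perp = \im(H^t) = \ker(G)$, the pattern $S^c$ is good for $C^\perp$ iff $\rank(H_{S,\cdot})$... I would unwind these two applications carefully so that the matrix whose \emph{columns} indexed by $S$ are independent in $E(m,d)$ matches up with the matrix whose \emph{rows} indexed by $S^c$ span in $E(m,m-d-1)$, using that in our setting $G(m,d) = E(m,d)$ serves as both (the transpose of a) generator matrix and, via duality, $E(m,m-d-1)$ as parity check.

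The main obstacle I anticipate is purely bookkeeping: keeping straight the transpositions (the paper uses $E(m,r)$ with rows $=$ monomials, columns $=$ points, so $E(m,r)^t$ is the honest generator matrix) and making sure the rank conditions $\rank(E(m,d)[S]) = s$ versus $\rank(E(m,m-d-1)[S^c]) = n - {m\choose \leq d}$ are exactly dual rather than off by the difference between ``full column rank'' and ``full row rank.'' The numerics should check out: $E(m,m-d-1)[S^c]$ has $n - {m\choose\leq d}$ rows and $n - s$ columns, and $n-s \geq n - {m\choose\leq d}$ exactly when $s \leq {m\choose\leq d}$, which is the regime where full \emph{row} rank is the right notion — matching the regime where $E(m,d)[S]$ ($s$ columns, ${m\choose\leq d}$ rows) can have full \emph{column} rank. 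So I would present the proof as: (i) invoke Lemma~\ref{duality}; (ii) prove the abstract orthogonal-complement lemma via two applications of Lemma~\ref{equiv} (or directly: a subset of columns of $G$ is dependent iff some nonzero codeword of $C$ is supported on the complement, iff the complementary columns of $H$ fail to span); (iii) specialize $G = E(m,d)$, $H = E(m,m-d-1)$. Each piece is short once the correspondence is pinned down.
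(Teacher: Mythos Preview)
Your approach is essentially the paper's: the proof there is precisely the ``direct'' route you mention parenthetically in (ii), carried out as a short chain of equivalences. Concretely, the paper argues that the columns of $E(m,d)$ indexed by $S$ are linearly dependent iff there exists a nonzero $z \in \ker(E(m,d))$ with $\supp(z)\subseteq S$; by Lemma~\ref{duality}, $\ker(E(m,d))$ equals the row space of $E(m,m-d-1)$, so such a $z$ is a nonzero row-combination of $E(m,m-d-1)$ vanishing on $S^c$, i.e.\ $E(m,m-d-1)[S^c]$ fails to have full row rank. That is the whole proof---there is no need to route through Lemma~\ref{equiv} and Lemma~\ref{equiv2} separately and then reconcile transpositions; doing so only creates the bookkeeping tangle you anticipate.

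One small slip to fix: in your parenthetical you write ``a subset of columns of $G$ is dependent iff some nonzero codeword of $C$ is supported on the \emph{complement}.'' It should be supported on $S$ itself, not the complement (a dependence among the columns $\{H[i]:i\in S\}$ of a parity-check matrix $H$ is exactly a codeword supported on $S$). With that corrected, and with $E(m,d)$ playing the role of the parity-check matrix $H$ and $E(m,m-d-1)$ the role of the generator $G$, your direct argument matches the paper's proof line for line.
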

\begin{corol}\label{cor:equivalence}
For an integer $s \in [n]$, denote by $E(m,d)[s]$ the random matrix obtained by sampling $s$ columns uniformly at random in $E(m,d)$.
Then,
\begin{align*}
\Pr\{ \mathrm{rk} (E(m,d)[s] )=s \} = \Pr\left\{ \mathrm{rk} (E(m,m-d-1)[n-s] )=n-{m\choose \leq d} \right\},
\end{align*}
where both terms are the probability of drawing a uniform erasure pattern of size $s$ which can be corrected with the code $\ker{E(m,d)}$. 
%\begin{align*}
%\Pr\{ \mathrm{rk} (E(m,d)[s] )=s \} \to 1, \quad \text{as } n \to \infty,   \label{lim1}
%%\Pr_D\{ \sum_{j \in D} H[j] =0 \} \to 0, \quad \text{as } n \to \infty, \quad 
%\end{align*}
%if and only if 
%\begin{align*}
%\Pr\{ \mathrm{rk} (E(m,m-d-1)[n-s] )=n-{m\choose \leq d} \} \to 1, \quad \text{as } n \to \infty.   \label{lim2}
%%\Pr_D\{ \sum_{j \in D} H[j] =0 \} \to 0, \quad \text{as } n \to \infty, \quad 
%\end{align*}
%Note that $E(m,d)[s]=s$ means that $E(m,d)[s]$ is full column-rank and $E(m,m-d-1)[n-s]=n-{m\choose \leq d}$ means that $E(m,m-d-1)[n-s]$ is full row-rank. 
\end{corol}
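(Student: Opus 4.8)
The plan is to reduce both sides of the asserted identity to a single combinatorial condition on $S$: the injectivity of the restriction map $\rho_{S^c}\colon RM(m,m-d-1)\to \F_2^{S^c}$, $c\mapsto c[S^c]$. Throughout I would write $C:=RM(m,m-d-1)$, $k:={m\choose\leq d}$ and $k':={m\choose\leq m-d-1}$, and use the elementary identity $k'=n-k$; in particular $E(m,m-d-1)$ is a $k'\times n$ matrix, so the condition $\mathrm{rk}(E(m,m-d-1)[S^c])=n-{m\choose\leq d}$ is exactly the statement that this submatrix has full row-rank $k'$. By the duality Lemma~\ref{duality}, $E(m,d)$ is a parity-check matrix for $C$ (so $C=\ker E(m,d)$), while $E(m,m-d-1)$ is a generator matrix for $C$, so its $k'$ rows form a basis of $C$.

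First I would analyze the left-hand side. The columns of $E(m,d)$ indexed by $S$ are linearly dependent exactly when there is a nonzero $v\in\F_2^n$ with $\supp(v)\subseteq S$ and $E(m,d)v=0$, i.e.\ a nonzero $v\in\ker E(m,d)=C$ with $v[S^c]=0$. Hence $\mathrm{rk}(E(m,d)[S])=s$ holds iff no nonzero codeword of $C$ vanishes on $S^c$, i.e.\ iff $\rho_{S^c}$ is injective. (This is precisely Lemma~\ref{equiv} applied with $H=E(m,d)$.)

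Next I would analyze the right-hand side. Writing $b_1,\dots,b_{k'}$ for the rows of $E(m,m-d-1)$, which form a basis of $C$, the $k'\times(n-s)$ matrix $E(m,m-d-1)[S^c]$ has full row-rank $k'$ iff the vectors $b_1[S^c],\dots,b_{k'}[S^c]$ are linearly independent. Since every $c\in C$ is uniquely $\sum_i\lambda_i b_i$, this independence says exactly that $c[S^c]=(\sum_i\lambda_i b_i)[S^c]=0$ forces all $\lambda_i=0$, i.e.\ $c=0$ --- again the injectivity of $\rho_{S^c}$. (Alternatively one transposes and invokes Lemma~\ref{equiv2} with Corollary~\ref{erasure-G} for the $n\times k'$ generator matrix $E(m,m-d-1)^t$.) Since both sides of the claimed identity equal $\{S\in{[n]\choose s}:\rho_{S^c}\text{ injective}\}$, the lemma follows.

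For Corollary~\ref{cor:equivalence} I would note that $S\mapsto S^c$ is a measure-preserving bijection from the uniform distribution on ${[n]\choose s}$ to the uniform distribution on ${[n]\choose n-s}$, so summing the pointwise set-identity of the lemma over all $S$ gives the stated equality of probabilities; by Corollaries~\ref{erasure-H} and~\ref{erasure-G} both quantities equal the probability that a uniform size-$s$ erasure pattern is correctable by $\ker E(m,d)=RM(m,m-d-1)$. I do not expect a genuine obstacle here; the only point requiring care is the bookkeeping between $E(m,\cdot)$ and its transpose (rows versus columns, generator versus parity-check) together with the substitution $k'=n-k$, so that ``$=n-{m\choose\leq d}$'' on the right is correctly read as ``full row-rank'' rather than conflated with ``$=s$''.
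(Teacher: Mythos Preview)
Your proposal is correct and takes essentially the same approach as the paper: both arguments reduce the two rank conditions on $S$ to the single criterion that no nonzero codeword of $\ker E(m,d)=RM(m,m-d-1)$ vanishes on $S^c$, using Lemma~\ref{duality} to identify $E(m,d)$ as a parity-check matrix and $E(m,m-d-1)$ as a generator matrix for that code, and then pass to probabilities via the complementation bijection $S\mapsto S^c$. Your framing through the injectivity of the restriction map $\rho_{S^c}$ and the explicit dimension bookkeeping $k'=n-k$ is a nice touch, but the underlying logic is identical to the paper's proof of Lemma~\ref{lem:equiv2}.
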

This correspondence follows from Lemmas \ref{equiv}, \ref{equiv2} and \ref{duality}. We provide the proof below for convenience. 
\begin{proof}[Proof of Lemma \ref{lem:equiv2}]
Note that 
\begin{align*}
&\{S \in{[n] \choose s} : \mathrm{rk} (E(m,d)[S] )<s \}\\
&\equiv \{S \in{[n] \choose s} : \exists z \in \ker(E(m,d)), \text{ s.t. } \supp(z) \subseteq S, z \neq 0\}\\
&\equiv \{S \in{[n] \choose s} : \exists z \in \ker(E(m,d)) \text{ s.t. } z[S^c]=0, z \neq 0\},
\end{align*}
and using Lemma \ref{duality}, previous set is equal to 
\begin{align*}
&\{S \in{[n] \choose s} : \exists z \in \im(E(m,m-d-1)) \text{ s.t. } z[S^c]=0, z \neq 0\}\\
%&\equiv \{S \in{[n] \choose s} : \exists x,y \in \im(E(m,m-d-1)) \text{ s.t. } x[S^c]=y[S^c], x \neq y\}\\
& \equiv\{S \in{[n] \choose s} : E(m,m-d-1)[S^c] \text{ is not full row-rank} \}\\
&\equiv \{S \in{[n] \choose s} : \mathrm{rk} (E(m,m-d-1)[n-s] )<n-{m\choose \leq d}  \}.
\end{align*}
%From Lemma \ref{equiv}, $\{S \in{[n] \choose s} : \mathrm{rk} (E(m,d)[S] )=s \}$ is equal to the set of erasure patterns of size $s$ that can be corrected with the code $\ker{E(m,d)}$. 
%From Lemma \ref{duality}, this code is generated by the rows of the $(n-{m\choose \leq d}) \times n$ matrix $E(m,m-d-1)$. Hence the error patterns of size $s$ that can be corrected are given by $\{ S\in{[n] \choose s} :  \mathrm{rk} (E(m,m-d-1)[S^c] )=n-{m\choose \leq d} \}$, since erasing the columns indexed by $S$ in $E(m,m-d-1)$ must produce a set of columns which are full row-rank. 
\end{proof}
This equivalence property implies that it sufficient to answer each question in one of the two extremal regimes, which we next cover.  

%\subsection{Linear independence of columns of $E(m,r)$ for small degree-$r$}
\subsection{Random submatrices of $E(m,r)$, for small $r$, have full column-rank} \label{first-question}

%\begin{lemma}\label{lem: random r-tensors independent}
%Lemma about picking vectors at random and comparing their r-tensors.
%\end{lemma}

The following theorem addresses Question~\ref{question:columns} in the case of low degree-$r$.
\begin{thm}\label{thm:lin_ind_eval_vectors}
Let $\e>0$ and $k,m,r$ integers such that $s< {m- \log({m\choose \leq r})- \log(1/\e) \choose \leq r}$. Then, 
with probability larger than $1-\e$ if we pick $u_1,\ldots,u_s\in \F_2^m$ uniformly at random we get that the evaluation vectors, $u_1^{{r}},\ldots, u_s^{{r}}$ are linearly independent. 
\end{thm}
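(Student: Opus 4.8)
The plan is to reveal the points $u_1,\dots,u_s$ one at a time and show that, as long as $u_1^r,\dots,u_j^r$ have stayed linearly independent, a fresh uniformly random $u_{j+1}$ keeps the list independent except with probability roughly $2^{-m}{m\choose \leq r}$; a union bound over the $s$ steps then finishes the proof.

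For a single step I would argue by duality. Suppose $u_1^r,\dots,u_j^r$ are linearly independent and let $P_j\subseteq\mathbb{P}(m,r)$ be the space of degree-$\leq r$ polynomials vanishing on $\{u_1,\dots,u_j\}$. Identifying $g\in\mathbb{P}(m,r)$ with its coefficient vector $\hat g\in\F_2^{k}$, $k={m\choose \leq r}$, one has $g(u)=\langle\hat g,u^r\rangle$, so $P_j$ is exactly the annihilator of $\mathrm{span}(u_1^r,\dots,u_j^r)$ and $\dim P_j=k-j$. Let $Z_j$ be the common zero set of $P_j$ in $\F_2^m$. Using $(W^\perp)^\perp=W$ for subspaces of $\F_2^k$, the vector $u_{j+1}^r$ lies in $\mathrm{span}(u_1^r,\dots,u_j^r)$ if and only if every $g\in P_j$ vanishes at $u_{j+1}$, i.e.\ if and only if $u_{j+1}\in Z_j$. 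Thus the $(j+1)$-st step preserves independence precisely when $u_{j+1}\notin Z_j$, and everything reduces to bounding $|Z_j|$.

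The bound on $|Z_j|$ is the heart of the matter, and the step I expect to be the main obstacle. Since $Z_j$ is the common zero set of $P_j$, the space of degree-$\leq r$ polynomials vanishing on all of $Z_j$ is again exactly $P_j$, so $\dim\mathrm{span}\{v^r:v\in Z_j\}=k-\dim P_j=j$. The key combinatorial input is that a large point set cannot support so few independent low-degree evaluation vectors: \emph{any $V\subseteq\F_2^m$ with $|V|\geq 2^{t}$ has $\dim\mathrm{span}\{v^r:v\in V\}\geq{t\choose \leq r}$}, equivalently the restriction of $RM(m,r)$ to the coordinates in $V$ has dimension at least ${t\choose \leq r}$. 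I would prove this either by a hashing argument — after a linear change of coordinates $V$ surjects onto $\F_2^{t}$, and then the ${t\choose \leq r}$ monomials in those $t$ coordinates are degree-$\leq r$ polynomials linearly independent as functions on $V$ — or, for the exact threshold $2^{t}$, by invoking Wei's computation of the generalized Hamming weights of Reed-Muller codes, whose extremal configuration here is an affine $t$-flat. Applying this with $V=Z_j$: if $|Z_j|\geq 2^{t}$ then $j\geq{t\choose \leq r}$, hence $j<{t\choose \leq r}$ forces $|Z_j|<2^{t}$.

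It remains to assemble the union bound. Set $t=m-\log{m\choose \leq r}-\log(1/\e)$, so that $2^{t}/2^m=\e/{m\choose \leq r}$. Let $E_j$ be the event that $u_1^r,\dots,u_j^r$ are linearly independent; then $E_{j+1}=E_j\cap\{u_{j+1}\notin Z_j\}$, and on $E_j$ the set $Z_j$ is as defined above. By hypothesis $s<{t\choose \leq r}$, so for every $j\leq s-1$ we have $j<{t\choose \leq r}$ and hence $|Z_j|<2^{t}$ on $E_j$; since $u_{j+1}$ is drawn independently of $u_1,\dots,u_j$, this gives $\Pr[E_j\cap\{u_{j+1}\in Z_j\}]\leq\Pr[u_{j+1}\in Z_j\mid E_j]<2^{t}/2^m$ (which also covers the case that $u_{j+1}$ repeats an earlier $u_i$, since that forces $u_{j+1}\in Z_j$). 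Therefore $\Pr[\neg E_{j+1}]\leq\Pr[\neg E_j]+2^{t}/2^m$, and telescoping from $\Pr[\neg E_0]=0$ yields $\Pr[\neg E_s]<s\cdot\e/{m\choose \leq r}<\e$, using $s<{t\choose \leq r}\leq{m\choose \leq r}$. This is exactly the claim; note that the $\log{m\choose \leq r}$ loss in the hypothesis is an artifact of bounding $s$ by ${m\choose \leq r}$ in this last step rather than of the combinatorial lemma.
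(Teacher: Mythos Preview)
Your proposal is correct and follows essentially the same route as the paper: reveal points one at a time, identify the failure event at step $j+1$ with $u_{j+1}\in\V(\I(\{u_1,\dots,u_j\}))$, bound $|\V(\I(U))|$ via the combinatorial lemma that a set of size $>2^{m-t}$ supports more than ${m-t\choose\leq r}$ independent degree-$\leq r$ evaluations (proved either by hashing or by Wei's generalized Hamming weight computation), and conclude by a union bound over the $s$ steps. The one imprecision is your hashing sketch: a set of size exactly $2^t$ need not linearly surject onto $\F_2^t$, and the paper's hashing proof (via the leftover hash lemma) indeed loses an additive $O(\log{t\choose\leq r})$ in $t$; but you correctly note that Wei's theorem gives the exact threshold, which is also the paper's primary route.
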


%\Anote{Added this comment}
Observe that for $r = o(\sqrt{m/\log m})$ the bound on $s$ is $(1-o(1)) {m \choose \leq r}$, which will give us a capacity-achieving result.

As discussed in Section~\ref{sec:techniques} (Theorem~\ref{thm:intro:high_deg_BEC}), to prove the theorem we have to understand the set of common zeroes of degree-$r$ polynomials. More accurately, we need to give an upper bound on the number of common zeroes of polynomials in some linear space. 

We start by introducing some notation and then discuss the reduction from Theorem~\ref{thm:lin_ind_eval_vectors} to the problem of determining the number of common zeroes of a space of polynomials.

%To prove this theorem, we first discuss the dual space to evaluation vectors, namely, the space of $m$-variate degree $r$ polynomials as it will be most useful in proving our main lemmas. 
Given a set of points $u_1,\ldots,u_s \in \F_2^m$ we define 
\begin{equation*}
\I(u_1,\ldots,u_s) = \{ f \in \mathbb{P}(m,r) \mid \forall i \;\; f(u_i)=0\}.
\end{equation*}
When $U$ is an $m\times s$ matrix we define $\I(U) = \I(u_1,\ldots,u_s)$, where  $u_i$ is the $i$th column of $U$. 
It is clear that $\I(U)$ is a vector space. 
Similarly, for a set of polynomials $F \subseteq  \mathbb{P}(m,r)$ we denote 
\begin{equation*}
\V(F) = \{ u \in \F_2^m \mid \forall f\in F \;\; f(u)=0\}.
\end{equation*}
In other words, $\V(F)$ is the set of common zeroes of $F$. 
From the definition it is clear that if $F_1 \subseteq F_2$ then $\V(F_2)\subseteq \V(F_1)$ and similarly, if $U_1 \subseteq U_2$ then $\I(U_2) \subseteq \I(U_1)$.

The next  lemmas explore the connection between the dual space of $U^r$, $\I(U)$ and $\V(\I(U))$. 
Hereafter we interpret a vector $f$ of length ${m\choose \leq r}$ as a polynomial in $\mathbb{P}(m,r)$, by viewing its coordinates as coefficients of the relevant monomials. We abuse notation and call this polynomial $f$ as well. 

\begin{lemma}\label{lem: I is dual to U}
Let $U$ be an $m\times s$ matrix. Then, a vector $f$ of length ${m\choose \leq r}$ satisfies $f\cdot U^r=0$ if and only if the corresponding polynomial $f(x_1,\ldots,x_m)$ is in $\I(U)$, namely, $f\in \I(U)$. 
\end{lemma}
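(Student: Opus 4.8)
The plan is to unwind the definitions; the only substantive point is that multiplying the coefficient vector of a polynomial on the left by the evaluation matrix $U^r$ produces exactly the vector of evaluations of that polynomial at the columns of $U$.

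First I would fix notation. Index the rows of $E(m,r)$, hence of $U^r$, by monomials $\alpha$ of degree $\leq r$ (equivalently, subsets of $[m]$ of size $\leq r$), so that the entry of $U^r$ in row $\alpha$ and column $i$ is $x^\alpha(u_i)=\prod_{j\in\alpha}(u_i)_j = E(m,r)(\alpha,u_i)$, where $u_i$ is the $i$-th column of $U$. Viewing $f=(f_\alpha)_\alpha$ as a row vector of length ${m\choose \leq r}$, the $i$-th coordinate of the product $f\cdot U^r$ is $\sum_\alpha f_\alpha\, x^\alpha(u_i)$. On the other hand, the polynomial associated to $f$ is $f(x_1,\ldots,x_m)=\sum_\alpha f_\alpha x^\alpha$, so its value at $u_i$ is precisely $f(u_i)=\sum_\alpha f_\alpha\, x^\alpha(u_i)$. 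Hence $f\cdot U^r=(f(u_1),\ldots,f(u_s))$ as vectors of length $s$. I would then conclude: $f\cdot U^r=0$ if and only if $f(u_i)=0$ for every $i\in\{1,\ldots,s\}$, and by the definition of $\I(U)=\I(u_1,\ldots,u_s)\subseteq\mathbb{P}(m,r)$ this is exactly the assertion $f\in\I(U)$.

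The step most prone to error is purely bookkeeping: keeping the transpose conventions consistent (whether $f$ multiplies $U^r$ as a row vector on the left, or $f^t$ multiplies $(U^r)^t$ on the right) and matching the coordinate of $f$ indexed by a monomial $\alpha$ with the row of $E(m,r)$ indexed by $\alpha$. There is no genuine obstacle here — no nontrivial algebra, combinatorics, or probability is involved — which is why the statement is naturally phrased as an observation rather than a theorem.
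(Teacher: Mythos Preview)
Your proposal is correct and follows essentially the same approach as the paper: both arguments simply observe that the $i$-th coordinate of $f\cdot U^r$ equals $\sum_\alpha f_\alpha\, x^\alpha(u_i)=f(u_i)$, so $f\cdot U^r=0$ iff $f$ vanishes on every column of $U$, i.e., $f\in\I(U)$. Your write-up is a bit more explicit in indexing rows by monomials $\alpha$, but the content is identical.
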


\begin{proof}
The proof is immediate from the correspondence between vectors to polynomials and from the definition of $U^r$. Indeed, for a column $u_i$ we have that the coordinates of $u_i^{{r}}$ correspond to all evaluations of monomials of degree $\leq r$ on $u_i$. Similarly, the coordinates of the vector $f$ correspond to coefficients of the polynomial $f(x_1,\ldots,x_m)$. Thus, $f\cdot u_i$ is equal to $f(u_i)$. Hence, $f\cdot U =0$ if and only if $f(u_1)=\ldots,f(u_s)=0$, i.e. if and only if $f\in \I(U)$.
\end{proof}

\begin{lemma}\label{lem: new independent point}
Let $U$ be an $m\times s$ binary matrix. Then, for any $u \in \F_2^m$ we have that $u^{{r}}$ is in the linear span of the columns of $U^r$ if and only if 
$$ \I(U) = \I(U \cup \{u\}),$$
namely, every degree $\leq r$ polynomial that vanishes on the columns of $U$ also vanishes on $u$. 
\end{lemma}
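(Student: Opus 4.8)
The plan is to deduce this purely from Lemma~\ref{lem: I is dual to U} together with the standard biduality of subspaces over a field. Write $k = {m\choose \leq r}$ and let $W \subseteq \F_2^k$ denote the column span of $U^r$. By Lemma~\ref{lem: I is dual to U}, a length-$k$ vector $f$ satisfies $f\cdot u_i^r = 0$ for every column $u_i$ of $U$ precisely when the associated polynomial lies in $\I(U)$; in other words, identifying degree-$\leq r$ polynomials with row vectors, $\I(U)$ is exactly the orthogonal complement $W^\perp$ of the column span $W$. Applying the same lemma to the matrix obtained from $U$ by appending $u$ as an extra column shows likewise that $\I(U\cup\{u\})$ is the orthogonal complement of $W + \langle u^r\rangle$, the span of $W$ together with the single vector $u^r$.

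Next I would invoke the fact that over any field a vector lies in a finite-dimensional subspace iff it is annihilated by every functional annihilating that subspace, i.e.\ $(W^\perp)^\perp = W$. Concretely, $u^r \in W$ if and only if every $f \in W^\perp = \I(U)$ satisfies $f\cdot u^r = 0$. But a polynomial $f \in \I(U)$ with $f\cdot u^r = 0$ is, by Lemma~\ref{lem: I is dual to U} once more, exactly a polynomial in $\I(U\cup\{u\})$. Since the inclusion $\I(U\cup\{u\}) \subseteq \I(U)$ is immediate (one imposes strictly more vanishing conditions), the statement ``every $f \in \I(U)$ also lies in $\I(U\cup\{u\})$'' is equivalent to $\I(U) = \I(U\cup\{u\})$. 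Chaining these equivalences yields $u^r \in W$ iff $\I(U) = \I(U\cup\{u\})$, which is precisely the claim.

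I do not expect any real obstacle here; the only thing to be careful about is bookkeeping of the two dual objects (column span versus left null space, polynomials versus row vectors) and ensuring finite-dimensionality so that taking orthogonal complement twice returns the original space. As an alternative that sidesteps biduality entirely, one can argue by dimension: $\dim W + \dim \I(U) = k$ and $\dim(W + \langle u^r\rangle) + \dim \I(U\cup\{u\}) = k$, so $u^r \in W$ (equivalently $\dim(W+\langle u^r\rangle) = \dim W$) iff $\dim \I(U\cup\{u\}) = \dim \I(U)$, and by the trivial inclusion this last equality of dimensions is equivalent to $\I(U) = \I(U\cup\{u\})$.
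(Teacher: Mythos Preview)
Your argument is correct and is essentially the paper's own proof, just spelled out in more detail: the paper compresses the whole thing into the single observation that $u^r$ lies in the column span of $U^r$ iff every $f$ with $f\cdot U^r=0$ also satisfies $f\cdot u^r=0$, then invokes Lemma~\ref{lem: I is dual to U} to translate this into $\I(U)=\I(U\cup\{u\})$. Your explicit naming of $W$, the biduality step $(W^\perp)^\perp=W$, and the alternative dimension count are all sound elaborations of the same idea, so there is no substantive difference in approach.
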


\begin{proof}
It is clear that $u^{{r}}$  linearly depends on the columns of $U^r$ if and only if for every vector $f$ such that $f\cdot U^r=0$, it holds that $f\cdot u^{{r}}=0$, namely, that $f(u)=0$. By Lemma~\ref{lem: I is dual to U} this is equivalent to saying that $ \I(U) = \I(U \cup \{u\})$.
\end{proof}

%
%\begin{lemma}\label{lem: varieties and ideals}
%\begin{eqnarray*}
%F &\subseteq & \I(\V(F))\\
%\text{span}(\{\alpha_1,\ldots,\alpha_s\}) &= & \V\left(I\left(\text{span}\left(\{\alpha_1,\ldots,\alpha_s\}\right) \right)\right).
%\end{eqnarray*}
%\end{lemma}
%
%\begin{proof}
%The first inclusion is obvious so we only prove the second equality. Again, the inclusion $\subseteq$ is clear so we only prove the other direction. 
%Let $U$ be the $m\times s$ matrix whose columns are the $\alpha_i$. Let $\beta \in \F_2^m$ be linearly independent of the $\alpha_i$, namely, $\beta$ is not in the linear span of $U$'s columns.
%By Lemma~\ref{lem: new independent point} we get that $ \I(U \cup \{\beta\})\neq \I(U)$, i.e. that $ \I(U \cup \{\beta\})\subsetneq \I(U)$. In particular, there is a polynomial $f \in \I(U) \setminus  \I(U \cup \{\beta\})$. Thus, $f(\beta)\neq 0$, but $f$ vanishes on all of $U$. Hence, $\beta \not \in \V(f)$, yet $f \in \I(U)$. Thus, $\beta \not \in \V(\I(U))$.
%
%\end{proof}

Similarly, we get an equivalence when consider the common zeros of the polynomials that vanish on the columns of $U$. 

\begin{lemma}\label{lem:common_zero_in_U^r}
We have that $u \in \V(\I(U))$ if and only if $u^{{r}}$ is spanned by the columns of $U^r$.
\end{lemma}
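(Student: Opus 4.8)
The plan is to simply chain together Lemma~\ref{lem: new independent point} with the definition of $\V(\I(U))$; there is essentially no new content here, only an unwinding of definitions, so I expect no real obstacle.

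First I would recall that, by Lemma~\ref{lem: new independent point}, $u^{r}$ lies in the linear span of the columns of $U^r$ if and only if $\I(U) = \I(U\cup\{u\})$, i.e.\ every polynomial of degree $\leq r$ that vanishes on all columns of $U$ also vanishes at $u$.

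Next I would observe that the condition ``$\I(U)=\I(U\cup\{u\})$'' is exactly the condition ``$u\in\V(\I(U))$''. Indeed, $\I(U\cup\{u\})=\{f\in\mathbb{P}(m,r) : f(u_i)=0\ \forall i,\ f(u)=0\}=\I(U)\cap\{f : f(u)=0\}$, so $\I(U)=\I(U\cup\{u\})$ holds precisely when $\I(U)\subseteq\{f : f(u)=0\}$, which by definition of $\V$ means $f(u)=0$ for every $f\in\I(U)$, i.e.\ $u\in\V(\I(U))$.

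Combining the two equivalences gives the statement: $u\in\V(\I(U))$ iff $u^{r}$ is spanned by the columns of $U^r$. The only ``step'' worth naming is the set-theoretic identity $\I(U\cup\{u\})=\I(U)\cap\{f:f(u)=0\}$, which is immediate from the definition of $\I$, so nothing here is delicate.
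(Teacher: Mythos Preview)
Your proof is correct and is essentially the paper's argument: both invoke Lemma~\ref{lem: new independent point} to reduce to the equation $\I(U)=\I(U\cup\{u\})$ and then identify this with $u\in\V(\I(U))$. The only cosmetic difference is that the paper argues the two directions separately, whereas you package the second step as the single identity $\I(U\cup\{u\})=\I(U)\cap\{f:f(u)=0\}$.
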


\begin{proof}
If $u^{{r}}$ is spanned by the columns of $U^r$ then by Lemma~\ref{lem: new independent point} $ \I(U) = \I(U \cup \{u\})$. Thus, $u \in \V(\I(U \cup \{u\})) = \V(I(U))$. Conversely, if $u \in \V(\I(U))$ then $I(U) \subseteq I(U\cup\{u\})$. As $I(U\cup\{u\}) \subseteq I(U)$ we get $I(U\cup\{u\}) = I(U)$ and by Lemma~\ref{lem: new independent point} it follows that $u^{{r}}$ is spanned by the columns of $U^r$.
\end{proof}

Finally, we make the following simple observation.
\begin{lemma}\label{lem:IVI=I}
$\I(\V(\I(U))) = \I(U)$. 
\end{lemma}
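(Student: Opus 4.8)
The statement $\I(\V(\I(U))) = \I(U)$ is a routine ``closure'' identity in the Galois-connection style, so the plan is short. The key facts already at hand are the order-reversing properties of $\I$ and $\V$ (if $U_1\subseteq U_2$ then $\I(U_2)\subseteq\I(U_1)$, and similarly for $\V$), together with the tautological inclusions $U\subseteq\V(\I(U))$ and $F\subseteq\I(\V(F))$ that come straight from the definitions.

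\begin{proof}[Proof of Lemma~\ref{lem:IVI=I}]
We show the two inclusions.

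First, $\I(U)\subseteq \I(\V(\I(U)))$. Let $f\in\I(U)$. By definition of $\V$, every point $u\in\V(\I(U))$ satisfies $g(u)=0$ for all $g\in\I(U)$; in particular $f(u)=0$. Hence $f$ vanishes on all of $\V(\I(U))$, i.e.\ $f\in\I(\V(\I(U)))$.

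Conversely, $\I(\V(\I(U)))\subseteq \I(U)$. Directly from the definitions, every column $u_i$ of $U$ is a common zero of all polynomials in $\I(U)$, so $U\subseteq \V(\I(U))$. Applying the order-reversing property of $\I$ to this inclusion yields $\I(\V(\I(U)))\subseteq \I(U)$.

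Combining the two inclusions gives $\I(\V(\I(U))) = \I(U)$.
\end{proof}

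\noindent The only thing one has to be slightly careful about is not to conflate the two ambient sets on which $\I$ and $\V$ act (points in $\F_2^m$ versus polynomials in $\mathbb{P}(m,r)$); beyond that there is no real obstacle, since the statement is purely formal and uses nothing about the degree bound $r$ or about $\F_2$ specifically.
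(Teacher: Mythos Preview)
Your proof is correct and follows essentially the same approach as the paper: both arguments establish $\I(\V(\I(U)))\subseteq\I(U)$ via $U\subseteq\V(\I(U))$ and the order-reversing property of $\I$, and both handle the reverse inclusion by checking directly that any $f\in\I(U)$ vanishes on every $v\in\V(\I(U))$. If anything, your argument for the inclusion $\I(U)\subseteq\I(\V(\I(U)))$ is slightly more direct, appealing only to the definition of $\V$, whereas the paper routes this through the earlier Lemmas~\ref{lem: new independent point} and~\ref{lem:common_zero_in_U^r}.
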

\begin{proof}
Denote $\V=\V(\I(U))$.
It is clear that $U \subseteq \V$ and hence $\I(\V) \subseteq \I(U)$. 
On the other hand, let $f\in \I(U)$ and $v \in \V$. Lemmas~\ref{lem: new independent point} and \ref{lem:common_zero_in_U^r} imply that $\I(U) = \I(U\cup \{v\})$. Thus, $f(v)=0$ and hence $I(U) \subseteq \I(\V)$.
\end{proof}

%Lemma~\ref{lem: common zero in U^r} implies that in order to understand the probability that the evaluation vector of a randomly chosen $u\in \F_2^m$ is linearly independent of the columns of $U^r$ we need to understand the probability that $u$ belongs to $\V:=\V(\I(U))$. 

Going back to our original problem, assume that we picked $s$ columns at random and got linearly independent evaluation vectors. Now we have to understand the probability that  a randomly chosen $u\in \F_2^m$  will give an independent evaluation vector. By Lemma~\ref{lem:common_zero_in_U^r}, this amounts to understanding the probability that $u$ belongs to $\V:=\V(\I(U))$. By linear algebra arguments we have the following identity 
\begin{equation*}
\dim(\I(U)) = {m\choose \leq r} - \rank(U^r).
\end{equation*}
Thus, our goal is understanding how many common zeroes can the polynomials in an $\left({m\choose \leq r} - s\right)$-dimensional space have.

The way to prove that a given set of polynomials does not have too many common zeros is to show that any large set of points (in our case, $\V$) has many linearly independent degree-$r$ polynomials that are defined over it. That is, we only consider the restriction of polynomials of degree-$r$ to the points in $\V$ and we identify two polynomials if they are equal when restricted to $\V$. Notice that this is the same as showing that the rank of $E(m,r)[\V]$ is large, i.e., that there are many linearly independent columns that are indexed by elements of $\V$. Thus, two such polynomials $f,g$ are identified if and only if $f-g\in \I(\V)$.
Stated differently, we wish to show that the dimension of the quotient space $\mathbb{P}(m,r)/\I(\V)= \mathbb{P}(m,r)/\I(U)$ (by Lemma~\ref{lem:IVI=I}) is large. Indeed, if we can lower bound this dimension in terms of $\V$ then, since 
\begin{eqnarray*}
\rank(E(m,r)[\V]) = \dim(\mathbb{P}(m,r)/\I(\V)) =\dim(\mathbb{P}(m,r)/\I(U)) &=& \dim(\mathbb{P}(m,r)) - \dim(I(U)) \\&=& {m\choose \leq r} - \left({m\choose \leq r}-s\right)=s,
\end{eqnarray*} 
we will get that some function of $|\V|$ is upper bounded by $s$. Thus, unless $s$ is large, $|\V|$ is small and hence the probability that a randomly chosen $u$ in independent of $U^r$ is high. We state our main lemma next.

%We are now ready to prove .

%\begin{corol}\label{cor:GHW}
\begin{lemma}\label{lem:number_of_ind_polynomials}
Let $\V\subseteq \F_2^m$ such that $|\V| > 2^{m-t}$. Then there are more than ${m-t \choose \leq r}$ linearly independent degree $\leq r$ polynomials defined on $\V$, i.e., $\dim(E(m,r)[\V])>{m-t \choose \leq r}$.
\end{lemma}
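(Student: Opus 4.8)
The plan is to compute the rank $\mathrm{rk}(E(m,r)[\V])$ \emph{exactly} as a count of ``standard monomials'', and then reduce the lemma to a purely combinatorial extremal statement about down-sets. Let $\widehat M$ be the $|\V|\times 2^m$ matrix whose rows are the rows of the (invertible) matrix $E_m^{t}$ indexed by the points of $\V$, so that $\widehat M[v,S]=\prod_{i\in S}v_i$ for $v\in\V$, $S\subseteq[m]$; being $|\V|$ rows of an invertible matrix, $\widehat M$ has full row rank $|\V|$. First I would order the columns of $\widehat M$ first by $|S|$ and then lexicographically, run Gaussian elimination, and let $\mathcal F\subseteq 2^{[m]}$ be the set of pivot columns, so $|\mathcal F|=|\V|$. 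The crucial point is the multiplicativity $\widehat M[v,S]\cdot\widehat M[v,S']=\widehat M[v,S\cup S']$ over $\F_2$: if the $S'$-column lies in the span of strictly earlier columns, then multiplying that relation entrywise by the $(S\setminus S')$-column shows, for any $S\supsetneq S'$, that the $S$-column lies in the span of columns that are again strictly earlier (they have lower degree, or the same degree but are lexicographically smaller, since the symmetric difference with $S$ is unchanged under taking union with $S\setminus S'$). Hence $\mathcal F$ is closed under subsets, i.e.\ it is a \emph{down-set}; and since columns are sorted by degree, the pivots among the columns indexed by $\{S:|S|\le r\}$ — which is precisely the submatrix of $\widehat M$ that has the same rank as $E(m,r)[\V]$ — are exactly $\mathcal F_{\le r}:=\{S\in\mathcal F:|S|\le r\}$. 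Therefore $\mathrm{rk}(E(m,r)[\V])=|\mathcal F_{\le r}|$, where $\mathcal F$ is some down-set of size $|\V|$. (This is just the standard-monomial description of the vanishing ideal of $\V$; one may optionally first apply an invertible linear change of coordinates on $\F_2^m$, which changes neither the rank nor the setup, to make $\V$ ``look like'' a down-set — this is the hashing viewpoint alluded to in Section~\ref{sec:techniques}.)

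It then remains to prove the combinatorial statement: \emph{every down-set $\mathcal F\subseteq 2^{[m]}$ with $|\mathcal F|>2^{m-t}$ satisfies $|\mathcal F_{\le r}|> {m-t \choose \leq r}$}, for all $1\le t\le m$ and $r\ge 1$; applied with $\mathcal F$ of size $|\V|>2^{m-t}$ this gives the lemma. I would prove it by induction on $m$, splitting on the last coordinate: write $\mathcal F=\mathcal F_0\ \sqcup\ \{S\cup\{m\}:S\in\mathcal F_1\}$, where $\mathcal F_1\subseteq\mathcal F_0$ are down-sets in $2^{[m-1]}$, so $|\mathcal F_0|\ge|\mathcal F_1|$ and $|\mathcal F_{\le r}|=|(\mathcal F_0)_{\le r}|+|(\mathcal F_1)_{\le r-1}|$. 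From $|\mathcal F_0|\ge|\mathcal F|/2>2^{(m-1)-t}$ the inductive hypothesis yields $|(\mathcal F_0)_{\le r}|> {(m-1)-t \choose \leq r}$; combining this with the analogous bound for $\mathcal F_1$ and Pascal's identity ${m-t \choose \leq r}={(m-1)-t \choose \leq r}+{(m-1)-t \choose \leq r-1}$ closes the induction, once one handles the case where $\mathcal F_1$ is too small — which forces $\mathcal F_0$ to be large enough to apply the inductive hypothesis with parameter $t-1$, already giving $|(\mathcal F_0)_{\le r}|> {m-t \choose \leq r}$ on its own. Alternatively one can sidestep this combinatorial step: dually, ${m \choose \leq r}-\mathrm{rk}(E(m,r)[\V])$ equals the maximum dimension of a subcode of $RM(m,r)$ whose support is contained in $\V^c$, so the lemma is equivalent to a lower bound on a generalized Hamming weight of $RM(m,r)$, and one may simply invoke Wei's formula \cite{Wei}.

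I expect the main obstacle to be the bookkeeping in the combinatorial induction: obtaining the \emph{exact} binomial bound ${m-t \choose \leq r}$ rather than an estimate off by constant factors requires picking the correct induction parameter ($t$ versus $t-1$) for each of $\mathcal F_0,\mathcal F_1$ according to their sizes, and checking the degenerate cases ($r=1$, and $t=m$) by hand. By contrast, the reduction in the first step (Gaussian elimination / standard monomials), and its translation back into the language of $\I(U)$ and $\V(\I(U))$ via Lemmas~\ref{lem: I is dual to U}--\ref{lem:IVI=I}, is routine.
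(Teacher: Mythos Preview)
Your first step — the standard-monomial reduction showing $\mathrm{rk}(E(m,r)[\V])=|\mathcal F_{\le r}|$ for some down-set $\mathcal F$ of size $|\V|$ — is correct and is a genuinely different, more explicit setup than the paper's. The paper does not pass through this combinatorial characterization at all; it works directly with the definition of generalized Hamming weight and invokes Wei's formula (your ``alternative''), then checks via a binomial identity that the resulting bound equals ${m-t\choose\le r}$. So your alternative route \emph{is} the paper's proof, while your main route is new. (A side remark: your parenthetical identifying the standard-monomial step with the paper's ``hashing viewpoint'' is off — the hashing proof in the appendix uses the leftover hash lemma to find a projected subcube, which is a different mechanism.)

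The gap is in the combinatorial induction. In the ``$\mathcal F_1$ too small'' case you claim that $\mathcal F_0$ becomes large enough to invoke the inductive hypothesis with parameter $t-1$, i.e.\ that $|\mathcal F_0|>2^{(m-1)-(t-1)}=2^{m-t}$. But from $|\mathcal F_1|\le 2^{(m-1)-t}$ and $|\mathcal F|>2^{m-t}$ you only get $|\mathcal F_0|>2^{m-t}-2^{(m-1)-t}=2^{(m-1)-t}$, which is the \emph{same} threshold as before and only yields ${(m-1)-t\choose\le r}$, not ${m-t\choose\le r}$. Concretely, take $|\mathcal F|=2^{m-t}+1$ with $\mathcal F_1=\{\emptyset\}$ and $|\mathcal F_0|=2^{m-t}$: neither branch closes. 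The underlying extremal statement (the minimum of $|\mathcal F_{\le r}|$ over down-sets of size $N$ is attained by the colex-initial segment) is true and is essentially Clements--Lindstr\"om / Kruskal--Katona, whose standard proof goes through compressions rather than a bare split-on-one-coordinate induction. So either import that machinery, or — as you yourself suggest — fall back on Wei's theorem, at which point your proof and the paper's coincide.
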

%\end{corol}
%\Enote{did we define: ``defined on $\V$''}

We give two different proofs of this fact. The first uses a hashing argument;  if $\V$ is large, then, after some linear transformation, its projection on a set of roughly $\log(|\V|)$ many coordinates is full. Thus, it supports at least ${\log(|\V|) \choose \leq r}$ many linearly independent degree-$r$ {\em monomials}. The second proof relies on a somewhat tighter bound that was obtained by Wei \cite{Wei}, who studied the {\em generalized Hamming weight} of Reed-Muller codes. As Wei's result gives slightly tighter bounds compared to the hashing argument (although both lead to a capacity-achieving result), this is the proof that we give in the main body of the paper. For completeness, and as the hashing argument is more self contained we give it in Appendix~\ref{app:hash-proof}.\\

%To get the bound in the lemma we shall need the  notion of {\em generalized Hamming weight}. A more self-contained proof of a slightly worse bound that uses hashing is given in Appendix~\ref{app:hash-proof}.

We start by discussing the notion of generalized Hamming weight. Let $C\subseteq \F_2^n$ be a linear code and $D\subseteq C$ a linear subcode. We denote
$$\supp(D) = \{ i : \exists y \in D, \text{ such that } y_i \neq 0\}.$$
In other words, the support of $D$ is the union of the supports of all codewords in $D$.

\begin{defin}[Generalized Hamming weight]\label{dfn:GHW_1}
For a code $C$ of length $n$ and an integer $a$ we define 
$$d_a(C) = \min \{ \supp(D) \mid D\subseteq C \text{ is a linear subcode with } \dim(D) = a\}.$$ 
\end{defin}

Thus, $d_a(C)$ is the minimal size of a set of coordinates $S$, such that there exists a subcode $D$, of dimension $\dim(D)=a$, that is supported on  $S$. 
The reason for this definition is that for any code $C$ if we let $a=\dim(C)$ then $d_a(C)=n-d$, where $d$ is the minimal distance of $C$.
By considering the complement set $S^c$ the next lemma gives an equivalent definition of $d_a(C)$.

\begin{lemma}\label{lem:GHW_2}
For a code $C$ of length $n$ and an integer $a$ we have that 
$$d_a(C) = \max \{ b \mid \forall |S|< b \text{ we have that }\dim(C[S^c]) > \dim(C)-a\}.$$ 
\end{lemma}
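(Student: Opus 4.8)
The plan is to translate both sides of the claimed identity into statements about the subcodes $C(S):=\{c\in C:\supp(c)\subseteq S\}$ and then compare them. Writing $k=\dim(C)$, the first thing I would record is that $C(S)$ is exactly the kernel of the coordinate-projection map $C\to C[S^c]$: a codeword of $C$ maps to $0$ on the coordinates $S^c$ if and only if it is supported inside $S$. Hence $\dim(C[S^c])=k-\dim(C(S))$, so the condition $\dim(C[S^c])>\dim(C)-a$ appearing in the statement is equivalent to $\dim(C(S))<a$. This already rewrites the right-hand side of the lemma as $\max\{\,b:\ \dim(C(S))<a\ \text{for every }S\text{ with }|S|<b\,\}$.

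Next I would identify this maximum with the quantity $\beta:=\min\{\,|S|:\dim(C(S))\ge a\,\}$. This is a two-line check: for $b\le\beta$, every $S$ with $|S|<b$ satisfies $|S|<\beta$, hence $\dim(C(S))<a$ by minimality of $\beta$, so such a $b$ is admissible; conversely, for $b>\beta$, a set $S$ attaining the minimum in the definition of $\beta$ has $|S|=\beta<b$ and $\dim(C(S))\ge a$, so $b$ is not admissible. Therefore the right-hand side of the lemma equals $\beta$.

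Finally I would show $\beta=d_a(C)$ via the standard support/subcode correspondence. If $D\subseteq C$ is an $a$-dimensional subcode of minimum support, then $D\subseteq C(\supp(D))$, so $\dim(C(\supp(D)))\ge a$ and hence $\beta\le|\supp(D)|=d_a(C)$. Conversely, for any $S$ with $\dim(C(S))\ge a$ we may pick an $a$-dimensional subcode $D\subseteq C(S)$; its support lies in $S$, so $d_a(C)\le|\supp(D)|\le|S|$, and minimizing over such $S$ gives $d_a(C)\le\beta$. Chaining the three identifications ($d_a(C)=\beta=$ the displayed maximum) proves the lemma. I do not expect a genuine obstacle here; the only points requiring a little care are reading $C[S^c]$ as the punctured (projected) code so that the kernel computation $\dim(C[S^c])=k-\dim(C(S))$ is correct, and the slight off-by-one when matching $\max\{b:\dots\}$ against $\min\{|S|:\dots\}$. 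One should also note the lemma is intended for $1\le a\le\dim(C)$, so that all the minima above are over nonempty sets.
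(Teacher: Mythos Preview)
Your proof is correct and follows essentially the same approach as the paper, which invokes the identity $\dim(C)=\dim(D)+\dim(C[S^c])$ for the maximal subcode $D$ supported on $S$ (your $C(S)$) and leaves the rest implicit. Your write-up is in fact more careful than the paper's one-line sketch: you make explicit the kernel description of $C(S)$, the passage from the $\max$ over $b$ to the $\min$ over $|S|$, and the two-sided inequality $\beta=d_a(C)$, all of which the paper subsumes under ``a simple linear algebra argument.''
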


\begin{proof}
The proof follows immediately from a simple linear algebra argument. If $D$ is a subcode of $C$ that is supported on a set of coordinates $S$ then $\dim(C) = \dim(D) + \dim(C[S^c])$. 
\end{proof}

The alternative definition given in Lemma~\ref{lem:GHW_2} is very close to what we need. We wish to show that for any large $\V$, there are many linearly independent degree $\leq r$ polynomials that are defined on $\V$. In other words, we wish to prove that $$d_{o(1){m\choose \leq r}}(RM(m,r)) \geq 2^m -  \e \cdot 2^m/{m\choose \leq r}.$$ Indeed, this will imply that for any $|\V|\geq \e \cdot 2^m/{m\choose \leq r}$ there are at least $(1-o(1)){m\choose \leq r}$ linearly independent degree $\leq r$ monomials defined on $\V$ ($\V$ plays the role of $S^c$ in Lemma~\ref{lem:GHW_2}).

The next theorem of Wei \cite{Wei}  computes exactly the generalized Hamming weight of Reed-Muller codes. For stating the theorem we  need the following technical claim.

\begin{lemma}[Lemma 2 of \cite{Wei}]\label{lem_rep_Wei}
For every $0\leq a\leq {m\choose \leq r}$ there is a unique way of expressing $a$ as $a = \sum_{i=1}^{\ell}{m_i \choose \leq r_i}$, where $m_i-r_i = m-r -i+1$.
\end{lemma}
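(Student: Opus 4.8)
The plan is to read Lemma~\ref{lem_rep_Wei} as a statement about a ``combinatorial number system'' in the spirit of the Macaulay / Kruskal--Katona cascade representation, and to establish existence and uniqueness together by induction. Write $d=m-r$ and, for an integer $c$ and an integer $n\ge 0$, set $g_c(n)\triangleq{n\choose\leq n-c}$, with the usual conventions (${n\choose\leq j}=0$ for $j<0$ and ${n\choose\leq j}=2^n$ for $j\ge n$; in particular $g_c(n)=2^n$ for $c\le 0$). The requirement $m_i-r_i=m-r-i+1$ says exactly that the $i$-th summand equals $g_{d-i+1}(m_i)$, so, allowing the shift $d$ to be an arbitrary integer for the purposes of the induction, a \emph{valid representation} (of shift $d$) is a choice of $\ell\ge 0$ and of integers $m\ge m_1>m_2>\cdots>m_\ell$ with $m_i\ge d-i+1$ for all $i$, whose value is $\sum_{i=1}^\ell g_{d-i+1}(m_i)$. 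The condition $m_i\ge d-i+1$ is just $r_i\ge 0$, i.e.\ every summand is nonzero; it is what forbids padding and must be part of the normalization for uniqueness to hold.

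The first step is to record two elementary facts about $g_c$, both immediate from Pascal's rule for partial sums of binomial coefficients: the \emph{telescoping identity} $g_c(n)=g_c(n-1)+g_{c-1}(n-1)$, and, as a consequence, that $g_c$ is \emph{strictly increasing} for $n\ge c$ with $g_c(c)=1$ (since $g_c(n)-g_c(n-1)=g_{c-1}(n-1)={n-1\choose\leq n-c}\ge 1$ once $n\ge c$). These two facts carry the whole argument.

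The core claim, proved by induction on $j$ (with the shift left as a free integer parameter), is: for every integer $d$ and every $j\ge\max(d-1,-1)$, the map ``evaluate the sum'' is a bijection from the set of valid representations of shift $d$ whose indices are all $\le j$ onto the integer interval $\{0,1,\dots,g_d(j+1)-1\}$. The base case $j=\max(d-1,-1)$ is trivial: there $g_d(j+1)=1$, and the only representation with indices $\le j$ is the empty one. For the inductive step one splits a representation by its leading index: if $m_1=j$ (possible only when $j\ge d$, so $r_1\ge0$) then deleting the first term leaves a valid representation of shift $d-1$ with indices $\le j-1$, whose value ranges over $\{0,\dots,g_{d-1}(j)-1\}$ by the induction hypothesis, so the original value ranges over $\{g_d(j),\dots,g_d(j)+g_{d-1}(j)-1\}=\{g_d(j),\dots,g_d(j+1)-1\}$ using the telescoping identity; and if $m_1\le j-1$ the value ranges over $\{0,\dots,g_d(j)-1\}$ by the induction hypothesis. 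These two intervals partition $\{0,\dots,g_d(j+1)-1\}$, which yields both surjectivity and injectivity. Equivalently and more concretely, injectivity amounts to the fact that the leading term of a representation of value $a\ge 1$ is \emph{forced}: $g_d(m_1)\le a$ and, as the tail is shown to be $<g_{d-1}(m_1)=g_d(m_1+1)-g_d(m_1)$, also $a<g_d(m_1+1)$, so $m_1$ is the largest $n$ with $g_d(n)\le a$, after which one recurses on $a-g_d(m_1)$ with shift $d-1$.

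Finally, apply the core claim with $j=m$ and $d=m-r$: valid representations with $m_1\le m$ biject with $\{0,\dots,g_{m-r}(m+1)-1\}$, and since $g_{m-r}(m)={m\choose\leq r}\le g_{m-r}(m+1)-1$ by strict monotonicity, every $a$ with $0\le a\le{m\choose\leq r}$ lies in this interval and so has exactly one valid representation --- this is the lemma, the representation of $a={m\choose\leq r}$ being the single term $\ell=1$, $m_1=m$, $r_1=r$. I expect the one genuinely non-routine point to be the inequality ``tail $<g_{d-1}(m_1)$'': writing the tail in its extremal configuration $m_i=m_1-(i-1)$ as $\sum_{i\ge 2}g_{d-i+1}(m_1-i+1)$, one proves $\sum_{i\ge2}g_{d-i+1}(m_1-i+1)<g_{d-1}(m_1)$ by a short nested induction, repeatedly using $g_{d-1}(m_1)=g_{d-1}(m_1-1)+g_{d-2}(m_1-1)$ to absorb one summand at a time. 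The remaining care is purely bookkeeping: the boundary conventions for $g_c$ at small arguments and the degenerate range $c\le 0$, where the statement reduces to uniqueness of binary expansions.
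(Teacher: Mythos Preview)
The paper does not prove this lemma at all: it is quoted verbatim as ``Lemma~2 of \cite{Wei}'' and used as a black box to state Wei's generalized Hamming weight theorem. So there is no proof in the paper to compare against.

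Your argument is correct. You rightly observe that the lemma as stated is under-normalized (nothing in the statement forces the $m_i$ to decrease or the $r_i$ to be nonnegative), and you supply the intended normalization $m\ge m_1>m_2>\cdots>m_\ell$ with $r_i\ge 0$; this matches how the paper actually uses the representation (e.g.\ $m_i=m-i$, $r_i=r-1$). The reformulation via $g_c(n)=\binom{n}{\le n-c}$, the Pascal-type identity $g_c(n)=g_c(n-1)+g_{c-1}(n-1)$, and the joint induction on $j$ showing that representations of shift $d$ with indices $\le j$ biject with $\{0,\dots,g_d(j+1)-1\}$ are all sound, and the final specialization $j=m$, $d=m-r$ gives exactly the statement.

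One expository remark: your closing paragraph about the ``one genuinely non-routine point'' (the tail bound $\sum_{i\ge 2}g_{d-i+1}(m_i)<g_{d-1}(m_1)$) is redundant. That bound is already a direct consequence of the induction hypothesis applied to the tail, which is a shift-$(d-1)$ representation with indices $\le m_1-1$ and hence has value in $\{0,\dots,g_{d-1}(m_1)-1\}$. No separate nested induction is needed; the clean bijection argument you give first already delivers both existence and uniqueness simultaneously.
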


\begin{thm}[\cite{Wei}]\label{thm:GHW}
Let $0\leq a\leq {m\choose \leq r}$ be an integer. Then, 
$d_a(RM(m,r)) = \sum_{i=1}^{\ell} 2^{m_i}$, where $a = \sum_{i=1}^{\ell}{m_i \choose \leq r_i}$ is the unique representation of $a$ according to Lemma~\ref{lem_rep_Wei}.
\end{thm}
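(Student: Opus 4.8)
The plan is to prove the two matching inequalities $d_a(RM(m,r)) \leq \sum_{i=1}^\ell 2^{m_i}$ and $d_a(RM(m,r)) \geq \sum_{i=1}^\ell 2^{m_i}$, using the unique representation $a = \sum_{i=1}^\ell \binom{m_i}{\leq r_i}$ with $m_i - r_i = m-r-i+1$ guaranteed by Lemma~\ref{lem_rep_Wei}. For the upper bound, I would exhibit an explicit subcode $D \subseteq RM(m,r)$ of dimension exactly $a$ whose support has size exactly $\sum_i 2^{m_i}$. The natural candidate is built from a nested flag of affine/coordinate subspaces: choose disjoint ``blocks'' of variables so that the $i$-th block involves $m_i$ of the coordinates, let $D_i$ be (a suitable image of) $RM(m_i, r_i)$ living on a combinatorial cube of size $2^{m_i}$, arranged so the supports of the $D_i$ are disjoint. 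One checks that the monomials used have degree $\leq r$ (this is where the constraint $m_i - r_i = m-r-i+1$, equivalently the degrees being pushed up by $i-1$ via multiplication by fixed linear forms, is used), that $\dim D = \sum_i \binom{m_i}{\leq r_i} = a$, and that $|\supp(D)| = \sum_i 2^{m_i}$.

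For the lower bound — which I expect to be the main obstacle — I would argue that any subcode $D \subseteq RM(m,r)$ with $\dim D = a$ must have $|\supp(D)| \geq \sum_i 2^{m_i}$. The cleanest route is induction on $m$ (or on $r$), exploiting the recursive structure $G(m,r) = \begin{pmatrix} G(m-1,r) & G(m-1,r) \\ 0 & G(m-1,r-1) \end{pmatrix}$ recalled in Section~\ref{RM-background}. Concretely, split the $2^m$ coordinates into the two halves $H_0, H_1$ corresponding to $x_m = 0$ and $x_m = 1$. Restricting codewords of $D$ to $H_0$ gives a subcode of $RM(m-1,r)$, and the ``difference'' (kernel of this restriction, viewed appropriately) injects into $RM(m-1,r-1)$; writing $\dim D = a = a_0 + a_1$ for the dimensions of these two pieces, one gets $|\supp(D)| \geq |\supp(D|_{H_0})| + |\supp(D|_{H_1})| \geq d_{a_0}(RM(m-1,r)) + d_{a_1}(RM(m-1,r-1))$, and then one must show the right-hand side, minimized over all valid splittings $a = a_0 + a_1$, equals $\sum_i 2^{m_i}$. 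This last combinatorial optimization — showing the greedy/canonical representation of Lemma~\ref{lem_rep_Wei} is the one minimizing the recursive sum — is the delicate bookkeeping step; it amounts to a convexity/exchange argument on how to distribute the ``budget'' $a$ between the two branches, and is where Wei's representation lemma earns its keep.

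Since this theorem is quoted verbatim from Wei~\cite{Wei}, in the paper I would simply cite it rather than reproduce the argument; the sketch above is the shape a self-contained proof would take. For the application at hand (Lemma~\ref{lem:number_of_ind_polynomials}) one only needs the regime where $a = o\!\left(\binom{m}{\leq r}\right)$, in which case the representation has $\ell$ small and $\sum_i 2^{m_i}$ is dominated by the first term $2^{m_1}$ with $m_1$ close to $m$; so even a weaker, more easily proved lower bound on $d_a$ would suffice there, but the exact statement is cleaner to quote.
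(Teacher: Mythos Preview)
Your proposal is correct and matches the paper exactly: the paper gives no proof of this theorem, simply citing it from Wei~\cite{Wei} and immediately using it to prove Lemma~\ref{lem:number_of_ind_polynomials}. Your recognition that one should just cite the result, together with your correct observation that only a special regime of $a$ is needed for the application, is precisely in line with how the paper handles it.
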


We are now ready to prove Lemma~\ref{lem:number_of_ind_polynomials}.

%\begin{corol}\label{cor:GHW}
%Let $\V\subseteq \F_2^m$ such that $|\V| > 2^{m-t}$. Then there are more than ${m-t \choose \leq r}$ linearly independent degree $\leq r$ polynomials defined on $\V$.
%\end{corol}

\begin{proof}[Proof of Lemma~\ref{lem:number_of_ind_polynomials}]
\sloppy
%Denote $\V = \V(\I(U))$ and let $t=\log({m\choose \leq r}/\e)$.
For $a = \sum_{i=1}^{t}{m-i\choose \leq r-1}$, Theorem~\ref{thm:GHW} implies that $d_a(RM(m,r)) = \sum_{i=1}^{t} 2^{m-i} = 2^m - 2^{m-t}$.
Thus, if $|\V| > 2^m - d_a(RM(m,r))  = 2^{m -t}$ then there are more than ${m\choose \leq r} -  a = {m\choose \leq r} -  \sum_{i=1}^{t}{m-i \choose \leq r-1}$ many linearly independent degree-$r$ polynomials defined on $\V$. 
To make sense of parameters we shall need the following simple calculation. We give the straightforward proof in Section~\ref{app:missing-proofs}.
\begin{claim}\label{cla:binomial-difference}
${m\choose \leq r} -  \sum_{i=1}^{t}{m-i \choose \leq r-1} = {m-t \choose \leq r}$.
\end{claim}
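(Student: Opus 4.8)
The plan is to prove the identity $\binom{m}{\leq r} - \sum_{i=1}^{t}\binom{m-i}{\leq r-1} = \binom{m-t}{\leq r}$ by induction on $t$, using the Pascal-type recurrence for the partial-sum binomial $\binom{n}{\leq s} = \binom{n-1}{\leq s} + \binom{n-1}{\leq s-1}$, which itself follows by summing the ordinary Pascal identity $\binom{n}{j} = \binom{n-1}{j} + \binom{n-1}{j-1}$ over $0 \leq j \leq s$ (the boundary term $\binom{n-1}{-1}=0$ causes no trouble).

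For the base case $t=0$ the claimed identity reads $\binom{m}{\leq r} = \binom{m}{\leq r}$, which is trivial (the empty sum is $0$). For the inductive step, assume the identity holds for some $t \geq 0$, i.e. $\binom{m}{\leq r} - \sum_{i=1}^{t}\binom{m-i}{\leq r-1} = \binom{m-t}{\leq r}$. Then
\begin{align*}
\binom{m}{\leq r} - \sum_{i=1}^{t+1}\binom{m-i}{\leq r-1}
&= \left(\binom{m}{\leq r} - \sum_{i=1}^{t}\binom{m-i}{\leq r-1}\right) - \binom{m-(t+1)}{\leq r-1}\\
&= \binom{m-t}{\leq r} - \binom{m-t-1}{\leq r-1}\\
&= \binom{m-t-1}{\leq r},
\end{align*}
where the last step is exactly the partial-sum Pascal recurrence applied with $n = m-t$ and $s = r$. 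This completes the induction.

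Alternatively, one could give a one-line telescoping argument: write $\binom{m-i}{\leq r-1} = \binom{m-i+1}{\leq r} - \binom{m-i}{\leq r}$ (again the partial-sum Pascal identity), so that $\sum_{i=1}^{t}\binom{m-i}{\leq r-1} = \sum_{i=1}^{t}\left(\binom{m-i+1}{\leq r} - \binom{m-i}{\leq r}\right) = \binom{m}{\leq r} - \binom{m-t}{\leq r}$ telescopes, and rearranging gives the claim. I would probably present the telescoping version since it is shortest, but mention the inductive one as it is the most transparent. There is no real obstacle here — the only thing to be slightly careful about is the edge behaviour of $\binom{n}{\leq s}$ when $s$ would drop below $0$ or the sum is empty, but since $1 \leq \ell \leq r-1$ and $t \leq \ell$ in all applications, we always have $r - 1 \geq 0$ and $m - t \geq 0$, so every symbol appearing is well-defined.
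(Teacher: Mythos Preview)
Your proof is correct and follows essentially the same approach as the paper: both arguments telescope using Pascal's identity. The only cosmetic difference is that the paper first expands $\binom{m-i}{\leq r-1}=\sum_{j=0}^{r-1}\binom{m-i}{j}$, telescopes each fixed-$j$ sum via $\sum_{i=1}^{t}\binom{m-i}{j}=\binom{m}{j+1}-\binom{m-t}{j+1}$, and then recombines, whereas you telescope directly at the level of the cumulative binomial $\binom{n}{\leq s}$; your version is slightly slicker but not materially different.
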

Thus, if $|\V| > 2^{m-t}$ then there are more than ${m-t \choose \leq r}$ linearly independent degree $\leq r$ polynomials defined on $\V$.
\end{proof}

We can now bound the number of common zeroes of all polynomials that vanish on a given set $U$.

\begin{lemma}\label{lem:number_of_common_zeroes}
Let $\e>0$ be a constant and $s$ an integer such that $s < {m- \lceil \log({m\choose \leq r})+ \log(1/\e)\rceil \choose \leq r}$.
Let $U$ be an $m\times s$ matrix such that $\rank(U^r)=s$, namely, the columns of $U^r$ are linearly independent. Then, $|\V(\I(U))| \leq \e\cdot 2^m/{m\choose \leq r}$.
\end{lemma}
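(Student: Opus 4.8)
The plan is to argue by contradiction. Suppose $\V := \V(\I(U))$ is large, say $|\V| > \e \cdot 2^m / {m\choose \leq r}$. Write $t = \lceil \log({m\choose \leq r}) + \log(1/\e)\rceil$, so that $\e \cdot 2^m / {m\choose \leq r} = 2^{m-t'}$ for some $t' \le t$; in particular $|\V| > 2^{m-t}$. Now I would invoke Lemma~\ref{lem:number_of_ind_polynomials}: since $|\V| > 2^{m-t}$, there are more than ${m-t \choose \leq r}$ linearly independent degree $\le r$ polynomials defined on $\V$, i.e. $\dim(E(m,r)[\V]) > {m-t\choose \leq r}$. On the other hand, by the dimension-counting identity established just before the statement of Lemma~\ref{lem:number_of_ind_polynomials} in the text — namely
\[
\rank(E(m,r)[\V]) = \dim(\mathbb{P}(m,r)/\I(\V)) = \dim(\mathbb{P}(m,r)/\I(U)) = {m\choose \leq r} - \dim(\I(U)) = s,
\]
where the middle equality uses Lemma~\ref{lem:IVI=I} ($\I(\V) = \I(U)$) and the last uses $\rank(U^r) = s$ together with $\dim(\I(U)) = {m\choose \leq r} - \rank(U^r)$ (Lemma~\ref{lem: I is dual to U}). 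Combining the two gives $s = \rank(E(m,r)[\V]) > {m-t\choose \leq r}$.

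But the hypothesis of the lemma is exactly $s < {m - \lceil \log({m\choose\leq r}) + \log(1/\e)\rceil \choose \leq r} = {m-t\choose \leq r}$, a contradiction. Hence $|\V| \le \e \cdot 2^m / {m\choose \leq r}$, as claimed.

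The argument is essentially a bookkeeping assembly of machinery already built: the $\I$–$\V$ Galois-type correspondence (Lemmas~\ref{lem: I is dual to U}, \ref{lem: new independent point}, \ref{lem:common_zero_in_U^r}, \ref{lem:IVI=I}), the rank/dimension identity, and the combinatorial input Lemma~\ref{lem:number_of_ind_polynomials} (itself a consequence of Wei's theorem). The only place that needs genuine care is the rounding: one must check that $|\V| > \e\cdot 2^m/{m\choose \leq r}$ really does force $|\V| > 2^{m-t}$ with the specific ceiling used in the hypothesis, so that Lemma~\ref{lem:number_of_ind_polynomials} applies with parameter exactly $t$ and the resulting bound ${m-t\choose \leq r}$ matches the threshold in the statement. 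I expect this to be the main (minor) obstacle — everything else is a direct substitution. If the off-by-one in the ceiling does not line up cleanly, the fix is to take $t$ to be the integer with $2^{m-t} \le \e\cdot 2^m/{m\choose \leq r} < 2^{m-t+1}$ and observe $t \le \lceil \log({m\choose\leq r}) + \log(1/\e)\rceil$, so that ${m-t\choose\leq r} \ge {m-\lceil \log({m\choose\leq r})+\log(1/\e)\rceil \choose \leq r} > s$, preserving the contradiction.
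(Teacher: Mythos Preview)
Your proposal is correct and follows essentially the same approach as the paper: both argue by contradiction, set $t = \lceil \log({m\choose \leq r}) + \log(1/\e)\rceil$ so that $2^{m-t} \le \e\cdot 2^m/{m\choose \leq r}$, apply Lemma~\ref{lem:number_of_ind_polynomials} to get more than ${m-t\choose \leq r}$ independent polynomials on $\V$, and then contradict this with the rank identity $\rank(E(m,r)[\V]) = s$. Your handling of the ceiling (including the fallback fix) matches the paper's choice of $t$ as the minimal integer with $2^{m-t} \le \e\cdot 2^m/{m\choose \leq r}$.
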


%We are now ready to give the proof of Lemma~\ref{lem: number of common zeroes} .

\begin{proof}%[Proof of Lemma~\ref{lem: number of common zeroes} ]
By the discussion above we know that $\dim(\mathbb{P}(m,r)/\I(U))=s$. I.e., the dimension of the space of degree $\leq r$ polynomials that are defined on $\V=\V(\I(U))$ is $s$. 

Let $t$ be the minimal integer such that $2^{m-t} \leq \e\cdot \frac{2^m}{{m\choose \leq r}}$, i.e., $t = \lceil \log\left({m\choose \leq r}\right)+\log (1/\e) \rceil $. 
Assume towards a contradiction that $|\V| > \e\cdot \frac{2^m}{{m\choose \leq r}} \geq 2^{m-t}$. Lemma~\ref{lem:number_of_ind_polynomials} implies that there are more than ${m-t \choose \leq r}$ linearly independent polynomials defined on $\V$.

As there are at most $s$ polynomial defined on $\V$ we must have 
$$s > {m-t \choose \leq r} = {m-\lceil \log({m\choose \leq r})+ \log(1/\e)\rceil \choose \leq  r},$$
in contradiction to the assumption of the lemma.
\end{proof}

We can now derive Theorem~\ref{thm:lin_ind_eval_vectors}, the main result of this subsection.  

\begin{proof}[Proof of Theorem~\ref{thm:lin_ind_eval_vectors}]
We start by picking random points one after the other. Assume that we picked $\ell< s$ points and they gave rise to linearly independent evaluation vectors. 
By Lemmas~\ref{lem:common_zero_in_U^r} and \ref{lem:number_of_common_zeroes}, as long as $s\leq {m-\lceil \log({m\choose \leq r})+ \log(1/\e)\rceil \choose \leq r}$, the probability that the next random point will not generate an independent evaluation vector is at most $\e / {m\choose \leq r}$. Repeating this argument ${m-\lceil \log({m\choose \leq r})+ \log(1/\e)\rceil \choose \leq r} < {m\choose \leq r}$ times, the probability that we do not get a set of independent evaluation vectors is at most $\e$.
%This is exactly the statement of Theorem~\ref{thm: lin ind eval vectors}. 
%To ease the reading we repeat the statement of the theorem here. 
\end{proof}

%\begin{theoremNoNum}[Theorem~\ref{thm: lin ind eval vectors}]\sloppy 
%Let $\e>0$ be a real number and $k,m,r$ integers such that $k<{m- \log({m\choose \leq r})- \log(1/\e) \choose \leq r}$. Then, with probability larger than $1-\e$ if we pick $\alpha_1,\ldots,\alpha_k\in \F_2^m$ uniformly at random we get that the evaluation vectors, $\alpha_1^{{r}},\ldots, \alpha_k^{{r}}$ are linearly independent. 
%\end{theoremNoNum}

%%%%%%%%%%%%%%%%%%%%%%%%%%%%
%%%%%%%%%%%%%%%%%%%%%%%%%%%%
%%%%%%%%%%%%%%%%%%%%%%%%%%%%

%\Amnote{Changed name of section}
\subsection{Random submatrices of $E(m,r)$, for small $r$, have full row-rank} \label{second-question}
%\Enote{This seems to be a bad title, as literally, the rows of $E(m,r)$ are linearly independent..}

%\Anote{In Thm 5.15 and the ensuing proof, I am not sure why we call the vectors $\alpha_i$ as opposed to $u_i$ as per our notation for m-vectors. Also, in the Thm statement I would remove "is real" (also $\eta$ is real)}

%\Anote{It may be good to provide a bit of intuition before calculation, namely say that in each dyadic interval the probability of missing all nonzeros in k experiments balances perfectly with the  small number of polys of that weight. Also, that the argument is separate for polys of weight below and above $(1-\e)/2$.}

%\Anote{I would change the very last numerical expression estimating the prob in the very last paragraph, so that we don't have simultaneously ``$exp(-...)$” and “$2^{-...}$” added up - it looks weird and we should choose one or the other.}

In this section we study Question~\ref{question:rows} for the case of low degree. Thus, our goal is proving that, with high probability, a random set $U$ of columns of $E(m,r)$, of the appropriate size, has full row-rank. As we showed in Corollary~\ref{cor:equivalence}, this is equivalent to studying Question~\ref{question:columns} in the case of high degree. We prove the following theorem.

\begin{thm}\label{thm:random_span_using_KL}
%There exists a constant $\eta>0$ such that the following holds. 
Let $0<\delta<1/3$, $\eta = O(1/\log(1/\delta))$ and $m,r$ integers such that $r\leq\eta{m}$.  Let $s=\lceil(1+\delta){m\choose \leq r}\rceil$. Then, except of probability not larger than $\exp\left(-\Omega\left( \min(\delta,2^{-r}) \cdot {m\choose \leq r}\right)\right)$, if we pick $u_1,\ldots,u_s\in \F_2^m$ uniformly at random we get that the evaluation vectors, $u_1^{{r}},\ldots, u_s^{{r}}$ span the entire vector space $\F_2^{{m\choose \leq r}}$. 
\end{thm}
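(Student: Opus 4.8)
The plan is to follow the strategy sketched for Theorem~\ref{thm:intro:low_deg_BEC} in Section~\ref{sec:techniques}: show that, with high probability, no nonzero degree-$r$ polynomial vanishes on all of $U$, which by Lemma~\ref{equiv2} (applied to the generator matrix $E(m,r)$) is exactly the statement that the columns $u_1^r,\dots,u_s^r$ span $\F_2^{{m\choose\leq r}}$. For a fixed nonzero $f\in\mathbb{P}(m,r)$ of relative weight $\wt(f)=\alpha$, the probability that $f$ vanishes on $s$ independent uniform points is $(1-\alpha)^s$. Since $RM(m,r)$ has minimum distance $2^{m-r}$, every nonzero codeword has $\alpha\geq 2^{-r}$, so $(1-\alpha)^s\leq (1-2^{-r})^s\leq \exp(-s\cdot 2^{-r})$; with $s=(1+\delta){m\choose\leq r}$ this is $\exp\left(-(1+\delta)2^{-r}{m\choose\leq r}\right)$, which is already tiny. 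The issue is that there are $2^{{m\choose\leq r}}$ codewords, so a naive union bound costs a factor $2^{{m\choose\leq r}}$, which overwhelms the per-codeword bound unless $\alpha$ is bounded below by a constant — which it is not for the lighter codewords.

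First I would partition the nonzero codewords by weight into dyadic bands: for $\ell=1,\dots,r$, let band $\ell$ consist of codewords $f$ with $2^{-\ell-1}<\wt(f)\leq 2^{-\ell}$ (the band $\ell=r$ captures everything down to the minimum weight $2^{-r}$, and the heavier codewords with $\wt(f)>1/2$ are handled separately and trivially). For band $\ell$, each codeword vanishes on all of $U$ with probability at most $(1-2^{-\ell-1})^s\leq \exp(-s\,2^{-\ell-1})$, and by Theorem~\ref{thm:wt-dist} (the sharpened weight-distribution bound, with $\e=1/2$) the number of codewords in the band is at most $W_{m,r}(2^{-\ell})\leq 2^{O(\ell^4){m-\ell\choose\leq r-\ell}}$. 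Using ${m-\ell\choose\leq r-\ell}\leq {m\choose\leq r}(r/m)^\ell$, the union bound over band $\ell$ gives failure probability at most
$$
\exp\!\left(O(\ell^4)\ln 2\cdot {m\choose\leq r}(r/m)^\ell \;-\; s\,2^{-\ell-1}\right),
$$
and with $s\geq {m\choose\leq r}$ and $r\leq \eta m$ for sufficiently small $\eta=O(1/\log(1/\delta))$, one checks that for every $\ell$ the negative term $s\,2^{-\ell-1}\geq 2^{-\ell-1}{m\choose\leq r}$ dominates the positive term $O(\ell^4)(r/m)^\ell{m\choose\leq r}$ — because $(r/m)^\ell\leq \eta^\ell$ decays geometrically, beating the polynomial $\ell^4$. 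So each band contributes $\exp\!\left(-\Omega(2^{-\ell}{m\choose\leq r})\right)$, and the worst band is $\ell=r$, giving $\exp\!\left(-\Omega(2^{-r}{m\choose\leq r})\right)$. Finally one must also gain from $\delta$: the codewords with $\wt(f)$ close to $1/2$ (say $\wt(f)>1/4$, i.e.\ the $\ell=1$ band together with the heavy codewords) number at most $2^{{m\choose\leq r}}$, and vanish on $U$ with probability at most $(3/4)^s\leq 2^{-cs}$; since $s=(1+\delta){m\choose\leq r}$ the union bound over these is at most $2^{(1-c(1+\delta)){m\choose\leq r}}$ — here one needs $c(1+\delta)>1$, which forces taking a finer split: bound $\wt(f)\geq 1/2$ codewords using $(1/2)^s\leq 2^{-s}$ against the total count $2^{{m\choose\leq r}}$, yielding $2^{-\delta{m\choose\leq r}}$, and absorb the $\ell=1,2,\dots$ bands into the dyadic analysis above. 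Summing all band contributions gives the claimed bound $\exp\!\left(-\Omega\!\left(\min(\delta,2^{-r}){m\choose\leq r}\right)\right)$.

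The main obstacle is the balance in the dyadic sum: one needs the weight-distribution exponent $O(\ell^4){m-\ell\choose\leq r-\ell}$ to stay strictly below the "survival" exponent $s\,2^{-\ell-1}$ simultaneously for \emph{all} $\ell\in\{1,\dots,r\}$, and this is precisely where the improvement of Theorem~\ref{thm:wt-dist} over the $O(r^2){m\choose\leq r-\ell}$ bound of \cite{KLP} is essential — with the old bound the $\ell=1$ or $\ell=2$ term would already be $O(r^2)(r/m){m\choose\leq r}$, which for $r=\Theta(m)$ is not dominated by $2^{-\ell-1}{m\choose\leq r}$. Getting the constant in $\eta=O(1/\log(1/\delta))$ right requires carefully tracking the constant $c$ from Theorem~\ref{thm:wt-dist} inside the geometric series $\sum_\ell \ell^4\eta^\ell$ and verifying it converges to something small; this is routine but must be done with care. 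A secondary, purely bookkeeping, point is handling the boundary bands and the $\wt(f)>1/2$ codewords so that the $\delta$-dependence (needed when $2^{-r}$ is not the binding constraint, e.g.\ $r=O(1)$) comes out cleanly; this is where the slack $s=(1+\delta){m\choose\leq r}$ rather than $s={m\choose\leq r}$ gets used.
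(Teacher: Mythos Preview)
Your overall strategy matches the paper's exactly: union bound over nonzero $f\in\mathbb{P}(m,r)$, partitioned into dyadic weight bands, with Theorem~\ref{thm:wt-dist} controlling the light bands and the trivial count $2^{{m\choose\leq r}}$ controlling the heavy codewords. But there is a genuine gap in your split between ``heavy'' and ``light,'' and your account of where the constraint $\eta=O(1/\log(1/\delta))$ comes from is incorrect.

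The problem is your band $\ell=1$, i.e.\ codewords with $\tfrac14<\wt(f)\leq\tfrac12$. You bound the survival probability at the light end of the band, $(3/4)^s$, and you bound the count by $W_{m,r}(2^{-1})$. But Theorem~\ref{thm:wt-dist} only bounds $W_{m,r}((1-\e)2^{-\ell})$ for $\e\in(0,1/2]$ and $\ell\ge 1$; there is no nontrivial bound on $W_{m,r}(1/2)$, so you are forced to use the full $2^{{m\choose\leq r}}$. The resulting union bound is $2^{{m\choose\leq r}}\cdot(3/4)^{(1+\delta){m\choose\leq r}}=2^{{m\choose\leq r}(1-(1+\delta)\log_2(4/3))}$, and since $\log_2(4/3)\approx 0.415$ this exponent is \emph{positive} for all $\delta<1/3$. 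Your proposed fix---put $\wt\ge\tfrac12$ in the heavy part and ``absorb $\ell=1$ into the dyadic analysis''---does not repair this, because the dyadic analysis for $\ell=1$ is exactly the failing computation.

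The paper's remedy is to move the heavy/light boundary from $\tfrac12$ down to $\tfrac{1-\e}{2}$ with $\e=\delta/4$. Then heavy codewords survive with probability at most $\bigl(\tfrac{1+\e}{2}\bigr)^s$, and one checks $(1+\e)^{1+\delta}<2^{\delta}$ for $\e=\delta/4$ and $\delta<1/3$, so the heavy union bound becomes $\exp(-\Omega(\delta{m\choose\leq r}))$. The light bands are now $((1-\e)2^{-\ell-1},(1-\e)2^{-\ell}]$ for $\ell\ge 1$, and Theorem~\ref{thm:wt-dist} applies directly, giving $|P_\ell|\le(1/\e)^{8c\ell^4{m-\ell\choose\leq r-\ell}}$. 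The base $(1/\e)=\Theta(1/\delta)$ is what puts a $\log(1/\delta)$ factor into the light exponent, and balancing $\log(1/\delta)\cdot\ell^4(r/m)^\ell$ against $2^{-\ell-1}$ is precisely what forces $r/m\le\eta=O(1/\log(1/\delta))$. With your fixed $\e=\tfrac12$ there is no $\delta$ in the light exponent at all, so your explanation that $\eta=O(1/\log(1/\delta))$ arises from ``tracking the constant $c$ inside the geometric series'' is not where the dependence lives.
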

%\Enote{Please check theorem and proof}

As in the proof of Theorem~\ref{thm:lin_ind_eval_vectors} we will consider the dual space to evaluation vectors - the space of degree-$r$ polynomials. Our proof strategy is to show that for every possible polynomial of degree-$r$, with high probability, there is at least one point in our set on which the polynomial does not vanish. Thus, by the discussion in Section~\ref{first-question}, this means that the dual space contains only the zero polynomial, which is exactly what we wish to prove. 
To apply this strategy we will need to know the number of polynomials that have a certain number of nonzeroes. Such an estimate was given in Theorem~\ref{thm:wt-dist}, 
%and Corollary~\ref{cor:wt-dist-simple} 
following \cite{KLP}.

%[{\bf EA:} should it be $\Omega$ in the error probability below?]
%\Amnote{Restated the theorem to include dependence of $\eta$ on $\delta$}

%The idea of the proof is to show that the dual to the span is empty, namely, that $\I(U)=\{0\}$. To obtain this we show that no polynomial belongs to $\I(U)$ by proving that with high probability we pick a nonzero of each polynomial.

\begin{proof}

Set $\e = \delta /4$.
For an integer $1\leq \ell \leq r$ we denote with $P_\ell$ the set of degree $\leq r$ polynomials whose fraction of nonzeros is between $(1-\e)2^{-\ell-1}$ and $(1-\e)2^{-\ell}$. By Theorem~\ref{thm:wt-dist}, 
$$|P_\ell| \leq W_{m,r}((1-\e)2^{-\ell})\leq (1/\e)^{8c \ell^4 {m-\ell \choose \leq r-\ell} }.$$

Let $f$ be some polynomial in $P_\ell$. When picking $s$ points at random, the probability that $f$ vanishes on all of them is at most $(1 - (1-\e)2^{-\ell-1})^s$. Thus, the probability that any $f\in P_\ell$ vanishes on all $s$ points is at most $|P_\ell| \cdot (1 - (1-\e)2^{-\ell-1})^s$. We would like to show that this probability is small, when ranging over all $\ell$. We first study the case that $\ell>0$, i.e., of the contribution from the polynomials that have at most $(1-\e)/2$ nonzeros. Although in this case the probability of hitting a nonzero gets smaller and smaller as $\ell$ grows, the size of $P_\ell$ goes down at a faster rate (by Theorem~\ref{thm:intro:wt-dist}), and therefore we get an exponentially small probability of missing some polynomial. When $\ell=0$, although the number of polynomials of such high weight is huge, the probability of missing any of them is tiny, and so we are ok in this case as well. We now do the formal calculation.

By the above, the probability that some polynomial whose fraction of nonzeros is at most $(1-\e)/2$ vanishes on all $s$ points is at most 
\begin{equation}\label{eq:sum-over-dual}
\sum_{\ell=1}^{r-1}|P_\ell| \cdot (1 - (1-\e)2^{-\ell-1})^s.
\end{equation} 
Indeed, any degree-$r$ polynomial is nonzero with probability at least $2^{-r}$ and hence the above summation goes over all such polynomials.

Let us estimate a typical summand,
\begin{eqnarray*}
|P_\ell| \cdot \left(1 - (1-\e)2^{-\ell-1}\right)^s &\leq &(1/\e)^{8c \ell^4 {m-\ell \choose \leq r-\ell} }\cdot (1 - (1-\e)2^{-\ell-1})^s \\ &\leq &(1/\e)^{8c \ell^4 {m-\ell \choose \leq r-\ell} }\cdot \exp( - (1-\e)2^{-\ell-1}s) \\&\leq&
(1/\e)^{8c \ell^4 {m-\ell \choose \leq r-\ell} }\cdot \exp\left(- (1+\delta)(1-\e){m\choose \leq r}2^{-\ell-1}\right).
\end{eqnarray*}
%Thus, if $r\leq \eta m$, for a small enough constant $\eta>0$ (e.g., $\eta = O(1/\log(1/\e))$ suffices) then, 
Let $\eta = O(1/\log(1/\e))= O(1/\log(1/\delta))$.
From the fact that 
$${m-\ell \choose \leq r-\ell} \leq {m\choose \leq r}\cdot \left(\frac{r}{m}\right)^\ell,$$
we get by simple manipulations that if $r\leq \eta m$ then
%Continuing we get
\begin{eqnarray*}
|P_\ell| \cdot \left(1 - (1-\e)2^{-\ell-1}\right)^s &\leq & \exp\left({m\choose \leq r}\cdot \left( 3^{-\ell} - (1+\delta)(1-\e)2^{-\ell-1} \right) \right)\\ 
%&\leq & \exp \left(c \log(1/\e) \left(mr^4 + 10{\ell^2}\cdot  {m\choose \leq r}\left(\frac{r}{m}\right)^\ell \right) - (1+\delta)(1-\e)2^{-\ell-1}     \right)\\
%&\leq^{(\dagger)} & \exp \left( - {m\choose \leq r} 2^{-\ell-2}  \right),
&\leq &  \exp\left(-\Omega\left({m\choose \leq r}\cdot  2^{-\ell}\right)\right).
\end{eqnarray*}
%where inequality $(\dagger)$ holds for $r \leq \frac{\e}{5\sqrt{c}} \sqrt{m} = \frac{\delta}{20\sqrt{c}} \sqrt{m}$. This can be seen, e.g., by verifying the case $\ell=1$ and noting that every other value of $\ell$ also satisfies this inequality.

Going back to Equation~\eqref{eq:sum-over-dual} we see that 
$$\sum_{\ell=1}^{r-1}|P_\ell| \cdot (1 - (1-\e)2^{-\ell-1})^s \leq \sum_{\ell=1}^{r-1} \exp\left(-\Omega\left({m\choose \leq r}\cdot  2^{-\ell}\right)\right) \leq  \exp\left(- \Omega\left(2^{-r} {m\choose \leq r}\right)\right).$$ 
All we have to do now is bound from above the probability that there exists a polynomial that has more than $(1-\e)/2$ fraction of nonzeros, but that vanishes on all the  $s$ chosen points. As there are at most $2^{{m\choose\leq  r}}$ such polynomials, this probability can be bounded from above by %[{\bf EA:} it should be $\leq r$?]
\begin{eqnarray*}
2^{{m\choose \leq r}}\cdot \left( 1- \frac{1-\e}{2}\right)^s & =  & 2^{{m\choose \leq r}} \cdot \left( \frac{1+\e}{2}\right)^s \\ 
&\leq & 2^{{m\choose \leq r}} \cdot \left( \frac{1+\e}{2}\right)^{(1+\delta){m\choose \leq r}}\\
&=& \left((1+\e)^{1/\delta}\cdot \left( \frac{1+\e}{2}\right)  \right)^{\delta{m\choose \leq r}}\\
&\leq^{(*)}& \left(e^{1/4}\cdot \left( \frac{1+\e}{2}\right)  \right)^{\delta{m\choose \leq r}}\\
&\leq^{(\dagger)}& \exp\left(-\frac{1}{3}\delta{m\choose \leq r}\right),
\end{eqnarray*}
where inequality $(*)$ holds since $\e=\delta/4$ and inequality $(\dagger)$ follows by our choice $\delta < 1/3$.
%&\leq & 2^{{m\choose \leq r}} \cdot 2^{-(1-\log(e) \cdot \e)s}\\ &\leq& 2^{{m\choose \leq r} \cdot \left( 1 - (1-\log(e) \cdot \e)(1+\delta)  \right) }.
%By our choice of parameters it holds that 
%$$ 1 - (1-\log(e) \cdot \e)(1+\delta) < -\delta/2. $$
%Hence, it follows that 
%\begin{eqnarray*}
%2^{{m\choose \leq r}} \cdot \left( 1- \frac{1-\e}{2}\right)^s \leq 2^{-\frac{\delta}{2}{m\choose \leq r}}.
%\end{eqnarray*}
Concluding, when picking $s$ points at random, the probability that there will be some polynomial of degree $\leq r$ that does not vanish on any of the chosen points is at most $$\exp\left(- \Omega\left(2^{-r} {m\choose \leq r}\right)\right)+ \exp\left(-\frac{1}{3}\delta{m\choose \leq r}\right)=\exp\left(- \Omega\left(\min(\delta,2^{-r}) {m\choose \leq r}\right)\right).$$  
%\leq \exp\left(-\frac{1}{4}\delta{m\choose \leq r}\right)$.
By simple linear algebra and Lemma~\ref{lem: I is dual to U} it follows that if no such polynomial vanishes on all points that we picked then their evaluation vectors span the entire space. This concludes the proof of the theorem. 
\end{proof}

%%%%%%%%%%%%%%%%%%%%%%%%%%%%%%%%%%%%%%%%%%
%%%%%%%%%%%%%%%%%%%%%%%%%%%%%%%%%%%%%%%%%%
%%%%%%%%%%%%%%%%%%%%%%%%%%%%%%%%%%%%%%%%%%
%%%%%%%%%%%%%%%%%%%%%%%%%%%%%%%%%%%%%%%%%% 
\section{Reed-Muller code for erasures}\label{sec:RM-erasures}

\subsection{Low-rate regime}

%\Anote{I think we can remove this remark. We need to mention it higher, with the bound on $r$ that works}
%(EA: remove this now?) Corollary \ref{corol:RM_for_BEC} shows that the $RM(m,m-r)$ code can be used to communicate reliably over a BEC of erasure rate $\e=k/n$, where $k= [m- \log({m\choose \leq r})- \log(1/\e)  -2]^r$. If $r$ is small enough such that $\e=m^r/n + o(m^r/n)$, the rate of $RM(m,m-r)$ is given by $1-\e -o(\e)$, hence, in that sense, RM codes are capacity achieving at high-SNR over the BEC. (EA: remove this?)

%\Amnote{Need to add result for BEC for low rate}

%\Amnote{I wrote the corollary for the case $r=o(m)$. We can write something more detailed that give the dependence between $\eta$ and $\delta$. What do you think?}

Recall that from Corollary~\ref{erasure-G}, if $G$ is an $n \times k$ generator matrix of a code, then the code can correct $s$ random erasures if erasing a random subset of $n-s$ rows in $G$ (which gives the random matrix $G_{n-s,\cdot}$) has full span, i.e., 
\begin{align*}
\Pr\{  \rank (G_{n-s,\cdot})=k  \} \to 1, \quad \text{as } n \to \infty. \quad 
%\Pr_D\{ \sum_{j \in D} H[j] =0 \} \to 0, \quad \text{as } n \to \infty, \quad 
\end{align*}
Hence, since the transpose of $E(m,r)$ is an $n \times k$ generator matrix for $RM(m,r)$, we get our result for low-rate RM codes on the BEC as a direct consequence of Theorem \ref{thm:random_span_using_KL}.

%Theorem \ref{thm:random_span_using_KL} and Corollary~\ref{erasure-G} imply our second result for the performance of low rate RM codes over the BEC. 

\begin{corol}\label{corol_low_rate_BEC}
%There exists a constant $\eta>0$ such that the following holds. 
Let $0<\delta<1/3$, $\eta = O(1/\log(1/\delta))$ and $m,r$ integers such that $r\leq\eta{m}$.  
Then, $RM(m,r)$ can correct $2^m-(1+\delta){m\choose \leq r}$ random erasures, i.e., it is $\delta$-close to capacity-achieving.
Moreover, if $r=o(m)$, then $RM(m,r)$ is capacity-achieving on the BEC. 
\end{corol}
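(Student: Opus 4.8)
The plan is to read Corollary~\ref{corol_low_rate_BEC} off Theorem~\ref{thm:random_span_using_KL} almost verbatim, inserting one small bridging step to pass from sampling with replacement to sampling without replacement. Recall from Corollary~\ref{erasure-G} that, since the transpose of $E(m,r)$ is an $n\times k$ generator matrix of $RM(m,r)$ with $k={m\choose\leq r}$, the code corrects $s$ random erasures if and only if a uniformly random choice of the $n-s$ surviving coordinates --- equivalently, a uniformly random $(n-s)$-subset $T$ of the \emph{columns} of $E(m,r)$ --- satisfies that $E(m,r)[T]$ has full row rank $k$, i.e.\ its columns span $\F_2^{{m\choose\leq r}}$. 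So I would set $s'\triangleq n-s=\lceil(1+\delta){m\choose\leq r}\rceil$, which is exactly the number of points occurring in Theorem~\ref{thm:random_span_using_KL}, and it remains to show that a uniformly random $s'$-subset of $\F_2^m$ has spanning degree-$\leq r$ evaluation vectors with probability $1-\exp\!\big(-\Omega(\min(\delta,2^{-r}){m\choose\leq r})\big)$.

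Theorem~\ref{thm:random_span_using_KL} gives precisely this statement when the $s'$ points are drawn i.i.d.\ uniformly (with replacement). To convert to a uniform $s'$-subset I would exploit the monotonicity of the spanning event: if a set $W$ of columns spans $\F_2^{{m\choose\leq r}}$, so does every superset of $W$. Let $q_d$ denote the probability that a uniformly random $d$-subset of $\F_2^m$ spans; then $q_d$ is nondecreasing in $d$. Conditioning an i.i.d.\ $s'$-sample on its (unordered) set of distinct values, that set is a uniform random $D$-subset for a random $D\le s'$, by exchangeability; hence $\Pr[\text{i.i.d.\ sample spans}]=\E[q_D]\le q_{s'}$, so $q_{s'}$ is at least the probability guaranteed by Theorem~\ref{thm:random_span_using_KL}. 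Since $q_{s'}$ is exactly $\Pr[RM(m,r)\text{ corrects }n-s'\text{ erasures}]$, this is all we need. (One could instead estimate the collision probability of the $s'$ i.i.d.\ points directly, but the monotonicity argument avoids controlling collisions, which can be abundant once $s'=\Theta({m\choose\leq r})$ exceeds $\sqrt n=2^{m/2}$.)

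This yields, for $0<\delta<1/3$ and $r\le\eta m$ with $\eta=O(1/\log(1/\delta))$, that $RM(m,r)$ corrects $2^m-(1+\delta){m\choose\leq r}$ random erasures. Since the BEC capacity at rate $R=k/n$ corresponds to correcting $n-k$ erasures, correcting $n-(1+\delta)k$ of them gives a corruption probability $p=1-(1+\delta)R$, which satisfies the low-rate table bound $p\ge 1-R(1+\e)$ with $\e=\delta$; hence the code is $\delta$-close to capacity. For the last assertion, fix any $\delta>0$: because $r=o(m)$, for all large $m$ we have $r\le\eta(\delta)m$, so the previous line applies, and it only remains to check that the failure probability tends to $0$, i.e.\ that $\min(\delta,2^{-r}){m\choose\leq r}\to\infty$. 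This is immediate --- if $r$ is bounded then ${m\choose\leq r}=\Theta(m^r)\to\infty$ while $2^{-r}$ is a positive constant, and if $r\to\infty$ with $r=o(m)$ then $2^{-r}{m\choose\leq r}\ge(m/2r)^r\to\infty$ --- so $RM(m,r)$ is $\delta$-close to capacity for every $\delta>0$, i.e.\ capacity-achieving on the BEC.

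I do not expect a genuine obstacle here: the content is essentially packaged in Theorem~\ref{thm:random_span_using_KL} together with Corollary~\ref{erasure-G}. The only two points that need care are the with/without-replacement reduction via monotonicity described above, and the elementary verification that $\min(\delta,2^{-r}){m\choose\leq r}\to\infty$ uniformly across the regimes ``$r$ bounded'' and ``$r\to\infty$'' when $r=o(m)$.
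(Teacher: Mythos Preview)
Your proposal is correct and follows essentially the same approach as the paper, which simply invokes Corollary~\ref{erasure-G} and Theorem~\ref{thm:random_span_using_KL} without further comment. Your added with/without-replacement bridging step via monotonicity is a detail the paper glosses over (Theorem~\ref{thm:random_span_using_KL} is proved for i.i.d.\ points while the erasure model samples a uniform subset), so your treatment is in fact slightly more careful than the original.
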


%\Enote{Made the above more precise. I think there was also a sign typo in the previous statement, see if you agree with the $+\delta$.}

%\begin{proof}
%Let $E$ denotes the location of the $k$ erasures, uniformly drawn in $[n,k]=\{ S \subseteq [n]: |S|=k\}$. Note that the set of bad erasure patterns are the ones that can confuse codewords in $RM(m,m-r)$, i.e., 
%\begin{align*}
%\mathcal{B}(k,m,r)&:=\{E \in [n,k]: \exists x,y \in RM(m,m-r), x \neq y, x[E^c]=y[E^c]\}\\
%&= \{E \in [n,k]: \exists z \in RM(m,m-r), z \neq 0, z[E^c]=0\}\\
%&= \{E \in [n,k]: \exists z \in \ker H(m,r), z \neq 0, z[E^c]=0\}, \label{dist1}
%%&= \{z \in B(n,k): \exists z \in \ker H(m,r), z \neq 0\},
%\end{align*}
%where we used the fact that the code is linear in the second equality. From \eqref{dist1}, a bad error pattern $E$ is one containing a non-empty subset $S$ such that $\sum_{j \in S} H(m,r)[\cdot,j]=0$. Hence, for $E \in \mathcal{B}(k,m,r)$, it must be that $H(m,r)[E]$ has linearly dependent columns, and conversely, if $E \in [n,k]$ is such that $H(m,r)[E]$ has linearly dependent columns, it must be that $E \in  \mathcal{B}(k,m,r)$.
%Therefore,
%\begin{align*}
%\mathcal{B}(k,m,r)&= \{E \in [n,k]: H(m,r)[E] \text{ has linearly dependent columns} \}, 
%\end{align*}
%which has probability at most $\e$ for $k=\lfloor {m-3 \log({m\choose \leq r})+ \log(\e) \choose r}\rfloor -1$ from Theorem \ref{thm: lin ind eval vectors}.  
%\end{proof}

\subsection{High-rate regime}

%To use Corollary~\ref{erasure-G}, as in the case of low degree, we first  show that due to the duality of RM codes, an answer to Question~\ref{question:columns} implies an answer to Question~\ref{question:rows} in the high-rate regime.
%these two questions can be related to each other, however by studying different regimes of the parameters.  

%\Amnote{We should think what to write here in light of the results in section 3.4}

We now use the parity-check matrix interpretation of correcting errors, namely from Corollary~\ref{erasure-H}, a code with parity-check matrix $H$ can correct $s$ errors if the random set of $s$ columns (which we denote $H[s]$) are linearly independent with high probability, i.e., 
\begin{align*}
\Pr\{ \mathrm{rk} (H[s])=s \} \to 1, \quad \text{as } n \to \infty. \quad 
%\Pr_D\{ \sum_{j \in D} H[j] =0 \} \to 0, \quad \text{as } n \to \infty, \quad 
\end{align*}
Using now Theorem~\ref{thm:lin_ind_eval_vectors} and the fact that $E(m,r)$ is a parity-check matrix for $RM(m,m-r-1)$ (see Lemma \ref{duality}), we obtain our result for the performance of high-rate RM codes on the BEC. 
\begin{corol}\label{corol:RM_for_BEC}
Let $\e>0$, $r \leq m$ be two positive integers and $s= \lfloor{m- \log({m\choose \leq r}) - \log(1/\e) \choose \leq r}\rfloor$.
Then, $RM(m,m-r)$ can correct $s$ random erasures with probability larger than $1-\e$.  
In particular, if $m-r = o(\sqrt{m/\log m})$, then $RM(m,r)$ is capacity-achieving on the BEC. 
\end{corol}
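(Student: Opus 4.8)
The plan is to deduce this corollary directly from the erasure/parity-check dictionary together with Theorem~\ref{thm:lin_ind_eval_vectors}, leaving only a routine binomial estimate for the ``capacity-achieving'' clause. By Lemma~\ref{duality} the matrix $E(m,r)$ is a parity-check matrix for the relevant high-rate Reed--Muller code (the one of order $m-r-1$, which we relabel below to match the statement's index convention). Hence, by Corollary~\ref{erasure-H}, that code corrects $s$ random erasures precisely when a uniformly random set of $s$ columns of $E(m,r)$ has full column rank with probability tending to $1$. Theorem~\ref{thm:lin_ind_eval_vectors} is exactly this statement with an explicit threshold: for every $\e>0$ and every
$$s< {m-\log{m\choose \leq r}-\log(1/\e)\choose \leq r},$$
a random $s$-subset of the columns of $E(m,r)$ is linearly independent except with probability at most $\e$. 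Plugging $s=\lfloor{m-\log{m\choose \leq r}-\log(1/\e)\choose \leq r}\rfloor$ into Corollary~\ref{erasure-H} yields the first assertion. One could equally well route this through Corollary~\ref{cor:equivalence}, which swaps full column rank of $E(m,d)$ on a random $s$-set for full row rank of $E(m,m-d-1)$ on the complement, but the parity-check formulation is the most economical.

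For the ``in particular'' clause, observe that the code in question has co-dimension equal to the number of rows of $E(m,r)$, namely ${m\choose \leq r}$, so its rate is $R=1-{m\choose \leq r}/2^m$ and $1-R={m\choose \leq r}/2^m$. By the high-rate row of the capacity table in Section~\ref{sec:notation}, it suffices to exhibit, for each fixed $\e>0$, a value $s\geq (1-\e)(1-R)2^m=(1-\e){m\choose \leq r}$ for which the code corrects $s$ random erasures with failure probability $o(1)$. Thus the remaining task is the elementary inequality
$$ {m-\lceil\log{m\choose \leq r}+\log(1/\e)\rceil\choose \leq r}\ \geq\ (1-\e){m\choose \leq r} $$
in the regime where the governing degree (the small one, equal to $m$ minus the order) is $o(\sqrt{m/\log m})$. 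Since $\log{m\choose \leq r}=O(r\log m)$ and ${m-a\choose \leq r}\geq {m\choose \leq r}\bigl(1-O(ar/m)\bigr)$, the multiplicative loss is $1-O(r^2\log m/m)=1-o(1)$ exactly when $r=o(\sqrt{m/\log m})$; letting $\e=\e(m)\to 0$ slowly (say $\e=1/\log m$) then makes both the loss and the failure probability $\e$ vanish together, which is what ``capacity-achieving'' requires.

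The substantive work is entirely upstream: Theorem~\ref{thm:lin_ind_eval_vectors} (resting on Wei's computation of generalized Hamming weights, Theorem~\ref{thm:GHW}, via Lemma~\ref{lem:number_of_ind_polynomials}) is where the effort lies, so this corollary carries no genuine obstacle of its own. The two points that deserve a line of care are purely bookkeeping: matching the explicit threshold on $s$ with the precise $\e$-closeness definition in the high-rate regime (i.e., verifying that replacing $2^m$ by the displayed binomial costs only a $1-o(1)$ factor, which is the estimate above), and tracking the $\pm1$ convention in the order of the Reed--Muller code so that $E(m,r)$ is indeed the parity-check matrix of the code named in the statement. I would present the clean finite-$m$ bound (first sentence) exactly as stated and then pass to the limit for the capacity claim.
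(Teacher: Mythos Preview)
Your proposal is correct and follows essentially the same route as the paper: invoke Lemma~\ref{duality} so that $E(m,r)$ serves as a parity-check matrix, apply Corollary~\ref{erasure-H} to reduce erasure-correction to column independence, and then feed in Theorem~\ref{thm:lin_ind_eval_vectors}; the capacity clause is handled via the same binomial estimate that the paper isolates as Claim~\ref{cla: estimation r small}. Your remark about the $\pm 1$ indexing discrepancy between $RM(m,m-r)$ and the code with parity-check matrix $E(m,r)$ is well-placed and is the only bookkeeping point worth flagging.
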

%\Enote{Changed the above.}

The following calculation gives a better sense of the parameters (the proof is in Section \ref{app:missing-proofs}).
\begin{claim}\label{cla: estimation r small}
For $r < \sqrt{\frac{\delta m}{4\log(m)}}$ and $\e > m^{-r/2}$ we have that ${[m- \log({m\choose \leq r})- \log(1/\e) \choose \leq r} > (1-\delta){m \choose \leq r}$.
\end{claim}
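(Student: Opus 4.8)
The plan is to show that the binomial coefficient $\binom{m'}{\le r}$, with $m' = m - \log\binom{m}{\le r} - \log(1/\e)$, loses only a $(1-\delta)$ factor relative to $\binom{m}{\le r}$ by controlling the ratio $\binom{m'}{\le r}/\binom{m}{\le r}$ when $m-m'$ is small compared to $m/r$. First I would observe that since $r \le m/2$ say, we have the crude bounds $\binom{m}{r} \le \binom{m}{\le r} \le (m+1)\binom{m}{r} \le m^r$, so $\log\binom{m}{\le r} \le r\log m$; together with the hypothesis $\e > m^{-r/2}$ giving $\log(1/\e) < (r/2)\log m$, this yields $m - m' = \log\binom{m}{\le r} + \log(1/\e) \le (3r/2)\log m$. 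Thus it suffices to prove that dropping any $\Delta := \lceil (3r/2)\log m\rceil$ variables costs at most a factor $(1-\delta)$.

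Next I would bound the ratio term by term. For each $i$ with $0 \le i \le r$,
\[
\frac{\binom{m-\Delta}{i}}{\binom{m}{i}} \;=\; \prod_{j=0}^{i-1}\frac{m-\Delta-j}{m-j} \;\ge\; \left(\frac{m-\Delta-r}{m-r}\right)^{r} \;=\; \left(1 - \frac{\Delta}{m-r}\right)^{r}.
\]
Since the ratio of the full sums $\binom{m-\Delta}{\le r}/\binom{m}{\le r}$ is a weighted average of these per-layer ratios (each layer $\binom{m-\Delta}{i}$ against $\binom{m}{i}$, with nonnegative weights), it is also at least $\left(1 - \frac{\Delta}{m-r}\right)^{r}$. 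Using $(1-x)^r \ge 1 - rx$ for $x \in [0,1]$, it then suffices to show $r\Delta/(m-r) \le \delta$. With $\Delta \le 2r\log m$ (absorbing the constant $3/2$ and the ceiling into a clean bound, valid for $m$ large) and $m - r \ge m/2$, this reduces to $4r^2\log m / m \le \delta$, i.e. $r < \sqrt{\delta m/(4\log m)}$, which is exactly the hypothesis. Finally, $\lfloor\binom{m'}{\le r}\rfloor$ differs from $\binom{m'}{\le r}$ by less than $1$, which is negligible, completing the bound $\lfloor\binom{m'}{\le r}\rfloor > (1-\delta)\binom{m}{\le r}$.

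The only mild obstacle is bookkeeping: making sure the constants in "$\Delta \le \text{(const)}\cdot r\log m$", the $(1-x)^r \ge 1-rx$ linearization, and the $m-r \ge m/2$ simplification all compose to give precisely the constant $4$ in the hypothesis, rather than something slightly worse. This is routine and can be handled either by being slightly generous with the stated constant or by noting the claim is asymptotic in $m$ so lower-order terms (the ceiling, the "$+1$" in $\log\binom{m}{\le r} \le \log((m{+}1)\binom{m}{r})$, the floor on the final quantity) are absorbed. No genuinely hard step is involved; the content is entirely the elementary estimate $\binom{m-\Delta}{\le r} \ge (1-\Delta/(m-r))^r\binom{m}{\le r}$.
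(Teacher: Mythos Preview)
Your proposal is correct and follows essentially the same approach as the paper's proof: both bound $m-m'$ by $O(r\log m)$ using $\binom{m}{\le r}\le m^{O(r)}$ and the hypothesis on $\e$, then lower-bound each layer ratio $\binom{m'}{i}/\binom{m}{i}$ by $(1-O(r\log m)/m)^{r}$ and linearize to get the factor $1-\delta$ from the hypothesis $r^2 < \delta m/(4\log m)$. The only cosmetic differences are that the paper writes the per-layer bound as $\bigl(\tfrac{A-B}{C}\bigr)^B$ and sums over layers explicitly, whereas you phrase the summation step as a weighted-average observation and apply Bernoulli's inequality at the end; neither changes the substance.
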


%%%%%%%%%%%%%%%%%%%%%%%%
%%%%%%%%%%%%%%%%%%%%%%%%
%%%%%%%%%%%%%%%%%%%%%%%%

\section{Reed-Muller code for errors}\label{sec:BSC-proofs}
We present next results for errors at both low and high rate. The results at low-rate rely on the weight distribution results of Section~\ref{sec:wt-dist}, whereas the high-rate results rely on a novel relation between decoding from errors and decoding from erasures.

\subsection{Low-rate regime}

%\Enote{use current bound on the weight distribution}
%\Enote{Perhaps also write explicitly the dependence of $\eta$ on $\delta$ in the statement of the theorem as I did in Theorem~\ref{thm:random_span_using_KL}}

\begin{thm}\label{thm:errors_from_weights}
%There exists a constant $\eta>0$ such that the following holds. 
Let $\delta>0$. There exists\footnote{The exact value of $\eta$ is given in \eqref{eq:eta7}.} $\eta = O(1/\log(1/\delta))$ such that the following holds. 
For any two integers $r$ and $m$ satisfying $r/m\leq\eta$, and any $p$ satisfying  
\begin{align*}
1-h(p)=(1+\delta)R, \quad \text{where } R=\frac{{m\choose \leq r}}{n},
\end{align*}
$RM(m,r)$ can correct $pn$ random errors with probability at least $\exp\left(-\Omega\left( \min(\delta,2^{-r}) \cdot {m\choose \leq r}\right)\right)$
%\begin{align*}
%1- \exp\left(-\min\left(\Omega \left(2^{-r} {m\choose \leq r} \right),\Omega\left(\delta {m\choose \leq r}\right)\right)\right) .
%\end{align*}
In particular, for $r=o(m)$, $RM(m,r)$ is capacity-achieving on the BSC.  
\end{thm}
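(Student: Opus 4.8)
The plan is to run the classical union-bound argument that a random linear code achieves BSC capacity, but to feed in the Reed--Muller weight enumerator bound of Theorem~\ref{thm:wt-dist} wherever the random-code weight profile would normally be used. By the ML criterion recalled in Section~\ref{prob-erasure}, a uniformly random weight-$s$ error pattern $\1_U$ (with $s=pn$) is \emph{bad} exactly when there is a nonzero codeword $c\in RM(m,r)$ with $\1_U+c$ also of weight $s$; writing $c=\1_U+\1_V$, this forces $w(c)$ to be even, $w(c)\le 2s$, and $\1_U$ to have precisely $w(c)/2$ ones inside $\supp(c)$. For a fixed codeword $c$ of weight $w$, the probability of this event over $\1_U\in\partial B(n,s)$ is exactly ${w\choose w/2}{n-w\choose s-w/2}/{n\choose s}$, so
$$\Pr[\1_U\text{ bad}]\ \le\ \sum_{\substack{c\in RM(m,r)\setminus\{0\}\\ w(c)\le 2s,\ w(c)\text{ even}}}\ \frac{{w(c)\choose w(c)/2}{n-w(c)\choose s-w(c)/2}}{{n\choose s}}.$$

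I would then split the sum according to dyadic weight intervals, in exact parallel with the proof of Theorem~\ref{thm:random_span_using_KL}. Setting $\e=\Theta(\delta)$, for $1\le\ell\le r-1$ let $P_\ell$ be the set of codewords of relative weight in $\big((1-\e)2^{-\ell-1},(1-\e)2^{-\ell}\big]$, and treat separately the bulk of codewords of relative weight above $(1-\e)/2$. A short entropy computation, using $h(p)=1-(1+\delta){m\choose\le r}/n$ together with the fact that $h'(1/2)=0$ (so the relevant exponent vanishes to second order as $p\to 1/2$), shows that for relative weight $\beta=w/n$ one has ${w\choose w/2}{n-w\choose s-w/2}/{n\choose s}\le \exp\!\big(-\Omega\big(\tfrac{\beta}{1-\beta}{m\choose\le r}\big)\big)$. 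For the bulk this is $\le 2^{-(1+\delta-o(1)){m\choose\le r}}$, and multiplying by the trivial count $2^{{m\choose\le r}}$ of all codewords gives a contribution at most $\exp(-\Omega(\delta{m\choose\le r}))$. For each $\ell\ge 1$, the per-codeword bound in $P_\ell$ is $\exp(-\Omega(2^{-\ell}{m\choose\le r}))$, while Theorem~\ref{thm:wt-dist} gives $|P_\ell|\le W_{m,r}((1-\e)2^{-\ell})\le(1/\e)^{O(\ell^4{m-\ell\choose\le r-\ell})}$; using ${m-\ell\choose\le r-\ell}\le{m\choose\le r}(r/m)^\ell$ and $r/m\le\eta=O(1/\log(1/\delta))$, the enumerator growth is swamped and the $\ell$-block contributes $\exp(-\Omega(2^{-\ell}{m\choose\le r}))$. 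Summing the $O(m)$ blocks yields $\Pr[\1_U\text{ bad}]\le\exp(-\Omega(\min(\delta,2^{-r}){m\choose\le r}))$, which is the stated bound. Finally, for $r=o(m)$ the hypothesis $r/m\le\eta$ holds eventually for every fixed $\delta>0$, while $\delta$ may be taken arbitrarily small, so $RM(m,r)$ is $\delta$-close to BSC capacity for all $\delta$, i.e.\ capacity-achieving.

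I expect the main obstacle to be the same delicate balancing that powers Theorem~\ref{thm:random_span_using_KL}: checking that for small $\ell$ (the bottleneck being $\ell=1$, and then $\ell=2$) the enumerator factor $(1/\e)^{O(\ell^4{m-\ell\choose\le r-\ell})}$ is genuinely dominated by the decay $\exp(-\Omega(2^{-\ell}{m\choose\le r}))$. This is precisely where the improved exponent $\ell^4$ of Theorem~\ref{thm:wt-dist} (rather than the $r^2$ of \cite{KLP}) is indispensable, and where the quantitative form $\eta=O(1/\log(1/\delta))$ is forced: one needs roughly $\ell^4\log(1/\delta)(2r/m)^\ell$ bounded by a constant for every $\ell\ge 1$. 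A secondary, purely technical point is making the entropy estimate at weights near $n/2$ fully rigorous, so that the $\delta$-slack in $1-h(p)=(1+\delta)R$ is converted into an honest exponentially small bound; this is routine but must be handled with care since $p\to 1/2$ as $R\to 0$.
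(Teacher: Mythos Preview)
Your proposal is correct and follows essentially the same approach as the paper's own proof: the same union bound over nonzero codewords with the per-codeword probability ${w\choose w/2}{n-w\choose s-w/2}/{n\choose s}$, the same dyadic partition into the sets $P_\ell$ handled via Theorem~\ref{thm:wt-dist} together with ${m-\ell\choose\le r-\ell}\le{m\choose\le r}(r/m)^\ell$, and the same separate treatment of the bulk using the trivial bound $2^{{m\choose\le r}}$. You have also correctly identified that the bottleneck is at $\ell=1$ and that this is what forces $\eta=O(1/\log(1/\delta))$; the paper makes exactly this observation and records the resulting constant as~\eqref{eq:eta7}.
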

%Note that for $r/m=\eta$, the above is $1-\exp\left( -\min(\Omega( n^{h(\eta)-\eta} ),\Omega( \delta n^{h(\eta)} ))\right)$, which tends to 1 since $h(\eta)> \eta$ for small $\eta$.  

Before giving the proof we make a small calculation to get a better sense of what parameters we should expect. Since $R$ is small we can expect to correct a fraction of errors approaching $1/2$. Let us denote $p= (1-\xi)/2$. We now wish to figure how small should $\xi$ be. At corruption rate close to $1/2$ we have that 
\begin{align}
h(p)=h(1/2 - \xi/2) = 1- \xi^2/(2 \ln(2)) + \Theta(\xi^4). \label{ent-approx}
 \end{align}
Thus, if we wish to have $(1+\delta)R=1-h(p)$ then $\xi$ should satisfy
\begin{align}
(1+\delta)R= 1-h(p)= \xi^2/(2 \ln(2)) - \Theta(\xi^4). \label{eq:xi}
 \end{align}
 Hence, $\xi^2 =\Theta(R)$. 

We now give the proof following the outline described in Section~\ref{sec:techniques}.

\begin{proof}
Let $s$ be the number of errors, i.e., $s=pn$ and $p=1/2 - \xi/2$. 
%We can assume that $2s \geq d=2^{m-r}$, i.e., the number of errors is more than half the minimum distance of the code. 
%Since  
%\begin{align}
%H(1/2 - \xi/2) = 1- \xi^2/(2 \ln(2)) + O(\xi^4), \label{ent-approx}
% \end{align}
%we expect to recover a fraction of errors scaling as $p=1/2-\xi(n)/2$ where $\xi(n)$ tends to zero at fastest as 
%\begin{align}
%\frac{n\xi(n)^2}{2 \ln(2)} = (1+\delta) {m\choose \leq r},
%\end{align}
%for $\delta >0$.  
A bad error pattern $z \in \F_2^n$ is one for which there exists another error pattern $z' \in \F_2^n$, of weight $s$, such that $z+z'$ is a codeword in $RM(m,r)$. 
We concentrate on the case that $w(z')=s$ as this is the most interesting case. 

%\Enote{Removed 'We shall concentrate on the case that $w(z')=s$ as this is the most interesting case' since the model is for fixed amounts of errors}

Note that since both $z$ and $z'$ are different and have the same weight, the weight of $z+z'$ must be even and in $\{d,\dots,2s\}$. As both $z+z'$ and the all $1$ vector  are codewords, we also have that the weight of $z+z'$ is at most $n-d$, hence $w(z+z')\in \{d,\dots,n-d\}$.
% since $\xi$ is taken small.  
 Therefore, counting the number of bad error patterns is equivalent to counting the number of weight $s$ vectors that can be obtained by ``splitting'' codewords of even weight in $\{d,\dots,n-d\}$. 
Note that for a codeword $y$ of weight $w(y)=w$, there are $w/2$ choices for the support of $z$ inside $\supp(y)$ and $s-w/2$ choices outside the codeword's support. Indeed, $z$ and $z'$ must cancel each other outside the support of $y$ and hence they have the same weight inside $\supp(y)$.\footnote{When $z$ and $z'$ do not have the same weight they still have to cancel each other outside $y$. Thus, $\supp(z)$ must have intersection $w/2 + (w(z)-w(z'))/2$ with $\supp(y)$.} It follows, that for a fixed $y$ there are  
 \begin{align*} 
 {w \choose w/2}{n-w \choose s-w/2}
 \end{align*}
possibilities to pick a bad error pattern with intersection $w/2$ with $\supp(y)$. Denoting by $\mathcal{B}$ the set of bad error patterns and $N_{m,r}(w)$ the number of codewords of weight $w$ in $RM(m,r)$, a union bound gives
 \begin{align*}
 \Pr\{\mathcal{B}\} &\leq \sum_{w \in \{d,\dots,n-d\}} N_{m,r}(w) \frac{{w \choose w/2}{n-w \choose s-w/2}}{{n \choose s}} . \end{align*}
 We are now going to prove that $\Pr[{\mathcal B}]$ is exponentially small, for our setting of parameters, which will imply the theorem.
Since for\footnote{The binomial coefficient should be defined for the rounding of $\alpha n$ with either ceiling or floor functions.}
 $\alpha \in (0,1)$, 
\begin{align*}
 2^{n h(\alpha) - O(\log(n))} \leq {n \choose \alpha n} \leq  2^{n h(\alpha)} ,
\end{align*}
and recalling the entropy approximation of \eqref{ent-approx}, we have, by defining $\beta=w/n$,     
 \begin{align*}
 \Pr\{\mathcal{B}\} &\leq \sum_{\beta \in \{d/n,\dots,1-d/n\}} N_{m,r}(\beta n) \frac{2^{\beta n} 2^{n \left(1-\beta  - \frac{\xi^2}{(1-\beta )2 \ln(2)} + O\left(\frac{\xi^4}{(1-\beta )^2}\right) \right)} }{2^{n\left(1-\frac{\xi^2}{2 \ln(2)}\right) - O(\log n)}} \\ 
 &=\sum_{\beta  \in \{d/n,\dots,1-d/n\}} N_{m,r}(\beta n) 2^{-\frac{n \xi^2 }{2 \ln(2)}  \frac{\beta }{(1-\beta )} +nO\left(\frac{\xi^4}{(1-\beta )^2}\right) + O(\log n)}. 
 \end{align*}
Let $\e= \delta/3$.  
We next upper bound the above summation by grouping codewords of weights between $(1-\e)2^{-\ell-1}$ and  $(1-\e) 2^{-\ell}$, with $\ell \in \{1,2,\dots,r-1\}$. 
%For $\ell \in \{1,2,\dots, r-1 \}$, we use Theorem \ref{}. 
For codewords of weight close to $1/2$, we use the fact that there are $2^{{m \choose \leq r}}$ codewords in $RM(m,r)$.
Since the function $\beta \to \beta /(1-\beta )$ is increasing, we obtain the following bound where $|P_\ell|$ is, as before, the number of codewords having weights between between $(1-\e)2^{-\ell-1}$ and  $(1-\e) 2^{-\ell}$:
 \begin{align}
 \Pr\{\mathcal{B}\}  &\leq \sum_{\ell \in \{1,2,\dots,r-1\}} |P_{\ell}| 2^{-\frac{n \xi^2}{2 \ln(2)}  \frac{(1-\e) 2^{-\ell-1}}{(1-(1-\e) 2^{-\ell-1})}+nO\left(\frac{\xi^4}{(1-(1-\e) 2^{-\ell-1})^2}\right) + O(\log n)} \label{half1} \\
 & \hspace{0.8in} + 2^{{m \choose \leq r}} 2^{-\frac{n \xi^2}{2 \ln(2)}  \frac{(1-\e)/2}{(1-(1-\e)/2)}+ nO\left(\xi^4/(1+\e)^2\right)+ O(\log n)} . \label{half2}
 \end{align}
Using the inequality of Theorem~\ref{thm:wt-dist}:   
\begin{align*}
|P_\ell| \leq  (1/\e)^{8c \ell^4 {m-\ell \choose \leq r-\ell} },
\end{align*}
and noting that by \eqref{eq:xi},  $\frac{\xi(n)^2}{2 \ln(2)} = (1+\delta)R + O(\xi^4) = (1+\delta)R+\Theta(R^2)$, 
the term in \eqref{half2} is bounded as
 \begin{align*}
 2^{-{m \choose \leq r}\left((1+\delta) \frac{(1-\e)/2}{1-(1-\e)/2} -1+\Theta(R)\right)+ O(m)}= 2^{-{m \choose \leq r}\left((1+\delta) \frac{1-\e}{1+\e} -1+\Theta(R)\right)+ O(m)} ,
 \end{align*}
which vanishes since $\e <\delta/2$ (recalling that $R=o(1)$) and is overall $2^{-\Omega\left(\delta {m\choose \leq r}\right)}$. 

For the summands in \eqref{half1}, they are bounded as 
 \begin{align*}
& (1/\e)^{8c \ell^4 {m-\ell \choose \leq r-\ell} } 2^{ -  (1+\delta) {m\choose \leq r}  \frac{(1-\e) 2^{-\ell-1}}{1-(1-\e) 2^{-\ell-1}} + \Theta(R) +nO(\frac{\xi^4}{(1-(1-\e) 2^{-\ell-1})^2})+ O(m)} 
 \end{align*}
 where the above exponent is upper bounded
\begin{align}
-{m\choose \leq r} \left(   (1+\delta) \frac{(1-\e) 2^{-\ell-1}}{1-(1-\e) 2^{-\ell-1}} +\Theta(R) - 8c \log(1/\e){\ell^4}\cdot  \left(\frac{r}{m}\right)^\ell  \right)+ O(m). \label{factor}
 \end{align}
Note that the second bracket in this exponent is given by  
 \begin{align}
%\begin{equation} \label{expo}
&(1+\delta) \frac{(1-\e) 2^{-\ell-1}}{(1-(1-\e) 2^{-\ell-1})} - 8c \log(1/\e){\ell^4}\cdot  \left(\frac{r}{m}\right)^\ell \\
%\end{equation}
 &= (1-\e) 2^{-\ell-1} \left(\frac{1+\delta}{1-(1-\e) 2^{-\ell-1}} -  16c  \frac{\log(1/\e)}{1-\e}{\ell^4}\cdot  \left(\frac{2r}{m}\right)^\ell \right), \label{expo}
 \end{align}
which is positive for $r/m < \frac{1+\delta}{8c\frac{3 +\e}{1-\e}\log(1/\e)}$, and since $\e < \delta/2$, it is equal to $\Omega\left( (1-\e) 2^{-\ell-1}\right)$ for 
\begin{equation}\label{eq:eta7}
r/m \leq \eta=\frac{1+\delta}{9c\frac{3 +\delta/2}{1-\delta/2}\log(2/\delta)}.
\end{equation}
Note that this condition on $r/m$ is obtained from the extremal case $\ell=1$, which minimizes the term $\left(\frac{1+\delta}{1-(1-\e) 2^{-\ell-1}} -  16c  \frac{\log(1/\e)}{1-\e}{\ell^4}\cdot  \left(\frac{2r}{m}\right)^\ell\right)$ in \eqref{expo}.
Thus,  $$\eqref{factor} =-\Omega\left( (1-\e) 2^{-\ell-1}{m\choose \leq r}\right) \leq -\Omega\left(2^{-r}{m\choose \leq r}\right).$$
%For $\ell=r-1$, the term $20c{\ell^2}\cdot  \left(\frac{2r}{m}\right)^\ell \frac{\log(1/\e)}{1-\e}$ vanishes in \eqref{expo}, so that \eqref{expo} becomes $O(2^{-r})$, and \eqref{factor} becomes $\left(-\Omega({m\choose \leq r} 2^{-r})\right)$. Finally, since there are  $r-1$ terms in the summation \eqref{half1}, the overall sum is $2^{-\Omega({m\choose \leq r} 2^{-r})}$, and 
Concluding, the overall probability $\Pr\{\mathcal{B}\}$ is bounded as
\begin{align*}
\Pr\{\mathcal{B}\} &\leq 2^{-\Omega\left({m\choose \leq r} 2^{-r}\right)} + 2^{-\Omega\left(\delta {m\choose \leq r}\right)} = 2^{-\Omega\left(\min(\delta,2^{-r}) {m\choose \leq r}\right)}.  
\end{align*}
\end{proof}

\subsection{High-rate regime}

In this section we prove our main result for the BSC in the high rate regime.

\begin{thm}\label{thm:main_for_BSC}
Let $\e>0$, $r \leq m$ two positive integers and $s=\lfloor {m- \log({m\choose \leq r})+ \log(\e) \choose \leq r}\rfloor -1$.
Then $RM(m,m-(2r+2))$ can correct a random error pattern of weight $s$ with probability larger than $1-\e$.  
\end{thm}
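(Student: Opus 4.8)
The plan is to combine Theorem~\ref{thm:lin_ind_eval_vectors} (which controls random erasure patterns in $E(m,r)$) with Theorem~\ref{thm:intro:erasures_to_errors} / Theorem~\ref{thm:erasures_to_errors} (the erasures-to-errors reduction), the point being that these two results compose exactly to give the claimed statement. First I would recall the parameter chosen: $s = \lfloor {m - \log({m\choose \leq r}) + \log(\e)\choose \leq r}\rfloor - 1$, which is strictly below the threshold ${m - \log({m\choose \leq r}) - \log(1/\e)\choose \leq r}$ appearing in Theorem~\ref{thm:lin_ind_eval_vectors} (note $\log(\e) = -\log(1/\e)$, so this is the same quantity, and we subtract $1$ to be safely below it). Hence by Theorem~\ref{thm:lin_ind_eval_vectors}, if we draw $u_1,\dots,u_s \in \F_2^m$ uniformly at random and let $U = \{u_1,\dots,u_s\}$, then with probability at least $1-\e$ the evaluation vectors $u_1^r,\dots,u_s^r$ are linearly independent, i.e.\ the columns of $E(m,r)$ indexed by $U$ are linearly independent.

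The second and main step is to invoke Theorem~\ref{thm:erasures_to_errors} (stated in the introduction as Theorem~\ref{thm:intro:erasures_to_errors}): whenever a set of columns $U$ is linearly independent in $E(m,r)$, the error pattern $\1_U$ is uniquely decodable in $RM(m,m-(2r+2))$. Applying this conditionally on the good event from the previous paragraph, we conclude that with probability at least $1-\e$ over the random choice of $U$, the random error pattern $\1_U$ is uniquely decodable in $RM(m,m-(2r+2))$. The only remaining bookkeeping point is to match ``random set $U$ of size $s$'' with ``random error pattern of weight $s$'': a uniformly random $s$-subset of $\F_2^m$ is exactly a uniformly random element of $\partial B(n,s)$ with $n = 2^m$, which is the model of $s$ random errors in our Definition (modulo the standard Chernoff comparison between the uniform and i.i.d.\ models, Lemma~\ref{equiv-models}, which we do not even need here since the statement is phrased directly for weight-$s$ patterns). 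Strictly speaking, $u_1,\dots,u_s$ drawn uniformly and independently might collide; but on the event that their evaluation vectors are linearly independent they are in particular distinct, so conditioning on the good event we really do get a uniform $s$-subset, and the claimed failure probability $\e$ is an upper bound either way.

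The argument is essentially a two-line composition once the ingredients are in place, so there is no genuine obstacle internal to this proof; the real work is entirely in Theorems~\ref{thm:lin_ind_eval_vectors} and \ref{thm:erasures_to_errors}, proved earlier. The one subtlety worth a sentence in the write-up is the identification of the parameter $s$ here with the threshold in Theorem~\ref{thm:lin_ind_eval_vectors}: one should check the floors and the $\pm 1$ line up so that $s < {m - \log({m\choose \leq r}) - \log(1/\e)\choose \leq r}$ holds as a strict inequality, which is exactly why the ``$-1$'' appears in the definition of $s$. With that in hand, Theorem~\ref{thm:lin_ind_eval_vectors} gives linear independence of $U^r$ with probability $> 1-\e$, Theorem~\ref{thm:erasures_to_errors} upgrades this to unique decodability of $\1_U$ as an error pattern in $RM(m,m-(2r+2))$, and the proof is complete.
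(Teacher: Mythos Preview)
Your proposal is correct and matches the paper's own proof essentially line for line: the paper too invokes Theorem~\ref{thm:lin_ind_eval_vectors} to get linear independence of $U^r$ with probability $>1-\e$, and then applies the erasures-to-errors reduction (the paper phrases it via Lemma~\ref{lem:deg_r_errasures_to_errors} and Lemma~\ref{lem:pattern_parity}, which is exactly what Theorem~\ref{thm:erasures_to_errors} packages). Your extra remarks about the ``$-1$'' ensuring a strict inequality and about collisions versus distinct points are sound bookkeeping that the paper leaves implicit.
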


Using Claim \ref{cla: estimation r small}, Theorem~\ref{thm:main_for_BSC} gives the following corollary.

%\Amnote{Verify correctness after section 6 is done}

\begin{corol}\label{cor:main_for_BSC}
Let $\e,\delta>0$, $r \leq m$ two positive integers such that $r < \sqrt{\frac{\delta m}{4\log(m)}}$ and $\e >  m^{-r/2}$. Then $RM(m,m-(2r+2))$ can correct a random error pattern of weight $(1-\delta){m \choose \leq r}$ with probability larger than $1-\e$.
\end{corol}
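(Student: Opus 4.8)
The plan is to read off Corollary~\ref{cor:main_for_BSC} from Theorem~\ref{thm:main_for_BSC} after a single elementary binomial estimate. Theorem~\ref{thm:main_for_BSC} guarantees that $RM(m,m-(2r+2))$ corrects a random error pattern of weight $s=\lfloor{m-\log({m\choose\leq r})+\log(\e)\choose\leq r}\rfloor-1$ with probability exceeding $1-\e$; since $\log(\e)=-\log(1/\e)$, the argument of the floor is precisely the left-hand side of Claim~\ref{cla: estimation r small}. So it remains only to certify that this weight is at least $(1-\delta){m\choose\leq r}$.

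First I would invoke Claim~\ref{cla: estimation r small} verbatim: its two hypotheses are exactly $r<\sqrt{\delta m/(4\log m)}$ and $\e>m^{-r/2}$, the assumptions of the corollary, and its conclusion is ${m-\log({m\choose\leq r})-\log(1/\e)\choose\leq r}>(1-\delta){m\choose\leq r}$. I would also note that the argument behind Theorem~\ref{thm:main_for_BSC} --- the deterministic erasures-to-errors reduction of Theorem~\ref{thm:intro:erasures_to_errors} fed by the full-column-rank statement of Theorem~\ref{thm:lin_ind_eval_vectors} --- goes through unchanged for \emph{every} error weight $s'$ below the threshold ${m-\log({m\choose\leq r})-\log(1/\e)\choose\leq r}$, not just for the particular $s$ written in the theorem; hence there is no monotonicity subtlety, and $RM(m,m-(2r+2))$ corrects a random error pattern of any weight strictly below that threshold with probability $>1-\e$. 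Combining the two facts, $RM(m,m-(2r+2))$ corrects a random error pattern of weight $(1-\delta){m\choose\leq r}$ --- up to the usual harmless additive-constant rounding coming from the floor and the ``$-1$'', which is suppressed in the statement --- with probability $>1-\e$. This is the corollary.

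I do not foresee any genuine obstacle: all the content is already packaged in Theorem~\ref{thm:main_for_BSC} and in the combinatorial Claim~\ref{cla: estimation r small}. The only work is substitution and a one-line check that the parameter regime in the corollary is exactly the one under which Claim~\ref{cla: estimation r small} lower-bounds the relevant binomial sum --- which it is, by design --- plus the cosmetic point of tracking the floor and the ``$-1$'' in the definition of $s$, which cost an additive constant that is negligible against $(1-\delta){m\choose\leq r}$.
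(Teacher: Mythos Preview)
Your proposal is correct and matches the paper's approach exactly: the paper derives the corollary by combining Theorem~\ref{thm:main_for_BSC} with Claim~\ref{cla: estimation r small}, with no further argument given. If anything, you are more careful than the paper in explicitly noting that the proof of Theorem~\ref{thm:main_for_BSC} (via Theorem~\ref{thm:lin_ind_eval_vectors} and Lemma~\ref{lem:deg_r_errasures_to_errors}) applies to every weight $s'$ below the threshold, which handles the monotonicity point the paper leaves implicit.
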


%\Enote{Technically, it sufficient to say in previous two theorems that "RM can correct x random errors", in view of our formal definitions. But it s also fine to be more explicit for the theorems.}

%We next prove Theorem \ref{thm:main_for_BSC}.

The rest of this section is organized as follows. We first give a combinatorial view of the syndrome of an error pattern under $E(m,r)$ (Section~\ref{sec:patterns_parity}). We then study the case of $E(m,3)$, which corresponds to the case $r=1$ in Theorem~\ref{thm:main_for_BSC} (as $E(m,3) = H(m,m-4)$), in Section~\ref{sec:deg_3}. The case of general degree-$r$ is handled in Section~\ref{sec:deg-r}. Then, in Section~\ref{sec:general_deg_3} we extend the case $r=1$ to hold for arbitrary linear codes of high degree and in Section~\ref{sec:counterex} we prove that our results for the case $r=1$ are tight, in some sense.

\subsubsection{Parity check matrix and parity of patterns}\label{sec:patterns_parity}

In this section we give a combinatorial interpretation of the syndrome of an error pattern. 
Consider the code $RM(m,m-r-1)$. Its parity check matrix is $H(m,m-r-1)=E(m,r)$.

Let $U\subseteq \F_2^m$ be a set of size $s$. We associate with $U$ the error pattern $\1_U \in \F_2^n$. Clearly 
$w(\1_U)=|U|=s$. We denote with $u_j$ the $j$'th element of $U$. We shall also think of $U$ as an $m\times s$ 
matrix whose $j$'th column is $u_j$. As before we denote with $U^r$ the submatrix of $E(m,r)$ whose columns are 
indexed by $U$. Alternatively, this is the set of all evaluation vectors of $U$'s columns. 
We shall use the same convention for another subset $V \subseteq \F_2^m$.

The following definition captures a combinatorial property that we will later show its relation to syndromes under $E(m,r)$.

\begin{defin}
For two matrices $A,B$ of same dimension $n_1\times n_2$, we denote $A \sim_r B$ if any pattern of size at most $r$ in the columns of $A$ appears with the same parity in the columns of $B$. I.e., for every subset $I \subset [n_1]$ of size $r$ and every $z\in \F_2^r$ the number of columns  
in $A_{I,\cdot}$ that equal $z$ is equal, modulo $2$, to the number of columns in $B_{I,\cdot}$ that equal $z$.
\end{defin}

For example, the matrices
\begin{equation}\label{eq:matrix-patterns}
A = \left( \begin{array}{cccccc}
1 & 0 & 0 & 0 & 0&0\\
0 & 1 & 0 & 0 & 0&0\\
0 & 0 & 1 & 0 & 0&0\\
0 & 0 & 0 & 1 & 0&0\\
0 & 0 & 0 & 0 & 1&0\\
0 & 0 & 0 & 0 & 0&1
\end{array} \right)
\quad \text { and } \quad
B = \left( \begin{array}{cccccc}
1 & 1 & 1 & 0 & 0& 0\\
1 & 1 & 0 & 1 & 0& 0\\
1 & 1 & 0 & 0 & 1& 0\\
1 & 1 & 0 & 0 & 0& 1\\
1 & 0 & 1 & 1 & 1& 1\\
0 & 1 & 1 & 1 & 1& 1
 \end{array} \right)
\end{equation}
satisfy $A\sim_2 B$ but $A\not\sim_3 B$.
To see that $A\sim_2 B$ one can observe that: the number of $1$'s in row $i$ in $A$ is equal (modulo $2$) to that number in $B$; the inner product (modulo $2$) between rows $i$ and $j$ in $A$ is the same as in $B$. Indeed, that inner product between rows $i$ and $j$ counts the number of columns that have $1$ in both rows. Together with the information about the number of $1$'s in row $i$ and in row $j$ we are guaranteed that any pattern on rows $i$ and $j$ has the same parity in both matrices.
On the other hand, the pattern $(1,1,*,*,1,*)$, which stands for $1$ in the first, second and fifth rows (in the terminology of the definition, $I=\{1,2,5\}$ and $z=(1,1,1)$), appears once in $B$ but it does not appear in $A$.

%Indeed, the pattern $(1,*,1)$, i.e. $1$ in the first row and $1$ in the third row appears once in $B$ (in the second column) but it doesn’t appear in $A$. On the other hand, the parity of any pattern of size $1$ is the same in $A$ and $B$. 
%the parity of the pattern $(1,0)$ in the first two rows is $0$ as it appears in two columns. The parity of the pattern $(1,*,1)$, i.e., $1$ in the first row and $1$ in the third row, is $1$ as it only appears in the second column.

The next lemma shows that two error patterns $\1_V$ and $\1_U$ have the same syndrome under $E(m,r)$ if and only if the two matrices $U$ and $V$ satisfy $U \sim_r V$. We denote with $ \mathbb{M}(m,r)$ the set of all $m$-variate monomials of degree at most $r$.

\begin{lemma}[Parity of patterns]\label{lem:pattern_parity}
For two sets $U,V \subseteq \F_2^m$ of size $s$ it hold that 
\begin{align}
E(m,r) \cdot \1_U = E(m,r)\cdot  \1_V \quad &\Longleftrightarrow \quad \sum_{i=1}^s f(u_i) = \sum_{i=1}^s f(v_i), \quad \forall 
f \in \mathbb{M}(m,r) \label{eq:monomial-pattern} \\
&\Longleftrightarrow \quad \sum_{i=1}^s f(u_i) = \sum_{i=1}^s f(v_i), \quad \forall 
f \in \mathbb{P}(m,r), \label{eq:polynomial-pattern}\\
&\Longleftrightarrow \quad U \sim_r V. \label{eq:same-pattern}
\end{align}
\end{lemma}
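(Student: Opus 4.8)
The plan is to verify the four conditions are equivalent by breaking the statement into the three links
$$\text{(left-hand syndrome equality)} \Leftrightarrow \eqref{eq:monomial-pattern} \Leftrightarrow \eqref{eq:polynomial-pattern} \Leftrightarrow \eqref{eq:same-pattern},$$
each of which is essentially an $\F_2$-linear or bookkeeping statement, with the only non-formal ingredient being a low-degree indicator polynomial. First I would unwind the matrix product: by definition of $E(m,r)$ its rows are indexed by the monomials $f\in \mathbb{M}(m,r)$ and its column indexed by $u\in\F_2^m$ is the evaluation vector $u^r=(f(u))_{f\in\mathbb{M}(m,r)}$, so $E(m,r)\cdot\1_U=\sum_{i=1}^s u_i^r$ is the vector whose $f$-coordinate is $\sum_{i=1}^s f(u_i)$, and similarly for $V$. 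Comparing coordinate by coordinate shows at once that $E(m,r)\cdot\1_U=E(m,r)\cdot\1_V$ holds if and only if $\sum_i f(u_i)=\sum_i f(v_i)$ for every $f\in\mathbb{M}(m,r)$, which is precisely \eqref{eq:monomial-pattern}.

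Next I would observe that the map $f\mapsto \sum_i f(u_i)+\sum_i f(v_i)\in\F_2$ is $\F_2$-linear in $f$ (pointwise evaluation and summation over $i$ are both linear), and $\mathbb{M}(m,r)$ spans $\mathbb{P}(m,r)$ over $\F_2$ by the very definition of a degree-$\le r$ polynomial; hence this functional vanishes on all monomials of degree $\le r$ iff it vanishes on all of $\mathbb{P}(m,r)$, giving $\eqref{eq:monomial-pattern}\Leftrightarrow\eqref{eq:polynomial-pattern}$ (the direction $\eqref{eq:polynomial-pattern}\Rightarrow\eqref{eq:monomial-pattern}$ being trivial). For the implication $\eqref{eq:same-pattern}\Rightarrow\eqref{eq:monomial-pattern}$, note a monomial $f=\prod_{j\in J}x_j$ with $|J|\le r$ satisfies $f(u)=1$ exactly when $u_j=1$ for all $j\in J$, i.e. when the column $u$ restricted to the coordinate set $J$ realizes the all-ones pattern; thus $\sum_i f(u_i)$ counts mod $2$ the columns of $U$ realizing that size-$|J|$ pattern, and $U\sim_r V$ equates this with the corresponding count for $V$.

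The remaining (and only genuinely combinatorial) step is $\eqref{eq:polynomial-pattern}\Rightarrow\eqref{eq:same-pattern}$. Here I would fix $I\subseteq[m]$ with $|I|\le r$ and a pattern $z\in\F_2^{|I|}$ (writing $z_j$ for its coordinate at position $j\in I$), and consider the indicator polynomial
\[
g(x)=\prod_{j\in I}\bigl(x_j+z_j+1\bigr)\in\mathbb{P}(m,r),
\]
which has degree $|I|\le r$ and satisfies $g(u)=1$ iff $u_j=z_j$ for all $j\in I$, and $g(u)=0$ otherwise. Then $\sum_i g(u_i)$ equals, mod $2$, the number of columns of $U$ that agree with $z$ on $I$, and likewise for $V$; applying \eqref{eq:polynomial-pattern} with $f=g$ yields exactly the defining condition of $U\sim_r V$. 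A one-line remark dispatches the fact that the definition of $\sim_r$ mentions sets $I$ of size exactly $r$ while I use $|I|\le r$: parities agreeing for all size-$r$ patterns force agreement for smaller $I$ by summing over all extensions of $z$ to a size-$r$ coordinate set. I do not expect a real obstacle here — the whole argument is a chain of routine identifications — so the ``hard part'' is purely organizational: keeping every count interpreted modulo $2$, and isolating the one genuine idea, namely the choice $g=\prod_{j\in I}(x_j+z_j+1)$, which turns an arbitrary low-size column pattern into a degree-$\le r$ polynomial test and thereby closes the loop with \eqref{eq:polynomial-pattern}.
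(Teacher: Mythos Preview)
Your proposal is correct and follows essentially the same approach as the paper. The paper dispatches the first two equivalences as ``by definition'' and ``clear'', then illustrates $\eqref{eq:polynomial-pattern}\Leftrightarrow\eqref{eq:same-pattern}$ via the example $f(x)=x_1(1+x_m)$ (detecting the pattern $(1,0)$ in coordinates $1,m$) and leaves the formal details to the reader; your indicator polynomial $g(x)=\prod_{j\in I}(x_j+z_j+1)$ is precisely the general form of that example, and your remark reconciling ``size $r$'' with ``size $\le r$'' in the definition of $\sim_r$ fills in a detail the paper glosses over.
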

\begin{proof}
The first equivalence is by definition and the second one is clear. The equivalence between \eqref{eq:polynomial-pattern} and \eqref{eq:same-pattern} is best explained by the following example. Consider the polynomial $f(x_1,\dots,x_m)=x_1(1+x_m)$. In order for the equivalence to hold, it must be that the number of $u_i$'s (which are $m$-bit vectors themselves) that have $1$ in the first coordinate and $0$ in the last coordinate, is equal (modulo $2$) to the number of $v_i$'s that have the same structure. In other word, the equation 
$\sum_{i=1}^s f(u_i) = \sum_{i=1}^s f(v_i)$ makes sure that the number of columns that have $1$ in the first row and $0$ in the last row is the same (modulo $2$) in $U$ and in $V$.
By choosing different polynomials of degree at most $r$, the same must hold for any pattern of size at most $r$, and hence, by definition, $U \sim_r V$. The reverse direction is proved in a similar way. We note that the formal proof follows by induction on $r$. We leave the exact details of the proof to the reader.
\end{proof}

Our goal is to understand, for given values of $m$ and $r$, how many vectors $\1_U$ are bad, in the sense that they admit a bad companion $\1_V$ such that $E(m,r) \cdot \1_U = E(m,r)\cdot  \1_V$. Thus, by Lemma~\ref{lem:pattern_parity} this is equivalent to studying pairs $U,V$ such that  $U \sim_r V$. The next lemma will allow us to apply linear transformations to $U$ in order to make it ``nicer'' without losing generality.

%We will make a small abuse of notation and use $u \sim_d y$ as well as $u \sim_d v$ in addition to $U \sim_d V$. 
\begin{lemma}[Affine invariance]\label{lem:affine inv}
In the notation of Lemma~\ref{lem:pattern_parity}, if $U \sim_r V$ then $(AU) \sim_r (AV)$ for any linear transformation $A:\F_2^m\to\F_2^m$. Furthermore, if $A$ is invertible then $(AU) \sim_r (AV)$ implies that $U \sim_r V$.
\end{lemma}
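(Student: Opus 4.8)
The plan is to reduce the statement to the polynomial characterization of $\sim_r$ supplied by Lemma~\ref{lem:pattern_parity}. Recall that the equivalence between \eqref{eq:polynomial-pattern} and \eqref{eq:same-pattern} asserts that for $m\times s$ matrices $U,V$ one has $U\sim_r V$ if and only if $\sum_{i=1}^s f(u_i)=\sum_{i=1}^s f(v_i)$ for every $f\in\mathbb{P}(m,r)$; its proof is purely a statement about how many columns realize each pattern on a fixed set of at most $r$ rows, so it applies verbatim to arbitrary matrices (equivalently, multisets of columns), not only to sets, which is all we need since applying a non-invertible $A$ may create column collisions. So it suffices to show that the family of identities $\{\sum_i f(u_i)=\sum_i f(v_i) : f\in\mathbb{P}(m,r)\}$ is preserved under applying $A$ to every column.

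First I would record the elementary fact that composing a degree-$\le r$ polynomial with a linear map does not raise the degree: if $A:\F_2^m\to\F_2^m$ is linear then each coordinate of $Ax$ is an $\F_2$-linear form in $x_1,\dots,x_m$, and substituting these forms into any $g\in\mathbb{P}(m,r)$ yields a polynomial $g\circ A$ of degree at most $r$ (the reductions $x_i^2=x_i$ over $\F_2$ only decrease degree). Now assume $U\sim_r V$ and fix an arbitrary $g\in\mathbb{P}(m,r)$. Put $f:=g\circ A\in\mathbb{P}(m,r)$. Then, writing $(AU)_i=Au_i$ for the $i$th column of $AU$ (and similarly for $AV$),
$$\sum_{i=1}^s g\big((AU)_i\big)=\sum_{i=1}^s f(u_i)=\sum_{i=1}^s f(v_i)=\sum_{i=1}^s g\big((AV)_i\big),$$
where the middle equality is $U\sim_r V$ applied to $f$. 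Since $g$ was arbitrary, the characterization above gives $(AU)\sim_r (AV)$, which is the first assertion.

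For the \emph{furthermore} part, note that when $A$ is invertible its inverse $A^{-1}$ is again a linear map $\F_2^m\to\F_2^m$, so the first part applied to $A^{-1}$ and to the pair $AU,AV$ shows that $(AU)\sim_r(AV)$ implies $(A^{-1}AU)\sim_r(A^{-1}AV)$, i.e.\ $U\sim_r V$. I do not expect a genuine obstacle here: the only points needing (minor) care are the observation that $g\mapsto g\circ A$ maps $\mathbb{P}(m,r)$ into itself, and the remark that the set-versus-matrix distinction in Lemma~\ref{lem:pattern_parity} is immaterial because $\sim_r$ is defined through column multiplicities. The same argument in fact goes through unchanged for affine maps $x\mapsto Ax+b$, which is presumably the reason the lemma is called ``affine invariance''.
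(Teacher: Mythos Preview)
Your proposal is correct and follows essentially the same approach as the paper: both reduce to the polynomial characterization of $\sim_r$ from Lemma~\ref{lem:pattern_parity}, use that $g\mapsto g\circ A$ preserves $\mathbb{P}(m,r)$, and handle the invertible case by applying $A^{-1}$. Your treatment of the ``furthermore'' part (apply the forward direction to $A^{-1}$) is slightly cleaner than the paper's phrasing, and your remark that the polynomial characterization works for matrices/multisets rather than just sets is a worthwhile clarification the paper glosses over.
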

% It is not true that we can shift by B. E.g. consider U=(01,10), V=(10,01) and B = (10,00). U+B =(11,10) and V+B=(00,01).

\begin{proof}
By Lemma~\ref{lem:pattern_parity}, $AU \sim_r AV$ iff $\sum_{i=1}^s f(u_i) = \sum_{i=1}^s f(v_i)$, $\forall f \in \mathbb{P}(m,r)$. For a polynomial $f$ denote $f_A(x) \triangleq f(Ax)$. It is clear that $\deg(f_A)\leq \deg(f)$. It thus follows that 
\begin{eqnarray*}
U \sim_r V &\Longleftrightarrow& \quad \sum_{i=1}^s f(u_i) = \sum_{i=1}^s f(v_i)\quad \forall 
f \in \mathbb{P}(m,r),\\
&\Longrightarrow& \quad \sum_{i=1}^s f_A(u_i) = \sum_{i=1}^s f_A(v_i), \quad \forall 
f \in \mathbb{P}(m,r),\\
&\Longleftrightarrow& \quad \sum_{i=1}^s f(Au_i) = \sum_{i=1}^s f(Av_i)\quad \forall 
f \in \mathbb{P}(m,r),\\
&\Longleftrightarrow& \quad AU \sim_r AV.
\end{eqnarray*}
To see the furthermore part, we note that if $A$ is invertible then $f_{A^{-1}}(AV) = f(V)$. Hence, all the implications above can be made ``if and only if''.
\end{proof}

\subsubsection{The case $r=1$} \label{sec:deg_3}

To prove Theorem~\ref{thm:main_for_BSC} we first study the case where $r=1$ as the proof for the general case will use ideas similar to the proof of this case. Note that this case corresponds to studying syndrome of error patterns under $H(m,m-(2r+2))=H(m,m-4)=E(m,3)$, namely evaluations by all degree-3 monomials.

We will prove first a deterministic result:  if $U$ is {\em any} set of  linearly independent columns,  then for any $V\neq U$, we have that  $U\not\sim_3 V$. Thus, any set of errors that is supported on linearly independent coordinates (when viewed as vectors in $\F_2^m$) can be uniquely corrected. This immediately gives an average-case result. If we have $m-\log(m/\e)$ random errors, then with probability at least $1-\e$ their locations correspond to linearly independent $m$-bit vectors and therefore we can correct such amount of errors with high probability.\footnote{To eliminate possible confusion we repeat: an error pattern is an $n$-bit vector, whose coordinates are indexed by $m$-bit vectors.} Notice that this is already highly nontrivial, as $R(m,m-4)$ has (absolute) distance $16$, so in the worst case one cannot correct more than 8 worst-case errors!

\begin{lemma}\label{lem:deg_3}
Let $U\subseteq\F_2^m$ be a set of linearly independent vectors, such that $|U|=s$. 
Then, for any $V\neq U$, such that $|V|\leq s$, we have that $V\not\sim_3 U$.
\end{lemma}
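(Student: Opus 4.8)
I want to show that if $U$ is a set of $s$ linearly independent vectors in $\F_2^m$, then no other set $V$ with $|V|\le s$ can match $U$ in all patterns of size $\le 3$, i.e. $V\not\sim_3 U$. By the affine invariance lemma (Lemma~\ref{lem:affine inv}), I may apply an invertible linear map to both $U$ and $V$ simultaneously without changing whether $U\sim_3 V$. Since the vectors of $U$ are linearly independent, I can choose coordinates so that $U = \{e_1,\dots,e_s\}$, the first $s$ standard basis vectors. So from now on assume $U$ is exactly the set of the first $s$ unit vectors. The goal becomes: if $V\subseteq\F_2^m$, $|V|\le s$, and $V\sim_3 U$, then $V=U$.

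\textbf{Step 1: matching degree-$1$ and degree-$2$ patterns forces the "shape" of $V$.} Using Lemma~\ref{lem:pattern_parity} with monomials $x_i$ (degree $1$) tells me that, modulo $2$, the number of columns of $V$ that have a $1$ in row $i$ equals the number of columns of $U$ with a $1$ in row $i$. For $i\le s$ that count is $1$ (odd); for $i>s$ it is $0$ (even). Using the degree-$2$ monomials $x_ix_j$ tells me: the number of columns of $V$ with $1$'s in \emph{both} rows $i$ and $j$ is even for every pair $i\ne j$ (since in $U$ no column has two $1$'s). Together with Step~0 below (a counting/parity argument on $|V|$, using $\sum_i(\text{weight of row }i)\equiv |V|$ via the all-ones-like bookkeeping, combined with the pairwise-even condition) I expect to deduce that $V$ also consists of $s$ distinct unit vectors — one supported on each row $i\le s$, and none supported on rows $i>s$. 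Concretely: the pairwise condition says the supports of the columns of $V$, as subsets of $[m]$, pairwise intersect in an even number of coordinates; the single-row conditions pin down which coordinates are "hit an odd number of times." The cleanest way is probably: over $\F_2$, the matrix $V V^t$ (as an $m\times m$ matrix over $\F_2$) is determined by the $\le 2$-patterns, and $UU^t = \diag(\1_{[s]})$; matching this, together with $|V|\le s = \rank(UU^t)$, should force $VV^t = \diag(\1_{[s]})$ and hence that $V$'s columns are $s$ vectors that are "orthonormal" in this degenerate sense — which over $\F_2$ with the extra degree-$1$ info forces them to be the $e_i$, $i\le s$, up to this point possibly allowing each column to have an even-weight "tail."

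\textbf{Step 2: the degree-$3$ patterns kill the tails.} After Step~1, each column $v$ of $V$ has the form $e_i + w$ where $w$ is supported on coordinates $>s$ (or among the other rows) with $w$ of even weight, and the indices $i$ range over all of $[s]$ without repetition. Now apply Lemma~\ref{lem:pattern_parity} with degree-$3$ monomials. For a monomial $x_i x_a x_b$ with $i\le s$ and $a,b>s$: in $U$ this monomial vanishes on every column, so the count is $0$; in $V$ it is nonzero exactly on the column $v = e_i + w$ when $a,b\in\supp(w)$. So for the unique column of $V$ "responsible" for row $i$, every pair $\{a,b\}$ inside its tail must occur an even number of times — but it occurs exactly once — forcing the tail to have fewer than $2$ elements, hence (being even weight) to be empty. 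Therefore $v=e_i$, and since this holds for all $i\le s$, we get $V=U$. I'd want to be a little careful that in Step~1 the "tail" of each column is genuinely localized to a single $e_i$ and doesn't mix several unit vectors, but the pairwise-even (degree-$2$) conditions combined with the odd-single-row (degree-$1$) conditions should handle that: if a column had the form $e_i+e_j+\dots$ with $i,j\le s$, the pair $\{i,j\}$ would get hit by that column, and by linear-independence considerations and the parity bookkeeping this can be excluded.

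\textbf{Main obstacle.} The delicate part is Step~1 — turning "all $\le 2$-patterns agree" plus "$|V|\le s$" into the rigid conclusion that $V$'s columns are a permutation of $e_1,\dots,e_s$ up to even-weight tails confined to the "free" coordinates. The $\F_2$ linear algebra here is subtle because $VV^t$ being diagonal does not, over $\F_2$, immediately say the columns are standard basis vectors (self-orthogonality is automatic). I expect to need the degree-$1$ information crucially here to break that symmetry, and possibly an induction on $s$ or a careful rank/counting argument (noting $V$ has at most $s$ columns and must "cover" $s$ odd-parity rows, so it has \emph{exactly} $s$ columns and they're forced into near-bijective correspondence with $e_1,\dots,e_s$). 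Once the shape is pinned down, Step~2 is a short parity argument on triples.
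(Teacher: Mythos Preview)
Your Step~1 is where the argument breaks, and not in a way that can be patched. You hope that the $\le 2$-pattern conditions together with $|V|\le s$ force each column of $V$ to be of the form $e_i + w$, with $w$ an even-weight tail supported on coordinates outside $[s]$. This is false. The paper itself gives an explicit counterexample (the matrices in~\eqref{eq:matrix-patterns}, revisited in Section~\ref{sec:counterex}): with $s=m=6$ and $U=I_6$, there is a $V$ with $V\sim_2 U$ whose columns are vectors like $(1,1,1,1,1,0)^t$ --- weight-$5$ combinations of the $e_i$, not ``unit vector plus tail.'' Your heuristic ``if a column had the form $e_i+e_j+\dots$ then the pair $\{i,j\}$ would get hit'' forgets that the parity condition only says the \emph{total} number of columns hitting $\{i,j\}$ is even; in the counterexample the pair $\{1,2\}$ is hit by two columns of $V$, not zero. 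So $\sim_2$ simply does not rigidify $V$ column-by-column, and your Step~2 (which only invokes triples $x_i x_a x_b$ with $a,b>s$) never gets off the ground.

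The paper's route is genuinely different. From $\sim_2$ it extracts only that $V^t V = I_s$, hence the columns of $V$ are linearly independent and $|V|=s$. It then uses $\sim_3$ not to kill ``tails'' of individual columns but to show that certain \emph{sums} of columns of $V$ equal $e_1$: for each $j\ge 2$, the sum of those columns of $V$ whose rows $(1,j)$ read $(1,0)$ is exactly $e_1$ (any stray $1$ in some coordinate $i$ of that sum would force an odd $(1,0,1)$-pattern in rows $(1,j,i)$ of $V$, while in $U$ this pattern has parity $0$). Now linear independence of $V$'s columns gives a \emph{unique} subset summing to $e_1$, so all these index sets (one for each $j\ge 2$) coincide; every column in that common set therefore has a $1$ in row $1$ and a $0$ in every other row, i.e.\ equals $e_1$. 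Repeating for each $e_i$ yields $V=U$. The key idea you are missing is to work with sums of columns and exploit uniqueness of representation in a linearly independent set, rather than trying to read off each column directly from the pattern parities.
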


In particular this means that we can correct the error pattern $\1_U$ in $RM(m,m-4)$. 

\begin{proof}
By multiplying $U$ with an invertible $A$ (changing the basis $\F_2^m$) we can assume, w.l.o.g., that the columns of $U$ are the elementary basis vectors, $e_1,\ldots,e_s$.\footnote{This is not really necessary, but it makes the argument simpler to explain.} Indeed, since  $A$ is invertible it follows from  Lemma~\ref{lem:affine inv}  that it is enough to prove the claim for $AU$. 

Let $V\subseteq\F_2^m$ be such that $|V|=s$ and  $V\sim_3 U$. Our task is to show that $V=U$. This will be shown in two steps. First, we'll show that span($V$) = span($U$), which in particular implies that $V$ is linearly independent as well. Proving linear independence requires only that $V\sim_2 U$, namely evaluations by degree-2 monomials. Using $V\sim_3 U$, we'll prove that they actually have the same span, and from that derive that $V=U$. 

Let us first argue linear independence of $V$. We'll think of $U$ and $V$ not only as sets of vectors, but also as $m\times s$ matrices, and denote by $U'$ the transpose of $U$. Note that, as the columns of $U$ are unit vectors, we have $U'U = I_s$. Now since diagonal elements of this product capture the value of degree-1 monomials of the syndrome, and off-diagonal elements of the of the product correspond to inner products of rows, namely (as in the example of the previous section), to degree-2 monomials of the syndrome. As $V \sim_2 U$ we also have that $V'V = I_s$ and so the dimension of $V$ is $s$ as well. We will later show~\ref{sec:counterex} that this linear independence is the only thing we can infer from $V\sim_2 U$.

We now actually prove the stronger statement that in fact $U$ and $V$ span the same subspace. This will require $V\sim_3 U$. It will be sufficient to prove that $V$ spans the vector $e_1$, as for other vectors in $U$ the proof is identical.
Consider the pattern $(1,0)$ in the first two rows of $U$. That is, consider all columns of $U$ that have $1$ in their first coordinate and $0$ in the second. It is clear that this pattern only appears in $e_1$ and hence its parity in $U$ is $1$. Thus, there must be an odd number of columns in $V$ whose first two rows equal $(1,0)$. The main observation is that if we add up the columns then we obtain the vector $e_1$. 

\begin{claim}\label{claim: V spans e1}
Under the conditions of the lemma, the sum of all columns in $V$ whose first two coordinates equal $(1,0)$ is $e_1$. More generally, for $i\in [s]$, if we consider the pattern that has $1$ in the $i$'th coordinate and $0$ in some $j\neq i$ coordinate, then the sum of all columns in $V$ that have this pattern is equal to $e_i$.
\end{claim}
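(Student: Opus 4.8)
The plan is to verify the claimed identity one coordinate at a time, each time converting a count over the (unknown) columns of $V$ into a count over the columns of $U$, which after the affine normalization from Lemma~\ref{lem:affine inv} are the standard basis vectors $e_1,\dots,e_s$. Fix $i\in[s]$ and $j\neq i$, and set
$$w \;:=\; \sum_{\substack{v\in V\\ v_i=1,\ v_j=0}} v \ \in\ \F_2^m ,$$
so that the goal is to show $w=e_i$; the first, specialized, statement of the claim is the case $i=1$, $j=2$. For each coordinate $k\in[m]$ we have, working mod $2$,
$$w_k \;=\; \bigl|\{v\in V:\ v_i=1,\ v_j=0,\ v_k=1\}\bigr| ,$$
that is, $w_k$ is exactly the parity with which a prescribed pattern on the row set $\{i,j,k\}$ occurs among the columns of $V$.

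Then I would split into cases according to $k$. If $k=j$ the conditions $v_j=0$ and $v_k=1$ are incompatible, so $w_j=0=(e_i)_j$. If $k=i$, then $w_i$ is the parity of the $(1,0)$-pattern on rows $\{i,j\}$ among the columns of $V$; since $V\sim_3 U$ controls patterns of size $\le 2$ as well (directly by definition, or by summing over an auxiliary third row, which exists since $m\ge 3$), this parity equals that of the same pattern among the columns of $U$. Among $e_1,\dots,e_s$ the only column with a $1$ in row $i$ is $e_i$, and it does have a $0$ in row $j$ because $j\neq i$; hence the parity is $1=(e_i)_i$. Finally, if $k\notin\{i,j\}$, then $w_k$ is the parity of the pattern $(1,0,1)$ on the three-element row set $\{i,j,k\}$ among the columns of $V$, which by $V\sim_3 U$ equals the parity of that pattern among the columns of $U$; again the only column of $U$ with a $1$ in row $i$ is $e_i$, and it fails to have a $1$ in row $k$ because $k\neq i$, so the parity is $0=(e_i)_k$. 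Assembling the three cases gives $w=e_i$, which is the claim.

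There is no serious obstacle here: the entire content is the observation that $V\sim_3 U$ is exactly the amount of information needed to single out, coordinate by coordinate, which standard basis vector of $U$ accounts for each surviving pattern. The only points requiring a little care are keeping track of which row sets actually have size three — so that the degree-$3$ part of the hypothesis, and not merely the degree-$2$ part, is genuinely invoked, which happens precisely in the off-diagonal coordinates $k\notin\{i,j\}$ — and checking that all three case analyses come out consistent with the corresponding coordinates of $e_i$. (Note that the argument does not even use the bound $|V|\le s$; that will be needed only afterwards, when this claim is combined with $\sum_j e_j$-type relations to conclude $V=U$.)
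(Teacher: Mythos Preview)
Your proof is correct and follows essentially the same approach as the paper: both rely on the observation that each coordinate $w_k$ of the sum is the parity of a pattern of size at most $3$ among the columns of $V$, which by $V\sim_3 U$ matches the corresponding parity in $U=\{e_1,\dots,e_s\}$. The only cosmetic difference is that the paper phrases it as a contradiction (if $w\neq e_1$ then some off-diagonal coordinate $w_k=1$ forces an odd count of the pattern $(1,0,1)$ on rows $\{1,2,k\}$ in $V$, impossible in $U$), whereas you compute each coordinate directly via the case split $k=i$, $k=j$, $k\notin\{i,j\}$.
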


\begin{proof}
Assume that this is not the case, namely, the sum is a vector $w\neq e_1$. We first note that the first two coordinates of $w$ equal $(1,0)$. Indeed, this holds as we summed an odd number of vectors that has these values. Hence, there must exist a coordinate $i>2$ such that $w_i=1$. Thus, the number of vectors in $V$ with the pattern (1,0,1) in rows (1,2,$i$) is odd. But this is not the case in $U$, contradicting $V\sim_3 U$.

The proof of the general case is similar. 
\end{proof}

%As an immediate corollary we obtain that  the columns of $V$ are linearly independent.

%\begin{claim}\label{claim:V independent}
%Under the conditions of the lemma, it must be that the columns in $V$ are linearly independent and $|V|=s$. 
%\end{claim}
%
%\begin{proof}
%Claim~\ref{claim: V spans e1} implies that $e_i$ is in the columns span of $V$ for all $i\in [s]$. As $|V|\leq s$ the claim follows.  
%\end{proof}
We now use the fact that $U$ and $V$ have the same span to conclude that $U=V$. 
Denote  $J_{10}\subset [s]$ the indices of columns in $V$ that have $(1,0)$ as their first two coordinates. By Claim~\ref{claim: V spans e1} we have that $\sum_{i\in J_{10}}v_i = e_1$.
Next, consider the pattern $(1,*,0)$, namely, the pattern that has $1$ in the first row and $0$ in the third row. Denote the corresponding set of column indices with $J_{1*0}$. Again, Claim~\ref{claim: V spans e1} implies that $\sum_{i\in J_{1*0}}v_i = e_1$. However, since the columns in $V$ are linearly independent, there is only one way to represent $e_1$ as a linear combination of the columns of $V$ and therefore it must be the case that  $J_{10}=J_{1*0}$.

Continuing in this fashion we get that $J_{10} = J_{1*0} = J_{1**0}=\ldots = J_{1*\ldots*0}$, where the last set corresponds to the columns that have $1$ in the first coordinate and $0$ in the last coordinate. As the size of $J_{10}$ is odd we know that it is not empty. In particular, all the vectors in $J_{10}$ must satisfy that they have $1$ in the first coordinate and $0$ in the remaining coordinates. Indeed each such $0$ can be justified by one of those $J$ sets. In particular $e_1$ is a column in $V$. Repeating this process for all $e_i$, $i\in [s]$, we get that all these $e_i$'s are columns in $V$. Since $V$ has exactly $s$ columns it must have the same set of  columns as $U$. In particular, $V=U$. This completes the proof of Lemma~\ref{lem:deg_3}.
\end{proof}

Lemma~\ref{lem:deg_3} shows that if the columns of  $U$ are linearly independent then $E(m,3)\1_U \neq E(m,3)\1_V$ for any other $V$ of the same size. As randomly picking $m-\log(m/\e)$ vectors in $\F_2^m$ we are likely to get linearly independent vectors we obtain our main result for the case $r=1$.

\begin{claim}\label{claim: random linear vectors}
Let $\e>0$ and $t=m-\log(m/\e)$. Pick $t$ vectors uniformly at random from $\F_2^m$, $u_1,\ldots,u_t\in \F_2^m$. Then, with probability at least $1-\e$, the $u_i$ are linearly independent.  
\end{claim}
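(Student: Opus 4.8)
The plan is to expose the $t$ vectors one at a time and track the event that the partial collection stays linearly independent, using the elementary fact that a fixed subspace of dimension $j$ in $\F_2^m$ contains exactly $2^j$ points. Concretely, let $A_i$ denote the event that $u_1,\dots,u_i$ are linearly independent (with $A_0$ the sure event). Conditioned on $A_{i-1}$, the vectors $u_1,\dots,u_{i-1}$ span a subspace of dimension exactly $i-1$, so $u_i$ (drawn uniformly from $\F_2^m$, independently of the past) lands in that span with probability $2^{i-1}/2^m$. Hence $\Pr[A_i \mid A_{i-1}] = 1 - 2^{i-1-m}$.

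Next I would bound the failure probability by a union bound over the $t$ steps:
\begin{align*}
\Pr[\neg A_t] \;=\; 1 - \prod_{i=1}^{t}\bigl(1 - 2^{i-1-m}\bigr) \;\leq\; \sum_{i=1}^{t} 2^{i-1-m} \;<\; 2^{t-m}.
\end{align*}
Plugging in $t = m - \log(m/\e)$ gives $2^{t-m} = 2^{-\log(m/\e)} = \e/m \leq \e$ for every $m \geq 1$, which yields $\Pr[A_t] \geq 1-\e$ as claimed. (In fact one gets the slightly stronger bound $1 - \e/m$, but $1-\e$ suffices for how the claim is used.)

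There is essentially no obstacle here: the only mild point to be careful about is the conditioning argument — making sure that "$u_i$ is uniform and independent of $u_1,\dots,u_{i-1}$" is used correctly so that the conditional probability $\Pr[A_i\mid A_{i-1}]$ is genuinely $1 - 2^{i-1}/2^m$ rather than something depending on which $(i-1)$-dimensional subspace was spanned (it does not, since every $(i-1)$-dimensional subspace of $\F_2^m$ has the same size). Everything else is the standard telescoping/union-bound manipulation above, and no result beyond basic linear algebra over $\F_2$ is needed.
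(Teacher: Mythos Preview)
Your proof is correct and follows essentially the same approach as the paper: expose the vectors one at a time, bound the probability that each new vector falls into the span of the previous ones, and apply the union bound. The paper phrases it slightly more loosely (bounding each step's failure probability by $\e/m$ and summing $t\leq m$ terms), while you sum the geometric series exactly to get $2^{t-m}=\e/m$, but the argument is the same.
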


\begin{proof}
Set $u_0=\vec{0}$. 
For each $1\leq i\leq t$, the probability that $u_i$ belongs to the span of $u_{0},\ldots,u_{i-1}$ is at most $2^{i}/2^m \leq \frac{\e}{m}$. The claim now follows from the union bound.
\end{proof}

Combining Lemma~\ref{lem:deg_3} with Claim~\ref{claim: random linear vectors} we obtain the following corollary which is a special case of our main theorem.

\begin{corol}\label{cor: RM deg 3}
We can correct a random set of $m-\log(m/\e)$ errors in $RM(m,m-4)$ with high probability. 
\end{corol}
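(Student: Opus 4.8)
The plan is to combine the deterministic Lemma~\ref{lem:deg_3} with the probabilistic Claim~\ref{claim: random linear vectors}; once those are in hand the corollary is essentially a one-liner, so the proposal is mostly about assembling the pieces correctly. First I would invoke the duality Lemma~\ref{duality} to recall that $E(m,3)=H(m,m-4)$ is a parity-check matrix of $RM(m,m-4)$. Decoding a weight-$s$ error pattern $\1_U$ in this code (via maximum likelihood, which for the uniform distribution on $\partial B(n,s)$ just means uniqueness of the syndrome among equal-weight patterns) therefore amounts to showing that no $V\neq U$ with $|V|\le s$ satisfies $E(m,3)\1_V=E(m,3)\1_U$; by Lemma~\ref{lem:pattern_parity} this is the same as asking that no such $V$ has $V\sim_3 U$.

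Next I would set $s=m-\log(m/\e)$ and let $U=\{u_1,\dots,u_s\}\subseteq\F_2^m$ be the random set of error locations. By Claim~\ref{claim: random linear vectors}, with probability at least $1-\e$ the vectors $u_1,\dots,u_s$ are linearly independent (in particular distinct, so $w(\1_U)=s$). Condition on this event. Lemma~\ref{lem:deg_3} now applies and tells us that for every $V\neq U$ with $|V|\le s$ we have $V\not\sim_3 U$, hence, by the previous paragraph, $E(m,3)\1_V\neq E(m,3)\1_U$. Thus on this event $\1_U$ has a unique syndrome among all error patterns of weight at most $s$ and is correctly decoded; since the event has probability $\ge 1-\e$, the corollary follows, and letting $\e\to 0$ (with $s=m-\log(m/\e)$ still growing like $m$) gives the ``with high probability'' form.

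I do not expect a genuine obstacle in this final step, since all the substance sits in Lemma~\ref{lem:deg_3}. The only points that need a little care are bookkeeping: (i) reconciling the ``$s$ i.i.d.\ uniform points in $\F_2^m$'' viewpoint used in Claim~\ref{claim: random linear vectors} with a uniformly random $s$-subset of coordinates ($\partial B(n,s)$) — on the good event the points are distinct so the two coincide, and the identical union bound shows a uniform $s$-subset is linearly independent with probability $\ge 1-\e$ as well; and (ii) checking that, because all patterns in $\partial B(n,s)$ are equiprobable, the ML-decoding requirement reduces exactly to syndrome-uniqueness among weight-$s$ patterns, which is what Lemma~\ref{lem:deg_3} delivers (indeed it delivers the stronger statement allowing $|V|\le s$, so patterns of smaller weight cause no trouble either).
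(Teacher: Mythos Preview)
Your proposal is correct and follows exactly the paper's approach: the paper's proof is literally the one-line ``Combining Lemma~\ref{lem:deg_3} with Claim~\ref{claim: random linear vectors},'' and you have simply spelled out that combination in full detail, including the translation via $E(m,3)=H(m,m-4)$ and Lemma~\ref{lem:pattern_parity}. The bookkeeping points you raise (i.i.d.\ points versus uniform subset, and ML decoding reducing to syndrome uniqueness) are handled correctly and are implicit in the paper's setup.
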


\subsubsection{The degree-$r$ case}\label{sec:deg-r}

The proof of the degree-$r$ case proceeds along the same lines as the proof of the $r=1$ case. However, in order to correct errors for $RM(m,m-(2r+2))$ we will require that the matrix $U^r$ corresponding to the error pattern $\1_U$ has linearly independent columns. Note that when $r=1$ this amounts to requiring that $U$ has linearly independent columns, in order to correct $\1_U$ in $RM(m,m-4)$, which is exactly what we proved in Section~\ref{sec:deg_3}. This is also a shortcoming of our result, it is clear that with this condition we cannot expect to correct more than ${m\choose \leq r}$ errors. On the one hand this is much better than the minimum-distnace based result of $2^{2r+2}$, but on the other hand one may hope to be able to decode from $O\left({m  \choose \leq 2r+1}\right)$ many errors. 

Our main lemma is an analog of Lemma~\ref{lem:deg_3}. 

\begin{lemma}\label{lem:deg_r_errasures_to_errors}
Let $U\subseteq \F_2^m$ be such that $|U|=s$ and the columns of $U^r$ are linearly independent. Then, for any $V\subseteq \F_2^m$ satisfying $|V|=s'\leq s$ and  $V\neq U$, we have that $V\not\sim_{2r+1} U$.
\end{lemma}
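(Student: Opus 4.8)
The plan is to run the two–step argument of the proof of Lemma~\ref{lem:deg_3} (the case $r=1$), but with ``columns of $U$'' replaced throughout by ``columns of $U^r$'', while carefully keeping track of the degrees of the polynomials used so that they never exceed $2r+1$. So suppose for contradiction that $V\sim_{2r+1}U$ with $|V|=s'\le s$ and $V\neq U$; by Lemma~\ref{lem:pattern_parity} this means $\sum_i f(u_i)=\sum_i f(v_i)$ for every $f\in\mathbb{P}(m,2r+1)$. Write $M=U^r$ and $N=V^r$, which are $k\times s$ and $k\times s'$ matrices over $\F_2$ with $k={m\choose\le r}$, and $M$ has full column rank $s$ by hypothesis.

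First step (recovering the column span). I would apply the moment identity to products $\mu\nu$ of two monomials of degree $\le r$, so that $\deg(\mu\nu)\le 2r$, to get $(MM^{T})_{\mu,\nu}=\sum_i\mu(u_i)\nu(u_i)=\sum_i\mu(v_i)\nu(v_i)=(NN^{T})_{\mu,\nu}$, i.e.\ $MM^{T}=NN^{T}$ as $k\times k$ matrices. Since $M$ has full column rank it is injective, so $\ker(MM^{T})=\ker(M^{T})=\mathrm{col}(M)^{\perp}$ (orthogonal complement for the standard bilinear form on $\F_2^{k}$), which has dimension $k-s$; hence $\rank(MM^{T})=s$. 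Therefore $\rank(N)\ge\rank(NN^{T})=s$, and since $N$ has only $s'\le s$ columns we conclude $s'=s$ and $\rank(N)=s$: the columns of $V^r$ are linearly independent and $|V|=|U|$. Likewise $\ker(NN^{T})=\mathrm{col}(N)^{\perp}$, and as $NN^{T}=MM^{T}$ and $W\mapsto W^{\perp}$ is an involution on subspaces of $\F_2^{k}$, we get $\mathrm{col}(N)=\mathrm{col}(M)=:W$. Thus $\{u_i^{r}\}_{i\le s}$ and $\{v_i^{r}\}_{i\le s}$ are two bases of the same $s$-dimensional space $W$; in particular every vector of $W$ has a unique representation as an $\F_2$-sum of the $v_i^{r}$.

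Second step (pinning down the columns). Since $M^{T}$ has full row rank it is surjective onto $\F_2^{s}$, so for each $j\in[s]$ there is $g_j\in\mathbb{P}(m,r)$ with $g_j(u_i)=\delta_{ij}$. Set $T_j=\{i: g_j(v_i)=1\}$. For any monomial $\mu$ of degree $\le r$ we have $\deg(g_j\mu)\le 2r$, so $\sum_{i\in T_j}\mu(v_i)=\sum_i(g_j\mu)(v_i)=\sum_i(g_j\mu)(u_i)=\mu(u_j)$; reading off all monomial coordinates this says $\sum_{i\in T_j}v_i^{r}=u_j^{r}$, and taking $\mu=1$ it says $|T_j|$ is odd. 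Now fix $b\in[m]$ and use that $g_j x_b\mu$ has degree $\le 2r+1$:
\[
\sum_{i\in T_j,\ (v_i)_b=1}\mu(v_i)\;=\;\sum_i(g_j x_b\mu)(v_i)\;=\;\sum_i(g_j x_b\mu)(u_i)\;=\;(u_j)_b\,\mu(u_j),
\]
so $\sum_{i\in T_j,\ (v_i)_b=1}v_i^{r}=(u_j)_b\,u_j^{r}$. Comparing with $\sum_{i\in T_j}v_i^{r}=u_j^{r}$ and using uniqueness of representations in the basis $\{v_i^{r}\}$: if $(u_j)_b=1$ then $\{i\in T_j:(v_i)_b=1\}=T_j$, and if $(u_j)_b=0$ then $\{i\in T_j:(v_i)_b=1\}=\emptyset$. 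Ranging over all $b$ forces $v_i=u_j$ for every $i\in T_j$; since the columns of $V$ are distinct (being linearly independent) and $|T_j|$ is odd, $T_j=\{\sigma(j)\}$ with $v_{\sigma(j)}=u_j$. Finally $\sigma$ is injective (if $\sigma(j)=\sigma(j')$ then $u_j=v_{\sigma(j)}=u_{j'}$, so $j=j'$ as the columns of $U$ are distinct), hence a bijection $[s]\to[s]$, giving $V=U$, a contradiction.

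The one genuinely delicate point — and the step I would write out most carefully — is the linear algebra over $\F_2$ in the first step: the identity $\rank(MM^{T})=\rank(M)$ fails for general matrices over $\F_2$, so it is essential to invoke the hypothesis that $U^r$ has full column rank (hence $M$ is injective and $M^{T}$ is surjective) precisely at that point. Everything else is bookkeeping of polynomial degrees, which is exactly why $2r+1$ (rather than $3r$, as in the generic reduction behind Theorem~\ref{thm:intro:erasures_tensored_to_errors}) suffices here: the products that occur are $g_j\cdot x_b\cdot\mu$ with $\deg g_j\le r$, $\deg x_b=1$, $\deg\mu\le r$. One could instead try to mimic the $r=1$ proof even more literally by normalizing $U$ via an invertible change of coordinates (Lemma~\ref{lem:affine inv}), but since $U^r$ is not in general a ``standard'' set of monomial-evaluation columns this buys nothing, so I would keep the coordinate-free formulation above.
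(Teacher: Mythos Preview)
Your proof is correct and follows essentially the same approach as the paper: construct dual polynomials $g_j\in\mathbb{P}(m,r)$ with $g_j(u_i)=\delta_{ij}$, then use products $g_j\cdot x_b\cdot\mu$ of degree $\le 2r+1$ together with the (established) linear independence of the $v_i^{r}$ to force $v_i=u_j$ on the relevant index sets. The one notable cosmetic difference is your first step: you package the facts $s'=s$, $\rank(V^r)=s$, and $\mathrm{col}(V^r)=\mathrm{col}(U^r)$ into a clean Gram-matrix identity $MM^{T}=NN^{T}$ (using only degree $\le 2r$), whereas the paper obtains these as a byproduct of its Claim~\ref{cal: V^r span U^r} (which already uses degree $\le 2r+1$). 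Your version is slightly more modular and in fact your ``first step'' is logically subsumed by the degree-$\le 2r$ identity $\sum_{i\in T_j}v_i^{r}=u_j^{r}$ that opens your second step; either presentation is fine.
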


Notice that Lemma~\ref{lem:deg_3} is obtained by setting $r=1$ in Lemma~\ref{lem:deg_r_errasures_to_errors}.

\begin{proof}

Denote with $u_i$ the $i$'th column of $U$. Recall that the $i$'th column of $U^r$ corresponds to all evaluations of monomials of degree at most $r$ at the point $u_i$, i.e., it is equal to $u_i^{{r}}$. This interpretation will be helpful throughout the proof. 
Assume that $V\subseteq\F_2^m)$ is such that $|V|=s' \leq s$ and $V\sim_{2r+1} U$. Similarly, we denote the columns of $V$ with $v_1,\ldots,v_{s'}\in \F_2^m$ and note that the $i$'th column of $V^r$ is $v_i^{{r}}$. 

As the columns of $U^r$ are linearly independent, there exist vectors $f_i$ so that $f_i\cdot U^r = e_i\in \F_2^s$. As the coordinates of each $f_i$ are indexed by monomials of degree $\leq r$, we can interpret $f_i$ as a degree-$r$ polynomial $f_i(x_1,\ldots,x_m)$ and rewrite $f_i\cdot U^r$ as $f_i\cdot U^r = (f_i(u_1),\ldots,f_i(u_s))$. In other words, $f_i(u_j) = \delta_{i,j}$.\footnote{Intuitively, the polynomials $f_i$ correspond to the rows of the matrix $A$ that were used in the proof of Lemma~\ref{lem:deg_3} to make the columns of $U$ equal the elementary unit vectors there.}

Our next goal is proving that if $U\sim_{2r+1} V$ then $u_1\in V$. This will clearly imply the lemma as we can prove the same for any other $u_i$. Our main handle will be the polynomial $f_1$ that separates $u_1$ from the other $u_i$'s. 
%{\bf EA: Is it $U_i$'s ?}

Let us assume wlog that $(u_1)_1=1$, i.e., that the first coordinate of $u_1$ equals $1$.\footnote{If it equals $0$ then we consider the polynomial $(1+x_1)f_1$ in what follows.}
Consider the polynomial $x_1\cdot f_1$. This is a polynomial of degree $\leq r+1$ and by the definition of $f_1$ and the assumption on $(u_1)_1$ we have that $\sum_{i=1}^{s}(x_1f_1)(u_i)=1$. 
As $U\sim_{2r+1} V$, it must hold that $\sum_{i=1}^{s'}(x_1f_1)(v_i)=1 \mod 2$.  
%{\bf EA: This is a degree ``at most" $r+1$.. and the sums are from $i=1$ until $s$?}

Following the footsteps of the proof of Lemma~\ref{lem:deg_3}, we denote $J_{f_1,1} = \{i \mid  (x_1f_1)(v_i)=1\}$. The next claim is analogous to Claim~\ref{claim: V spans e1}.

\begin{claim}\label{cal: V^r span U^r}
%the first $m$ co-ordinates of the vector 
$$\sum_{i\in J_{f_1,1}}v_i^{{r}} = u_1^{{r}}.$$ 
%equal $U_1$. 
\end{claim}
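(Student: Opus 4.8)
\textbf{Proof proposal for Claim~\ref{cal: V^r span U^r}.}
The plan is to verify the claimed vector identity in $\F_2^{\binom{m}{\leq r}}$ coordinate by coordinate. Recall that the coordinates of a degree-$\leq r$ evaluation vector $w^{{r}}$ are indexed by the monomials $g$ of degree at most $r$, and that the $g$-th coordinate of $w^{{r}}$ is exactly $g(w)$. Hence it suffices to show that for every such monomial $g$,
$$\sum_{i\in J_{f_1,1}} g(v_i) = g(u_1) \pmod 2 .$$

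First I would rewrite the left-hand side as a single polynomial evaluation. Since evaluation at a point is an $\F_2$-algebra homomorphism and $(x_1f_1)(v_i)\in\{0,1\}$ is precisely the indicator of the event $i\in J_{f_1,1}$, we have
$$\sum_{i\in J_{f_1,1}} g(v_i) \;=\; \sum_{i=1}^{s'} (x_1f_1)(v_i)\, g(v_i) \;=\; \sum_{i=1}^{s'} (x_1 f_1 g)(v_i),$$
where $x_1 f_1 g$ denotes the honest product polynomial. Its degree is at most $1+r+r = 2r+1$, and this is exactly where the hypothesis $V\sim_{2r+1} U$ enters: by Lemma~\ref{lem:pattern_parity} (the equivalence with polynomials of degree $\leq 2r+1$), $\sum_{i=1}^{s'} (x_1 f_1 g)(v_i) = \sum_{i=1}^{s} (x_1 f_1 g)(u_i)$.

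It then remains to evaluate the right-hand side over $U$. Here I use the defining property $f_1(u_i) = \delta_{1,i}$, which is available precisely because the columns of $U^r$ are linearly independent: every term with $i\neq 1$ vanishes, and the $i=1$ term is $(u_1)_1\cdot f_1(u_1)\cdot g(u_1) = 1\cdot 1\cdot g(u_1) = g(u_1)$, using the normalization $(u_1)_1 = 1$. Chaining the three equalities gives $\sum_{i\in J_{f_1,1}} g(v_i) = g(u_1)$ for all monomials $g$ of degree $\leq r$, which is the coordinate-wise form of the claim.

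This is essentially a routine translation of Claim~\ref{claim: V spans e1} to the degree-$r$ setting, with the polynomial $f_1$ playing the role that the change of basis $A$ (making the columns of $U$ unit vectors) played there. The only point requiring care is the degree bookkeeping: one must check that $x_1 f_1 g$ really has degree at most $2r+1$ — so that $V\sim_{2r+1}U$ is exactly the right (and, by Theorem~\ref{thm:intro:counterex}, essentially tight) hypothesis — noting that reduction modulo the relations $x_j^2-x_j$ can only decrease the degree of a product. I do not anticipate any genuine obstacle beyond this.
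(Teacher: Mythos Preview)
Your proof is correct and follows essentially the same approach as the paper: fix a monomial $g$ of degree $\leq r$, multiply by $x_1 f_1$ to obtain a polynomial of degree $\leq 2r+1$, invoke $V\sim_{2r+1}U$, and then collapse the sum over $U$ using $f_1(u_i)=\delta_{1,i}$. Your version is in fact slightly cleaner than the paper's, which splits into the two cases $M(u_1)=1$ and $M(u_1)=0$ (using $(1+M)x_1f_1$ in the latter), whereas your direct computation $\sum_{i} (x_1 f_1 g)(u_i)=g(u_1)$ handles both cases uniformly.
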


\begin{proof}
Let $M$ be some monomial of degree $\leq r$. To ease the notation assume that $M(u_1)=1$ and consider the polynomial $M\cdot x_1\cdot f_1$.\footnote{If $M(U_1)=0$ then we  consider the polynomial $(1+M)\cdot x_1\cdot f_1$ instead.} It is clear that $(M\cdot x_1\cdot f_1)(u_1)=1$. Since $V\sim_{2r+1} U$ and $\deg(M\cdot x_1\cdot f_1)\leq 2r+1$, it follows that $\sum_{i=1}^{s'}(Mx_1f_1)(v_i)=1$. From definition of $J_{f_1,1}$ we have that $$\sum_{i=1}^{s'}(Mx_1f_1)(v_i)=\sum_{i\in J_{f_1,1}}(Mx_1f_1)(v_i)=1.$$ Indeed, for every $i\not\in  J_{f_1,1}$ we have that $(x_1f_1)(v_i)=0$. 
We thus conclude that there is an odd number of vectors $v_i$, $i\in J_{f_1,1}$, such that $(Mx_1f_1)(v_i)=1$. In particular, the $M$'th coordinate in the sum $\sum_{i\in J_{f_1,1}}v_i^{{r}}$ equals $1$, i.e. it is equal to $M(u_1)$. As $M$ was arbitrary we conclude that  $\sum_{i\in J_{f_1,1}}v_i^{{r}} = u_1^{{r}}$, as required. 
\end{proof}

As we can prove an analogous lemma for every $u_i$, we conclude the the columns of $U^r$ belong to the span of the columns of $V^r$. In particular, the columns of $V^r$ are linearly independent and $|V|=s$ (earlier we called this observation Claim~\ref{claim:V independent}). Our next step is proving that, up to permutation of columns, 
$U^r=V^r$. This will imply that $u_i=v_i$ as we wanted. 

To show this, for every $\ell \in [m]$,  we denote $$J_{f_1,\ell} = \{i\mid (v_i)_\ell = (u_1)_\ell  \text{ and } f_1(v_i)=1\}.$$
We note that there is an alternative way to define $J_{f_1,\ell}$ by considering either the polynomial $x_\ell \cdot f_1$ or the polynomial $(1+x_\ell)\cdot f_1$. We thus have that for every $\ell\in [m]$,
$$\sum_{i\in J_{f_1,\ell}}v_i^{{r}} = u_1^{{r}}.$$
However, as the columns of $V^r$ are linearly independent and 
$$\sum_{i\in J_{f_1,\ell}}v_i^{{r}} = u_1^{{r}} = \sum_{i\in J_{f_1,1}}v_i^{{r}}$$
we get that $J_{f_1,1} = J_{f_1,2} = \ldots = J_{f_1,m}$.
Hence, $$J_{f_1,1} = \cap_{\ell=1}^{m}J_{f_1,\ell} =  \{i\mid \forall \ell \in [m] \; (v_i)_\ell = (u_1)_\ell  \text{ and } f_1(v_i)=1\}.$$
Thus, for every $i\in J_{f_1,1}$ we have that $v_i = u_1$. In particular, since $J_{f_1,1}\neq \emptyset$, it follows that there is some $i\in [s]$ such that $v_i = u_1$. As we can prove this for every $u_j$, we conclude that $U=V$ as claimed. This concludes the proof of Lemma~\ref{lem:deg_r_errasures_to_errors}.
\end{proof}

We thus proved that if an error pattern $\1_U$ is such that its coordinates $u_i$ satisfy that $u_i^{{r}}$ are linearly independent, then we can correct that error pattern in $RM(m,m-(2r+2))$. We summarise this in the following theorem.

\begin{thm}\label{thm:erasures_to_errors}
If a set of columns $U$ are linearly independent in $E(m,r)$ (namely, $RM(m,m-r-1)$ can correct the {\em erasure} pattern $\1_U$), then the {\em error} pattern $\1_U$ can be corrected  in $RM(m,m-(2r+2))$.
\end{thm}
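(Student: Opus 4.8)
The plan is to observe that Theorem~\ref{thm:erasures_to_errors} is the coding-theoretic repackaging of the combinatorial Lemma~\ref{lem:deg_r_errasures_to_errors}; the only work is to pass between the syndrome / unique-decodability language and the relation $\sim_{2r+1}$. First I would pin down the parity-check matrix: by the duality of Reed--Muller codes (Lemma~\ref{duality}), a parity-check matrix for $RM(m,m-(2r+2))$ is $H(m,m-(2r+2)) = E(m,2r+1)$, since $m-(m-(2r+2))-1 = 2r+1$. Also, the hypothesis ``$U$ is linearly independent in $E(m,r)$'' means exactly that $\mathrm{rk}(U^r) = |U|$, i.e.\ the columns of $U^r$ are linearly independent, which is precisely the hypothesis of Lemma~\ref{lem:deg_r_errasures_to_errors}.

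Next I would recall (from Section~\ref{prelim}) that an error pattern $\1_U$ of weight $s$ fails to be uniquely decodable under maximum-likelihood decoding exactly when there is a competitor $\1_V$ with $V\neq U$, $w(\1_V)\leq s$, and the same syndrome $E(m,2r+1)\1_V = E(m,2r+1)\1_U$; this is because ML prefers a strictly-or-equally-likely error pattern, which in both the uniform and the i.i.d.\ models means one of weight at most $s$. By the Parity-of-patterns lemma (Lemma~\ref{lem:pattern_parity}, applied with $2r+1$ in the role of $r$), the syndrome equality $E(m,2r+1)\1_U = E(m,2r+1)\1_V$ is equivalent to $U\sim_{2r+1}V$. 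So $\1_U$ is uniquely decodable in $RM(m,m-(2r+2))$ if and only if there is no $V\neq U$ with $|V|\leq |U|$ and $V\sim_{2r+1}U$.

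Finally I would invoke Lemma~\ref{lem:deg_r_errasures_to_errors}: under the standing hypothesis that the columns of $U^r$ are linearly independent, it states precisely that $V\not\sim_{2r+1}U$ for every $V\neq U$ with $|V|\leq|U|$. Combining with the previous paragraph gives unique decodability, proving the theorem. For the parenthetical remark I would note that $E(m,r) = H(m,m-r-1)$ is a parity-check matrix of $RM(m,m-r-1)$, so by Corollary~\ref{erasure-H} (equivalently Lemma~\ref{equiv}) the linear independence of the $|U|$ columns $E(m,r)[U] = U^r$ is exactly the statement that $RM(m,m-r-1)$ corrects the erasure pattern $\1_U$. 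I do not anticipate a real obstacle here, since Lemma~\ref{lem:deg_r_errasures_to_errors} already carries the entire combinatorial weight; the only point requiring a little care is matching the ``uniquely decodable'' condition to the lemma's quantification over all $V$ with $|V|\leq|U|$ (rather than only $|V|=|U|$), which is exactly why the lemma was proved in that slightly stronger form.
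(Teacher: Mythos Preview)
Your proposal is correct and follows essentially the same route as the paper: invoke Lemma~\ref{lem:deg_r_errasures_to_errors} for the combinatorial core, translate $\sim_{2r+1}$ to equality of syndromes via Lemma~\ref{lem:pattern_parity}, and identify $E(m,2r+1)=H(m,m-(2r+2))$ by duality. The paper's proof is terser but the logic is identical.
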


\begin{proof}
Lemma~\ref{lem:deg_r_errasures_to_errors} tells us that if the columns indexed by $U$ are linearly independent in $E(m,r)$, then there is no other $V\subseteq \F_2^m$ of size $\leq s$ such that  $V\sim_{2r+1} U$. Lemma~\ref{lem:pattern_parity} now implies that $$E(m,2r+1)\cdot \1_U \neq E(m,2r+1)\cdot \1_V,$$ for any $U\neq V\subseteq \F_2^m$ of size $\leq s$.
As $E(m,2r+1) = H(m,m-(2r+2))$, it follows that the syndrome of $\1_U$ is unique and hence $\1_U$ is uniquely decodable in $RM(m,m-(2r+2)$.\end{proof}

 The proof of Theorem \ref{thm:main_for_BSC} immediately follows  from Theorems~\ref{thm:lin_ind_eval_vectors} and \ref{thm:erasures_to_errors}.
%Lemmas~\ref{lem:pattern_parity}
% and \ref{lem:deg_r_errasures_to_errors}.

\begin{proof}[Proof of Theorem \ref{thm:main_for_BSC}]
Theorem~\ref{thm:lin_ind_eval_vectors} guarantees that a set $U\subseteq \F_2^m$ of  $s=\lfloor {m- \log({m\choose \leq r})+ \log(\e) \choose \leq r}\rfloor -1$ randomly chosen vectors, satisfy that the columns of $U^r$ are linearly independent. By Lemma~\ref{lem:deg_r_errasures_to_errors} we learn that there is not other $V\subseteq \F_2^m$ of size $\leq s$ such that  $V\sim_{2r+1} U$. Lemma~\ref{lem:pattern_parity} implies that for any such $V$, $$E(m,2r+1)\cdot \1_U \neq E(m,2r+1)\cdot \1_V.$$
As $E(m,2r+1) = H(m,m-(2r+2))$, it follows that the syndrome of $\1_U$ is unique and hence $\1_U$ is uniquely decodable in $RM(m,m-(2r+2)$.
\end{proof}

%As a corollary of the above we get the following interesting theorem.

%\Anote{Changed title}
\subsubsection{A general reduction from decoding from errors to decoding from erasures} \label{sec:general_deg_3}

In this section we show that the results proved in Section~\ref{sec:deg_3} are in fact more general and apply to any degree three tensoring of a linear code with itself. We first set up the required definitions.

\begin{defin}\label{def: hadamard product}
The Hadamard product of two vectors $y,z\in \F_2^n$ is the vector $w=y \circ z$ obtained from the coordinate wise product $w_i = y_i\cdot z_i$.
\end{defin}

\begin{defin}\label{def: tensoring}
Let $H$ be a $k\times n$ matrix. For every natural number $\ell$, $H^{\otimes \ell}$ is a ${k \choose \leq \ell} \times n$ matrix that is defined as follows. Rows of $H^{\otimes \ell}$ are indexed by tuples $i_1 < i_2<\ldots <i_j$, $1\leq j\leq \ell$, where the corresponding row in $H^{\otimes \ell}$ is equal to the Hadamard product of rows $i_1, i_2,\ldots,i_j$. 
\end{defin}

In other words, if we think of $H$ as the set of its column vectors then, using our usual notation, $H^{\otimes \ell}=H^\ell$.
In particular,  the parity check matrix $H(m,m-r-1)$ of the code $RM(m,m-r-1)$ is equal to $H(m,1)^{{r}}$. Indeed the row indexed by $i_1 < i_2<\ldots <i_j$ corresponds to the evaluations of the monomial $\prod_{t=1}^{j}x_{i_t}$. 

It is also clear that for any integers $m,n$ and any $m\times n$ matrix $H$, the set of columns of $H$ is contained in the set of columns of  the Hadamard matrix of rank $m$, i.e., $E(m,1)$, namely, every column of $H$ appears in $E(m,1)$. We thus obtain the following two corollaries that follow from the proof technique of the previous section.\footnote{The proofs are completely identical and are thus omitted.}

\begin{corol}\label{cor:inde columns general}
Let $m,n$ be integers and $H$ an $m\times n$ matrix. Let $S\subseteq [n]$ be such that the columns indexed by $S$ in $H$ are linearly independent. Then, in the code whose parity check matrix is $H^{\otimes 3}$, we can correct the error pattern $S$.
\end{corol}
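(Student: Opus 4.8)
\emph{Proof plan.} The statement is the coordinate-free version of Lemma~\ref{lem:deg_3}, obtained by allowing the ambient evaluation matrix $E(m,1)$ to be replaced by an arbitrary $H$, so the plan is to reduce directly to Lemma~\ref{lem:deg_3}, using nothing about $H$ beyond the fact that its columns are $\F_2$-vectors. First I would set up a dictionary. Let $m$ be the number of rows of $H$, so the columns of $H$ are points of $\F_2^m$; write $\sigma(c)\in\F_2^m$ for the $c$-th column of $H$, and assume, as one must when speaking of error patterns, that these columns are pairwise distinct (if two columns coincide the code has minimum distance at most $2$ and even weight-one patterns fail; in the only application, $H=E(m,1)$, the columns run over all of $\F_2^m$), so that $\sigma$ is injective.

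Because the rows of $E(m,3)$ are exactly all monomials of degree at most $3$ in $m$ variables, and the entry at column $c$ of the row of $H^{\otimes 3}$ indexed by $i_1<\dots<i_j$ ($j\le 3$) equals the value of $x_{i_1}\cdots x_{i_j}$ at the point $\sigma(c)$, we obtain for every $W\subseteq[n]$ that $H^{\otimes 3}\cdot\1_W=E(m,3)\cdot\1_{\sigma(W)}$; thus syndromes under $H^{\otimes 3}$ are a special case of syndromes under $E(m,3)$.

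Next I would unfold correctability of $S$. The code $\ker(H^{\otimes 3})$ corrects the pattern $\1_S$ if and only if there is no $T\subseteq[n]$ with $T\ne S$, $|T|\le|S|$ and $H^{\otimes 3}\1_S=H^{\otimes 3}\1_T$. Suppose such a $T$ existed, and set $U=\sigma(S)$, $V=\sigma(T)$. By hypothesis $U$ is a set of $|S|$ linearly independent vectors in $\F_2^m$, and by injectivity of $\sigma$ we have $V\ne U$ with $|V|\le|U|$. The dictionary gives $E(m,3)\1_U=E(m,3)\1_V$, which by the parity-of-patterns equivalence (Lemma~\ref{lem:pattern_parity}) says exactly that $U\sim_3 V$. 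Since $U$ is linearly independent and $V\ne U$ with $|V|\le|U|$, Lemma~\ref{lem:deg_3} (equivalently Lemma~\ref{lem:deg_r_errasures_to_errors} with $r=1$) yields $V\not\sim_3 U$, a contradiction. Hence no such $T$ exists and $\1_S$ is uniquely decodable.

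I do not expect a genuine obstacle here: as the paper remarks, this is verbatim the argument of Lemma~\ref{lem:deg_3}. The one point I would take care to verify is that the proof of Lemma~\ref{lem:deg_3} uses nothing about $E(m,1)$ beyond (a) that its columns are arbitrary points of $\F_2^m$ carrying evaluations of polynomials of degree at most $3$, and (b) the linear change-of-coordinates step of Lemma~\ref{lem:affine inv}; for a general $H$ the latter corresponds to left-multiplying $H$ by an invertible matrix, which leaves the row span of $H^{\otimes 3}$, hence the code $\ker(H^{\otimes 3})$, unchanged. Both facts hold without modification, so the reduction above is complete.
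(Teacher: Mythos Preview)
Your proof is correct and follows exactly the approach the paper indicates: view the columns of $H$ as points in $\F_2^m$ (hence as a subset of the columns of $E(m,1)$), so that the syndrome map of $H^{\otimes 3}$ is the restriction of that of $E(m,3)$ to those points, and then invoke Lemma~\ref{lem:deg_3} (via Lemma~\ref{lem:pattern_parity}) directly. Your care in noting that the columns of $H$ must be pairwise distinct for the statement to even hold is a point the paper leaves implicit.
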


Using the relation between correcting erasures and independence (Lemma~\ref{equiv}) in the parity check matrix we obtain the following corollary.

\begin{thm}\label{thm:general_erasures_tensored_to_errors}
Let $C\subseteq\F_2^n$ be a linear code with parity check matrix $H$. For any subset $S\subseteq [n]$ the following hold: If we can recover codewords in $C$ from erasures in the coordinates $S$ then in the code whose parity check matrix is $H^{\otimes 3}$, we can correct the error pattern $\mathbb{1}_S$.
\end{thm}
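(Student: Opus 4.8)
The plan is to obtain the statement as a two-step composition of facts already in place. \emph{Step 1 (erasures $\leftrightarrow$ independence).} By Lemma~\ref{equiv}, for $C=\ker(H)$ the erasure pattern $S$ is correctable — i.e.\ no two distinct codewords of $C$ agree off $S$ — if and only if $\mathrm{rk}(H[S])=|S|$, that is, iff the columns of $H$ indexed by $S$ are linearly independent over $\F_2$. Thus the hypothesis ``$C$ recovers from erasures in $S$'' is literally the hypothesis ``the $S$-columns of $H$ are linearly independent.'' \emph{Step 2 (apply the tensoring corollary).} Feed this independent set of columns into Corollary~\ref{cor:inde columns general}, which asserts that the code with parity check matrix $H^{\otimes 3}$ corrects the error pattern $\1_S$. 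Chaining Steps~1 and~2 is the whole proof.

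Since Corollary~\ref{cor:inde columns general} is where the content lives, I would at least recall why it holds, as it is stated in the excerpt without proof. Write $k$ for the number of rows of $H$. Every column of $H$ is a vector of $\F_2^k$, hence appears among the columns of the Hadamard matrix $E(k,1)$; by Definition~\ref{def: tensoring} the row indexed by $i_1<\dots<i_j$ of $H^{\otimes 3}$ is the Hadamard product of the corresponding rows of $H$, so $H^{\otimes 3}$ is a column-selection of $E(k,3)$. Now suppose an error pattern $\1_T$, $|T|\le|S|$, has the same $H^{\otimes 3}$-syndrome as $\1_S$; via Lemma~\ref{lem:pattern_parity} this says $V\sim_3 U$, where $U,V\subseteq\F_2^k$ are the sets of columns indexed by $S,T$. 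Applying an invertible change of basis of $\F_2^k$ carrying the independent set $U$ to $\{e_1,\dots,e_{|S|}\}$ (legitimate by the affine-invariance Lemma~\ref{lem:affine inv}), Lemma~\ref{lem:deg_3} forces $V=U$, hence $T=S$; so $\1_S$ is the unique vector of weight $\le|S|$ with its syndrome, i.e.\ it is uniquely decodable.

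The only real obstacle has already been overcome upstream — it is the two-part argument of Lemma~\ref{lem:deg_3} (first $V$ spans the same space as $U$, using $\sim_2$; then $V=U$, using $\sim_3$). Within the present theorem the only thing to watch is bookkeeping: Lemmas~\ref{lem:deg_3} and~\ref{lem:pattern_parity} are phrased for \emph{sets} of points of $\F_2^k$, whereas a general $H$ may repeat columns, so I would either assume $H$ non-redundant (distinct columns — costless for the intended application to $E(m,r)$) or observe that the argument of Lemma~\ref{lem:deg_3} uses only the $\sim_3$ parity relation together with linear independence of $U$, and is therefore insensitive to coincidences among columns of $H$ outside $S$. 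No new estimate is needed; the size blow-up is simply that $H^{\otimes 3}$ has ${k\choose\le 3}=O(k^3)$ rows, matching the ``cube of the co-dimension'' statement in the introduction.
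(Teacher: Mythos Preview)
Your proposal is correct and follows exactly the route the paper takes: the paper derives Theorem~\ref{thm:general_erasures_tensored_to_errors} from Corollary~\ref{cor:inde columns general} via Lemma~\ref{equiv}, and explicitly says the proof of Corollary~\ref{cor:inde columns general} is ``completely identical'' to that of Lemma~\ref{lem:deg_3} (and is omitted), which is precisely the content you unpacked. Your bookkeeping remark about repeated columns in $H$ is a legitimate caveat the paper glosses over.
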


We note that Corollary~\ref{cor: RM deg 3} is a special case of Theorem~\ref{thm:general_erasures_tensored_to_errors}.

\subsubsection{The degree-$2$ counterexample}\label{sec:counterex}

In this section we prove that $RM(m,m-3)$ does not achieve capacity for the BSC. 
%\Amnote{Please check that the degree is correct}
For this Reed-Muller code to achieve capacity we must have that, w.h.p., a random error pattern of weight $O(m)$ has a unique syndrome. We next show that very few patterns of this weight have unique syndromes. In fact, we show that very few patterns of weight $\sqrt{m}$ have unique syndromes.

Let $s<\sqrt{m}$ be an even integer. Let $B$ be the following $s\times s$ matrix. For $1\leq i\leq s-2$, the $i$'th row of $B$ has $1$ in coordinates $1,2$ and $i+2$. E.g., the first tow of $B$ begins with three $1$'s followed by zeros. The $s-1$'th row of $B$ equals $(1,0,1,\ldots,1)$ and the last row of $B$ is $(0,1,\ldots,1)$. Note that $B\cdot B^t = I$. For example, the matrix $B$ in \eqref{eq:matrix-patterns} is what we get if we set $s=6$.

Let $u_1,\ldots,u_s$ be any set of vectors in $\F_2^m$. Let $U$ be the matrix whose $i$'th column is $u_i$. Define $V  = U \cdot B$. %, and let $v_i$ be the $i$'th column of $V$. 
It is not hard to verify that $E(m,2)\cdot \1_U = E(m,2)\cdot \1_V$. Thus, if $V\neq U$ then $\1_U$ does not have a unique syndrome. 

Finally, we note that picking $u_1,\ldots,u_s$ at random is equivalent to picking the matrix $U$ at random. If  $U$ and $V$ have the same set of columns then there must exist an $s\times s$ permutation matrix $\Pi$ such that $U\Pi=V$. Thus, $U(B-\Pi)=0$. Fix $\Pi$. The probability that all rows of $U$ are in the kernel of $B-\Pi$ is at most $2^{-m}$. Indeed, for every permutation matrix $\Pi$, $\rank(B-\Pi)\geq 1$. As there are $s!$ permutation matrices, the probability that $U$ is unique is at most $s!/2^m < 2^{-m/2}$.

%%%%%%%%%%%%%%%%%%%%%%%%%%%%%%%%%%
%%%%%%%%%%%%%%%%%%%%%%%%%%%%%%%%%%
%%%%%%%%%%%%%%%%%%%%%%%%%%%%%%%%%%
%%%%%%%%%%%%%%%%%%%%%%%%%%%%%%%%%%

%\Amnote{achieve capacity for all parameters, questions about evaluation vectors, decoding algorithms, extending weight enumerator}

\section{Future directions and open problems}\label{sec:open}

We believe that our work renews hope for progress on some classical questions, and suggests some new concrete directions and open problems.

The most obvious of all is the question of whether Reed-Muller codes achieve capacity for all ranges of parameters, either for random erasures or for random errors. We only handle here the extreme cases of very high or very low rates, whereas most interest is traditionally focused on constant rate codes. We believe that the techniques for each of our four bounds can be improved to a larger set of parameters (see below), but feel that they fall short of reaching constant rate, and possibly new techniques are needed.

One way to improve our bounds in both Theorem~\ref{thm:intro:low_deg_BEC} (low-rate BEC) and Theorem~\ref{thm:intro-low-BSC} (low rate BSC) is through tighter bounds on the weight enumeration of Reed-Muller codes, as well as tighter bounds on the probability of error for Theorem~\ref{thm:intro-low-BSC}.
We believe that %the right bound 
in Theorem~\ref{thm:intro:wt-dist} one can eliminate the factor $\ell^4$ in the exponent, resulting in a bound that is  a fixed polynomial (independent of $m,r,\ell$) of the lower bound in \cite{KLP}. While such a tight result would not get us (in either Theorem~\ref{thm:intro:low_deg_BEC} and \ref{thm:intro-low-BSC}) to the constant rate regime, this question of weight enumeration is of course basic in its own right. Moreover, both in ~\cite{KLP} and our paper, it also implies similar bounds for list-decoding, which is another basic question.

Theorem~\ref{thm:intro:RM_for_BEC_high_degree} (high rate BEC) is quantitatively much weaker than Theorems~\ref{thm:intro:low_deg_BEC} and \ref{thm:intro-low-BSC}, in that the latter two can handle polynomials of degree-$r$ which is linear in $m$, whereas the former only reaches degrees $r$ which are about $\sqrt{m}$. The bottleneck in the argument, which probably prevents it from reaching a linear degree, is the use of the union bound. We upper bound the probability that, when adding a subsequent random vector $u$ to our set $U$, its evaluation $u^r$ will be linearly independent of the evaluations of all previously chosen points. This current proof does not use at all that previous points were chosen randomly, as we don't know how to take advantage of this. 

For high-rate BSC (Theorem~\ref{thm:intro:main_for_BSC}), while we are able to correct many more errors than previously known, we are not even able to achieve capacity. Here we feel  that one important bottleneck is our inability to argue directly about corruption patters (sets $U$) which are linearly {\em dependent}. Our unique decoding proof, even for $r=1$ (on which we focus now), showing that a set $U \in \F_2^m$ is uniquely determined by its syndrome under evaluations by degree-$3$ monomials i.e., by $E(m,3)\cdot \1_U$, is especially tailored to linearly independent sets $U$. The gap between our lower bound (namely that  $E(m,2)\cdot \1_U$ does not suffice) and the above upper bound (that  $E(m,3)\cdot \1_U$ suffices) is intriguing, and we believe we can find a subset of quadratically many monomials of degree at most $3$ which guarantee unique decoding - such a result is information theoretically optimal; number of error patterns $U$ which are linearly independent is about $\exp(m^2)$, and thus $O(m^{2})$ bits are needed in any unique encoding.

Another burning question regarding this result is its inefficiency. While unique decoding is guaranteed, the best way we know to identify the set $U$ is brute force, requiring $\exp(m^2)$ steps for independent sets $U$ of size $m$.  We feel that a good starting place is (perhaps using our uniqueness proof) which recovers $U$ in $\exp(m) = \poly(n)$ steps from its evaluation on all degree-3 monomials (or even degree-10 monomials). Of course, it is quite possible that a $\poly(m)$ algorithm exists. In particular, recursive algorithms (that exploit the recursive nature of RM codes) could be used to that effect\footnote{Practically, one can decode RM codes on the BSC by using a recursive decoder (e.g., like for polar codes) for each missing component in the syndrome, and by growing a list of possible codewords each time the decoder has doubts, or pruning down the tree each time the decoder can check the validity of a path (from the available components of the syndrome).}. 

Yet another research direction related to our result is of course exploring the connections between recovering from erasures and from errors. Our general reduction between the two uses tensor powers and hence loses in efficiency (which here is best captured by the co-dimension of the code, which is cubed). Is there a reduction which looses less? We do not know how to rule out a reduction that increases the co-dimension only by a constant factor. There is no reason to restrict attention to Reed-Muller codes and our tensor construction - such a result would be of use anywhere, as erasures are so much simpler to handle than errors.

Finally, we believe that a better understanding of the relation between Reed-Muller codes and Polar codes is needed, and perhaps more generally an understanding of which subspaces of polynomials generated by subsets of monomials give rise to good, efficient codes. In particular, it would also be interesting to investigate the scaling of the blocklengh in terms of the gap to capacity for RM codes. It was proved recently in \cite{guru-polar} that for polar codes, the blocklength scales polynomially with the inverse of the gap to capacity, with a precise characterization given in \cite{hamed_thesis}. While this scaling does not match the optimal scaling of random codes \cite{strassen}, it is in contrast to the exponential scaling obtained with concatenated codes \cite{forney-thesis} (see \cite{guru-polar} for a discussion on this). It would be interesting to investigate such finer questions for RM codes in view of the results obtained in this paper, which already provide partial information about these scalings. 

\section*{Acknowledgements}
We thank Venkatesan Guruswami for bringing \cite{Wei} to our attention. 
The second author would like to thank the organizers of Dagstuhl meeting ``Algebra in Computational Complexity,'' where he discussed the results with Venkatesan Guruswami.
%Venkat corresponded with us during a Dagstuhl meeting, where we first presented our result. 
%
%

\bibliographystyle{amsalpha}
\bibliography{RMcode}

\appendix

\section{Proofs of Claim~\ref{cla:binomial-difference} and Claim \ref{cla: estimation r small}}\label{app:missing-proofs}

\begin{proof}[Proof of Claim~\ref{cla:binomial-difference}]

We first need the following estimate.

\begin{claim}\label{cla:sum_of_binomials}
For integers $0\leq a\leq c$ and $b\leq c-a$ we have that $\sum_{i=1}^{a}{c-i\choose b} =  {c \choose b+1} -  {c-a\choose b+1}$.
\end{claim}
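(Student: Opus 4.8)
\textbf{Proof proposal for Claim~\ref{cla:sum_of_binomials}.}

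The plan is to prove the identity by telescoping, using Pascal's rule in the form ${j \choose b+1} - {j-1 \choose b+1} = {j-1 \choose b}$, which is valid for all integers $j$ and $b$ (with the usual convention that a binomial coefficient with a negative lower part or with lower part exceeding the upper part is $0$). This rewriting is the only nontrivial ingredient; everything else is bookkeeping on the summation index.

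Concretely, I would first substitute $j = c-i+1$ into Pascal's rule to obtain ${c-i \choose b} = {c-i+1 \choose b+1} - {c-i \choose b+1}$ for each $i$ in the range $1 \le i \le a$. Summing this over $i = 1, \dots, a$ gives
\[
\sum_{i=1}^{a} {c-i \choose b} \;=\; \sum_{i=1}^{a} \left( {c-i+1 \choose b+1} - {c-i \choose b+1} \right),
\]
and the right-hand side is a telescoping sum: consecutive terms cancel, leaving only the first positive term ${c \choose b+1}$ (from $i=1$) and the last negative term $-{c-a \choose b+1}$ (from $i=a$). This yields exactly ${c \choose b+1} - {c-a \choose b+1}$, as claimed.

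The hypotheses $0 \le a \le c$ and $b \le c-a$ are there only to guarantee that all the binomial coefficients appearing are the ``genuine'' ones (nonnegative upper parts, lower parts not exceeding upper parts), so that the elementary identities used are unambiguous; the telescoping argument itself does not really require them beyond this. I do not anticipate any obstacle here — the only care needed is to make sure the index shift $j = c-i+1$ is applied consistently so the cancellation is transparent. Once Claim~\ref{cla:sum_of_binomials} is in hand, Claim~\ref{cla:binomial-difference} follows by applying it with $c = m$, $a = t$, $b = j-1$ for each $j = 0,\dots,r$ and summing over $j$, together with the standard identity $\sum_{j} {n \choose j} = {n \choose \le r}$ telescoping appropriately; I would carry that out as the second half of the argument.
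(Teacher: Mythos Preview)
Your proof is correct and takes essentially the same approach as the paper: both use Pascal's rule ${c-i \choose b} = {c-i+1 \choose b+1} - {c-i \choose b+1}$ and telescope, the only cosmetic difference being that the paper presents the telescoping as an iterative absorption of one term at a time rather than writing the full telescoping sum at once.
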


\begin{proof}%[Proof of Claim~\ref{cla:sum_of_binomials}]
\begin{eqnarray*}
\sum_{i=1}^{a}{c-i \choose b} &=& \sum_{i=1}^{a}{c-i\choose b} + {c-a \choose b+1} -  {c-a\choose b+1}\\ &=& \sum_{i=1}^{a-1}{c-i\choose b} + {c-a+1 \choose b+1} -  {c-a\choose b+1} \\ &=& \ldots \\ &=& {c \choose b+1} -  {c-a\choose b+1}.
\end{eqnarray*}
\end{proof}

\noindent Using the claim we get that 
%Using Claim~\ref{cla:sum_of_binomials} we get
\begin{eqnarray*}
{m\choose \leq r} -  \sum_{i=1}^{t}{m-i \choose \leq r-1} &=&{m\choose \leq r} - \sum_{i=1}^{t}\sum_{j=0}^{r-1}{m-i \choose j} \\
&=&{m\choose \leq r} - \sum_{j=0}^{r-1}\sum_{i=1}^{t}{m-i \choose j} \\
&=& {m\choose \leq r} -  \sum_{j=0}^{r-1}\left( {m \choose j+1} - {m-t \choose j+1} \right)\\
&=& {m-t \choose \leq r}.
\end{eqnarray*}

\end{proof}

%\Amnote{Make sure what the exact statement is and the read the proof}

\begin{proof}[Proof of Claim \ref{cla: estimation r small}]
We shall need the following two simple inequalities that hold for every $C>A/4>B$: 
$$\frac{{A \choose B}}{{C \choose B}} > \left(\frac{A-B}{C}\right)^B \quad\quad \text{and}\quad\quad {A \choose \leq B} \leq 
%\left(\frac{eA}{B}\right)^B.$$
2A^B.$$
Thus, for any $0\leq r'\leq r$,
\begin{eqnarray*}
\frac{{m-3 \log({m\choose \leq r})+ \log(\e) \choose r'}}{{m \choose r'}} &>& \left( \frac{m-3 \log({m\choose \leq r})+ \log(\e)-r'}{m}   \right)^{r'} \\ &=& \left(1 - \frac{3 \log({m\choose \leq r})- \log(\e)+r'}{m} \right)^{r'}\\&\geq &
\left(1 - \frac{3 r\log(m)+3- \log(\e)+r'}{m} \right)^{r'} \\ &\geq & \left(1 - \frac{4 r\log(m)}{m} \right)^{r'} 
\\& \geq & \left(1 - \frac{4 r'\cdot r\log(m)}{m} \right) \\ & \geq & (1-\delta).
\end{eqnarray*}
Hence, 
$${{m-3 \log({m\choose \leq r})+ \log(\e) \choose \leq r}} = \sum_{r'=0}^{r} {{m-3 \log({m\choose \leq r})+ \log(\e) \choose r'}} > \sum_{r'=0}^{r}(1-\delta){{m \choose r'}} = (1-\delta){{m \choose \leq r}},$$
as claimed.
\end{proof}

%%%%%%%%%%%%%%%%%%%%

\section{A proof of Lemma~\ref{lem:number_of_ind_polynomials} using hashing}\label{app:hash-proof}

%\Amnote{I inserted back the old proof using hashing}

In this section we prove the following slightly weaker version of Lemma~\ref{lem:number_of_ind_polynomials}.

\begin{lemma}\label{lem:number_of_ind_polynomials-weak}
Let $\V\subseteq \F_2^m$ such that $|\V| > 2^{m-t}$. Then there are more than ${m-t - 2\lceil\log({m-t \choose \leq r})\rceil\choose \leq r}$ linearly independent polynomials of degree $\leq r$ that are defined on $\V$.
\end{lemma}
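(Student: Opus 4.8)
The plan is to prove this via a hashing/linear-algebra argument that reduces the general case to the case where $\V$ \emph{projects fully} onto a coordinate subspace. First I would observe that it suffices to find a linear map $T:\F_2^m \to \F_2^{m'}$ for a suitably large $m'$ such that $T$ restricted to $\V$ is surjective onto $\F_2^{m'}$; for then all $2^{m'}$ points of $\F_2^{m'}$ appear among the images $\{Tv : v\in\V\}$, and the degree $\leq r$ \emph{monomials} in the coordinates of the target space, pulled back via $T$, remain degree $\leq r$ polynomials on $\F_2^m$ (since $T$ is linear). Moreover these pulled-back monomials are linearly independent as functions on $\V$: because $T(\V)$ is all of $\F_2^{m'}$, any nonzero linear combination of these monomials is a nonzero polynomial on $\F_2^{m'}$, hence nonzero somewhere on $\F_2^{m'}=T(\V)$, hence nonzero at the corresponding point of $\V$. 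This would give ${m' \choose \leq r}$ linearly independent degree $\leq r$ polynomials defined on $\V$.

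The main step is therefore to show: if $|\V| > 2^{m-t}$, then there is a linear surjection $T:\F_2^m \to \F_2^{m'}$ with $T|_\V$ surjective, for $m' = m - t - 2\lceil \log {m-t\choose \leq r}\rceil$ (or close to it). I would do this by a probabilistic / pigeonhole counting argument on random linear maps. Pick $T$ by choosing its $m' \times m$ matrix uniformly at random over $\F_2$. For a fixed target point $w\in\F_2^{m'}$, the expected number of $v\in\V$ with $Tv = w$ is $|\V|/2^{m'} > 2^{m-t-m'} = 2^{2\lceil\log{m-t\choose\leq r}\rceil} \geq {m-t\choose\leq r}^2$, which is large. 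One then needs a second-moment or union-bound argument to conclude that with positive probability \emph{every} $w$ has a preimage in $\V$: the point of the factor $2^{2\lceil\log(\cdot)\rceil}$ (as opposed to just one extra bit) is precisely to give enough slack for the concentration/union bound over all $2^{m'}$ targets. A pairwise-independence computation on the collision probability $\Pr[Tv = Tv']$ for distinct $v,v'$ should suffice: bound the variance of the number of preimages of each $w$, apply Chebyshev, then union bound over the $2^{m'}$ values of $w$, using that $2^{m'}$ is comfortably dominated by the $({m-t\choose\leq r})^2$-type gain.

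I expect the main obstacle to be getting the bookkeeping in the second-moment argument to close with exactly the stated loss $2\lceil\log{m-t\choose\leq r}\rceil$ rather than a slightly worse constant — in particular handling the fact that $|\V|$ is only lower-bounded by $2^{m-t}$ (so $\V$ could be an arbitrary set, not a subspace) and that $T|_\V$ surjectivity is a conjunction of $2^{m'}$ events. A clean way around potential slack is to first replace $\V$ by an arbitrary subset of size exactly $\lceil 2^{m-t}\rceil$ (only makes the claim harder), and then note that once $m'$ is chosen so that the expected preimage count exceeds, say, $4\cdot 2^{m'}$ times a small constant, Chebyshev plus union bound over the $2^{m'}$ targets succeeds. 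Finally I would record that ${m'\choose \leq r}$ with $m' = m - t - 2\lceil\log{m-t\choose\leq r}\rceil$ is exactly the quantity in the statement (here I use monotonicity: $\log{m-t\choose\leq r} \geq \log{m'\choose\leq r}$ would be needed only if one wants to self-reference, which we don't), concluding the proof. The whole argument is self-contained and does not invoke Wei's theorem, matching the remark in the main text that this is the more transparent of the two proofs.
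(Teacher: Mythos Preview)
Your high-level plan---find a linear map $T$ whose restriction to $\V$ has large image, then pull back monomials---is exactly the paper's hashing approach. But your proposed implementation has a genuine gap: asking for \emph{full} surjectivity of $T|_{\V}$ onto $\F_2^{m'}$ is too strong, and the second-moment plus union-bound step does not close with the stated loss. Concretely, with $m' = m - t - 2\lceil\log\binom{m-t}{\leq r}\rceil$, Chebyshev gives $\Pr[\text{a fixed }w\text{ is missed}] \lesssim 1/\binom{m-t}{\leq r}^{2}$, but you must union-bound over $2^{m'}$ targets, and $2^{m'}$ is roughly $2^{m-t}/\binom{m-t}{\leq r}^{2}$. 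The product is $2^{m-t}/\binom{m-t}{\leq r}^{4}$, which is astronomically larger than $1$ in the regime of interest (small $r$). To rescue the union bound you would need $m' \lesssim (m-t)/2$, a far worse loss than claimed.

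The paper avoids this by weakening the target: it does \emph{not} require $T(\V)$ to be all of $\F_2^{m'}$, only that the projection contain a Hamming ball $B(z,r)$ around \emph{some} point $z$. This suffices because the $\binom{m'}{\leq r}$ degree-$\leq r$ monomials (in the projected coordinates) are already linearly independent when restricted to such a ball. To get the ball, the paper invokes the leftover hash lemma to obtain a linear $T$ whose projection onto the first $m'$ coordinates has density exceeding $1 - 1/\binom{m-t}{\leq r}$ in $\F_2^{m'}$; an averaging argument over centers $z$ then shows some ball of radius $r$ (which has at most $\binom{m-t}{\leq r}$ points) is missed by fewer than one point, hence entirely contained. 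The factor $2$ in $2\lceil\log\binom{m-t}{\leq r}\rceil$ is the price paid inside the leftover hash lemma (to drive the density defect below $1/\binom{m-t}{\leq r}$), not slack for a union bound over $2^{m'}$ events as you guessed.
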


Notice that the only difference between Lemma~\ref{lem:number_of_ind_polynomials} and Lemma~\ref{lem:number_of_ind_polynomials-weak} is the constant $2$ in the lower bound.

The main idea in the proof is showing that there exists a linear transformation $T$ such that the projection of the set $T(\V)$ onto (roughly) the first $\log(|\V|)$ coordinates contains a ball of radius $r$ around some point. 
Since restricting monomials, of degree $\leq r$, to a ball of radius $r$ yields linearly independent functions, the claim follows. To prove that a random transformation has a large projection onto the first coordinates we use the {\em leftover hash lemma} of Impagliazzo et al. \cite{ImpagliazzoLL89}. This is where we lose compared to Lemma~\ref{lem:number_of_ind_polynomials}. The lemma of \cite{ImpagliazzoLL89} gives more information than just a large projection (i.e., that the distribution on the projection is close to uniform)  and so it does not get the same parameters that we can get using the result of Wei  (Theorem~\ref{thm:GHW}).

\begin{proof}
%By the discussion preceding Lemma~\ref{lem:number_of_ind_polynomials} we have that $\dim(\mathbb{P}(m,r)/\I(U)=s$. I.e., the dimension of the space of degree $\leq r$ polynomials that are defined on $\V$ is $s$. 

%Next we show that if $|\V|$ is large then there are many linearly independent polynomials defined on $\V$. For this end we will need the following corollary of the leftover hash lemma of Impagliazzo et. al.\footnote{We obtain the corollary by picking the best seed in the Lemma of \cite{ImpagliazzoLL89}.} 

We start by proving that, after a suitable linear transformation, the projection of $\V$ onto the first coordinates contains a large ball.

%In the next  lemma we change notation and talk of a set $Y$ instead of $\V$ as we want to speak in more general terms. 

\begin{lemma}\label{lem:A_contains_ball}
Let $Y\subseteq\F_2^{m}$  a set of size $2^{a}$. Then, there is a linear transformation $T$ such that for some $z \in \F_2^{a - 2\lceil\log({a \choose \leq r})\rceil}$, the ball $B(z,r)$ is contained in the projection of $T(A)$ onto the first $a - 2\lceil\log({a \choose \leq r})\rceil$ coordinates. 
\end{lemma}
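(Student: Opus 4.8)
The plan is to produce $T$ from the leftover hash lemma of \cite{ImpagliazzoLL89} and then upgrade ``$\pi(T(Y))$ is spread out over $\F_2^{a'}$'' into ``$\pi(T(Y))$ contains a Hamming ball of radius $r$'' by a covering-code argument, where throughout $a' := a - 2\lceil \log {a \choose \leq r}\rceil$ and $\pi\colon\F_2^m\to\F_2^{a'}$ is the projection onto the first $a'$ coordinates (extending $T$ by arbitrary further rows, so that $T$ is genuinely a linear map $\F_2^m\to\F_2^m$ and $\pi T$ is its first $a'$ output coordinates).

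First I would record that the family $\{x\mapsto \pi(Ax) : A\in\F_2^{m\times m}\}$ is a $2$-universal family of maps $\F_2^m\to\F_2^{a'}$: for $x\neq y$, with $u := x-y\neq 0$, the $a'$ coordinates of $\pi(Au)$ are independent uniform bits (each is $\langle A_i,u\rangle$ for an independent uniform row $A_i$), so $\Pr_A[\pi(Ax)=\pi(Ay)]=2^{-a'}$. Applying the leftover hash lemma to the uniform distribution $X$ on $Y$ (collision probability $1/|Y| = 2^{-a}$) then gives that $\mathbb{E}_A$ of the statistical distance between $\pi(AX)$ and the uniform distribution $U$ on $\F_2^{a'}$ is at most $\tfrac12\sqrt{2^{a'}/|Y|}=\tfrac12\,2^{(a'-a)/2}=\tfrac12\,2^{-\lceil\log {a\choose\leq r}\rceil}\le \tfrac{1}{2{a\choose\leq r}}$, so some fixed $T$ achieves this bound.

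Next I would set $S := \pi(T(Y)) = \supp(\pi(TX))$ and observe that every $w\notin S$ contributes $\tfrac12 2^{-a'}$ to the statistical distance (probability $0$ under $\pi(TX)$, probability $2^{-a'}$ under $U$), whence $|\F_2^{a'}\setminus S| \le 2^{a'}/{a\choose\leq r} < 2^{a'}/{a'\choose\leq r}$, using $a>a'$ and $r\ge1$. The covering argument then finishes it: $\bigcup_{w\notin S}B(w,r)$ has size at most $|\F_2^{a'}\setminus S|\cdot{a'\choose\leq r}<2^{a'}$, so it misses some point $z\in\F_2^{a'}$; for that $z$, no point of $B(z,r)$ lies outside $S$, i.e. $B(z,r)\subseteq S=\pi(T(Y))$, which is exactly the assertion.

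The substantive input is the leftover hash lemma; everything else is elementary linear algebra and counting, and the only place I expect to have to be careful is keeping the chain of inequalities strict so that the covering point $z$ genuinely exists. It is precisely the factor-$2$ loss in the bound on $|\F_2^{a'}\setminus S|$ forced by the leftover hash lemma (as opposed to the exact generalized-Hamming-weight computation of \cite{Wei} used in the main text) that is responsible for the extra factor of $2$ in the exponent of Lemma~\ref{lem:number_of_ind_polynomials-weak}.
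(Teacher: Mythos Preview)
Your proposal is correct and follows essentially the same approach as the paper: invoke the leftover hash lemma to get a projection $S=\pi(T(Y))$ whose complement in $\F_2^{a'}$ has density below $1/{a\choose\leq r}$, then use a pigeonhole argument to locate a full radius-$r$ ball inside $S$. The only cosmetic differences are that the paper quotes the hashing step as a black-box (its Lemma~\ref{lem:LOHL}, which also delivers an \emph{invertible} $T$---needed downstream in Lemma~\ref{lem:number_of_ind_polynomials-weak} though not in the present statement) and phrases the final step as averaging over centers $z$ rather than as your covering-code argument; the two are dual and yield the same conclusion.
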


\begin{proof}
For the proof we need the following leftover hash lemma of Impagliazzo et al. \cite{ImpagliazzoLL89}.
\begin{lemma}[\cite{ImpagliazzoLL89}]\label{lem:LOHL}
Let $\ell\leq a \leq m$ be integers and $Y\subseteq\F_2^{m}$  a set of size $2^{a}$. Then, there exists an invertible $m\times m$ matrix $T$ such that the projection of the set $T(Y)$ onto the first $a-\ell$ coordinates yields a set of size larger than $2^{a-\ell}(1-2^{-\ell/2})$.
\end{lemma} 
Given Lemma~\ref{lem:LOHL} the proof of Lemma~\ref{lem:A_contains_ball} is by a simple averaging argument.
%The next lemma is a simple averaging argument that shows that if a set $Y$ is large then, after a suitably linear transformation, its projection on the first coordinates contains some ball of radius $r$. 
%The exact parameters are determined by Lemma~\ref{lemma: LOHL}.
For the proof we shall denote with $\pi_{b}(\cdot)$ the map that projects $m$-bit vectors on their first $b$ coordinates. 

Denote $\hat{a} = a - 2\lceil\log({a\choose \leq r})\rceil$.
Apply Lemma~\ref{lem:LOHL} with $\ell =a-\hat{a},a,m$ on the set $Y$. We get that, for a suitable linear transformation $T$, the projection $\pi_{ \hat{a}}(T(Y))$ is a set of size larger than $2^{\hat{a}}(1 - 2^{-\lceil \log({a \choose \leq r})\rceil}) \geq 2^{\hat{a}}(1-\frac{1}{{a \choose \leq r}})$. 

By linearity of expectation, there is a point $z \in \F_2^{\hat{a}}$ so that the fraction of points in $B(z, \hat{a}) \cap \pi_{\hat{a}}(T(Y))$ is larger than $|B(z, \hat{a})|\cdot (1-\frac{1}{{a \choose \leq r}})$. As $|B(z, \hat{a})| = {a \choose \leq r}$, it follows that the size of the intersection is larger than ${a \choose \leq r}(1-\frac{1}{{a\choose \leq r}})={a\choose \leq r}-1$. Since the intersection size is an integer it must equal ${a \choose \leq r}$. In other words, $\pi_{\hat{a}}(T(Y))$ contains $B(z,r)$. 
\end{proof}

We continue with the proof of Lemma~\ref{lem:number_of_ind_polynomials-weak}.
The point of the last two lemmas is that for such a set $Y$, the set of polynomials that are defined on it is isomorphic to the set of polynomials defined on $T(Y)$ (also when considering degree $\leq r$ polynomials for both sets). Let us focus on $T(Y)$ and consider only polynomials in the variables $x_1,\ldots, x_{\hat{a}}$. As $\pi_{\hat{a}}(T(Y))$ contains a ball of radius $r$, we get that all monomials of degree $\leq r$ in $x_1,\ldots,x_{\hat{a}}$ are linearly independent on $\pi_{\hat{a}}(T(Y))$. However, the value of any such monomial on a point in $T(Y)$ is the same as its value on its projection. Thus, there are at least ${\hat{a} \choose \leq r}$ many linearly independent polynomials, of degree $\leq r$, that are defined on $T(Y)$. Hence, there are at least ${\hat{a}  \choose \leq r}$ many linearly independent polynomials, of degree $\leq r$, that are defined on $Y$.

In our case $|\V|>2^{m-t}$. Thus, in the notation above, $a=m-t$ and $\hat{a} =  a - 2\lceil\log({a \choose \leq r})\rceil = m-t - 2\lceil\log({m-t \choose \leq r})\rceil$. Thus, more than ${m-t - 2\lceil\log({m-t \choose \leq r})\rceil\choose \leq r}$ linearly independent polynomials of degree $\leq r$ that are defined on $\V$.

% assume that $|\V| = \e 2^m/{m \choose \leq r}$. Denote $t = \lfloor m - \log({m \choose \leq r}) + \log(\e)\rfloor$. By the discussion above we get that for $\hat{t} =  t - 2\lceil\log({t \choose \leq r})\rceil$, there are at least ${\hat{t} \choose \leq r}$ many linearly independent polynomials, of degree $\leq r$ that are defined on $|\V|$. We thus get that 
%${\hat{t} \choose \leq r} \leq s$. Thus,
%\begin{eqnarray*}
%s\geq {\hat{t} \choose \leq r} %&\geq & {\hat{t} \choose r} \\ 
%&=& {\lfloor m - \log({m \choose \leq r}) + \log(\e)\rfloor - 2\log \left({\lfloor m - \log({m \choose \leq r}) + \log(\e)\rfloor \choose \leq r} \right)\choose \leq r}\\
%&\geq & {m-3\lceil \log({m \choose \leq r}) + \log(\e)\rceil \choose \leq r}.
%\end{eqnarray*}
%In other words, if $s < {m-3\lceil \log{m \choose \leq r})+ \log(\e)\rceil \choose \leq r}$ then $|\V(\I(U)) < \e 2^m/{m \choose \leq r}$. This completes the proof of Lemma~\ref{lem: number of common zeroes-weak}.
\end{proof}

\end{document}